\definecolor{darkgreen}{RGB}{34,139,34}
\numberwithin{equation}{section}
\newtheorem{lemma}{Lemma}
\newtheorem{pr}{Principle}
\newtheorem*{cons}{Consequence}
\newtheorem*{claim}{Claim}
\newtheorem*{holid*}{Holographic principle}
\makeatletter \@addtoreset{equation}{section} \makeatother
\DeclareMathOperator{\tr}{tr}
\DeclareMathOperator{\Tr}{Tr}
\def\be{\begin{equation}}
\def\ee{\end{equation}}
\def\ba{\begin{eqnarray}}
\def\ea{\end{eqnarray}}
\newcommand\nn{\nonumber}
\newcommand\q{\quad}
\newcommand{\cq}{\mathcal Q}
\newcommand{\cs}{\mathcal S}
\newcommand{\ct}{\mathcal T}
\def\nn{\nonumber}
\newcommand{\eqa}{\begin{eqnarray}}
\newcommand{\neqa}{\end{eqnarray}}
\newcommand{\p}{\partial}
\def\f{\frac}
\newcommand{\lalg}[1]{\mathfrak{#1}}  
\newcommand{\SU}{\mathrm{SU}}
\newcommand{\SO}{\mathrm{SO}}
\newcommand{\PSU}{\mathrm{PSU}}
\newcommand{\PSO}{\mathrm{PSO}}
\newcommand{\su}{\lalg{su}}
\newcommand{\so}{\lalg{so}}
\def\q{{\quad}}
\newcommand{\rbx}[1]{\raisebox{6ex}[0pt]{#1}} 
\begin{document}
\setlength{\unitlength}{1mm}

\allowdisplaybreaks
\thispagestyle{empty}


\title{Quantum theory from questions}

\author[1]{Philipp Andres H\"ohn\thanks{\texttt{hoephil@gmail.com}}\thanks{Present address: Vienna Center for Quantum Science and Technology, and Institute for Quantum Optics and Quantum Information, Austrian Academy of Sciences, Boltzmanngasse 3, 1090 Vienna, Austria}}
\author[2,3]{Christopher S.\ P.\ Wever\thanks{\texttt{christopher.wever@kit.edu}}}
\affil[1]{\small Perimeter Institute for Theoretical Physics, 31 Caroline Street North, Waterloo, ON N2L 2Y5, Canada}
\affil[2]{\small Institute of Nuclear and Particle Physics, NCSR ``Demokritos", Agia Paraskevi, 15310 Athens, Greece}
\affil[3]{\small Institute for Theoretical Particle Physics (TTP), Karlsruhe Institute of Technology, Engesserstra{\ss}e 7, D-76128 Karlsruhe \& Institute for Nuclear Physics (IKP), Karlsruhe Institute of Technology, Hermann-von-Helmholtz-Platz 1, D-76344 Eggenstein-Leopoldshafen, Germany}


\date{{\small TTP15-039}}

\setcounter{footnote}{0}
\maketitle
\setcounter{page}{1}


\begin{abstract}
We reconstruct the explicit formalism of qubit quantum theory from elementary
rules on an observer's information acquisition. Our approach is purely operational:
we consider an observer $O$ interrogating a system $S$ with binary questions and define
$S$'s state as $O$'s `catalogue of knowledge' about $S$. From the rules we derive the state spaces for $N$ elementary systems and show that (a) they coincide with the set of density matrices over an $N$-qubit Hilbert space $\mathbb{C}^{2^N}$; (b) states evolve unitarily under the group $\rm{PSU}(2^N)$ according to the von Neumann evolution equation; and (c) $O$'s binary questions correspond to projective Pauli operator measurements with outcome probabilities given by the Born rule. As a by-product, this results in a propositional formulation of quantum theory. Aside from offering an informational explanation for the theory's architecture, the reconstruction also unravels new structural insights. We show that, in a derived quadratic information measure, (d) qubits satisfy inequalities which bound the information content in any set of mutually complementary questions to $1$ \texttt{bit}; and (e) maximal sets of mutually complementary questions for one and two qubits must carry precisely $1$ \texttt{bit} of information in pure states. The latter relations constitute conserved informational charges which define the unitary groups and, together with their conservation conditions, the sets of pure quantum states. These results highlight information as a `charge of quantum theory' and the benefits of this informational approach. This work emphasizes the sufficiency of restricting to an observer's information to reconstruct the theory and completes the
quantum reconstruction initiated in \cite{Hoehn:2014uua}.
\end{abstract}


\section{Introduction}

Quantum theory has enjoyed an outstanding success, allowing us to make precise predictions about the physical microcosm, leading to new information technologies and withstanding every experimental test to which it has been exposed thus far. Yet, in contrast to special and general relativity, quantum theory has evaded a commonly accepted apprehension and interpretation of its physical content, in part as a consequence of a lack of physical statements that fully characterize it. But what makes quantum theory special? Quantum theory has perhaps become so successful that questioning its foundations and physical content have become peripheral matters in physics. However, with the ambition of developing more fundamental theories, involving or going beyond quantum theory, the question as regards its physical meaning and characterizing features returns. How could the world be different if we dropped some of the latter? The answer requires a better understanding of quantum theory within a larger landscape of alternative theories. Furthermore, a convincing conceptual scheme for a putative quantum theory of gravity presumably requires a deeper understanding of what quantum theory tells us about Nature -- and of what we can say about it.

To be sure, within the simplified context of finite dimensional Hilbert spaces, there have been considerable efforts to identify physical attributes special to quantum theory to remedy the flaw that quantum theory is still defined by operationally obscure textbook axioms rather than transparent physical statements. Among them are violation of the Bell \cite{bell1964einstein,bell1966problem} and more generally Clauser-Horne-Shimony-Holt inequalities \cite{clauser1969proposed}, the `no-signaling' principle \cite{popescu1994quantum} and, its generalization, `information causality' \cite{pawlowski2009information}, interference effects in mixtures \cite{Fivel:1994nj}, absence of third and higher order interference \cite{Sorkin:1994dt,Barnum:2014fk}, a limit on the information content carried by systems \cite{Rovelli:1995fv,zeilinger1999foundational,Brukner:ys,Brukner:1999qf,Brukner:2002kx,brukner2009information,Brukner:vn,spekkens2007evidence,Spekkens:2014fk,Paterek:2010fk,weizsaeckerbook,gornitz2003introduction,lyre1995quantum} and others. However, all of these attributes yield incomplete characterizations, being shared by other probabilistic theories some of which admit unphysical correlation structures, as well as exotic information communication and processing tasks. 

In fact, there actually exist a number of successful reconstructions of finite dimensional quantum theory from operational axioms, most of which have been performed within the frameworks of generalized or operational probability theories \cite{Hardy:2001jk,Dakic:2009bh,masanes2011derivation,Mueller:2012ai,Masanes:2012uq,chiribella2011informational,de2012deriving,Mueller:2012pc,Hardy:2013fk,Barnum:2014fk,Garner:2014uq} (see also \cite{Oeckl:2014uq} which is adapted to a space-time language and the more mathematical reconstructions \cite{kochen2013reconstruction,2008arXiv0805.2770G}). Despite the beauty and great technical achievements of some of these reconstructions, they arguably come short of providing a fully satisfactory physical and conceptual picture of quantum theory. Firstly, the underlying axioms, while mathematically crisp, are operationally and intuitively less transparent than a statement of the type ``all inertial observers agree on the speed of light" underlying special relativity. However, for clarity it would be desirable to have easily understandable, yet powerful postulates. Secondly, the ensuing derivations of the quantum formalism are rather implicit than constructive, lacking, in particular, simple and intuitively comprehensible explanations for typical quantum phenomena such as entanglement or for the origin of the explicit structure of the formalism. By contrast, in special relativity, most of its characteristic traits, such as relativity of simultaneity, Lorentz contraction, etc., can be explained in simple thought experiments invoking essentially only the constancy of the speed of light. Thirdly, apart from showing that an operational perspective is sufficient for deriving quantum theory, these reconstructions are interpretationally fairly neutral, focusing on characterizing the formalism rather than the physical and conceptual content of the theory.

The goal of this manuscript is to improve the situation; we shall show that, at least in the simple context of qubit systems, one can understand the physical content of quantum theory from an informational perspective. To this end, we shall exhibit, using the novel framework developed in \cite{Hoehn:2014uua}, how the quantum formalism can be constructively and explicitly derived from simple operationally comprehensible rules which restrict an observer's acquisition of information about systems he is observing. The acquisition of information of the observer about the systems will proceed by interrogation with binary questions. This reconstruction yields the detailed (and not only general) structure of qubit quantum theory and is thereby much less abstract than previous reconstructions. However, it also involves many more steps.

In contrast to earlier works which aim at intrinsic properties and states of systems, here we shall solely focus on the relation of the observer with the systems, i.e., ultimately on the information which the observer has experimentally access too. In particular, we take the quantum state to represent the observer's `catalogue of knowledge' about the observed system(s), rather than an intrinsic state of the latter. This is conceptually motivated by and in line with the relational interpretation of quantum mechanics \cite{Rovelli:1995fv,Smerlak:2006gi,Hoehn:2014uua,hartle1968quantum,zheng1996quantum}, the informational interpretation in \cite{zeilinger1999foundational,Brukner:ys,Brukner:vn,Brukner:2002kx} and (at least elements of) QBism \cite{Fuchs:fk,Caves:2002uq,caves2002unknown}. While this general philosophy goes back to Rovelli's seminal {\it relational quantum mechanics} \cite{Rovelli:1995fv}, none of these earlier works provide a concrete framework from which to reconstruct the theory. This is a shortcoming which has been overcome in \cite{Hoehn:2014uua} and which will be exploited in the sequel. As such, the present manuscript (together with \cite{Hoehn:2014uua}) can be viewed as a completion -- in the context of qubit systems -- of many of the ideas put forward in these earlier works and, in particular, of  relational quantum theory \cite{Rovelli:1995fv}.

Denoting the observer by $O$ and the system by $S$, the rules on information acquisition from which we derive the quantum theory of $N$ qubits can be schematically summarized as follows:
\begin{enumerate}
\item $O$ can maximally acquire $N$ independent \texttt{bits} of information about $S$ at any time.
\item $O$ can always get up to $N$ new independent \texttt{bits} of information about $S$.
\item $O$'s total amount of information about $S$ is preserved in between interrogations.
\item $O$'s `catalogue of knowledge' about $S$ evolves continuously and every consistent such evolution is physically realizable.
\item $O$ can ask $S$ any binary question that ``makes sense''.
\end{enumerate}
In fact, these five rules cannot distinguish two-level systems over complex and real Hilbert spaces. Since the latter is both mathematically and physically a subcase of the former, these five rules are sufficient. However, if one also wishes to distinguish these two cases operationally, then an additional rule, imposed solely for this purpose, will do the job:
\begin{enumerate}
 \setcounter{enumi}{5}
 \item $O$ can determine his `catalogue of knowledge' of a composite system $S$ by interrogating only its constituents. 
\end{enumerate}

While here we shall focus less on conceptual matters than in \cite{Hoehn:2014uua}, the successful reconstruction from this perspective underscores the sufficiency of taking a purely operational perspective, addressing only what an observer can say about the observed systems, in order to understand and derive the formalism of quantum theory. Ontic statements about a reality underlying the observer's interactions with the physical systems are unnecessary. This lends weight to Bohr's famous quote: ``It is wrong to think that the task of physics is to find out how Nature {\it is}. Physics concerns what we can {\it say} about Nature" \cite{petersen}.

Even apart from the fact that this reconstruction offers a novel perspective on the physical content of quantum theory, it leads to new practically useful results. The tools of \cite{Hoehn:2014uua}, while not geared for doing concrete physics with them, are simple and especially devised to expose the structure of qubit quantum theory. They not only admit intuitive graphical representations of the ensuing logical and informational structure of the theory. But they also permit to unravel novel structural insights into qubit quantum theory that have gone unnoticed in the literature. In particular, we shall show how finitely many {\it conserved informational charges}, resulting from complementarity relations, elucidate the origin of the unitary (time evolution) group and characterize pure state spaces.
As clarified along the way, these observations emphasize information as a `charge of quantum theory'; the observer's information provides the conserved quantities of the unitary group which can be transferred among his questions in between measurements.

Certainly, there are also some shortcomings of our approach. Firstly, at present the language of \cite{Hoehn:2014uua} is only applicable to qubit systems, although a suitable generalization appears feasible. Secondly, while our background assumptions are operationally and conceptually transparent, they may be mathematically stronger than those underlying, e.g., \cite{Hardy:2001jk,Dakic:2009bh,masanes2011derivation,Masanes:2012uq,Mueller:2012pc}, thus, in a strict sense, admitting a mathematically weaker reconstruction from within a smaller landscape of theories. Nevertheless, the derivation is a non-trivial proof of principle of the approach and can presumably be strengthened since the set of assumptions and postulates may be non-optimal in the sense of containing partially redundant information. Thirdly, the explicit framework restricts to projective measurements on a subset of qubit observables (although, once one has reconstructed the quantum formalism, one ultimately has access to all quantum operations and POVMs).

The content of the paper is organized as follows. In section \ref{sec_post} we give a review of the framework developed in \cite{Hoehn:2014uua}, which provides the context for our reconstruction of qubit quantum theory. All relevant assumptions and postulates for the reconstruction are summarized in order to make the paper self-contained and we refer to \cite{Hoehn:2014uua} for a more detailed account. In section \ref{sec_n2} we reconstruct the correct unitary time evolution group and state space of quantum theory for $N=2$ qubits and in section \ref{sec_n>2} we extend their reconstruction to $N>2$ qubits. The derivation of the set of binary questions which we permit an observer to ask a system of $N$ qubits is performed in section \ref{sec_qn}. This also involves a derivation of the Born rule for projective measurements, the details of which are spelled out in appendix \ref{app_Born}. In section \ref{sec_Neum} we briefly discuss how the von Neumann evolution equation for the density matrix arises from our reconstruction and finally we present our conclusions in section \ref{sec_con}. The appendices \ref{app_A} and \ref{app_B} contain the detailed derivations of statements made in sections \ref{sec_n2}-\ref{sec_n>2} and \ref{sec_qn} respectively.

\tableofcontents

\section{Background assumptions and postulates}\label{sec_post}

The focus of this approach lies on the acquisition of information of an observer $O$ about a set of systems and the relation this establishes between $O$ and these systems. We shall follow the premise that we may only speak about the information which $O$ has access to through interaction with the systems. This approach is thus purely operational, focusing on what an observer can say about a system rather than on the latter's intrinsic properties and states. This general philosophy has been inspired by Rovelli's {\it relational quantum mechanics} \cite{Rovelli:1995fv} and by the Brukner-Zeilinger informational interpretation of quantum theory \cite{zeilinger1999foundational,Brukner:ys,Brukner:vn,Brukner:2002kx}, neither of which, however, offer a concrete framework for a reconstruction of quantum theory. The lack of a suitable mathematical framework for this endeavour has been overcome in \cite{Hoehn:2014uua} and this is what will be exploited in the remainder.

We shall begin by reviewing the landscape of theories within which the postulates for qubit quantum theory are formulated. This landscape is established by a set of operational background assumptions to which we expose $O$ and the systems. The quantum postulates will constitute rules on $O$'s acquisition of information about the systems. We refer to Ref.\ \cite{Hoehn:2014uua} for further details and more thorough explanations of the concepts employed below.

\subsection{Basic setup: questions and answers}

As in figure \ref{fig_inter}, the setup consists of a preparation device spitting out an ensemble of (identical) systems $S_a$, $a=1,\ldots,n$, which then are interrogated by $O$ with {\it binary} questions. Every way of preparing the systems is assumed to yield a specific statistics over the answers to the binary questions which $O$ may ask the $S_a$ (for a sufficiently large ensemble).
\begin{figure}[htb!]
\begin{center}
\psfrag{p}{Preparation}
\psfrag{i}{Interrogation}
\psfrag{S}{\hspace*{-.1cm}$S$\hspace{.1cm}}
\psfrag{O}{$O$}
\psfrag{q}{$Q_i$?}
{\includegraphics[scale=.4]{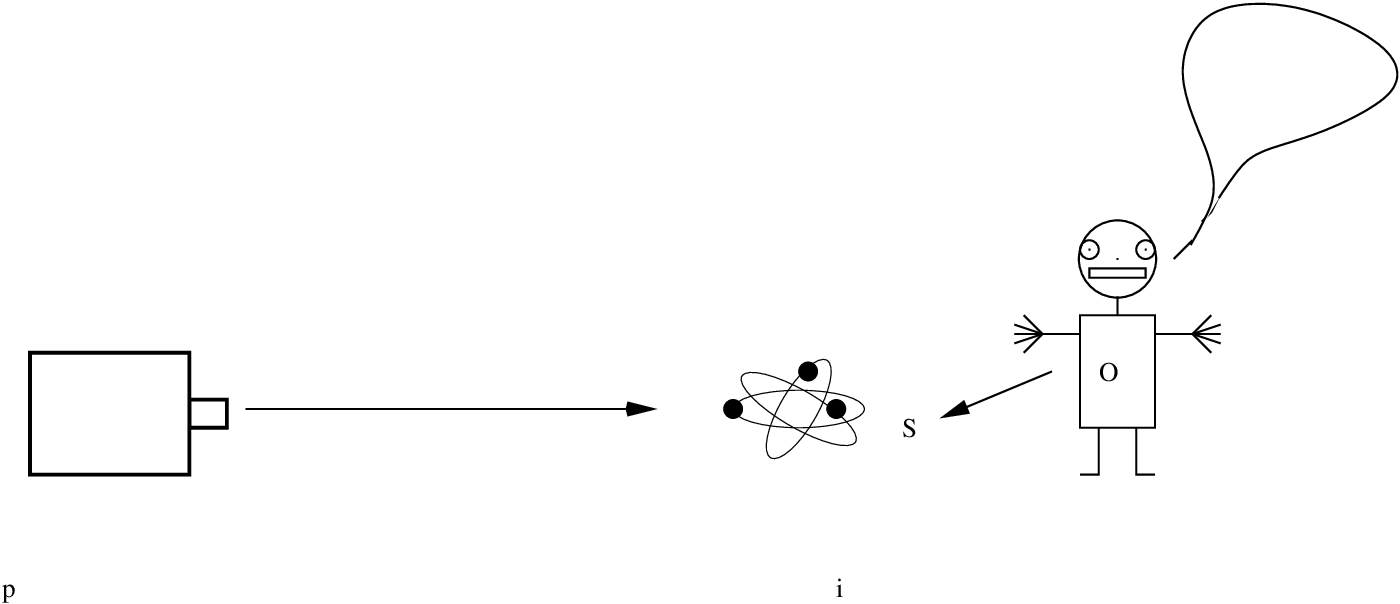}}
  \caption{\small An observer $O$ interrogating a system $S$. }\label{fig_inter}
\end{center}
\end{figure}
More precisely, we shall employ two basic ingredients:
\begin{itemize}
\item[$\cq$] denotes the set of those binary questions $Q_i$ which in this approach we permit $O$ to ask a system $S$. We shall subject $\cq$ to a number of restrictions such that $\cq$ will ultimately be a strict subset of all possible binary questions which $O$ could, in principle, ask $S$. For instance, whenever $O$ asks $S$ any $Q_i\in\cq$ he shall always get an answer\footnote{In this work, we tacitly assume the probability for $S$ being present to be $1$.} and any $Q_i\in\cq$ shall be non-trivial such that $S$'s answer to it is not independent of its preparation. Furthermore, any $Q_i\in\cq$ shall be {\it repeatable}, i.e.\ if $O$ asks the same $Q_i$ $m$ times in immediate succession on the same $S$ he shall receive $m$ times the same answer. 
\item[$\Sigma$] denotes the set of all possible answer statistics for all $Q_i\in\cq$ for all possible ways of preparing the $S_a$.
\end{itemize}

In this work, we therefore do {\it not} address the measurement problem: we simply assume a division between the system $S$ and a `classical' observer $O$ and shall neither explain the origin and nature of this `classical' $O$, nor why $S$ gives definite answers (i.e., yields definite measurement outcomes) upon being asked some $Q_i\in\cq$ by $O$. This will nevertheless allow us to derive the quantum formalism for qubits relative to this $O$.

Just like any experimenter in a real laboratory, we assume $O$ to have developed a theoretical model by means of which he interprets the outcomes to his interrogations (and which, up to his experimental accuracy, is consistent with his observations). In particular, we shall assume $O$ to have a model for both $\cq$ and $\Sigma$ and thereby to be able to decide whether a given question is contained in (his model for) $\cq$ or not. In this work it is not our ambition to clarify how $O$ has arrived at this model. Instead, it will be our task to determine what this model is -- subject to the background assumptions and postulates below.

\subsection{Probabilities and notions of independence and compatibility}\label{sec_2b}
 
For any specific system $S$ to be interrogated next, $O$ assigns a probability $y_i$ that the answer to any $Q_i\in\cq$ will be `yes' in a broadly Bayesian manner as a `degree of belief'. $O$ will estimate $y_i$ according to his model of $\Sigma$ and to any {\it prior} information about $S$, which consists of frequencies of `yes' and `no' answers recorded in a previous interrogation of an ensemble of systems prepared identically to $S$. In the sequel, $O$ is only permitted to acquire information about the systems through the binary questions in $\cq$. Hence, the $y_i$ encode $O$'s entire information about a system $S$. We thus identify {\it the state of $S$ relative to $O$} as $O$'s `catalogue of knowledge' about $S$, namely as the collection of $\{y_i\}_{\forall\,Q_i\in\cq}$. It thus is a state of information associated to the relation of $O$ with $S$ and not an intrinsic state of $S$. The state is an element of $\Sigma$ which therefore constitutes the {\it state space} of $S$ and any state in $\Sigma$ assigns a probability $y_i$ for all $Q_i\in\cq$.\footnote{For consistency, we tacitly assume that the set $\Sigma$ of all possible answer statistics coincides with the set of all possible `beliefs' \cite{Hoehn:2014uua}.}

For operational reasons $\Sigma$ is assumed to be convex and closed. This will permit $O$ to build convex combinations of states; thereby $O$ is able to assign a single prior state to a collection of identical systems (i.e., systems with identical $(\cq,\Sigma)$, but not necessarily in the same state) when he uses a (possibly biased) coin toss cascade to decide which of the systems to interrogate. Since perfect and arbitrarily good preparation are operationally indistinguishable, it is legitimate to assume closure of $\Sigma$ (see \cite{Hoehn:2014uua} for more details). 
  
We require that a special method of preparation exists which produces entirely random question outcomes. More precisely, we assume that there exists a special state in $\Sigma$, defined by $y_i=\f{1}{2}$, $\forall\,Q_i\in\cq$ and referred to as the {\it state of no information}. Note that the existence of this state is a restriction on the pair $(\cq,\Sigma)$.\footnote{Clearly, not all pairs $(\cq,\Sigma)$ will satisfy this. E.g., $(\{\text{binary POVMs}\},\{\text{density matrices}\})$ would not satisfy this restriction since there does not exist a quantum state which yields probability $1/2$ for {\it all} binary POVMs. Namely, there exist binary POVMs with an inherent bias, such as $(E_1=2/3\cdot\mathds{1},E_2=1/3\cdot\mathds{1})$.} This state of no information serves two purposes: (1) it is the prior state $O$ will start with in a state updating once he has `no prior information' about a system other than what the corresponding model $(\cq,\Sigma)$ is (incl.\ also the set of possible time evolutions $\ct$, see below); and (2) it permits us to define a notion of {\it independence} of questions. Indeed, the notion of independence of questions is state dependent\footnote{E.g., in quantum theory, the questions $Q_{x_1}=$``Is the spin of qubit 1 up in $x$-direction?'' and $Q_{x_2}=$``Is the spin of qubit 2 up in $x$-direction?" are independent relative to the completely mixed state, however, not relative to an entangled state (with correlation in $x$-direction).} such that we need a distinguished state relative to which we can unambiguously define it.

More precisely, consider $Q_i\in\cq$ and assume $O$ receives $S$ in the state of no information. On account of repeatability, upon asking $S$ the question $Q_i$ and receiving the answer `yes' or `no', $O$ will assign a probability of $y_i=1$ or $y_i=0$, respectively, that he will receive a `yes' answer from $S$ if asking $Q_i$ again. This is part of a {\it state update rule} which permits $O$ to update his information about the specific $S$ which he is interrogating according to the answers he receives. Clearly, the probability $y_j$ for any other $Q_j\in\cq$ will depend on this update rule. We shall not specify the update rule much further, but just assume that there is a consistent one. Given such an update rule, we shall call $Q_i,Q_j\in\cq$
\begin{description}
\item[independent] if, after having asked $Q_i$ to $S$ in the state of no information, the probability $y_j=\f{1}{2}$. That is, if the answer to $Q_i$ relative to the state of no information tells $O$ `nothing' about the answer to $Q_j$.
\item[dependent] if, after having asked $Q_i$ to $S$ in the state of no information, the probability $y_j=0,1$. That is, if the answer to $Q_i$ relative to the state of no information implies also the answer to $Q_j$.  
\item[partially dependent] if, after having asked $Q_i$ to $S$ in the state of no information, the probability $y_j\neq0,\f{1}{2},1$. That is, if the answer to $Q_i$ relative to the state of no information gives $O$ partial information about the answer to $Q_j$. 
\end{description}
We shall require that these (in-)dependence relations be symmetric such that, e.g.\ $Q_i$ is independent of $Q_j$ iff $Q_j$ is independent of $Q_i$,\footnote{This means that $Q_i,Q_j$ are stochastically independent with respect to the state of no information, i.e.\ the joint probabilities factorize relative to the latter, $p(Q_i,Q_j)=y_i\cdot y_j=\f{1}{2}\cdot\f{1}{2}=\f{1}{4}$, where $p(Q_i,Q_j)=p(Q_j,Q_i)$ denotes the probability that $Q_i$ and $Q_j$ give `yes' answers if asked in {\it sequence} on the same $S$.} etc. We emphasize that these notions of (in-)dependence are a priori update rule dependent.

We also need a notion of compatibility and complementarity; $Q_i,Q_j\in\cq$ are called
\begin{description}
\item[(maximally) compatible] if $O$ may know the answers to both $Q_i,Q_j$ simultaneously, i.e.\ if there exists a state in $\Sigma$ such that $y_i,y_j$ can be simultaneously $0$ or $1$.

\item[(maximally) complementary] if every state in $\Sigma$ which features $y_i=0,1$ necessarily implies $y_j=\f{1}{2}$ (and vice versa). 

\end{description}

One can also define notions of partial compatibility \cite{Hoehn:2014uua}.

This brings us to our last constraint on the update rule: if $Q_i,Q_j$ are maximally compatible and independent then asking $Q_i$ shall not change $y_j$, and vice versa -- regardless of $S$'s state. That is, by asking a question $Q$, $O$ shall not gain or lose information about questions which are compatible with but independent of $Q$.

For clarification, we emphasize that the assumption is: $\cq$ is sufficient to describe the properties of $S$ and, in particular, any of its states. From the perspective of information acquisition it is also natural to assume that there exists a state corresponding to $O$ having `no information' about the measurement outcomes of those properties that he uses to characterize $S$; $\cq$ contains questions that are ``natural'' in this sense. The assumption, however, is {\it not} that $\cq$ encodes a complete description of {\it all} the binary measurements $O$ can physically carry out on $S$. We say nothing about whether $O$ cannot, in principle, also physically perform other measurements. While these would not be contained in $\cq$, they would also not be necessary to do tomography on $S$ and thus to describe its state. For our purposes, it is therefore sufficient to restrict $O$ to the ``natural'' set $\cq$. Ultimately, upon imposing the quantum principles, this will result in projective binary measurements as the ``natural'' questions, while non-projective POVMs would not be encompassed by $\cq$.

\subsection{Informational completeness}

We shall call a set of pairwise independent questions $\cq_{M}:=\{Q_1,\ldots,Q_D\}$ {\it maximal} if no further question from $\cq\setminus\cq_M$ can be added to it that is also pairwise independent of all  other members of $\cq_M$ too. Pairwise independent questions shall constitute the fundamental building blocks of the theories we consider. As such, we shall assume that any maximal set of pairwise independent questions $\cq_M$ also constitutes an {\it informationally complete set of questions} in the sense that, for $S$ in {\it any} state from $\Sigma$, the probabilities $\{y_i\}_{i=1}^D$ are sufficient to compute all $y_j$ $\forall\,Q_j\in\cq$. This is a non-trivial restriction; if it was not satisfied, $O$ would require additional questions that are at least partially dependent on some elements in $\cq_M$ in order to parametrize $\Sigma$. 
This would complicate the discussion and conflict with our premise that pairwise independent questions form the fundamental building blocks of system descriptions. We thus simply preclude this complication by making the assumption of informational completeness\footnote{We do not rule out the possibility that this property of informational completeness of maximal sets could also be proven using the principles below. In fact, for certain subcases the authors were able to prove it. However, the general case remains open.} which, however, still leaves open a large landscape of theories compatible with it. Then one can show that any such $\cq_M$ contains the same number $D$ of elements \cite{Hoehn:2014uua}. In consequence, we may represent $\Sigma$ as a $D$-dimensional convex set, with states as vectors
\ba
\vec{y}=\left(\begin{array}{c}y_1 \\y_2 \\\vdots \\y_D\end{array}\right)\nn.
\ea

Any convex set is defined by its extremal points \cite{Krein1940}. The extremal states in $\Sigma$ are special because they cannot be written as convex mixtures of other states, but all other states are convex mixtures of these. Since (finite) convex mixtures can be operationally understood in terms of (cascades of biased) coin flips, $O$ may prepare non-extremal states by applying cascades of coin flips to ensembles of extremal states. 
But, since the extremal states themselves cannot be prepared via coin flip cascades from other states, their preparation must have an unambiguous operational meaning. For this purpose, we wish any extremal state to be achievable by $O$ as the posterior state of an individual system in an interrogation. 
More specifically, we shall require that $O$ can prepare {\it any} extremal state from the state of no information in a {\it single shot interrogation}\footnote{In a single shot interrogation a single system $S$ is prepared in some state and subsequently exposed to questions (see \cite{Hoehn:2014uua} for more details).} by only asking questions from an informationally complete set $\cq_{M}$ and possibly letting the resulting state evolve in time.

It will become crucial to appropriately quantify $O$'s information about any system. To this end, we quantify $O$'s information about the outcome to any $Q_i\in\cq$ implicitly by a function $\alpha(y_i)$ with $0\leq\alpha(y_i)\leq 1$ \texttt{bit} and $\alpha(y)=0$ $\Leftrightarrow$ $y=\f{1}{2}$ and $\alpha(0)=\alpha(1)=1$ \texttt{bit}. $O$'s total information about $S$ must be a function of the state $\vec{y}$; we define it to be
\ba
I(\vec{y}):=\sum_{i=1}^D\,\alpha(y_i)\nn.
\ea
The specific form of $\alpha$ is derived from the principles.

\subsection{Complementarity properties}

For practical purposes, we shall also sharpen the notion of complementarity 
of questions. Firstly, we shall permit $O$ to use classical rules of inference (in the form of Boolean logic) exclusively on sets of {\it mutually compatible} questions. Classical rules of inference assume propositions to have truth values simultaneously which, in $O$'s description of the world, is only true for mutually compatible questions because any truth value must be operational. This is to prevent him from making statements about logical connectives of complementary questions whose truthfulness he could never test by interrogations 
(see \cite{Hoehn:2014uua}). 

Secondly, we shall require that any set of $n\in\mathbb{N}$ mutually (maximally) complementary questions $\{Q_1,\ldots,Q_n\}$ (not to be confused with $\cq_M$ above) cannot support more than $1$ \texttt{bit} of information:
\ba
\alpha(y_1)+\cdots+\alpha(y_n)\leq 1 \,\,\texttt{bit}.\label{compstrong}
\ea
This statement follows trivially from the definition of complementarity and the basic requirements on $\alpha$ whenever $O$ has maximal information $\alpha(y_i)=1$ \texttt{bit} about any question in this set. However, we require (\ref{compstrong}) for {\it all} states for otherwise it would be possible for $O$ to {\it reduce} his total information about this set by asking another question from it. Namely, suppose (\ref{compstrong}) was violated. Then, upon asking {\it any} question from this set, he will have maximal information about this question and none about the others such that (\ref{compstrong}) would be saturated again and $O$ has experienced a net loss of information about the set. Such peculiar situations will be ruled out in $O$'s world. We shall call the informational relations defined by (\ref{compstrong}) {\it complementarity inequalities}. They can be viewed as informational uncertainty relations, describing how the information gain about one question enforces an information loss about questions complementary to it.

\subsection{Composite systems}

Since we will be dealing with systems composed of $N$ qubits below, we must clarify what kind of composite systems we shall consider in this language. Let $\cq_{A,B}$ be the question sets associated to systems $S_{A,B}$. We shall say that they form the composite system $S_{AB}$ if all questions in $\cq_A$ are maximally compatible with and independent of all questions in $\cq_B$ and if
\ba
\cq_{AB}=\cq_A\cup\cq_B\cup\tilde{\cq}_{AB},
\label{composite}
\ea
where $\tilde{\cq}_{AB}$ contains only composite questions that are iterative compositions, $Q_a\,*_{\tiny1}\,Q_b, Q_a\,*_2(Q_{a'}*_3Q_b), (Q_a*_4Q_b)*_5Q_{b'}, (Q_a*_6Q_b)*_7(Q_{a'}*_8Q_{b'}),\ldots$, via some logical connectives $*_1,*_2,*_3,\cdots$, of questions $Q_a,Q_{a'},\ldots\in\cq_A$ and $Q_b,Q_{b'},\ldots\in\cq_B$. 
Given the assumption about rules of inference above, note that $O$ can only logically connect two (possibly composite) questions {\it directly} with some $*$ if they are compatible \cite{Hoehn:2014uua}. The logical connective $*$ which can be used to build informationally complete sets for composite systems will be determined later. We use this definition of composite systems recursively for more than two systems. 

\subsection{Time evolution}

The state that $O$ assigns to the system $S$ is allowed to evolve in time. We shall assume temporal translation invariance, such that any time evolution defines a map $T_{\Delta t}(\vec{y}(t_0))=\vec{y}(t_0+\Delta t)$ from $\Sigma$ to itself which only depends on the time interval $\Delta t$ and not, however, on the instant of time itself. The set of all possible time evolutions $T$ will be denoted by $\ct$ and constitutes another ingredient of $O$'s model for describing $S$. $O$'s theoretical model for $S$ is thus encoded in the triple $(\cq,\Sigma,\ct)$.

\subsection{How to compute probabilities}

The assumption of informational completeness asserts that the probabilities for `yes'-outcomes to the questions in an informationally complete set are sufficient to compute the outcome probabilities for {\it all} questions in the set $\cq$
 in {\it any} given state. Hence, by assumption, the probability function $Y(Q|\vec{y})$ 
 that $Q=$
  `yes', given the state $\vec{y}$,
   exists and is meaningful for all $Q\in\cq$. But how do we compute it?
   
   Suppose $O$ has access to two identical (but not necessarily identically prepared) systems\footnote{By identical systems we mean systems featuring the same triple $(\cq,\Sigma,\ct)$ (see \cite{Hoehn:2014uua} for further details on identical systems and a definition of `identically prepared').} $S_1,S_2$ such that $O$ may ask the same questions to both. $O$ may perform a biased coin flip which yields `heads' with probability $\lambda\in[0,1]$, in which case he will interrogate $S_1$, and `tails' with probability $(1-\lambda)$ in which case he will interrogate $S_2$. This implies that the state of the combined system (before tossing the coin) reads $\vec{y}_{12}=\lambda\,\vec{y}_1+(1-\lambda)\,\vec{y}_2$ since this holds for each component $y_i$ (see also \cite{Hoehn:2014uua}). We recall from section \ref{sec_2b} that $O$ determines the probabilities by recording the frequencies of question outcomes in repeated interrogations of identically prepared systems. Hence, $Y(Q|\vec{y})$ is determined from the frequency of `yes'-outcomes of $Q$ when asked to a very large (ideally infinite) ensemble of systems identically prepared in the state $\vec{y}$. But then 
\ba
Y(Q|\lambda\,\vec{y}_1+(1-\lambda)\,\vec{y}_2)=\lambda\,Y(Q|\vec{y}_1)+(1-\lambda)\,Y(Q|\vec{y}_2)\nn
\ea
since $O$ may repeat this interrogation of $S_1,S_2$ a very large number of times. In that case, the `heads' and `tails' ensembles constitute sub-ensembles of the total ensemble of systems being interrogated and $O$ could just record the frequencies of the $Q=$ `yes' outcomes relative to (1) the total ensemble, (2), the `heads' ensemble, and (3) the `tails' ensemble. Taking the relative frequency $\lambda$ of the `heads' and `tails' ensembles into account, it is clear that the total frequency of $Q=$ `yes' outcomes must be of the form above (see also \cite{Hardy:2001jk}).

Using that in the state of no information $\vec{y}=\f{1}{2}\,\vec{1}$ we must have $Y(Q|\f{1}{2}\,\vec{1})=1/2$ for all $Q\in\cq$, where $\vec{1}$ is a vector with each coefficient equal to $1$ (in the basis corresponding to the informationally complete set), we show in appendix \ref{app_Born} that this implies affine-linearity in the state $\vec{y}$
\ba
Y(Q|\vec{y})=Y(\vec{q}|\vec{y})=\f{1}{2}\left(\vec{q}\cdot(2\vec{y}-\vec{1})+1\right)\label{ansatz}.
\ea
Here $\vec{q}\in\mathbb{R}^D$ is a vector which depends on $Q\in\cq$. This formula will ultimately give rise to the Born rule (for projective measurements).

We thus see that every question $Q\in\cq$ can be parametrized by a {\it question vector} $\vec{q}\in\mathbb{R}^D$ such that $Y(Q|\vec{y})\in[0,1]\,\forall\,\vec{y}\in\Sigma$. $O$ can chose to remove any redundancy from his description of $\cq$. Clearly, if $Q,Q'\in\cq$ were both represented by the same $\vec{q}$, then they would give rise to exactly the same `yes'-probabilities in every state. But if $Q,Q'$ are probabilistically not distinguishable, $O$ must regard them as being logically equivalent in his world. $O$ is free to restrict his description of $\cq$ by erasing any questions from it that are redundant through equivalence.\footnote{For example, if $Q\in\cq$ then clearly $Q\wedge Q$ and $Q\vee Q$ can be safely omitted by $O$ from a non-redundant description of $\cq$.} As a result, every question vector $\vec{q}$, if physically permitted at all, will correspond to a unique $Q\in\cq$.

Given the assumption that $S$ always gives an answer to any $Q\in\cq$ if asked by $O$, it is clear that for {\it every} $Q\in\cq$ there exists a state $\vec{y}_Q$ of $S$ encoding the situation that $O$ has asked {\it only} the single question $Q$ to $S$ in the state of no information $\vec{y}=\f{1}{2}\vec{1}$ and received a `yes' answer (i.e., $\vec{y}_Q$ is the updated state after receiving $Q=$`yes' relative to $\vec{y}=\f{1}{2}\vec{1}$). We shall make one natural (but non-trivial) requirement: since $O$ had precisely $0$ \texttt{bits} of information prior to asking $Q$ and $\vec{y}_Q$ corresponds to only having received the answer to this question, $\vec{y}_Q$ shall encode precisely $1$ independent \texttt{bit} of information. We thus demand that for every $Q\in\cq$ there exists $\vec{y}_Q\in\Sigma$ with $I(\vec{y}_Q)=1$ \texttt{bit} such that $Y(Q|\vec{y}_Q)=1$.

This concludes our review of the landscape of inference theories. 


\subsection{The quantum principles as rules on information acquisition}

Within this landscape, we shall impose five rules on the acquisition of information of $O$ about a composite system $S$ of $N\in\mathbb{N}$ generalized bits (or gbits) from Ref.\ \cite{Hoehn:2014uua} to which we refer for motivation. The rules are given both in colloquial and mathematical form. For clarification we shall attach the number $N$ henceforth to $\cq_N,\Sigma_N,\ct_N$. The first two principles assert a limit on the information available to $O$ and the existence of complementarity.

\begin{pr}\label{lim}{\bf(Limited Information)}
\emph{``The observer $O$ can acquire maximally $N\in\mathbb{N}$ {\it independent} \texttt{bits} of information about the system $S$ at any moment of time.''} \\
There exists a maximal set $Q_i$, $i=1,\ldots,N$, of $N$ mutually maximally independent and compatible questions in $\cq_N$ and no subset in $\cq_N$ can contain more than $N$ questions with that property. 
\end{pr}

\begin{pr}\label{unlim}{\bf(Complementarity)}
\emph{``The observer $O$ can always get up to $N$ \emph{new} independent \texttt{bits} of information about the system $S$. But whenever $O$ asks $S$ a new question, he experiences no net loss in his total amount of information about $S$.''}\\
There exists another maximal set $Q_i'$, $i=1,\ldots,N$, of $N$ mutually maximally independent and compatible questions in $\cq_N$ such that $Q'_i,Q_i$ are maximally complementary and $Q'_i,Q_{j\neq i}$ are maximally compatible and independent.
\end{pr}

The systems are thus characterized by the number $N$. Principles \ref{lim} and \ref{unlim} are conceptually motivated by earlier proposals by Rovelli \cite{Rovelli:1995fv} and Brukner and Zeilinger \cite{zeilinger1999foundational,Brukner:ys,Brukner:1999qf,Brukner:2002kx,brukner2009information}. However, they do not suffice. We also require $O$ not to gain or lose information without asking questions.


\begin{pr}\label{pres}{\bf(Information Preservation)}
\emph{``The total amount of information $O$ has about (an otherwise non-interacting) $S$ is preserved in-between interrogations."}\\
$I(\vec{y})$ is \emph{constant} in time in-between interrogations for (an otherwise non-interacting) $S$.
\end{pr}
In fact, this principle can also be used to define the notion of `non-interacting'. 


In order to render $O$'s world interesting for him, it should be as dynamical and interactive as possible. We shall thus require that it `maximizes' the number of ways in which any given state of $S$ can change in time -- rather than the number of states in which it can be relative to $O$.

\begin{pr}\label{time}{\bf(Time Evolution)}
\emph{``$O$'s `catalogue of knowledge' about $S_N$ evolves \emph{continuously} in time in-between interrogations and every consistent such evolution is physically realizable.''}\\
$\ct_N$ is the maximal set of transformations $T_{\Delta t}$ on states which 
is \emph{continuous} in $\Delta t$ and compatible with principles \ref{lim}-\ref{pres} (and the structure of the theory landscape).
\end{pr}


These four rules on $O$'s acquisition of information about $S$ will determine $(\Sigma_N,\ct_N)$ and informationally complete sets $\cq_{M_N}$, however, not the {\it full} $\cq_N$. We thus add another rule: we shall allow $O$ to ask $S$ {\it any} question which ``makes sense''.

\begin{pr}\label{Q}{\bf(Question Unrestrictedness)}
\emph{``Every question which yields legitimate probabilities for every way of preparing $S$ is physically realizable by $O$.''}\\
Every question vector $\vec{q}\in\mathbb{R}^{D_N}$ which satisfies $Y(\vec{q}|\vec{y})\in[0,1]$ $\forall\,\vec{y}\in\Sigma_N$ and for which there exists $\vec{y}_Q\in\Sigma_N$ with $I(\vec{y}_Q)=1$ \texttt{bit} such that $Y(\vec{q}|\vec{y}_Q)=1$ corresponds to a $Q\in\cq_N$.
 \end{pr}

It is our task to derive what the triple $(\cq_N,\Sigma_N,\ct_N)$ compatible with the rules is. As shown in \cite{Hoehn:2014uua}, there are only two solutions to these five principles within the established landscape of theories: in this manuscript we shall complete the proof that one solution is standard qubit quantum theory and, as exhibited in a companion paper \cite{hw2}, the second solution is rebit quantum theory, i.e.\ two level systems over real Hilbert spaces. The second solution is mathematically a subcase of the former and also experimentally realizable in a laboratory. Therefore, the above five rules are physically sufficient. If, however, one wishes to discriminate between the two solutions, one may invoke the following additional rule adapted from \cite{masanes2011derivation,Mueller:2012ai,barrett2007information,Dakic:2009bh,Masanes:2011kx,de2012deriving,Masanes:2012uq}:

\begin{pr}\label{loc}{\bf(Tomographic Locality)}
\emph{``If $S$ is a composite system, $O$ can determine its state by interrogating only its subsystems."}
\end{pr}

It follows from \cite{Hoehn:2014uua} that this last rule eliminates rebits in favour of qubits. We shall appeal to tomographic locality in this manuscript {\it solely} for this purpose.\footnote{In fact, this rule is quite possibly a partially redundant addition. At least in the context of generalized probability theories \cite{masanes2011derivation,Mueller:2012ai,barrett2007information,Dakic:2009bh,Masanes:2011kx,de2012deriving,Masanes:2012uq} tomographic locality implies some of the properties that already follow from the other rules.}

More precisely, we shall prove that principles \ref{lim}--\ref{loc} are equivalent to (part of) the textbook axioms:
\begin{claim}
The only solution to principles \ref{lim}--\ref{loc} is qubit quantum theory where
\begin{itemize}
\item $\Sigma_N\simeq\text{convex hull of }\mathbb{CP}^{2^N-1}$ is the space of $2^N\times2^N$ density matrices over $\mathbb{C}^{2^N}$,
\item states evolve unitarily according to $\ct_N\simeq\rm{PSU}(2^N)$ and the equation describing the state dynamics is (equivalent to) the von Neumann evolution equation.
\item $\cq_{N}\simeq\mathbb{CP}^{2^N-1}$ is (isomorphic to) the set of projective measurements onto the $+1$ eigenspaces of $N$-qubit Pauli operators\footnote{The set of Pauli operators is given by all hermitian operators on $\mathbb{C}^{2^N}$ with two eigenvalues $\pm1$ of equal eigenspace dimensions.} and the probability for $Q\in\cq_N$ to be answered with `yes' in some state is given by the Born rule for projective measurements.
\end{itemize}
\end{claim}

\subsection{Summary of previous results and strategy}

The essential steps of the proof and derivation of qubit quantum theory, involving results from \cite{Hoehn:2014uua}, can be summarized diagrammatically, see figure \ref{fig_strat}.
\begin{figure}[hbt!]
\begin{center}
\psfrag{L}{{limited information}}
\psfrag{C}{{complementarity}}
\psfrag{p}{{information preservation}}
\psfrag{t}{{time evolution}}
\psfrag{o}{{\hspace*{1.4cm}tomographic locality}}
\psfrag{q}{independence, compatibility and cor-}
\psfrag{q1}{relation structure on $\cq_N$  (in \cite{Hoehn:2014uua})}
\psfrag{i}{linear reversible time evolution}
\psfrag{i1}{and information measure (in \cite{Hoehn:2014uua})}
\psfrag{1}{$\Sigma_1$ is a Bloch ball with $D_1=3$, and}
\psfrag{4}{$\ct_1\simeq\rm{SO}(3)\simeq\rm{PSU}(2)$  (in \cite{Hoehn:2014uua})}
\psfrag{2}{$\ct_2\simeq\rm{PSU}(4)$ and}
\psfrag{3}{ $\Sigma_2$ is the convex hull of $\mathbb{CP}^3$ (in sec.\ \ref{sec_n2})}
\psfrag{n}{$\Sigma_N,\ct_N$ for $N>2$ (in sec.\ \ref{sec_n>2})}
\psfrag{zz}{question unrestrictedness}
\psfrag{q2}{$\cq_N$ for arbitrary $N$ (in sec.\ \ref{sec_qn})}
{\includegraphics[scale=.6]{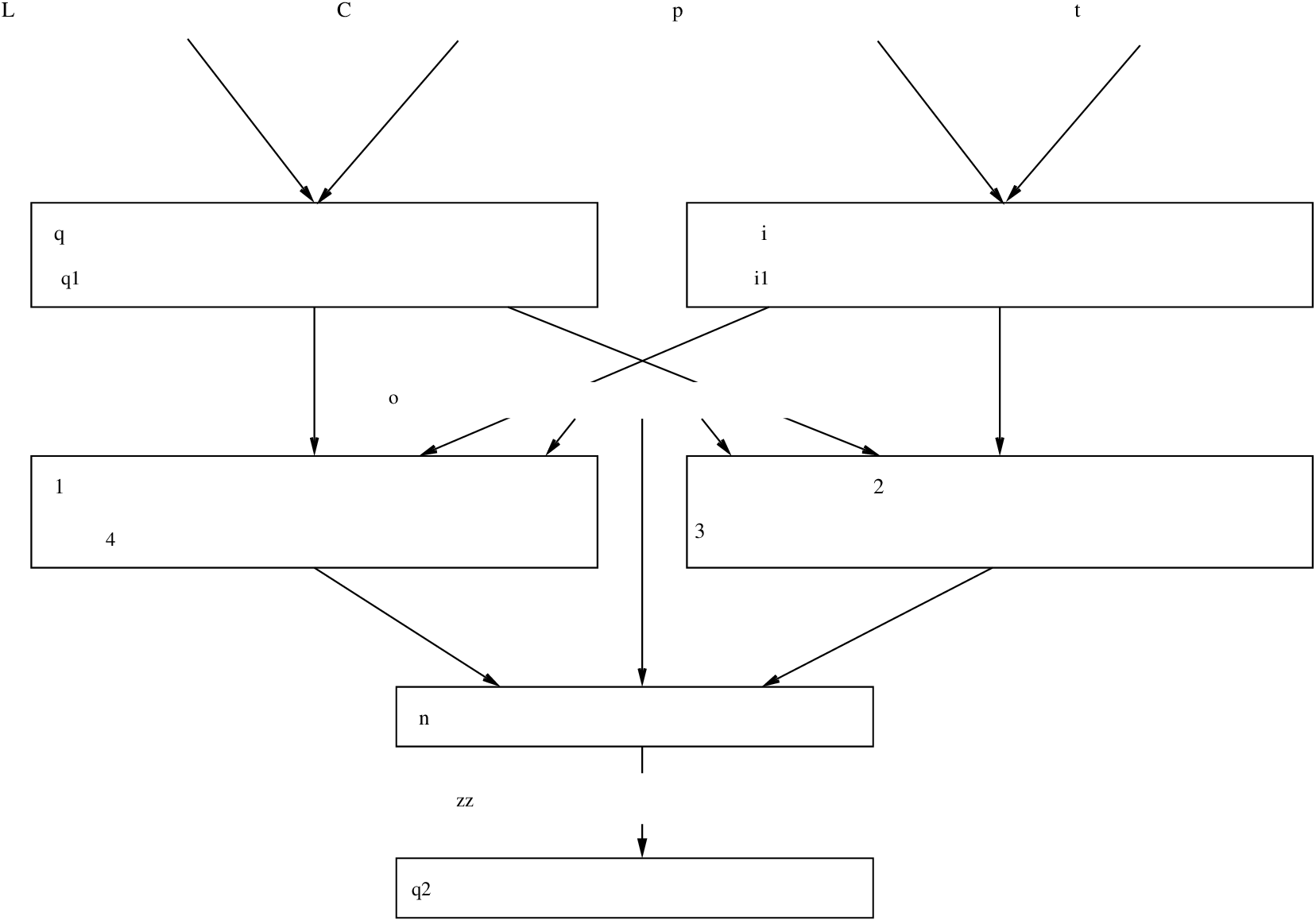}}
\caption{\small Strategy and main steps of the reconstruction.}\label{fig_strat}
\end{center}
\end{figure}

In particular, in \cite{Hoehn:2014uua} the entire compatibility, complementarity and independence structure of any informationally complete set $\cq_{M_N}$ for arbitrary $N$ is derived, showing that the logical connective $*$ in (\ref{composite}) which can be used to build $\cq_{M_N}$ from subsystem questions must either be the XNOR or the (up to an overall negation) equivalent XOR. As a by-product, it is demonstrated how entanglement, monogamy, and the correlation structure for arbitrarily many qubits follow from principles \ref{lim} and \ref{unlim} alone. Furthermore, principles \ref{pres} and \ref{time}, together with elementary operational conditions, can be shown to entail (a) a linear {\it reversible} time evolution of the Bloch vector $\vec{r}=2\,\vec{y}-\vec{1}$ under a continuous one-parameter matrix group, and (b) a quadratic information measure
\ba
\alpha(y_i)=(2\,y_i-1)^2.\label{infomeasure}
\ea 
The total information $I_N(\vec{y})=|\vec{r}|^2$, quantifying $O$'s information about $S$ is thus the square norm of the Bloch vector \cite{Hoehn:2014uua}. This quadratic information measure was earlier proposed by Brukner and Zeilinger from a different perspective \cite{Brukner:1999qf,Brukner:ly,Brukner:2001ve,Brukner:ys,brukner2009information}. Finally, it is demonstrated in \cite{Hoehn:2014uua} how the conjunction of these results correctly yields the three-dimensional Bloch ball together with its isometry group $\SO(3)\simeq\rm{PSU}(2)$ as the state space $\Sigma_1$ and time evolution group $\ct_1$, respectively, for a single qubit (i.e., the $N=1$ case). It was also argued that $\cq_1=\mathbb{CP}^1$. But the reconstruction of $(\cq_N,\Sigma_N,\ct_N)$ was left open for $N>1$.

\subsection{Pure states}
With principle \ref{lim} at our disposal, we can define a notion of {\it pure state}: a pure state of $S_N$ (a composite system of $N$ gbits) is a state of maximal information (and thus of maximal length) in which $O$ knows the maximal amount of $N$ independent \texttt{bits} of information.\footnote{We emphasize the difference to reconstructions within the context of generalized probability theories \cite{masanes2011derivation,Mueller:2012ai,barrett2007information,Dakic:2009bh,Masanes:2011kx,de2012deriving,Masanes:2012uq} where pure states are simply defined to be the extremal states of the convex state space. } See \cite{Hoehn:2014uua} for a more in depth discussion of this informational notion of pure states. 

\section{Reconstruction}

These results will be exploited in the sequel to extend the reconstruction to arbitrary $N>1$ and thus to prove the claim given in the previous section. This will complete the work started in \cite{Hoehn:2014uua}.

\subsection{$N=2$ qubits}\label{sec_n2}

Principles \ref{lim}, \ref{unlim} and \ref{loc} imply that an informationally complete question set for two qubits is given by six individual questions $\{Q_{x_1},Q_{y_1},Q_{z_1},Q_{x_2},Q_{y_2},Q_{z_2}\}$ about qubit 1 and 2 and by nine `correlation questions' $\{Q_{xx},Q_{xy},Q_{xz},Q_{yx},Q_{yy},Q_{yz},Q_{zx},Q_{zy},Q_{zz}\}$, where, e.g., $Q_{xx}:=Q_{x_1}\leftrightarrow Q_{x_2}$ represents the question `are the answers to $Q_{x_1}$ and $Q_{x_2}$ the same?' and $\leftrightarrow$ denotes the XNOR connective.\footnote{We recall that $Q\leftrightarrow Q'=1$ if $Q=Q'$ and $Q\leftrightarrow Q'=0$ otherwise.} Proving this statement is quite non-trivial and takes a number of steps which, for reasons of space, we shall not summarize here. Instead, we shall simply use this result and refer the interested reader to Ref.\ \cite{Hoehn:2014uua} for a constructive proof of it. 

For example, for two spin-$\f{1}{2}$ particles $Q_{x_1},Q_{xx}$ could represent the questions `is the spin of qubit 1 up in $x$-direction?' and `are the spins of qubit 1 and 2 correlated in $x$-direction?', respectively. The compatibility, complementarity and correlation structure of these questions, ensuing from principles \ref{lim} and \ref{unlim}, is derived in \cite{Hoehn:2014uua} and is represented in terms of correlation triangles in figure \ref{fig_corr}.
\begin{figure}[hbt!]
\begin{center}
\psfrag{+}{$+$}
\psfrag{-}{\hspace*{-.1cm}$-$}
\psfrag{1}{$Q_{x_1}$}
\psfrag{2}{\hspace*{-.2cm}$Q_{y_1}$}
\psfrag{3}{$Q_{z_1}$}
\psfrag{1p}{$Q_{x_2}$}
\psfrag{2p}{$Q_{y_2}$}
\psfrag{3p}{$Q_{z_2}$}
\psfrag{11}{$Q_{xx}$}
\psfrag{22}{$Q_{yy}$}
\psfrag{12}{$Q_{xy}$}
\psfrag{33}{$Q_{zz}$}
\psfrag{13}{$Q_{xz}$}
\psfrag{21}{$Q_{yx}$}
\psfrag{23}{$Q_{yz}$}
\psfrag{31}{$Q_{zx}$}
\psfrag{32}{$Q_{zy}$}
\psfrag{i}{\hspace*{-.45cm}\footnotesize identify}
{\includegraphics[scale=.3]{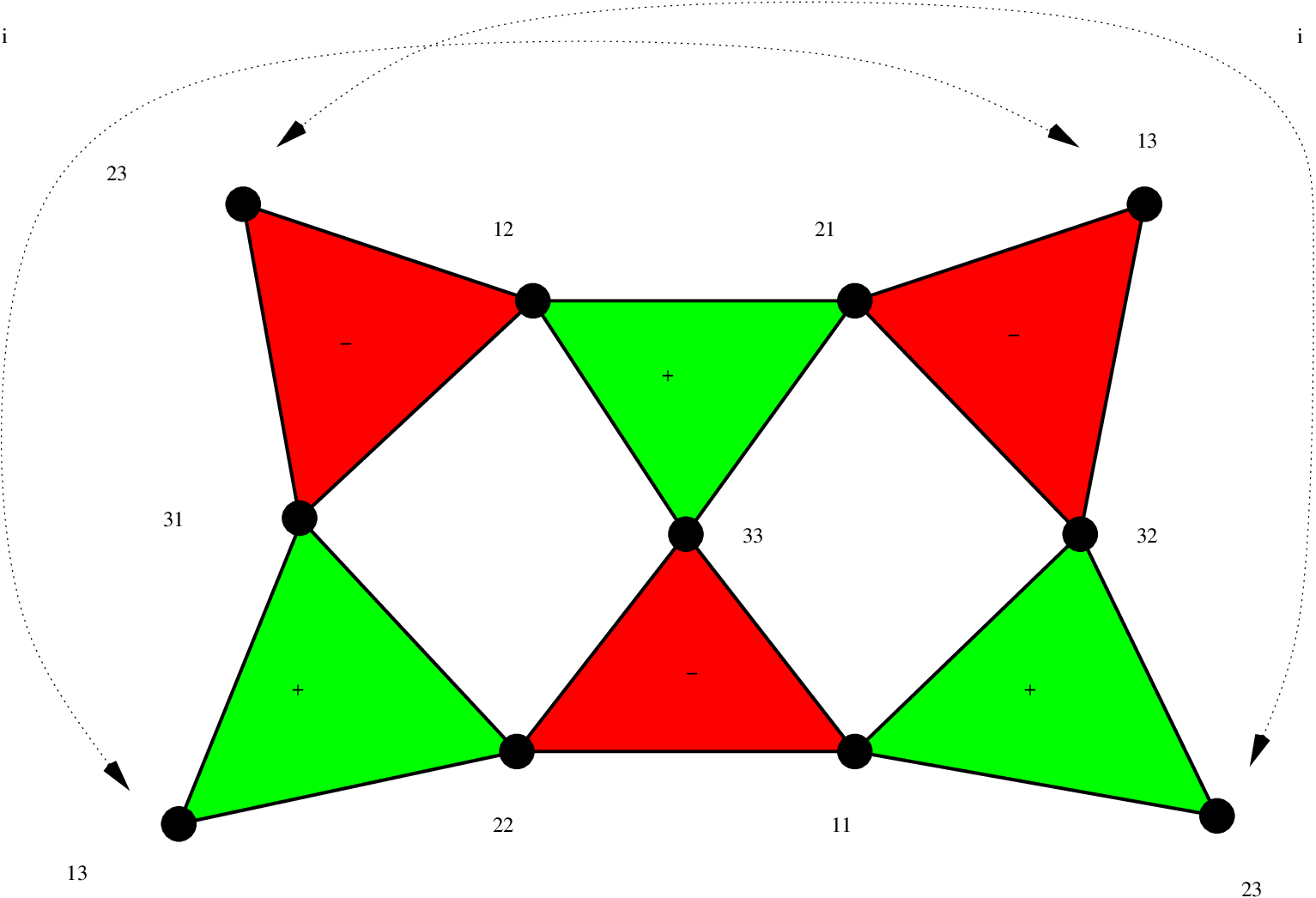}}\\\vspace*{1cm}
{\includegraphics[scale=.3]{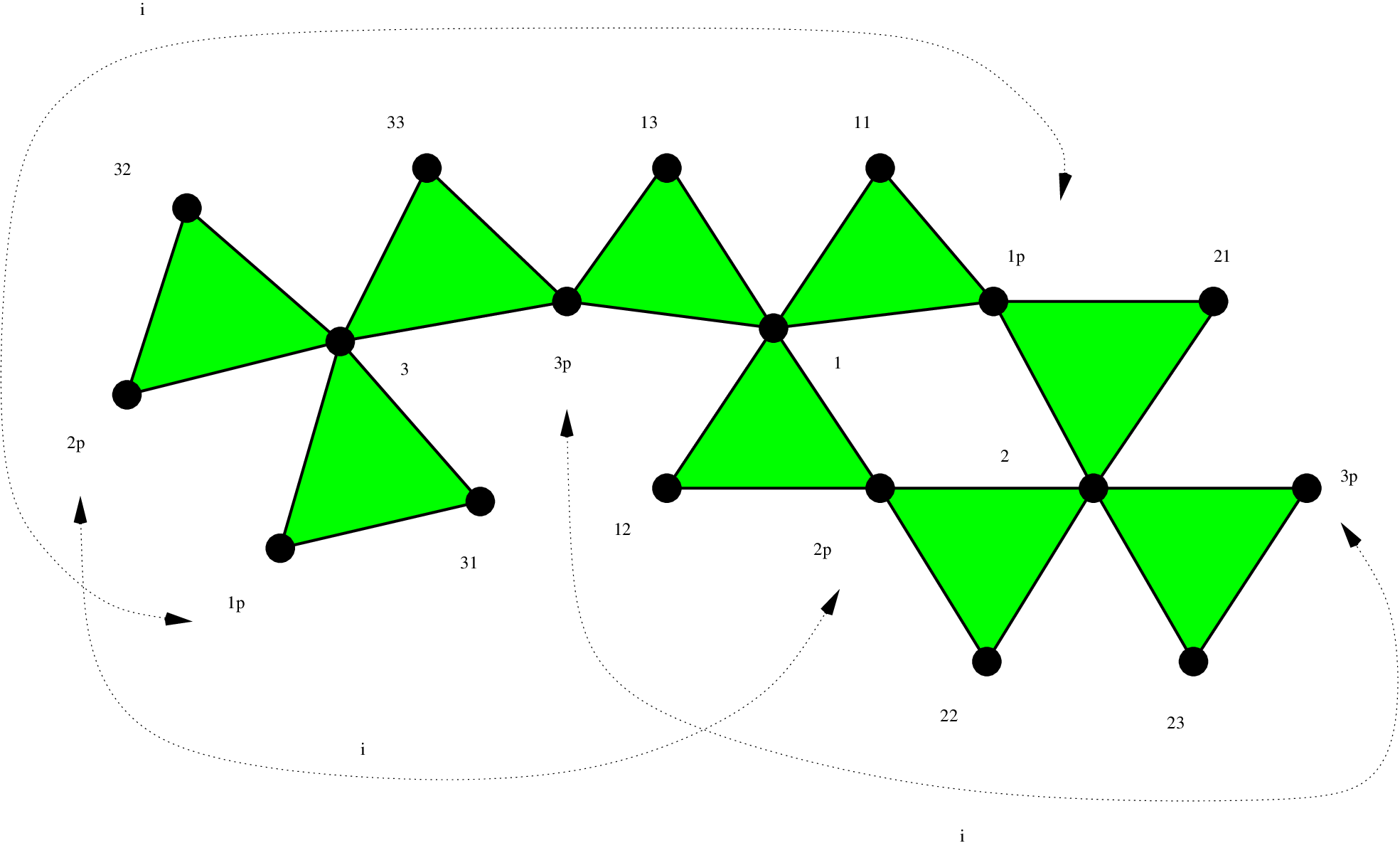}}
\caption{\small The compatibility, complementarity and correlation structure of an informationally complete set for {two qubits}. If two questions are connected by an edge, they are compatible. If two questions are {\it not} connected by an edge, they are complementary. Red triangles denote {\it odd} (or anti-)correlation; for instance, $Q_{zz}=\neg(Q_{xx}\leftrightarrow Q_{yy})$. Green triangles symbolize {\it even} correlation; for example, $Q_{zz}=Q_{xy}\leftrightarrow Q_{yx}$. Every question resides in exactly three triangles and is thereby compatible with six and complementary to eight other questions. (See \cite{Hoehn:2014uua} for further details.) }\label{fig_corr}
\end{center}
\end{figure}

For the sequel, it is important to note that we could have equally chosen to use the XOR instead of the XNOR connective to build up composite questions from individual questions (the XOR is up to an overall negation equivalent to the XNOR) \cite{Hoehn:2014uua}. For example, in that case and instead of $Q_{xx}$ as defined above, we would have $\tilde{Q}_{xx}:= \neg(Q_{x_1}\leftrightarrow Q_{x_2})$ as an``anti-correlation question", corresponding to ``are the answers to $Q_{x_1}$ and $Q_{x_2}$ different?." This would yield a logically equivalent representation of $O$'s experiences in his world, however, with flipped correlation structure (e.g., with odd correlation triangles in figure \ref{fig_corr} replaced by even ones, and vice versa). For $N>2$ gbits, different conventions of how to build up composite questions from the individual ones using the allowed XNOR or XOR can lead to many equivalent representations that will also arise in the reconstruction below. Therefore, to fix the representation, we shall henceforth make the convention that we build up composite questions from the individual ones for $N\geq2$ solely by the XNOR connective.

We note that a {\it pure state} as a state of maximal information will have length
\ba
I_{N=2}(\vec{r}_{\rm pure})=|\vec{r}_{\rm pure}|^2=3\,\texttt{bits},\nn
\ea
corresponding to $O$ knowing the answers to two independent and compatible questions with certainty (principle \ref{lim}) -- this yields two {\it independent} \texttt{bits} -- and, on account, of the XNOR properties also knowing the correlation of these questions -- this yields a third {\it dependent} \texttt{bit} \cite{Hoehn:2014uua}. For instance, if $O$ knows the answers to $Q_{x_1},Q_{x_2}$, he evidently knows the answer to $Q_{xx}$ too. By principle \ref{pres}, the time evolution image of any such state will feature the same length and thus constitutes a pure state too.

\subsubsection{Maximal mutually complementary sets of questions}

The three questions $\{Q_{x},Q_{y},Q_{z}\}$ form a single maximal mutually complementary set of questions for a single qubit. It is also useful to group the 15 questions for two qubits into {\it maximal mutually complementary sets} such that no further question can be added to such a set which would be complementary to all others in the same set too. This results in six complementarity sets, each containing five questions, which can be understood and represented conveniently in terms of question graphs
\ba\label{eq:pentineq}
\psfrag{x1}{\small $Q_{x_1}$}
\psfrag{y1}{\small $Q_{y_1}$}
\psfrag{z1}{\small $Q_{z_1}$}
\psfrag{x2}{\small $Q_{x_2}$}
\psfrag{y2}{\small $Q_{y_2}$}
\psfrag{z2}{\small $Q_{z_2}$}
\psfrag{xx}{\small $Q_{xx}$}
\psfrag{xy}{\small $Q_{xy}$}
\psfrag{xz}{\small $Q_{xz}$}
\psfrag{yx}{\small $Q_{yx}$}
\psfrag{yy}{\small $Q_{yy}$}
\psfrag{yz}{\small $Q_{yz}$}
\psfrag{zx}{\small $Q_{zx}$}
\psfrag{zy}{\small $Q_{zy}$}
\psfrag{zz}{\small $Q_{zz}$}
\includegraphics[scale=0.4]{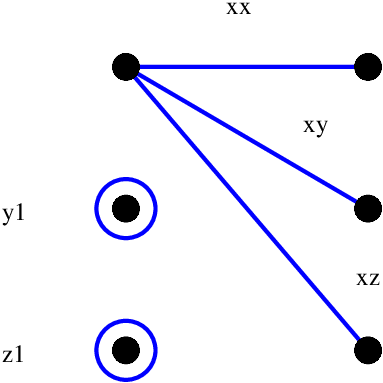} \q\q\q\q\q\q  \rbx{$\text{Pent}_1=\{Q_{xx},Q_{xy},Q_{xz},Q_{y_1},Q_{z_1}\},$}\nn\\\nn\\
\psfrag{x1}{\small $Q_{x_1}$}
\psfrag{y1}{\small $Q_{y_1}$}
\psfrag{z1}{\small $Q_{z_1}$}
\psfrag{x2}{\small $Q_{x_2}$}
\psfrag{y2}{\small $Q_{y_2}$}
\psfrag{z2}{\small $Q_{z_2}$}
\psfrag{xx}{\small $Q_{xx}$}
\psfrag{xy}{\small $Q_{xy}$}
\psfrag{xz}{\small $Q_{xz}$}
\psfrag{yx}{\small $Q_{yx}$}
\psfrag{yy}{\small $Q_{yy}$}
\psfrag{yz}{\small $Q_{yz}$}
\psfrag{zx}{\small $Q_{zx}$}
\psfrag{zy}{\small $Q_{zy}$}
\psfrag{zz}{\small $Q_{zz}$}
\includegraphics[scale=0.4]{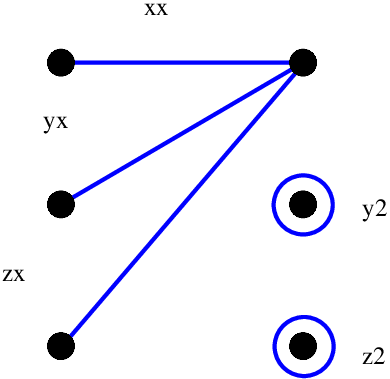} \q\q\q\q\q   \hspace*{-.1cm} \rbx{$\text{Pent}_2=\{Q_{xx},Q_{yx},Q_{zx},Q_{y_2},Q_{z_2}\},$} \nn\\\nn\\
\psfrag{x1}{\small $Q_{x_1}$}
\psfrag{y1}{\small $Q_{y_1}$}
\psfrag{z1}{\small $Q_{z_1}$}
\psfrag{x2}{\small $Q_{x_2}$}
\psfrag{y2}{\small $Q_{y_2}$}
\psfrag{z2}{\small $Q_{z_2}$}
\psfrag{xx}{\small $Q_{xx}$}
\psfrag{xy}{\small $Q_{xy}$}
\psfrag{xz}{\small $Q_{xz}$}
\psfrag{yx}{\small $Q_{yx}$}
\psfrag{yy}{\small $Q_{yy}$}
\psfrag{yz}{\small $Q_{yz}$}
\psfrag{zx}{\small $Q_{zx}$}
\psfrag{zy}{\small $Q_{zy}$}
\psfrag{zz}{\small $Q_{zz}$}
\includegraphics[scale=0.4]{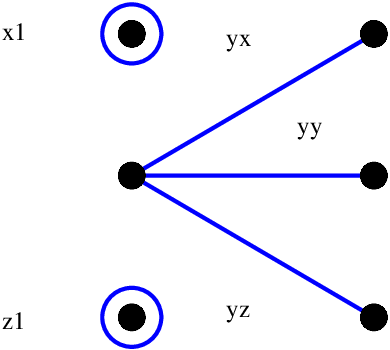} \q\q\q\q\q  \hspace*{.38cm}  \rbx{$\text{Pent}_3=\{Q_{yx},Q_{yy},Q_{yz},Q_{x_1},Q_{z_1}\},$}\nn\\\nn\\
\psfrag{x1}{\small $Q_{x_1}$}
\psfrag{y1}{\small $Q_{y_1}$}
\psfrag{z1}{\small $Q_{z_1}$}
\psfrag{x2}{\small $Q_{x_2}$}
\psfrag{y2}{\small $Q_{y_2}$}
\psfrag{z2}{\small $Q_{z_2}$}
\psfrag{xx}{\small $Q_{xx}$}
\psfrag{xy}{\small $Q_{xy}$}
\psfrag{xz}{\small $Q_{xz}$}
\psfrag{yx}{\small $Q_{yx}$}
\psfrag{yy}{\small $Q_{yy}$}
\psfrag{yz}{\small $Q_{yz}$}
\psfrag{zx}{\small $Q_{zx}$}
\psfrag{zy}{\small $Q_{zy}$}
\psfrag{zz}{\small $Q_{zz}$}
\includegraphics[scale=0.4]{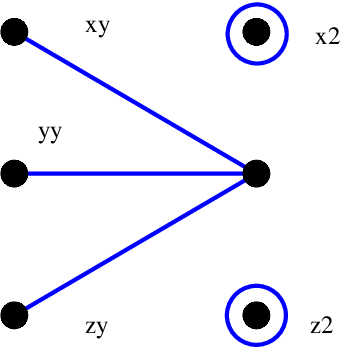} \q\q\q\q\q \hspace*{-.1cm}  \rbx{$\text{Pent}_4=\{Q_{xy},Q_{yy},Q_{zy},Q_{x_2},Q_{z_2}\},$} \nn\\\nn\\
\psfrag{x1}{\small $Q_{x_1}$}
\psfrag{y1}{\small $Q_{y_1}$}
\psfrag{z1}{\small $Q_{z_1}$}
\psfrag{x2}{\small $Q_{x_2}$}
\psfrag{y2}{\small $Q_{y_2}$}
\psfrag{z2}{\small $Q_{z_2}$}
\psfrag{xx}{\small $Q_{xx}$}
\psfrag{xy}{\small $Q_{xy}$}
\psfrag{xz}{\small $Q_{xz}$}
\psfrag{yx}{\small $Q_{yx}$}
\psfrag{yy}{\small $Q_{yy}$}
\psfrag{yz}{\small $Q_{yz}$}
\psfrag{zx}{\small $Q_{zx}$}
\psfrag{zy}{\small $Q_{zy}$}
\psfrag{zz}{\small $Q_{zz}$}
\includegraphics[scale=0.4]{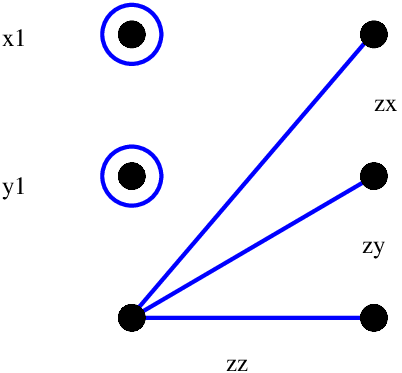} \q\q\q\q\q  \hspace*{.37cm}  \rbx{$\text{Pent}_5=\{Q_{zx},Q_{zy},Q_{zz},Q_{x_1},Q_{y_1}\},$}\nn\\\nn\\
\psfrag{x1}{\small $Q_{x_1}$}
\psfrag{y1}{\small $Q_{y_1}$}
\psfrag{z1}{\small $Q_{z_1}$}
\psfrag{x2}{\small $Q_{x_2}$}
\psfrag{y2}{\small $Q_{y_2}$}
\psfrag{z2}{\small $Q_{z_2}$}
\psfrag{xx}{\small $Q_{xx}$}
\psfrag{xy}{\small $Q_{xy}$}
\psfrag{xz}{\small $Q_{xz}$}
\psfrag{yx}{\small $Q_{yx}$}
\psfrag{yy}{\small $Q_{yy}$}
\psfrag{yz}{\small $Q_{yz}$}
\psfrag{zx}{\small $Q_{zx}$}
\psfrag{zy}{\small $Q_{zy}$}
\psfrag{zz}{\small $Q_{zz}$}
\includegraphics[scale=0.4]{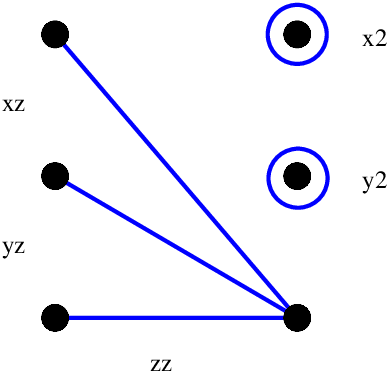}  \q\q\q\q\q \rbx{$\text{Pent}_6=\{Q_{xz},Q_{yz},Q_{zz},Q_{x_2},Q_{y_2}\}.$} 
\ea
The vertices correspond to individual questions while the edges connecting them represent the corresponding correlation questions. Vertices on the left correspond to qubit 1 and are compatible with the vertices on the right, corresponding to qubit 2, but not with each other. Vertices are compatible with edges if and only if they are vertices of the latter and edges are compatible if and only if they do not intersect in a vertex \cite{Hoehn:2014uua}. These complementarity relations are conveniently represented in figure \ref{fig:pentagon} in terms of a lattice of pentagons, where each pentagon corresponds to one of the six sets in (\ref{eq:pentineq}).
\begin{figure}[h!]
\begin{center}
\psfrag{x1}{\small $Q_{x_1}$}
\psfrag{y1}{\small $Q_{y_1}$}
\psfrag{z1}{\small $Q_{z_1}$}
\psfrag{x2}{\small $Q_{x_2}$}
\psfrag{y2}{\small $Q_{y_2}$}
\psfrag{z2}{\small $Q_{z_2}$}
\psfrag{xx}{\small $Q_{xx}$}
\psfrag{xy}{\small $Q_{xy}$}
\psfrag{xz}{\small $Q_{xz}$}
\psfrag{yx}{\small $Q_{yx}$}
\psfrag{yy}{\small $Q_{yy}$}
\psfrag{yz}{\small $Q_{yz}$}
\psfrag{zx}{\small $Q_{zx}$}
\psfrag{zy}{\small $Q_{zy}$}
\psfrag{zz}{\small $Q_{zz}$}
\psfrag{p1}{\small $1$}
\psfrag{p2}{\small $2$}
\psfrag{p3}{\small $3$}
\psfrag{p4}{\small $4$}
\psfrag{p5}{\small $5$}
\psfrag{p6}{\small $6$}
\includegraphics[scale=0.4]{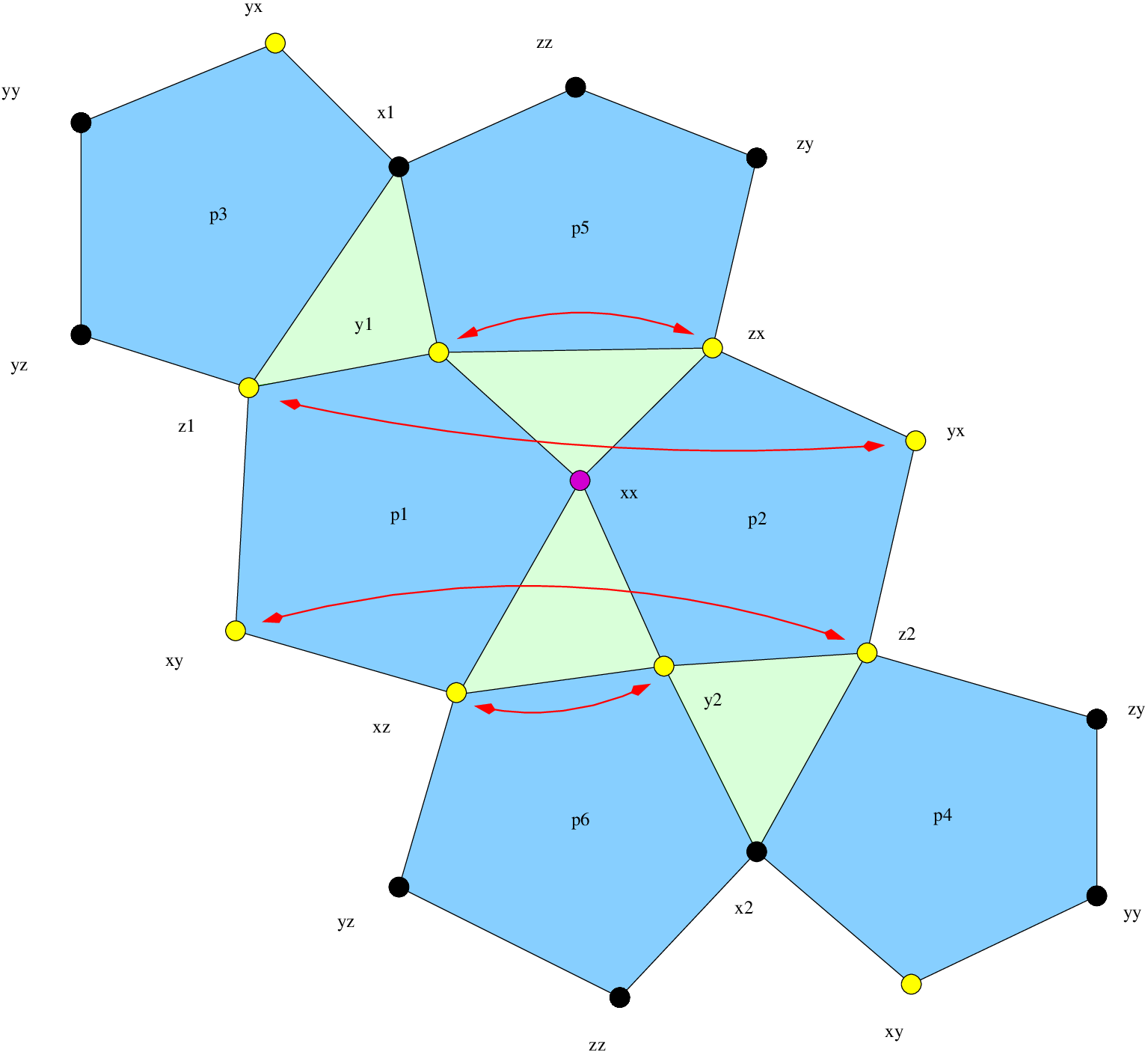}
\caption{\small The six maximal mutually complementary question sets (\ref{eq:pentineq}) represented as pentagons. In contrast to figure \ref{fig_corr}, if two questions lie in the same pentagon or are connected by an edge it means they are complementary (in all other cases they are compatible). Every question appears in precisely two pentagons such that every pentagon is connected to all other five. The green triangles are four of 20 maximal complementarity triangles (see appendix \ref{triangle-app}). The red arrows denote the information swap between pentagons 1 and 2 in (\ref{eq:pentswapq}) which leaves all pentagon equalities (\ref{pentin}) invariant and defines the time evolution generator (\ref{eq:pentswapT}).} \label{fig:pentagon}
 \end{center}\end{figure}
It can be easily checked, using such question graphs, that no other maximal complementarity sets of five or more questions exist. However, there also exist 20 maximal sets of three elements,  four of which are shown as green triangles in figure \ref{fig:pentagon}.  Since these 20 sets will only be employed for consistency checks of the complementarity inequalities (\ref{compstrong}) but not for the main flow of the arguments, we choose to display and explain them using the question graphs in appendix \ref{triangle-app}. There are no other maximal complementarity sets for two qubits.

\subsubsection{Constraints on the information distribution over the questions}\label{sec_infodist}

For pure states of a single qubit, the single maximal complementarity set carries precisely 1 \texttt{bit} of information, $I_{{N=1}}=\alpha_x+\alpha_y+\alpha_z=r_x^2+r_y^2+r_z^2=1$ \texttt{bit} ($r_i$ are the Bloch vector components) which, according to principle \ref{pres}, is a conserved `charge' of time evolution. This {\it defines} the unitary time evolution group $\rm{PSU}(2)$ and the Bloch sphere of pure states for a single qubit \cite{Hoehn:2014uua}. We shall now show the analogue for two qubits. 

Since every question is contained in precisely two pentagons, the sum of the information contained in each pentagon yields twice the total information of $O$ about the two qubits 
\ba
\sum_{a=1}^6\,I(\text{Pent}_a)=2\left(\sum_{i=x,y,z}(\alpha_{i_1}+\alpha_{i_2})+\sum_{i,j=x,y,z}\alpha_{ij}\right)=2\,I_{{N=2}}(\vec{r}),\label{totinfo}
\ea
where, thanks to (\ref{compstrong}), $0\,\text{\texttt{bits}}\leq I(\text{Pent}_a)=\sum_{i\in\text{Pent}_a}\alpha_i\leq 1\,\text{\texttt{bit}}$ is the sum of the information carried by the five questions in pentagon $a$. Since for pure states $I_{{N=2}}(\vec{r}_{\rm pure})=3$ \texttt{bits}, it follows that every pure state must satisfy what we shall call the {\it pentagon equalities}
\ba
\text{\bf pure states:}\q\q\q\q I(\text{Pent}_a)\equiv1\,\text{\texttt{bit}},\q\q a=1,\ldots,6.\label{pentin}
\ea
In analogy to the single qubit case, every pentagon therefore carries precisely one \texttt{bit} of information for every pure state. Hence, the pentagon equalities must also be conserved `informational charges' of time evolution. We shall see shortly in section \ref{sec_psu4} that these relations single out the unitary group for two qubits. There are no such conserved informational charges for the maximal complementarity sets consisting of only three elements (see appendix \ref{triangle-app}).

These identities are remarkable because the underlying probabilities $y_i$ in $\alpha_i=(2\,y_i-1)^2$ of the 15 questions are independent coordinates on $\Sigma_2$ and thus do not satisfy any linear identities for all pure states. This observation emphasises the strength of considering the information content in the questions in addition to their probabilities in quantum theory. In fact, writing $|\psi\rangle=\alpha|x_+x_+\rangle+\beta|x_-x_-\rangle+\gamma|x_+x_-\rangle+\delta|x_-x_+\rangle$ for an arbitrary two-qubit pure state, where $|\alpha|^2+|\beta|^2+|\gamma|^2+|\delta|^2=1$ and $x_\pm$ stands for `up/down' in $x$-direction, one can easily verify (using a computer programme) that quantum theory actually satisfies the pentagon equalities (\ref{pentin}) for the quadratic measure $\alpha_i=(2\,y_i-1)^2$ (where, e.g., $y_{x_1}$ is the probability that the spin of qubit 1 is `up', $y_{xx}$ is the probability that the spins of qubit 1 and 2 are correlated in $x$-direction, etc.).
For example, to put the pentagon identities (3.3) in the case of quantum theory into a more familiar language, the identity for $\text{Pent}_1$ reads for pure states
 \begin{equation}
 I(\text{Pent}_1) = \langle \sigma_y\otimes\mathds{1}\rangle^2+\langle \sigma_z\otimes\mathds{1}\rangle^2+\langle\sigma_x\otimes\sigma_x\rangle^2+\langle\sigma_x\otimes\sigma_y\rangle^2+\langle\sigma_x\otimes\sigma_z\rangle^2=1,\nonumber
\end{equation}
and similarly for the other pentagon identities. These informational pentagon identities (\ref{pentin}) seem to have previously gone unnoticed in quantum theory.

The pentagon equalities have two interesting consequences for pure states. Firstly, $I(\text{Pent}_1)+I(\text{Pent}_3)+I(\text{Pent}_5)-I(\text{Pent}_2)-I(\text{Pent}_4)-I(\text{Pent}_6)=0$ implies that $O$ knows as much individual information about qubit 1 as about qubit 2
\ba
\text{\bf pure states:}\q\q\q\q\alpha_{x_1}+\alpha_{y_1}+\alpha_{z_1}=\alpha_{x_2}+\alpha_{y_2}+\alpha_{z_2}.\nn
\ea
(Clearly, this identity cannot hold for all states of non-maximal information.) We exhibit further such identities in appendix \ref{triangle-app}. Secondly, the pentagon equalities entail that the amount of information carried by any question is determined by the amount of information carried by the six questions compatible with it -- and vice versa. In terms of the correlation triangles in figure \ref{fig_corr} this results in a `bulk/boundary' relation. For instance, for the three correlation triangles in figure \ref{fig:1bitquestion}, excised from figure \ref{fig_corr}, (\ref{totinfo}, \ref{pentin}) yield
\ba
\text{\bf pure states:}\q\q\q &&\alpha_{z_1}=\text{Boundary}_{z_1}-1\,\text{\texttt{bit}},\q\q\text{where}\label{bdry}\\
&&1\,\text{\texttt{bit}}\leq\text{Boundary}_{z_1}:=\alpha_{x_2}+\alpha_{zx}+\alpha_{y_2}+\alpha_{zy}+\alpha_{zz}+\alpha_{z_2}\leq 2\,\text{\texttt{bit}}.\nn
\ea
The special case $\alpha_{z_1}=1$ \texttt{bit} arises if and only if Boundary${}_{z_1}=2$ \texttt{bits} and the three triangles adjacent to $Q_{z_1}$ thus carry all $3$ \texttt{bits} of information. Analogous relations hold for any other question in figure \ref{fig_corr}.
\begin{figure}[h!]
\begin{center}
\psfrag{i}{}
\psfrag{+}{\small $+$}
\psfrag{z1}{\small $Q_{z_1}$}
\psfrag{x2}{\small $Q_{x_2}$}
\psfrag{y2}{\small $Q_{y_2}$}
\psfrag{z2}{\small $Q_{z_2}$}
\psfrag{zx}{\small $Q_{zx}$}
\psfrag{zy}{\small $Q_{zy}$}
\psfrag{zz}{\small $Q_{zz}$}
\psfrag{az1}{\small $\alpha_{z_1}=1$ \texttt{bit}}
\psfrag{azx}{\small \hspace*{.3cm}$\alpha_{zx}=\alpha_{x_2}$}
\psfrag{azy}{\small $\alpha_{zy}=\alpha_{y_2}$}
\psfrag{azz}{\small\hspace*{.3cm}$\alpha_{zz}=\alpha_{z_2}$}
\hspace{-1.5cm}\includegraphics[scale=0.4]{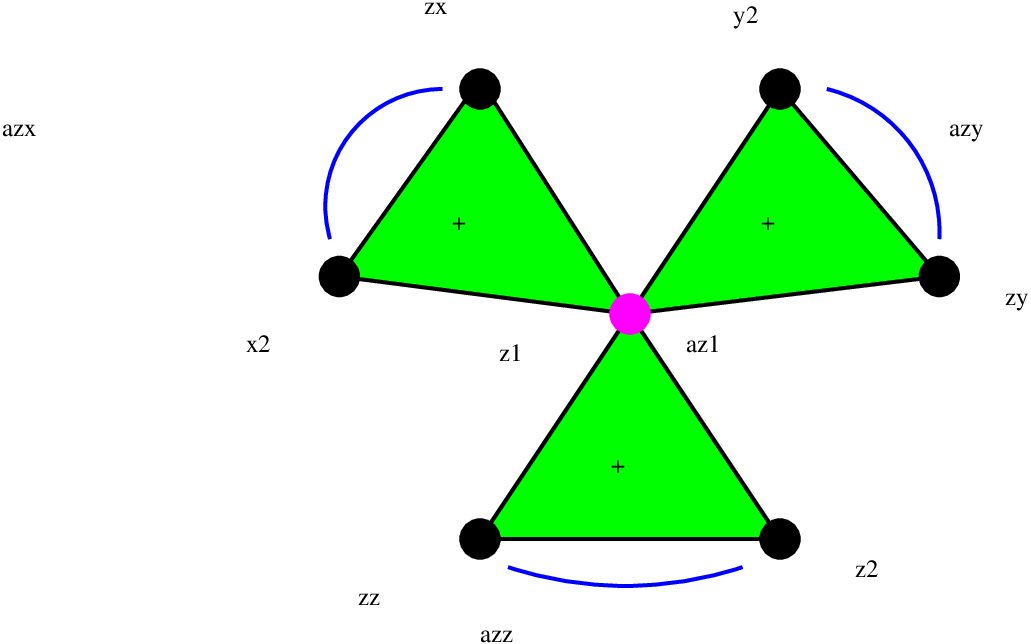}
\caption{\small If any question carries precisely 1 \texttt{bit}, the adjacent correlation triangles carry all remaining information -- also for states of non-maximal information. Moreover, within any correlation triangle, the information contained in the two other questions must be equal. } \label{fig:1bitquestion}
\end{center}
\end{figure}

It is easy to convince oneself, using that any question in a correlation triangle of figure \ref{fig_corr} is either the correlation or anti-correlation of the other two questions in the triangle, that whenever one question carries $1$ \texttt{bit} of information, the other two questions in the correlation triangle must carry equal amounts. For example, if the central vertex $Q_{z_1}$ in figure \ref{fig:1bitquestion} carries $\alpha_{z_1}=1$ \texttt{bit}, then $\alpha_{zz}=\alpha_{z_2}$, etc.\ as indicated.\footnote{E.g., if $O$ knew with certainty that $Q_{z_1}=$ `yes', he would know that the answers to $Q_{zz},Q_{z_2}$ are correlated, such that $y_{zz}=y_{z_2}$ and hence $\alpha_{zz}=\alpha_{z_2}$. (Note that $y_{zz}=y_{z_2}=\f{1}{2}$ is possible too, of course.)} While this must hold for states of non-maximal information too, for pure states it also follows directly from the pentagon identities (\ref{pentin}): e.g., inserting $\alpha_{z_1}=1$ \texttt{bit}, and thus $\alpha_i=0$ for any $Q_i$ complementary to $Q_{z_1}$, into $I(\text{Pent}_5)+I(\text{Pent}_6)-I(\text{Pent}_2)-I(\text{Pent}_4)=0$ implies directly $\alpha_{zz}=\alpha_{z_2}$. The analogous results can be similarly derived for all triangle neighbours of any $\alpha_i=1$ \texttt{bit}.

These observations will become valuable shortly.

\subsubsection{Derivation of the unitary group}\label{sec_psu4}

Any given time evolution acts linearly and continuously on the states (in between interrogations), $r_i(t)=T_{ij}(t)\,r_j(0)$, where $\vec{r}=2\,\vec{y}-\vec{1}\,\in\mathbb{R}^{15}$ is the generalized Bloch-vector, and constitutes a one-parameter subgroup of $\ct_2$ which itself is a group \cite{Hoehn:2014uua}. Principle \ref{pres} asserts that the total information is a `conserved charge' of time evolution, $I_{{N=2}}(T(t)\cdot\vec{r})=I_{{N=2}}(\vec{r})$. Since the total information is given by the square norm of the Bloch vector (\ref{infomeasure}), this implies that $\ct_2\subset\SO(15)$ (time evolution must be connected to the identity). In fact, $\ct_2$ must be a proper subgroup of $\SO(15)$ because the latter contains transformations that map all $3$ \texttt{bits} of information contained in any pure state into a single question, e.g., $\vec{r}=(1,1,1,0,\ldots,0)$ to $\vec{r}=(\sqrt{3},0,0,0,\ldots,0)$ -- which is illegal. 

In particular, every pure state evolves to a pure state. Therefore, the pentagon equalities (\ref{pentin}) are likewise `conserved charges', such that we must have $I(\text{Pent}_a(T(t)\cdot\vec{r}))=I(\text{Pent}_a(\vec{r}))$, $a=1,\ldots,6$. Given that $\ct_2\subset\SO(15)$ and $T(t_1+t_2)=T(t_1)\cdot T(t_2)=T(t_2)\cdot T(t_1)$, we may write $T(t)=\exp(t \,G)$ for some generator $G\in\so(15)$ which yields (to linear order in $t$)
\ba
 \sum_{i\in\text{Pent}_a,1\leq j\leq 15}r_i\,G_{ij}\,r_j=0,\q\q\q\q a=1,\ldots6, \label{eq:pentGeq}
\ea
where $G_{ij}=-G_{ji}$ since $G\in\so(15)$. This implies, in particular, conservation of the total information $I_{{N=2}}$.

(\ref{eq:pentGeq}) constitute restrictions on both the set of pure states and time evolution generators; {\it any} legal pure state must satisfy (\ref{eq:pentGeq}) for every legal time evolution generator $G$ and, vice versa, {\it any} legal time evolution generator must satisfy (\ref{eq:pentGeq}) for every legal pure state. Of course, at this stage, we neither know what the set of legal pure states nor what the time evolution group $\ct_2$ is. As we shall see shortly, however, the pentagon equalities (\ref{pentin}) and the conditions (\ref{eq:pentGeq}) are sufficient, together with the principles and background assumptions, to single out $\ct_2=\rm{PSU}(4)$ and the two qubit quantum state space. This is subject to the already employed convention to use only the XNOR connective $\leftrightarrow$ (rather than the XOR) for building multipartite questions from the individuals, e.g., $Q_{xx}=Q_{x_1}\leftrightarrow Q_{x_2}$.

To this end, we recall principle \ref{time} which implies that for {\it any} state the set of legal time evolutions is the maximal one compatible with the other principles. Given that the set of all time evolutions forms a group (which acts linearly and state independently on states), the principle thus requires the latter somehow to be `maximal'. In particular, we can check maximality for specific states that we know must be in $\Sigma_N$. Namely, for any set of mutually compatible questions, there must, by definition, exist a state in which these questions are simultaneously answered.
Furthermore, for every set of $N$ mutually compatible and independent questions (as in principle \ref{lim}) there must exist a state for every `yes/no'-answer configuration. For $N=2$ every such state must also respect the correlation structure of figure \ref{fig_corr}. This entails that the set of legal pure states must contain
\ba
\vec{r}=\vec{\delta}_{z_1}+\vec{\delta}_{z_2}+\vec{\delta}_{z_1z_2},\q\q\q \vec{r}=\vec{\delta}_{z_1}-\vec{\delta}_{z_2}-\vec{\delta}_{z_1z_2},\q\q\q \vec{r}=\vec{\delta}_{z_1}+\vec{\delta}_{z_2}-\vec{\delta}_{z_1z_2},\label{legalstates}
\ea
where $\vec{\delta}_i$ denotes a vector in $\mathbb{R}^{15}$ with the $i$-th component equal to $1$ and all others $0$. But (\ref{eq:pentGeq}) must, in particular, be satisfied for these three pure states which results in
\ba
G_{z1z1} + G_{z1z2} + G_{z1(z1z2)}&=&0,\q\q\q
G_{z1z1} - G_{z1z2} - G_{z1(z1z2)}=0,\\\nn
G_{z1z1} + G_{z1z2} - G_{z1(z1z2)}&=&0,\nn
\ea
and thus
\ba
G_{z1z1} = G_{z1z2} = G_{z1(z1z2)}=0.\nn
\ea
It is easy to convince oneself, by repeating the same argument with every correlation or anti-correlation triangle in figure \ref{fig_corr}, that {\it any} legal time evolution generator $G$ must feature
\ba
G_{ij}=0,\q\q\q\text{whenever $Q_i,Q_j$ are compatible.}\label{gij}
\ea

That is, {\it any} legal time evolution generator can only have non-zero components for pairs of indices corresponding to complementary questions. It follows from figure \ref{fig_corr} that every question is complementary to precisely eight questions from the informationally complete set. Since there are 15 questions, there are precisely $15\times8/2=60$ pairs of complementary questions. Thus, given the anti-symmetry $G_{ij}=-G_{ji}$, there could at most be $60$ linearly independent generators satisfying conditions (\ref{eq:pentGeq}) for every pure state. 

We shall now construct such a set of 60 linearly independent generators which satisfy (\ref{gij}) and have a clear operational meaning. However, as we shall see shortly, only 15 of such generators can be consistent with the principles at once. 


Since any two pentagons overlap in precisely one question, there is no transformation which redistributes the information only within a single pentagon and leaves all pentagon equalities invariant. However, for any pair of pentagons there exists a {\it unique} transformation which swaps the information from one pentagon to the other and leaves all other pentagons and all pentagon equalities (\ref{pentin}) invariant. Consider, e.g., pentagons $\text{Pent}_1$ and $\text{Pent}_2$ in figure \ref{fig:pentagon}. The red arrows denote the complete information swap ($\longleftrightarrow$ is not to be confused with the XNOR)
\begin{equation}
\alpha_{y_1}\longleftrightarrow \alpha_{zx} \ (\text{Pent}_5), \ \alpha_{z_1}\longleftrightarrow \alpha_{yx} \ (\text{Pent}_3), \ \alpha_{xy}\longleftrightarrow \alpha_{z_2} \ (\text{Pent}_4), \ \alpha_{xz}\longleftrightarrow \alpha_{y_2} \ (\text{Pent}_6) \label{eq:pentswapq}
\end{equation}
between the two pentagons which leaves the composite $\alpha_{xx}$ and all other questions invariant. Since each of the swaps in (\ref{eq:pentswapq}) occurs within precisely one of the remaining four pentagons, all pentagon equalities (\ref{pentin}) are preserved. Such a full swap of information between two pentagon sets is thus a good candidate for a legal time evolution. W.l.o.g.\ this swap transformation can be written as $T=\exp((\pi/2)\,G^{\text{Pent}_1,\text{Pent}_2})$ acting on $\vec{r}$ with
\begin{gather}
G_{ij}^{\text{Pent}_1,\text{Pent}_2}=\delta_{iy_1}\delta_{jzx}+s_1\,\delta_{iz_1}\delta_{jyx}+s_2\,\delta_{ixy}\delta_{jz_2}+s_3\,\delta_{ixz}\delta_{jy_2}-(i\longleftrightarrow j), \label{genansatz}
\end{gather}
where $s_1,s_2,s_3$ are three signs to be determined. Given that there are four linearly independent terms in the generator, one can produce precisely four linearly independent generators from (\ref{genansatz}) by changing the signs $s_1,s_2,s_3$. However, a legal time evolution generator must be consistent with the correlation structure in figure \ref{fig_corr} and the constraints on information distribution of section \ref{sec_infodist}. In appendix \ref{app_swap} it is shown that these constraints uniquely determine the generator candidate (up to an unimportant overall sign) to
\begin{gather}
G_{ij}^{\text{Pent}_1,\text{Pent}_2}=\delta_{iy_1}\delta_{jzx}-\delta_{iz_1}\delta_{jyx}+\delta_{ixy}\delta_{jz_2}-\delta_{ixz}\delta_{jy_2}-(i\longleftrightarrow j). \label{eq:pentswapT}
\end{gather}

For every pair of pentagons there exists such a unique information swap, resulting in $\binom{6}{2}=15$ transformations which are consistent with the correlation structure and the constraints on the information distribution. The form of their generators can be found similarly (see (\ref{swapgen1}, \ref{swapgen2}) in appendix \ref{app_swap}). There are nine swaps leaving a composite and six swaps leaving an individual question as the overlap of the pentagons invariant. As an example for the latter, the information swap between Pent${}_3$ and Pent${}_5$ leaves the individual $\alpha_{x_1}$ invariant and is generated by
\begin{gather}
G^{\text{Pent}_3,\text{Pent}_5}_{ij}=\delta_{iy_1}\delta_{jz_1}-\delta_{iyx}\delta_{jzx}-\delta_{iyy}\delta_{jzy}-\delta_{iyz}\delta_{jzz}-(i\longleftrightarrow j). \label{eq:TSO3x1}
\end{gather}

In Appendix \ref{app_swap}, it is shown that the various sign distributions over these 15 generators, as in (\ref{genansatz}), produce precisely 60 linearly independent generators satisfying (\ref{gij}). Regardless of the sign structure, each of these 60 linearly independent generators thus corresponds precisely to a complete information swap between two pentagon sets and for each pair of pentagon sets there are four linearly independent such swap generators. That is, whatever the resulting time evolution group consistent with (\ref{eq:pentGeq}) may be, it must be fully generated by complete information swaps between pentagons. Clearly, it cannot be generated by all 60 such generators as the only state which would satisfy (\ref{eq:pentGeq}) for all 60 generators is the state of no information $\vec{r}=0$. Indeed, requiring consistency with the correlation structure of figure \ref{fig_corr}, and thus consistency with the convention of only using the XNOR connective for building bipartite questions from individuals, results in one permissible generator candidate per pair of pentagon sets and in precisely the 15 candidate generators exhibited here and in appendix \ref{app_swap}. The time evolution group can thus not be generated by any other than these 15 surviving generator candidates; in fact, the remaining 45 possible sign distributions can be argued to correspond to different conventions (see appendix \ref{app_swap}).

Using a computer algebra program, one can easily check that these 15 surviving information swap generators (\ref{swapgen1}, \ref{swapgen2}) (see appendix \ref{app_swap})
\begin{itemize}
\item[(a)] satisfy the commutator algebra of $\su(4)\simeq\so(6)\simeq\mathfrak{psu}(4)\subset\so(15)$, and
\item[(b)] coincide exactly (in some cases up to an unimportant overall sign) with the adjoint representation 
\ba
(G^i)_{jk}:=f^{ijk}=\f{1}{4}\,\tr([\sigma_j,\sigma_k]\,\sigma_i)\nn
\ea
of the 15 fundamental generators of the unitary group $\SU(4)$. $f^{ijk}$ are the structure constants of $\SU(4)$, the indices $i,j,k$ take the 15 values $x_1,y_1,z_1,x_2,\ldots,xz,xy,\ldots,zz$ (as in our reconstruction) and $\sigma_{x_1}:=\sigma_x\otimes\mathds{1}$, ..., $\sigma_{x_2}:=\mathds{1}\otimes\sigma_x$, ..., $\sigma_{xx}:=\sigma_x\otimes\sigma_x$, ..., $\sigma_{zz}:=\sigma_z\otimes\sigma_z$ and $\sigma_x,\sigma_y,\sigma_z$ are the usual Pauli matrices. In particular, the ordering of coincidence is $G^i\equiv\pm G^{\text{Pent}_a,\text{Pent}_b}$ where $Q_i$ is the single question in $\text{Pent}_a\cap\text{Pent}_b$ which is left invariant by the swap; e.g., $G^{xx}\equiv G^{\text{Pent}_1,\text{Pent}_2}$, etc.

%

\end{itemize}

Next, we must check whether the {(image of any state under the)} full group $\ct_2'$ generated by exponentiating the 15 surviving swap generators (\ref{swapgen1}, \ref{swapgen2}) and their linear combinations is consistent with the principles and thus by principle \ref{time} {whether $\ct_2'$ is} contained in $\ct_2$. Clearly, $\ct_2'$ obeys principle \ref{pres} by construction and the only background assumption which it is not evidently consistent with are the complementarity inequalities (\ref{compstrong}). Similarly, the only structure entailed by principles \ref{lim} and \ref{unlim} that $\ct_2'$ is not evidently consistent with is the correlation structure of figure \ref{fig_corr}. We thus have to expose $\ct'_2$ to a few non-trivial consistency checks. In appendix \ref{app_A} it is shown that
\begin{itemize}
\item[(i)] (\ref{eq:pentGeq}) results in 15 independent conservation equations, one for each swap generator:
\ba
\sum_{i\in\text{Pent}_a,1\leq j\leq 15}r_i\,G_{ij}^{\text{Pent}_a,\text{Pent}_b}\,r_j=0,\q\q\q a<b,\q\q\q a,b=1,\ldots,6.\label{15cons}
\ea
All other combinations of the swap generators with the Bloch vector components of some pentagon lead via (\ref{eq:pentGeq}) to conservation equations which are either trivial or implied by (\ref{15cons}). (Appendix \ref{app_geneq})
\item[(ii)] Together with the six pentagon equalities (\ref{pentin}) these 15 conservation equations (\ref{15cons}) constitute 21 equations which define an invariant set under $\ct'_2$, i.e.\ for {\it any} Bloch vector $\vec{r}$ solving (\ref{pentin}, \ref{15cons}), $T(t)\cdot\vec{r}$ will again solve these 21 equations for all $T(t)\in\ct_2'$. In particular, writing $T(t)=\exp(t\,G)$ with $G$ in the lie algebra of $\ct_2'$, the pentagon equalities will be preserved to all orders in $t$ (recall that (\ref{eq:pentGeq}) was only the preservation condition to linear order in $t$). (Appendix \ref{app_geneq})
\item[(iii)] The complementarity inequalities (\ref{compstrong}) are preserved by $\ct'_2$ and all Bloch vectors $\vec{r}$ satisfying (\ref{pentin}, \ref{15cons}) also necessarily obey all complementarity inequalities. (Appendix \ref{app_comp})
\item[(iv)] $\ct_2'$ preserves the correlation structure of figure \ref{fig_corr} and, fixing the convention to only employ the XNOR for constructing multipartite questions from individuals, (\ref{pentin}, \ref{15cons}) implies unambiguously the correlation structure of figure \ref{fig_corr}. (Appendix \ref{app_corr})
\end{itemize}
Accordingly, $\ct'_2$ maps states satisfying principles \ref{lim}--\ref{pres}, all background assumptions and (\ref{pentin}, \ref{15cons}) to other such states. Principle \ref{time} {requires the existence of} {\it any} time evolution fulfilling these conditions such that we must indeed conclude $\ct'_2\subseteq\ct_2$.


But which group is $\ct_2'$? In (a) it was seen that the swap generators form the Lie algebra of $\su(4)\simeq\so(6)\simeq\mathfrak{psu}(4)$. $\SU(4)$ is a double cover of $\SO(6)$ which, in turn, is a double cover of $\rm{PSO}(6)\simeq\rm{PSU}(4)$. The exponentiation of the swap generators (\ref{swapgen1}, \ref{swapgen2}) can{\it not} result in a faithful representation of either $\SU(4)$ or $\SO(6)$ -- which feature a non-trivial centre --, because by Schur's lemma all centre elements read $c\cdot\mathds{1}$ with $c^{15}=1$ such that $c\equiv 1$ and the representation is centreless. The exponentiation will thus result in a faithful representation of $\rm{PSU}(4)$. Hence, $\rm{PSU}(4)\subseteq\ct_2\subset\SO(15)$. 

Can $\ct_2$ contain any additional transformations not contained in $\ct'_2$? Given that the 15 surviving swap generators (\ref{swapgen1}, \ref{swapgen2}) constitute a maximal set consistent with (\ref{eq:pentGeq}) and the correlation structure of figure \ref{fig_corr}, we must conclude that the answer is negative. In fact, in appendix \ref{app_max} it is further shown that $\rm{PSU}(4)$ is a {\it maximal subgroup}\footnote{A {maximal subgroup} $H$ of a group $G$ is a proper subgroup which is not contained in any other subgroup other than $H$ itself and the full group $G$.} of $\SO(15)$. Since $\ct_2$ must be a proper subgroup of $\SO(15)$, we conclude that 
\ba
\ct_2\simeq\rm{PSU}(4).\nn
\ea
This is the correct time evolution group for two qubits in quantum theory and, thanks to (b), we obtain it in the correct Bloch vector representation.\footnote{The adjoint action of $U\in\SU(4)$ in an evolution $\rho\mapsto U\,\rho\,U^\dag$ of a $4\times4$ density matrix is ignorant of the phase in $U$ and therefore yields a representation of $\rm{PSU}(4)$.}

It is interesting to note that the six generators (\ref{swapgen2}) of the information swaps between the pentagons which overlap in an individual question satisfy the commutator algebra of $\so(3)\oplus\so(3)$ and therefore generate the subgroup $\rm{PSU}(2)\times\rm{PSU}(2)\simeq\SO(3)\times\SO(3)$ of product unitaries corresponding to the Bloch sphere rotations of the two individual qubits. By contrast, the nine generators (\ref{swapgen1}) of the swaps between pentagons overlapping in a composite question generate the entangling unitaries in $\rm{PSU}(4)$ (see appendix \ref{app_swap}).

\subsubsection{State space reconstruction}\label{sec_statesn2}

Now that we have concluded that $\ct_2=\rm{PSU}(4)$ is the correct time evolution group, we are also in a position to determine $\Sigma_2$. The 21 equations (\ref{pentin}, \ref{15cons}) define a $\ct_2$-invariant set of Bloch vectors and every legal pure state must lie within it.
One may be worried that these 21 equations over-constrain the 15 components of the Bloch vector $\vec{r}$. However, the legitimate `product' states (\ref{legalstates}) satisfy all 21 equations and $\ct_2$ preserves these equations such that the set defined by (\ref{pentin}, \ref{15cons}) is clearly non-empty. In fact, in appendix \ref{app_product} it is shown, using the information distribution results of section \ref{sec_infodist}, that 
for {\it any} Bloch vector fulfilling (\ref{pentin}, \ref{15cons}) there exists a time evolution in $\ct_2$ which maps all information to the `product state' form $\alpha_{z_1}=\alpha_{z_2}=\alpha_{zz}=1$ \texttt{bit} and all other $\alpha_i=0$. This informational configuration has eight solutions in terms of the Bloch vector which can be divided into two mutually exclusive sets (all other $r_i=0$)
 \begin{eqnarray}
\cs_{\rm XNOR}:&& \ 1./2. \ r_{zz}=+1, \ r_{z_1}=\pm 1, r_{z_2}=\pm 1, \ \ \ 3./4. \ r_{zz}=-1, \ r_{z_1}=\pm 1, r_{z_2}=\mp 1, \nonumber\\
\cs_{\rm XOR}:&& \ 5./6. \ r_{zz}=-1, \ r_{z_1}=\mp 1, r_{z_2}=\mp 1, \ \ \ 7./8. \ r_{zz}=+1, \ r_{z_1}=\mp 1, r_{z_2}=\pm 1,\nn
\end{eqnarray}
the first of which is consistent with the XNOR conjunction $Q_{zz}=Q_{z_1}\leftrightarrow Q_{z_2}$, the second of which corresponds to the XOR connective $Q_{zz}=\neg(Q_{z_1}\leftrightarrow Q_{z_2})$. These are two perfectly consistent conventions for building up the composite questions (the information measure cannot distinguish between XNOR and XOR) \cite{Hoehn:2014uua}.

It can be easily verified that the four solutions in $\cs_{XNOR}$ are connected by elements of $\ct_2$, as are the four solutions in $\cs_{XNOR}$.\footnote{For instance, solutions 1 and 2  (or 5 and 6) are mapped to solutions 4 and 3 (or 8 and 7), respectively, by $T=\exp(\pi\,G^{\text{Pent}_3,\text{Pent}_5})$ or $T=\exp(\pi\,G^{\text{Pent}_1,\text{Pent}_5})$. Similarly, solutions 1 and 2  (or 5 and 6) are mapped to solutions 3 and 4 (or 7 and 8), respectively, by $T=\exp(\pi\,G^{\text{Pent}_4,\text{Pent}_6})$ or $T=\exp(\pi\,G^{\text{Pent}_2,\text{Pent}_6})$. (See appendix \ref{app_A} for the explicit representations of the swap generators and formulas for their exponentiation.)} However, the two sets of Bloch vectors generated by acting with $\ct_2$ on each of $\cs_{XNOR}$ and $\cs_{XOR}$ are {\it not} connected by time evolution since, using the time connectedness of each set, 
\ba
\ct_2(\cs_{XOR})&:=&\{T\cdot(-1)(\vec{\delta}_{z_1}+\vec{\delta}_{z_2}+\vec{\delta}_{z_1z_2})\,\big|\,T\in\rm\ct_2\}\nn\\
&=&-\{T\cdot(\vec{\delta}_{z_1}+\vec{\delta}_{z_2}+\vec{\delta}_{z_1z_2})\,\big|\,T\in\rm\ct_2\}=:-\ct_2(\cs_{XNOR})\nn
\ea
such that $\ct_2(\cs_{XOR})$ and $\ct_2(\cs_{XNOR})$ are related by a global multiplication with $-\mathds{1}_{15\times15}\notin\ct_2\subset\SO(15)$ which commutes with all elements in $\ct_2$. This corresponds precisely to a change of convention of building composite questions with XOR rather than XNOR.

In conclusion, the 21 equations (\ref{pentin}, \ref{15cons}) define exactly {\it two} isomorphic sets $\ct_2(\cs_{XOR})$ and $\ct_2(\cs_{XNOR})$ which are disconnected by time evolution, however, on each of which the time evolution group $\ct_2$ acts transitively.


It is well-known that, thanks to transitivity, $\ct_2\simeq\rm{PSU}(4)$ generates {\it all} two-qubit pure states of quantum theory by acting with all its elements on {\it any} legal pure state (in the Bloch or hermitian representation) \cite{Hardy:2001jk,masanes2011derivation,Masanes:2011kx}. The seed pure state $\vec{r}=\vec{\delta}_{z_1}+\vec{\delta}_{z_2}+\vec{\delta}_{z_1z_2}$ in $\cs_{XNOR}$, written in the basis defined by the informationally complete question set $\{Q_{x_1},\ldots,Q_{zz}\}$, coincides with the generalized Bloch vector representation of the two-qubit product state density matrix $\rho=1/4\,(\mathds{1}_{4\times4}+\sigma_z\otimes\mathds{1}+\mathds{1}\otimes\sigma_z+\sigma_z\otimes\sigma_z)$, written in the basis of the informationally complete Pauli operators $\mathds{1}\otimes\sigma_i,\sigma_j\otimes\mathds{1},\sigma_i\otimes\sigma_j$, $i,j=x,y,z$. We also recall from (b) in subsection \ref{sec_psu4} that the 15 swap generators (\ref{swapgen1}, \ref{swapgen2}), expressed in the question basis, coincide with the adjoint representation of the fundamental generators of the quantum time evolution group $\SU(4)$, written in the Pauli operator basis. It is thus clear that the orbit $\ct_2(\cs_{XNOR})$, expressed in the question basis, is {\it exactly} the set of two-qubit pure states of quantum theory, expressed in the Pauli operator basis.
\footnote{Indeed, it can be easily checked, using a computer algebra program, that {\it all} two-qubit pure states of quantum theory satisfy the 21 equations (\ref{pentin}, \ref{15cons}).}  Furthermore, since the seed states in $\cs_{XNOR}$ are legal pure states in $\Sigma_2$ and since the time evolution image of any legal state must again be legal, we conclude that $\ct_2(\cs_{XNOR})$ is fully contained in the set of pure states implied by the principles. Geometrically, this set of two-qubit pure states is $\ct_2(\cs_{XNOR})\simeq\mathbb{C}\mathbb{P}^3$ \cite{Bengtsson}, of which $\ct_2\simeq\rm{PSU}(4)$ is the isometry group.

%


Evidently, $\ct_2(\cs_{XOR})\simeq\mathbb{C}\mathbb{P}^3$ also defines a representation of the pure state space which is physically perfectly equivalent to $\ct_2(\cs_{XNOR})$. However, since it corresponds to the `XOR-convention' 
it leads to a correlation structure as in figure \ref{fig_corr}, except that the signs in all triangles would be flipped. 

Hence, adopting the convention, as we did so far, to build up composite questions from individuals solely by XNOR connectives, we conclude that the $N=2$ pure state space implied by the principles is precisely (one copy of) the pure state space for two qubits in quantum theory. \emph{Accordingly, upon fixing the XNOR convention, a Bloch vector $\vec{r}$ represents a pure two-qubit quantum state if and only if it satisfies the six pentagon equations (\ref{pentin}), which are ignorant of the correlation structure, and the 15 conservation equations (\ref{15cons}) which also encode the correlation structure (up to an overall XNOR vs.\ XOR ambiguity)}.\footnote{Clearly, the 21 equations cannot be fully independent. In fact, only nine of the 21 equations can be locally independent on $\mathbb{R}^{15}$ to produce a $15-9=6$-dimensional pure state space $\mathbb{C}\mathbb{P}^3$. It is not possible, however, thanks to pairwise independence of the questions in an informationally complete set, to globally parametrize the pure state space in terms of the probabilities (or Bloch vector components) of six fixed questions only.}

The pure states form the set of extremal Bloch vector length within the full state space $\Sigma_2$ which must be convex. Thus, clearly, the convex hull of the pure states is contained in $\Sigma_2$. But can there by any further legal extremal states? If there was another extremal state it could not be a state of maximal information and it could also not be a convex linear combination of pure states. In section \ref{sec_post}, we required that $O$ can prepare any extremal state in a single shot interrogation relative to the state of no information with questions from an informationally complete set -- and possibly a subsequent time evolution. However, it follows from our constraints on the state update rule in section \ref{sec_post} that any posterior state of a system of two qubits in such a single shot interrogation will be a quantum state\footnote{Any two questions in the informationally complete set are pairwise independent and either maximally complementary or maximally compatible. Given the two constraints of section \ref{sec_post} on the update rule ((1) questions are repeatable, and (2) independent compatible information is preserved), it is clear that any single shot interrogation on the prior state $\vec{r}=0$ with the questions of the informationally complete set will result in a posterior state $\vec{r}\,'$ with any component being one of $0,\pm1$. Any such posterior state must obey principle \ref{lim}, complementarity and the correlation structure in figure \ref{fig_corr} and thus has either precisely one or three components equal to $\pm1$ and the rest $0$. But any such state respecting the correlation structure corresponds to a quantum state. In particular, the $3$ \texttt{bit} states are legal pure states.} which is already contained in the convex hull of the pure states. Since the pure states are closed under all possible time evolutions, so is their convex hull. We thus conclude that there can be no further extremal states than the pure states. The Krein-Milman theorem \cite{Krein1940} states that a (compact) convex set is the closed convex hull of its extreme points. Hence, we find
\ba
\Sigma_2= \text{closed convex hull of } \mathbb{C}\mathbb{P}^3.\nn
\ea

$\Sigma_2$ contains the {\it state of no information}, $\vec{r}=0$, (e.g., multiply each of the four solutions in $\cs_{XNOR}$ with $\f{1}{4}$ and sum up) and indeed coincides with the set of unit trace density matrices over the two-qubit Hilbert space $\mathbb{C}^2\otimes\mathbb{C}^2$. From the fact that all pure states satisfy all complementarity inequalities (\ref{compstrong}) it follows that all convex mixtures of them will satisfy them too since the information measure (\ref{infomeasure}) satisfies $\alpha_i(\lambda\,\vec{r}_1+(1-\lambda)\,\vec{r}_2)<\max\{\alpha_i(\vec{r}_1),\alpha_i(\vec{r}_2)\}$ if $\lambda\in(0,1)$ and if the pure states $\vec{r}_1\neq\vec{r}_2$ are distinct \cite{Hoehn:2014uua}.

\subsection{$N>2$ qubits}\label{sec_n>2}

Principles \ref{lim}, \ref{unlim} and \ref{loc} imply that an informationally complete set for $N$ gbits contains $4^N-1$ questions $Q_{\mu_1\mu_2\cdots\mu_N}=Q_{\mu_z}\leftrightarrow Q_{\mu_2}\leftrightarrow\cdots\leftrightarrow Q_{\mu_N}$, $\mu_i=0,x,y,z$, where $Q_{0}=1$, such that the Bloch vector $\vec{r}$ is $(4^N-1)$-dimensional \cite{Hoehn:2014uua}. Pure state Bloch vectors have (squared) length $2^N-1$ \texttt{bits}, corresponding to having maximal information about $N$ mutually independent and compatible questions (principle \ref{lim}), as well as their (dependent) multipartite correlations.

\subsubsection{Derivation of the unitary group}\label{sec_psunl2}

Again, any given time evolution $T(t)$ acts linearly on the Bloch vector $r_i(t)=T_{ij}(t)\,r_j(0)$ and constitutes a one-parameter subgroup of $\ct_N$ \cite{Hoehn:2014uua}. For analogous reasons to the $N=2$ case, $\ct_N$ must be a proper subgroup of $\SO(4^N-1)$ for $N\geq2$.

We label the $N$ gbits by $1,\ldots,N$. Consider the gbit pair labeled by $(12)$. We shall say that this pair evolves as an isolated subsystem under $\ct^{(12)}_2=\rm{PSU}(4)$ (to avoid confusion, we label the copy of the two-gbit time evolution group by the pair of gbits) if the components of the $N$-gbit Bloch vector $\vec{r}\in\mathbb{R}^{4^N-1}$, 
\begin{description}
\item[$r_{\mu_1\mu_200\cdots0}$] corresponding to the 15 questions $Q_{\mu_1\mu_200\cdots0}$ (excluding $\mu_1=\mu_2=0$) forming an informationally complete set (see section \ref{sec_n2}) for the gbit pair $(12)$ evolve under $\ct^{(12)}_2$ as derived in section \ref{sec_psu4}, independently of the other components;\footnote{Note that these 15 Bloch vector components define an invariant subspace under $\ct^{(12)}_2$ of $\mathbb{R}^{4^N-1}$.} and
\item[$r_{00\mu_3\mu_4\cdots\mu_N}$] corresponding to all questions $Q_{00\mu_3\mu_4\cdots\mu_N}$ {\it not} involving gbits $(12)$ are left {\it invariant} under $\ct^{(12)}_2$. 
\end{description}
Recall that $\ct^{(12)}_2\supset\SO(3)\times\SO(3)$ contains the local qubit unitaries such that this definition also accounts for the isolated evolution of individual gbits.

Since $N$ gbits form a composite system, it must be physically possible for every pair of gbits to evolve in time together as an isolated subsystem, as derived in section \ref{sec_psu4}, and for any individual gbit to evolve isolated of the others, as described in  \cite{Hoehn:2014uua}, thus without affecting $O$'s information distribution over any other gbits. Accordingly, we shall require the time evolutions $\ct_2\simeq\rm{PSU}(4)$ for any pair of gbits and $\ct_1\simeq\SO(3)$ for any single gbit, respectively, to be contained in $\ct_N$. Of course, given three or more gbits, the different copies of $\rm{PSU}(4)$ cannot act simultaneously on all pairs due to monogamy of entanglement (which also naturally follows from the principles \cite{Hoehn:2014uua}). 

In appendix \ref{app_swapnl2}, it is shown that this requirement of isolated $\ct_2$- or $\ct_1$-evolution, together with the results of section \ref{sec_n2}, leads to an unambiguous promotion of the representation of the $\rm{PSU}(4)$ time evolution elements for every gbit pair from $\mathbb{R}^{15}$ to $\mathbb{R}^{4^N-1}$. In particular, the $\ct^{(12)}_2$-generators of the gbit pair $(12)$ take the form
\ba
G^{\text{Pent}^{(12)}_a,\text{Pent}_b^{(12)}}_{(\mu_1\mu_2\mu_3\mu_4\cdots\mu_N)(\nu_1\nu_2\nu_3\nu_4\cdots\nu_N)}=G^{\text{Pent}_a,\text{Pent}_b}_{(\mu_1\mu_2)(\nu_1\nu_2)}\,\delta_{\mu_3\nu_3}\delta_{\mu_4\nu_4}\cdots\delta_{\mu_N\nu_N},\label{psu4adj}
\ea
where $G^{\text{Pent}_a,\text{Pent}_b}_{(\mu_1\mu_2)(\nu_1\nu_2)}$ is the representation of the corresponding two-qubit swap generators on $\mathbb{R}^{15}$ of section \ref{sec_psu4} (and appendix \ref{app_swap})\footnote{In agreement with the more general notation of this section, we have exchanged the indices $i,j$ in $G^{\text{Pent}_a,\text{Pent}_b}_{ij}$ (\ref{swapgen1}, \ref{swapgen2}) for the equivalent $(\mu_1\mu_2)$ and $(\nu_1\nu_2)$ indices, respectively.} and we define $G^{\text{Pent}_a,\text{Pent}_b}_{(00)(\nu_1\nu_2)}:=0=:G^{\text{Pent}_a,\text{Pent}_b}_{(\mu_1\mu_2)(00)}$. In appendix \ref{app_QTswap}, it is furthermore shown that the generators (\ref{psu4adj}) coincide precisely with the adjoint representation of the fundamental generators $\sigma_i\otimes\mathds{1}\otimes\mathds{1}\otimes\cdots\mathds{1},\mathds{1}\otimes\sigma_i\otimes\mathds{1}\cdots\mathds{1},\sigma_i\otimes\sigma_j\otimes\mathds{1}\cdots\mathds{1}$ of pairwise unitaries in quantum theory. The ordering of coincidence is such that, firstly, $Q_{\mu_1\mu_20\cdots0}$ corresponds to $\sigma_{\mu_1\mu_20\cdots0}:=\sigma_{\mu_1}\otimes\sigma_{\mu_2}\otimes\mathds{1}\otimes\cdots\otimes\mathds{1}$ where $\sigma_0=\mathds{1}$ and, secondly, $G^{\text{Pent}_a^{(1,2)},\text{Pent}_b^{(1,2)}}$ coincides with the adjoint representation of $\sigma_{\mu_1\mu_20\cdots0}$ corresponding to the unique question $Q_{\mu_1\mu_20\cdots0}$ in $\text{Pent}_a^{(12)}\cap\text{Pent}_b^{(12)}$. For example, $G^{\text{Pent}_1^{(12)},\text{Pent}_2^{(12)}}$ coincides with the adjoint representation of $\sigma_x\otimes\sigma_x\otimes\mathds{1}\otimes\cdots\otimes\mathds{1}$. This coincidence holds analogously for arbitrary pairs among the $N$ gbits. Clearly, also the $\ct_N$ subgroups generated by these bipartite generators will have exactly the same form (at the Bloch vector level) as in quantum theory.

It is well-known that two-qubit unitaries $\rm{PSU}(4)$ (between any pair) and local evolutions $\SO(3)$ generate the full projective unitary group $\rm{PSU}(2^N)$ \cite{bremner2002practical,Harrow:2008aa}.\footnote{This universality result has also been used in other reconstructions \cite{masanes2011derivation,de2012deriving}.} Since all local evolutions and pairwise unitaries are required to be contained in $\ct_N$ and since these have the same representation as in quantum theory, we must conclude, abstractly, that $\rm{PSU}(2^N)\subseteq\ct_N\subset\rm{SO}(4^N-1)$ and, more explicitly, that the generated copy of $\rm{PSU}(2^N)$ appears in a Bloch vector representation, relative to the question basis, which is identical to the Bloch vector (or adjoint) representation of the quantum unitaries relative to the Pauli operator basis. As in the $N=2$ case, $\rm{PSU}(2^N)$ is a maximal subgroup of $\SO(4^N-1)$ (see appendix \ref{app_max}) and since $\ct_N$ must be a proper subgroup of the latter, we conclude
\ba
\ct_N\simeq\rm{PSU}(2^N)\nn
\ea
which is the correct time evolution group for $N$ qubits. The fact that we obtain the full group $\rm{PSU}(2^N)$ (rather than some of its subgroups) {follows from} the maximality requirement of principle \ref{time} which {demands} {\it every} time evolution compatible with the principles (and the background assumptions). As a consistency check, we show in appendix \ref{app_comppres} that $\rm{PSU}(2^N)$ indeed preserves all complementarity inequalities (\ref{compstrong}), as required.

\subsubsection{State space reconstruction}\label{sec_statesnl2}

We show in appendix \ref{app_prodnl2} that for every Bloch vector $\vec{r}$ which could be a legal $N$ gbit pure state there exists a time evolution in $\ct_N$ which transfers all $2^N-1$ \texttt{bits} to the `product state' form $\alpha_{z_1}=\cdots=\alpha_{z_N}=\alpha_{z_1z_2}=\alpha_{z_1z_3}=\cdots=\alpha_{z_1z_2z_3}=\cdots=\alpha_{z_1\cdots z_N}=1$ \texttt{bit} (and all other $\alpha_i=0$). This informational configuration has $2^{2^N-1}$ Bloch vector solutions $r_{z_1},\ldots,r_{z_1z_2},\ldots,r_{z_1\cdots z_N}\in\{-1,+1\}$ and the remaining $r_i=0$. Since by principle \ref{lim} only $N$ of the $2^N-1$ corresponding questions $Q_{z_1},\ldots,Q_{z_1\cdots z_N}$ are mutually independent, these Bloch vectors can be grouped into $2^{2^N-1}/2^N$ sets $\cs^{N}_1,\ldots,\cs^{N}_{2^{2^N-1-N}}$, each consistent with a specific convention of distributing XNOR or XOR connectives among the different individual gbit questions $Q_{\mu_1},\ldots,Q_{\mu_N}$ to build up multipartite questions -- in analogy to section \ref{sec_statesn2}. Evidently, only one of these sets agrees with our choice of employing solely the XNOR connective $\leftrightarrow$ to define multipartite questions $Q_{\mu_1\mu_2\cdots\mu_N}=Q_{\mu_1}\leftrightarrow Q_{\mu_2}\leftrightarrow\cdots\leftrightarrow Q_{\mu_N}$ from the individuals $Q_{\mu_i}$, namely the set of $2^N$ solutions defined by
\ba
\cs^{N}_{\rm XNOR}:=\Big\{(r_{z_1},\cdots,r_{z_N},r_{z_1z_2},\ldots,r_{z_1\cdots z_N})\,\Big|\,r_{z_1},\ldots,r_{z_N}\in\{-1,+1\},\nn\\
r_{z_{i_1}\cdots z_{i_m}}=\prod_{k=1}^{m\leq N} r_{z_{i_k}}, i_k\in\{1,\ldots,N\}, i_k<i_{k+1}\Big\}.\nn
\ea
It is not difficult to convince oneself that the $2^N$ Bloch vectors in any convention set $\cs^{N}_i$ are connected through the local rotations $\SO(3)\times\cdots\times\SO(3)\subset\ct_N$.\footnote{Namely, by the local unitaries which map $r_{z_i}=+1\longleftrightarrow r_{z_i}=-1$.}

We now focus on $\cs^N_{\rm XNOR}$. The state $\vec{r}_z:=\vec{\delta}_{z_1}+\cdots+\vec{\delta}_{z_1\cdots z_N}$ in $\cs^{N}_{\rm XNOR}$, 
coincides with the generalized Bloch vector representation of the $N$-qubit product state density matrix $\rho=(\mathds{1}_{2^N\times2^N}+\sigma_z\otimes\mathds{1}\otimes\cdots\otimes\mathds{1}+\cdots+\sigma_z\otimes\cdots\otimes\sigma_z)/2^N$ in quantum theory and is a legal pure state since $Q_{z_1},\ldots,Q_{z_N}$ are mutually compatible and independent \cite{Hoehn:2014uua}. 
It was shown in the previous section that the Bloch vector representation of $\ct_N$ is {\it exactly} the same as in quantum theory. 
As for $N=2$, $\ct_N$ acts transitively on the pure states of qubit quantum theory and therefore the {\it complete} pure quantum state space is generated when $\ct_N$ acts on any pure quantum state~\cite{Hardy:2001jk,masanes2011derivation,Masanes:2011kx}. 
Hence, the orbit $\ct_N(\cs^{N}_{\rm XNOR})$, expressed in our question basis, coincides exactly with the Bloch vector representation of the $N$-qubit pure state space of quantum theory, written in the Pauli operator basis. Since the time evolution image of any legal pure state must be again a legal pure state, we conclude that all of $\ct_N(\cs^{N}_{\rm XNOR})$ is contained in the set of pure states implied by the principles. 


But can there be more pure states? Since all other sets $\cs^N_i\neq\cs^N_{\rm XNOR}$ correspond to distinct conventions of building up composite questions from the individuals $Q_{\mu_i}$, the answer is negative. Indeed, the seed states in any $\cs^N_i\neq\cs^N_{\rm XNOR}$ are not legal quantum states, featuring a correlation structure distinct from quantum theory. (There are only $2^N$ pure quantum states with only $\pm1$ in the $z$-components and these precisely constitute $\cs^N_{\rm XNOR}$.) Hence, these sets are not connected via $\ct_N$ to our legal pure states $\ct_N(\cs^{N}_{\rm XNOR})$. Some of these other conventions will yield a distinct, but physically equivalent representation of the set of quantum pure states (e.g., as in the $N=2$ case the set corresponding to the convention of building up all composite questions with the XOR, rather than XNOR connective).



Consequently, adhering to our usual convention to build up all composite questions of an informationally complete set {\it only} with XNOR operations from the $Q_{\mu_i}$, implies that the set of all pure states allowed by the principles $\ct_N(\cs^{N}_{\rm XNOR})$ is precisely the set of pure quantum states. Geometrically, for $N$ qubits this space is given by $\ct_N(\cs^{N}_{\rm XNOR})\simeq\mathbb{C}\mathbb{P}^{2^N-1}$ \cite{Bengtsson} of which $\rm{PSU}(2^N)$ is the transitive isometry group. In complete analogy to the $N=2$ case in section \ref{sec_statesn2}, we thus obtain
\ba
\Sigma_N= \text{closed convex hull of } \mathbb{C}\mathbb{P}^{2^N-1}\nn
\ea
which contains the {\it state of no information} and coincides with the set of normalized density matrices over the $N$-fold tensor product of single qubit Hilbert spaces $\mathbb{C}^2\otimes\cdots\otimes\mathbb{C}^2$. For consistency, we show in appendix \ref{app_comppres} that all states in $\Sigma_N$ are compatible with the principles and, in particular, satisfy all complementarity inequalities (\ref{compstrong}).

In conclusion, we arrive precisely at the correct state spaces and time evolution groups for arbitrarily many qubits.

\subsection{The set of allowed questions $\cq_N$ and the Born rule}\label{sec_qn}

The reconstruction of the time evolution groups $\ct_N$ and the state spaces $\Sigma_N$ did not require the derivation of the precise structure of $\cq_N$, but only of the structure of an informationally complete set $\cq_{M_N}\subset\cq_N$. But what is the structure of the question set $\cq_N$? And what is the action of $\ct_N$ on $\cq_N$? To answer these questions, we invoke principle \ref{Q} which did not yet come into play.

\subsubsection{Characterization of the question set}\label{sec_Qreq}\label{sec_tran}

We begin by phrasing the derived probability rule (\ref{ansatz}) in terms of Bloch vectors, $y(Q|\vec{r}):=Y(Q|\vec{y}=\f{1}{2}\vec{r}+\vec{1})$; the probability for $Q=$ `yes', given the state $\vec{r}$, then reads
\ba
y(Q|\vec{r})=y(\vec{q}|\vec{r})=\f{1}{2}\left(1+\vec{q}\cdot\vec{r}\right).\label{born1}
\ea
The structure of the landscape in section \ref{sec_post} implies that to every $Q\in\cq_N$ there corresponds, via (\ref{born1}), a question vector $\vec{q}\in\mathbb{R}^{4^N-1}$ such that $y(\vec{q}|\vec{r})\in[0,1]$ $\forall\,\vec{r}\in\Sigma_N$ and a $1$ \texttt{bit} state $\vec{r}_Q$ such that $y(\vec{q}|\vec{r}_Q)=1$, i.e., such that $O$ `knows' that $Q=$ `yes' if $S$ is in the state $\vec{r}_Q$. Conversely, principle \ref{Q} asserts that each such vector $\vec{q}$ corresponds to a question $Q\in\cq_N$. In appendix \ref{app_B0}, we show that any such vector $\vec{q}$ is a $1$ \texttt{bit} quantum state, in fact, coinciding with $\vec{r}_Q=\vec{q}$. We thus arrive at the following question vector characterization:

\begin{cons}{\bf(Question vector characterization)} A vector $\vec{q}\in\mathbb{R}^{4^N-1}$ corresponds to $Q\in\cq_N$ if and only if it is a quantum state with $|\vec{q}|^2=1$ \texttt{bit} and $y(\vec{q}|\vec{r})\in[0,1]$ $\forall\,\vec{r}\in\Sigma_N$.
\end{cons}

Given that every question vector corresponds to a unique $Q\in\cq_N$ and vice versa (see section \ref{sec_post}), we immediately have
\ba
\cq_N\simeq\{\vec{q}\in\mathbb{R}^{4^N-1}\,|\, y(\vec{q}|\vec{r})\in[0,1]\,\forall\,\vec{r}\in\Sigma_N\q\text{and}\q\vec{q}\q\text{is a $1$ \texttt{bit} quantum state}\}\label{Qcharacter}
\ea

Among other things, operationally this means that every $Q\in\cq_N$ is in one-to-one correspondence with a {unique} $1$ \texttt{bit} state $\vec{r}_Q\in\Sigma_N$ which represents the truth value $Q=$ `yes' and which does not represent the truth value `yes' for any other question in $\cq_N$. This state $\vec{r}_Q$ encodes the situation that $O$ has asked {\it only} the single question $Q$ to $S$ in the state of no information $\vec{r}=\vec{0}$ and received a `yes' answer (i.e., $\vec{r}_Q$ is the updated state after receiving $Q=$`yes' relative to $\vec{r}=\vec{0}$). 
For $N=1$ each question in $\cq_N$ will therefore be described by the {\it pure} state in which it is answered with `yes', while for $N>1$ each question is represented by the {\it mixed} state in which it is answered with `yes' by $S$. We also note that $\neg Q\in\cq_N$ iff $Q\in\cq_N$ and that $\neg Q$ will be described by a distinct question vector. 




Thus, the full set of legitimate $1$ \texttt{bit} question vectors, corresponding to $\cq_N$, coincides with a subset of the $1$ \texttt{bit} quantum states in $\Sigma_N$. Firstly, notice that not every Bloch vector of length $1$ \texttt{bit} represents a legal state in $\Sigma_N$ for $N>1$. For instance, consider $N=2$ qubits and the vector $\vec{r}_{\rm ill}=\f{1}{\sqrt{2}}(1,1,0,\ldots,0)$ which naively could be interpreted as $O$ having half a \texttt{bit} of information about each of $Q_{x_1}$ and $Q_{x_2}$. But this would specify the probabilities that $O$ receives `yes' answers to the latter two questions as $y_{x_1}=y_{x_2}=(r_{x_1}+1)/2=(1+1/\sqrt{2})/2>0.85$. In this case it is impossible that the probability $y_{xx}$ that $Q_{xx}$ gives `yes' is $1/2$. Accordingly, $r_{xx}$ must be larger than $0$ and $\vec{r}_{\rm ill}$ is an illegal state. In fact, one can convince oneself that $\vec{r}_{\rm ill}$ is {\it not} a convex combination of pure states and that this Bloch vector would produce a non-positive density matrix.\footnote{It must hold $r_{xx}\geq\sqrt{2}-1$ in order for the state to be positive.} 
We conclude that, for $N>1$, not all vectors of length $1$ \texttt{bit} can correspond to questions in $\cq_N$.

Secondly, we proceed with the observation that also not every legal $1$ \texttt{bit} mixed state corresponds to a `yes' answer of a question in $\cq_N$. For example, for any pure state $\vec{r}_{\rm pure}$, the rescaling $\vec{r}_{\rm pure}/\sqrt{2^N-1}$ corresponds to a convex sum of the original pure state and the state of no information and thus yields a legal $1$ \texttt{bit} mixed state.\footnote{We note that such a state is {\it not} connected via time evolution to the $1$ \texttt{bit} states corresponding to the questions in an informationally complete set. For example, $\vec{r}_{\rm pure}/\sqrt{2^N-1}$ cannot be time-connected to $\vec{q}_{x_1}=\vec{\delta}_{x_1}$, corresponding to $Q_{x_1}$, for this would be equivalent to $\vec{r}_{\rm{pure}}$ being time-connected to $\sqrt{2^N-1}\,\vec{\delta}_{x_1}$ which is impossible for $N>1$. Thus, there are subsets of $1$ \texttt{bit} mixed states for $N>1$ which cannot be related via time evolution.} This state cannot correspond to a question vector of any $Q\in\cq_N$ because (\ref{born1}) implies that the probability for measuring a 'yes' outcome for $Q$ in the state $\vec{r}_{\rm pure}$ would be larger than one, $y(Q|\vec{r}_{\rm pure})=(1+\sqrt{2^N-1})/2>1$ for $N>1$.

\subsubsection{The Born rule for projective measurements}\label{sec_Born}

As an interlude, we note that (\ref{born1}) coincides precisely with the Born rule of quantum theory for projective measurements onto the Pauli operators $\vec{n}\cdot \vec{\sigma}$,\footnote{Pauli operators are those hermitian operators on $\mathbb{C}^{2^N}$ which have two eigenvalues $\pm1$ with equal dimensionality of the corresponding eigenspaces. These are exactly the hermitian, traceless operators $\sigma$ satisfying $\sigma^2=\mathds{1}$ (see section \ref{sec_tran} below, but also \cite{nielsen2010quantum,lawrence2002mutually}). As we shall see in appendix \ref{app_paulitime}, not every $\vec{n}\in\mathbb{R}^{4^N-1}$ with $|\vec{n}|=1$ yields a Pauli operator.} where $\vec{n}\in\mathbb{R}^{4^N-1}$ with $|\vec{n}|=1$.  
Namely, it can be easily checked that the projector onto the $+1$ eigenspace of a Pauli operator $\sigma_{\mu_1\cdots\mu_N}$ is given by $P_{\mu_1\cdots\mu_N}=\f{1}{2}(\mathds{1}+\sigma_{\mu_1\cdots\mu_N})$. Indeed, using $\sigma_{\mu_1\cdots\mu_N}^2=\mathds{1}$ it follows that $P_{\mu_1\cdots\mu_N}^2=P_{\mu_1\cdots\mu_N}$ and $P_{\mu_1\cdots\mu_N}\,\rho_{\mu_1\cdots\mu_N}=\rho_{\mu_1\cdots\mu_N}$ where $\rho_{\mu_1\cdots\mu_N}=\f{1}{2^N}(\mathds{1}+\sigma_{\mu_1\cdots\mu_N})$ is the density matrix corresponding to only $\sigma_{\mu_1\cdots\mu_N}$ being measured with $+1$ and all other $\sigma_{\nu_1\cdots\nu_N}$ unknown. Using that all Pauli operators are connected by unitary conjugation (see appendix \ref{app_paulitime}), one finds that $P_{\vec{n}}=\f{1}{2}(\mathds{1}+\vec{n}\cdot \vec{\sigma})$ constitutes the projector onto the $+1$ eigenspace of the Pauli operator $\vec{n}\cdot\vec{\sigma}$. But then for all permitted $\vec{n}$ and all density matrices we find
\ba
\tr(P_{\vec{n}}\,\rho)=\f{1}{2}\left(1+\vec{n}\cdot\vec{r}\right)\nn
\ea
in agreement with (\ref{born1}) under the identification $\vec{n}=\vec{q}$. 

We have thus reconstructed the Born rule of quantum theory for projective measurements onto Pauli operators. Next, we show that $\cq_N$ also coincides with the set of projective measurements onto the Pauli operators.

\subsubsection{Questions as projective measurements onto Pauli operators  }

For a single qubit, (\ref{Qcharacter}) immediately implies $\cq_1\simeq\{\vec{q}\in\mathbb{R}^3\,|\,\,|\vec{q}|=1\}\simeq\mathbb{CP}^1\simeq S^2$ such that $\cq_1$ is isomorphic to the set of pure states. 
This has two consequences. (1) It induces a transitive action of the time evolution group $\ct_1\simeq\SO(3)$ on $\cq_1$: if the Bloch vector $\vec{r}$ ($|\vec{r}|=1$) incarnates the `yes' answer to $Q$, represented by $\vec{q}$, then $T\cdot\vec{r}$ is the `yes' answer to the question $T(Q)$, represented by $T\cdot\vec{q}$, for any $T\in\ct_1$ (we can imagine the `time evolution' of a question to correspond to a rotation of the measurement device by means of which $O$ asks the questions). (2) $\cq_1$ is isomorphic to the set of projective measurements on single qubit Pauli operators, $\vec{n}\cdot\vec{\sigma}$, $\vec{\sigma}=(\sigma_x,\sigma_y,\sigma_z)$, which likewise are parametrized by $\vec{n}\in\mathbb{R}^3$, $|\vec{n}|=1$.

For $N>1$ the situation is more intricate. However, in appendix \ref{app_B} we derive the analogous results also for $N>1$. Firstly, on the quantum side, we show the following in appendix \ref{app_paulitime}:
\begin{itemize}
\item[(a)] The Pauli operators on an $N$-qubit Hilbert space $\mathbb{C}^{2^N}$ can be written as $\vec{n}\cdot\vec{\sigma}$, where 
\ba
\vec{\sigma}=(\sigma_{x_1}\otimes\mathds{1}\otimes\cdots\otimes\mathds{1}, \mathds{1}\otimes\sigma_{x_2}\otimes\cdots\otimes\mathds{1},\ldots,\sigma_{z_1}\otimes\sigma_{z_2}\otimes\cdots\otimes\sigma_{z_N})\label{pauli}
\ea
constitutes a basis of Pauli operators and the set of permissible unit vectors $\vec{n}$ is the orbit $\{T\cdot\vec{\delta}_{z_1}\,|\,T\in\ct_N\}\simeq\mathbb{C}\mathbb{P}^{2^N-1}$ which is thus isomorphic to the set of quantum pure states. (Note that this set of permissible $\vec{n}$ is a strict subset of the unit sphere for $N>1$.) In particular, $\ct_N=\rm{PSU}(2^N)$ acts transitively on the unit vectors $\vec{n}$ defining the Pauli operators. Equivalently, for any Pauli operator $\vec{n}\cdot\vec{\sigma}$ there exists $U\in\SU(2^N)$ such that $\sigma_{z_1}=U\,(\vec{n}\cdot\vec{\sigma})\,U^\dag$, where $\sigma_{z_1}:=\sigma_z\otimes\mathds{1}\times\cdots\otimes\mathds{1}$. The set of Pauli operators accounts for all traceless hermitian operators on $\mathbb{C}^{2^N}$ with $\pm1$ eigenvalues because all diagonal operators on $\mathbb{C}^{2^N}$ featuring equally many $\pm 1$ along their diagonals are related to $\sigma_{z_1}$ by conjugation with permutation matrices lying in $\SU(2^N)$. 
\end{itemize}
Second, on the reconstruction side, we establish in appendix \ref{app_qn} the below consequences of (\ref{Qcharacter}):
\begin{itemize}
\item[(b)] In its $1$ \texttt{bit} vector representation, the question set $\cq_N$ inherits an action of the time evolution group $\ct_N$ from the states $\Sigma_N$ and $\ct_N$ acts transitively on $\cq_N$. In particular, the basis question vectors $\vec{q}_{x_1}=\vec{\delta}_{x_1},\ldots,\vec{q}_{x_N}=\vec{\delta}_{x_N},\ldots,\vec{q}_{z_1\cdots z_N}=\vec{\delta}_{z_1\cdots z_N}$, corresponding to the informationally complete set $\cq_{M_N}=\{Q_{x_1},\ldots,Q_{x_N},\ldots,Q_{z_1\cdots z_N}\}$, are connected by time evolution and no question in $\cq_N$ exists whose question vector is not connected by time evolution to these basis vectors.

\item[(c)] Under the identification $\vec{q}\equiv\vec{n}$, $\cq_N$ is isomorphic to the set of Pauli operators on an $N$ qubit Hilbert space. Hence, $\cq_N\simeq\mathbb{CP}^{2^N-1}$ and the set of allowed questions is thanks to (a) therefore isomorphic to the pure state space also for $N>1$. 

\end{itemize}

We conclude that the set of binary questions $\cq_N$, which we have restricted $O$ to, corresponds to a strict subset of all possible $N$-qubit observables -- the Pauli operators. In fact, any $\vec{n}\in\mathbb{R}^{4^N-1}$ produces a hermitian operator $\vec{n}\cdot\vec{\sigma}$ on $\mathbb{C}^{2^N}$ and thus legitimate $N$-qubit observable. However, these operators can feature $2^N$ arbitrary real eigenvalues, corresponding to many different measurement outcomes per observable such that the latter cannot be represented by a single binary question. These observables are not captured by $\cq_N$.

The above results have strong implications for the question set. In particular, under the identification $\vec{n}\equiv\vec{q}$, we ultimately obtain the correspondence
\ba
Q_{\mu_1}\leftrightarrow Q_{\mu_2}\leftrightarrow\cdots\leftrightarrow Q_{\mu_N} \q\Leftrightarrow\q P_{\mu_1\cdots\mu_N}:=\f{1}{2}(\mathds{1}+\sigma_{\mu_1}\otimes\sigma_{\mu_2}\otimes\cdots\otimes\sigma_{\mu_N})\nn
\ea
where $P_{\mu_1\cdots{\mu_N}}$ is the projector onto the $+1$ eigenspace of $\sigma_{\mu_1}\otimes\sigma_{\mu_2}\otimes\cdots\otimes\sigma_{\mu_N}$. Indeed, $Q_{\mu_1\cdots\mu_N}$ yields $1$ or $0$ if an even or odd number of $Q_{\mu_i}$ is $0$, respectively, and thus corresponds to the question ``is the product of the spin projections of $\sigma_{\mu_1}\otimes\sigma_{\mu_2}\otimes\cdots\otimes\sigma_{\mu_N}$ $+1$?'' (see also \cite{lawrence2002mutually} for a related discussion of Pauli operators). We thus see that the XNOR connective $\leftrightarrow$ at the question level corresponds to the tensor product at the operator level. 

In the remainder of this section we shall discuss further consequences of (a)--(c).

\subsubsection{The dual time evolution of questions: Heisenberg vs.\ Schr\"odinger}

We just observed that the set of permissible questions $\cq_N$ inherits an action of the time evolution group $\ct_N$ from the states $\Sigma_N$. Specifically, any two legal question vectors are connected by a time evolution element and the time evolution of a legal question always yields another legal question.\footnote{For example, let $T_1\in\ct_1$ be a local rotation of qubit 1 and let $\tilde{T}\in\ct_1^{(1)}$ be its product representation within $\ct_N$. Denote by $T(Q)$ the action of some $T\in\ct_N$ on a question $Q\in\cq_N$ (understood at the question vector level). Since $T_1(Q_{\mu_1})$ is a legal question on qubit 1, so must be
\ba
T_1(Q_{\mu_1})\leftrightarrow Q_{\mu_2}\leftrightarrow\cdots\leftrightarrow Q_{\mu_N}=\tilde{T}(Q_{\mu_1}\leftrightarrow Q_{\mu_2}\leftrightarrow\cdots\leftrightarrow Q_{\mu_N})\nn.\ea
} At this point, this might be taken as just a mathematical observation. However, we might as well interpret the action of the time evolution group on the questions as transformations (e.g., rotations) of the measurement device(s) by means of which $O$ interrogates the systems.



The evolution of questions is dual to the evolution of states. Namely, the Born rule (\ref{born1}) implies $y(Q|T\cdot\vec{r})=(1+\vec{q}\cdot(T\cdot\vec{r}))/2=(1+(T^t\cdot\vec{q})\cdot\vec{r})/2=y(T^{-1}(Q)|\vec{r})$. That is, we may describe $O$'s interrogation of a system of $N$ qubits as it evolves in time in two equivalent ways: (1) the state vector $\vec{r}(t)$ evolves in time while the questions are time independent, or (2) the state vector is time independent and the questions $\vec{q}(t)$ evolve under the inverse of the time evolution. In particular, if both the state and question are evolved simultaneously, the probability remains invariant. (1) corresponds to the usual Schr\"odinger picture of quantum theory, while (2) parallels the Heisenberg picture; our reconstruction thus admits these dual interpretations of qubit quantum theory. 

Importantly, for the Heisenberg picture, the time evolution invariance of the Born rule (\ref{born1}) immediately implies that the compatibility and independence structure of the questions is invariant if time evolved simultaneously. Indeed, using that the question vectors are identical to $1$ \texttt{bit} states in which only the corresponding question is positively answered, we can express the independence relations of two arbitrary question $Q_1,Q_2\in\cq_N$ via $y(Q_1|\vec{q}_2)$ and clearly it holds
\ba
y(T(Q_1)|T\cdot\vec{q}_2)=y(Q_1|\vec{q}_2)\nn.
\ea
By similar arguments, using the Born rule with respect to states, it follows that also their compatibility relations remain invariant.

Finally, this also entails that every question $Q\in\cq_N$ is indeed contained in an informationally complete set, a mutually complementary set, and a maximal set of compatible questions. Namely, consider some set of mutually complementary $\{\vec{q}_{1},\vec{q}_{2}\,',\ldots,\vec{q}_{k}\, '\}$ and another of mutually compatible $\{\vec{q}_{1},\vec{q}_{2}\, '',\ldots,\vec{q}_{j}\,''\}$ questions. Since for any $Q\in\cq_N$ there is a $T\in\ct_N$ such that $\vec{q}=T\cdot\vec{q}_1$, the following time evolved sets $\{\vec{q},T\cdot\vec{q}_{i_2}\, ',\ldots,T\cdot\vec{q}_{i_k}\, '\}$ and  $\{\vec{q},T\cdot\vec{q}_{i_2}\, '',\ldots,T\cdot\vec{q}_{i_j}\, ''\}$ constitute a mutually complementary and a compatible set of questions, respectively, both of which contain $\vec{q}$. In the sense of compatibility and independence relations, no question in $\cq_N$ is special. 

\subsubsection{(Non-)uniqueness of pure state decompositions in terms of questions}\label{sec_decomp}


Every pure state can be decomposed in terms of a sum of $2^N-1$ mutually compatible question vectors. The reason is that, thanks to the transitivity of $\ct_N$ on the set of pure states, every pure state $\vec{r}_{\text{pure}}$ can be written as $\vec{r}_{\text{pure}}=T\cdot(\vec{\delta}_{z_1}+\vec{\delta}_{z_2}+\cdots+\vec{\delta}_{z_1\cdots z_N})$ for some $T\in\ct_N$. The vectors $T\cdot\vec{\delta}_{z_1},\ldots, T\cdot\vec{\delta}_{z_1\cdots z_N}$ within the decomposition are time connected to the question vectors $\vec{q}_{z_1},\ldots,\vec{q}_{z_1\cdots z_N}$ and are therefore themselves legal question vectors, featuring the same compatibility and independence relations. The Born rule (\ref{born1}) implies that the probability for each of these $2^N-1$ questions in the pure state decomposition to be answered by $S_N$ with `yes' equals one in this state. In fact, (b) above implies that, by running through all elements $T$ in $\ct_N$, all question vectors will appear in some pure state. This raises the question whether such a question decomposition of a pure state 
is unique or not and, in consequence, whether $S_N$, prepared in a pure state, answers a unique set of questions in $\cq_N$ with `yes'.

For $N=1$ this is trivially the case since every pure state vector is also a legal question vector. For $N>1$ the situation, however, turns out to be less trivial. More precisely, in appendix \ref{app_decomp}, we demonstrate the peculiar fact that
\begin{itemize}
\item \emph{The decomposition of a pure state vector $\vec{r}_{\text{pure}}=\vec{q}_1+\cdots+\vec{q}_{2^N-1}$ in terms of question vectors $\vec{q}_i$ for $Q_i\in\cq_N$ is \emph{unique} for $N=1,2$ and \emph{non-unique} for $N\geq3$.}
\end{itemize}
This is a consequence of the fact that the isotropy subgroup $\rm{PSU}(2^N-1)$ of $\ct_N=\rm{PSU}(2^N)$ on $\mathbb{C}\mathbb{P}^{2^N-1}$ corresponding to a pure state $\vec{r}_{\rm pure}$ contains elements for $N\geq3$ which are not part of the isotropy subgroups associated to every question vector $\vec{q}_i$ in the decomposition.

In other words, for $N=1,2$, $S_N$, prepared in any pure state, answers a unique set of $2^N-1$ questions from $\cq_N$ positively. For $N\geq3$, $S_N$ answers in every pure state multiple distinct sets of $2^N-1$ questions from $\cq_N$ simultaneously with `yes'. However, for $N\geq3$ the total information contained in one of these sets of $2^N-1$ questions is evidently equivalent to that carried by any other such set, even though a question in the first set might be (partially) independent from all questions in any other set.



\subsection{The von Neumann evolution equation}\label{sec_Neum}

For completeness, we discuss briefly how the von Neumann evolution equation of density matrices follows from the reconstruction. 

After having established coincidence between $\Sigma_N$ and the set of $N$-qubit density matrices, nothing stops us from passing from the Bloch vector representation of states to the equivalent hermitian representation in terms of density matrices on $\mathbb{C}^{2^N}$
\ba
\rho=\f{1}{2^N}\left(\mathds{1}_{2^N\times2^N}+\vec{r}\cdot\vec{\sigma}\right),\nn
\ea
where $\vec{r}$ is the Bloch vector and $\vec{\sigma}$ is given in (\ref{pauli}). We have seen (e.g., in appendix \ref{app_nl2}) that the linear evolution $\vec{r}(t)=T(t)\,\vec{r}(0)$ with $T(t)=e^{t\,G}\in\rm{PSU}(2^N)$ is equivalent to the adjoint action of $U(t)=e^{-i\,H\,t}\in\SU(2^N)$ on its Lie algebra
\ba
\rho(t)=U(t)\,\rho(0)\,U^\dag(t),\label{unitary}
\ea
for some hermitian operator $H$ on $\mathbb{C}^{2^N}$ \cite{Bengtsson}. In particular, using that $\Tr(\sigma_i\,\sigma_j)=2^N\,\delta_{ij}$ \cite{lawrence2002mutually}, $T_{ij}(t)=\f{1}{2^N}\,\Tr\left(\sigma_i\,U(t)\,\sigma_j\,U^\dag(t)\right)$. This yields a relation between time evolution generators $G\in\mathfrak{psu}(2^N)$ at the Bloch vector level and a `Hamiltonian' $H$ on a Hilbert space. But (\ref{unitary}) is equivalent to $\rho(t)$ satisfying the von Neumann evolution equation 
\ba
i\f{\p\,\rho}{\p t}=[H,\rho]\nn
\ea
which, in turn, is well-known to be equivalent to the Schr\"odinger equation for pure states.

\section{Discussion and conclusions}\label{sec_con}

We have shown that one can derive qubit quantum theory from transparent rules on an observer's acquisition of information about an observed system. These rules constitute a set of physical statements, equivalent to the usual textbook axioms, characterizing the quantum formalism. This manuscript, together with \cite{Hoehn:2014uua}, thereby offers a solution to a longstanding problem and completes related informational reconstruction ideas put forward in the context of Rovelli's {\it relational quantum mechanics} \cite{Rovelli:1995fv} and the Brukner-Zeilinger informational interpretation of quantum theory \cite{zeilinger1999foundational,Brukner:ys} for the case of qubit systems. (It also can be regarded as a completion of some ideas put forward by Spekkens in his epistemic toy model \cite{spekkens2007evidence} which, however, relies on ontic states.) One of the salient conclusions to be drawn from the present reconstruction is that it is sufficient to speak about the information that an observer has access to through measurement. This information is associated to the relation between the observer and the system, established through interaction; the state represents the observer's `catalogue of knowledge' about the system and it is not necessary to consider the notion of intrinsic state of the system. This highlights that quantum theory {\it may} be understood as an inference framework governing an observer's acquisition of information and pertaining to what the observer can say about Nature, rather than to how Nature `really' is (but clearly does not preclude other interpretations).

In addition, the reconstruction provides new structural insights into qubit quantum theory which were previously unnoticed. Specifically, we have derived new constraints on the distribution of information over the various questions in an informationally complete set (orthonormal basis of Pauli operators) of $N$ qubits. This employs the quadratic information measure derived from the principles in \cite{Hoehn:2014uua} and earlier proposed from a different perspective in \cite{Brukner:1999qf,Brukner:2001ve,Brukner:ys,brukner2009information}. Most importantly, we have shown for two qubits that the maximal mutually complementary question sets each carry precisely $1$ \texttt{bit} of information for pure states, constituting six {\it conserved informational charges} of time evolution for two qubits. These six equalities define the unitary group and, together with 15 conservation equalities, fully characterize the pure state space. This generalizes the single qubit case where a similar statement holds. While it was not necessary for the completion of the reconstruction, it is tempting to conjecture that this is a general property, namely that the unitary group and pure states are characterized by maximal mutually complementary sets carrying precisely $1$ \texttt{bit} of information for arbitrarily many qubits. We leave this as an open question.

These observations highlight information as a `charge of quantum theory' in the sense of providing the conserved quantities of the unitary groups. In analogy to charges in other areas of physics which can be transferred without loss among different carrier systems, these informational charges can be redistributed without loss among different questions in between measurements.

{Such conserved charges thus form part of the invariant structure that observers in distinct reference frames should agree on. As such, they might be useful, say, in a quantum communication protocol as in \cite{Hoehn:2014vua} which permits distinct observers, who have never met before but can communicate, to efficiently agree on their respective descriptions of quantum states. In this manner, one can derive the appropriate reference frame transformation group operationally from the structure of the communicated physical objects, rather than imposing it on the theory `by hand'. For instance, depending on the conditions on such a quantum communication protocol, one can show that either the rotation group $\SO(3)$ or the orthochronous Lorentz group $\rm O^+(3,1)$ constitutes the dictionary among distinct observer's quantum descriptions -- without presupposing any specific spacetime structure \cite{Hoehn:2014vua}.}

We have also derived the Born rule for projective measurements and shown that the time evolution of states implied by the principles is equivalent to the von Neumann evolution equation. While it was not necessary for us to derive the Born rule for state transition probabilities, this could presumably be accomplished by using arguments similar to the ones in \cite{Hardy:2001jk,Dakic:2009bh,masanes2011derivation}. We emphasize that it was also not necessary to fully specify (or derive) the precise state update rule in order to arrive at the structure of quantum theory. We shall similarly leave the full clarification of this update rule as an open matter.

The binary question framework in its present form is limited to reconstructing qubit (and rebit \cite{Hoehn:2014uua,hw2}) quantum theory and requires a generalization in order to be applicable to arbitrary $n$-level quantum systems. A treatment of {\it mechanical} systems may even necessitate an entirely novel approach.

\section*{Acknowledgements}
We are indebted to Markus M\"uller for numerous discussions, helpful comments and the proof of lemma \ref{lem_markus}. We would also like to thank Lucien Hardy and Rob Spekkens for many useful discussions and comments. PH, furthermore, thanks \v{C}aslav Brukner, Sylvain Carrozza, Borivoje Daki\'c and Carlo Rovelli for discussion. CW is grateful for the hospitality of Perimeter Institute during his four visits, where a large part of this work was done. PH similarly thanks the Demokritos Institute in Athens for hospitality during a visit and both authors also appreciate the hospitality of the Institute for Theoretical Physics of Universiteit Utrecht during three visits. Research at Perimeter Institute is supported by the Government of Canada through Industry Canada and by the Province of Ontario through the Ministry of Research and Innovation. The project leading to this publication has also received funding from the European Union's Horizon 2020 research and innovation programme under the Marie Sklodowska-Curie grant agreement No 657661 (awarded to PH). During the main part of this research project, CW worked at NCSR ``Demokritos" and was supported by the Research Funding Program ARISTEIA, HOCTools (co-financed by the European Union (European Social Fund ESF) and Greek national funds through the Operational Program "Education and Lifelong Learning" of the National Strategic Reference Framework (NSRF)).

\appendix

\section{Affine-linearity of the probability function}\label{app_Born}

In section \ref{sec_post}, we argued operationally that the probability function is at least convex linear
\ba
Y(Q|\lambda\,\vec{y}_1+(1-\lambda)\,\vec{y}_2)=\lambda\,Y(Q|\vec{y}_1)+(1-\lambda)\,Y(Q|\vec{y}_2), \q\q\q0\leq\lambda\leq1\nn.
\ea
This holds for all $Q\in\cq$ and $\vec{y}_i\in\Sigma$. More generally, this means that
\ba
Y(Q|\sum_i\lambda_i\,\vec{y}_i)=\sum_i\lambda_i\,Y(Q|\vec{y}_i), \q\q\q \lambda_i\geq0,\q\q\q \sum_i\lambda_i=1.\label{appborn1}
\ea
It will be convenient to parametrize states equivalently by $\vec{r}_i=2\,\vec{y}_i-\vec{1}$. Setting $y(Q|\vec{r}):= Y(Q|\vec{y}=1/2\,\vec{r}+\vec{1})$, it is clear that (\ref{appborn1}) if and only if
\ba
y(Q|\sum_i\lambda_i\,\vec{r}_i)=\sum_i\lambda_i\,y(Q|\vec{r}_i), \q\q\q \lambda_i\geq0,\q\q\q \sum_i\lambda_i=1\label{appborn2a}.
\ea

We shall now show that this implies that $y(Q|\vec{r})$ is affine-linear as claimed in (\ref{ansatz}). This result and its following proof are a twisted version of linearity results in Hardy's \cite{Hardy:2001jk} and Barrett's \cite{barrett2007information}. The biggest twist is that here we work with `normalized' states only and with $y(Q|\vec{0})\neq0$.

Setting one of the $\vec{r}_i$ in (\ref{appborn2a}) equal to the state of no information, $\vec{r}=\vec{0}$, it is easy to see that for $\vec{r}_i\in\Sigma$ also
\ba
y(Q|\sum_i\lambda_i\,\vec{r}_i)=\sum_i\lambda_i\,y(Q|\vec{r}_i)+(1-\sum_i\lambda_i)\,y(Q|\vec{0}), \q\q \lambda_i\geq0,\q\q \sum_i\lambda_i\leq1\label{appborn2}.
\ea
Suppose now that $\sum_i\lambda_i=\gamma<1$. It is then not difficult to convert (\ref{appborn2}) to
\ba
y(Q|\sum_i\tilde{\lambda}_i\,\vec{\tilde{r}}_i)=\sum_i\tilde{\lambda}_i\,y(Q|\vec{\tilde{r}}_i)+(1-\sum_i\tilde{\lambda}_i)\,y(Q|\vec{0}),\nn
\ea
where $\tilde{\lambda}_i=\lambda_i/\gamma^2$ and $\vec{\tilde{r}}_i=\gamma\,\vec{r}_i$. Since $\sum_i\tilde{\lambda}_i=\gamma^{-1}>1$ and $\vec{\tilde{r}}_i\in\Sigma$ if $\vec{r}_i\in\Sigma$ ($\vec{\tilde{r}}_i$ is a convex combination of $\vec{r}_i$ and $\vec{0}$), we have even more generally
\ba
y(Q|\sum_i\lambda_i\,\vec{r}_i)=\sum_i\lambda_i\,y(Q|\vec{r}_i)+(1-\sum_i\lambda_i)\,y(Q|\vec{0}), \q\q\q \lambda_i\geq0,\label{appborn3}
\ea
as long as $\vec{r}_i,\sum_i\lambda_i\,\vec{r}_i\in\Sigma$. A special case of this is
\ba
y(Q|\lambda\,\vec{r})=\lambda\,y(Q|\vec{r})+(1-\lambda)\,y(Q|\vec{0}),\q\q\q \lambda>0,\q\q\q\vec{r},\lambda\vec{r}\in\Sigma.\nn
\ea
For $\lambda\,\vec{r}\notin\Sigma$ this equation is {\it a priori} not defined. Since this does not correspond to a physical situation, we are free to demand that also
\ba
y(Q|\lambda\,\vec{r})=\lambda\,y(Q|\vec{r})+(1-\lambda)\,y(Q|\vec{0}),\q\q\q \forall\,\vec{r}\in\Sigma,\lambda\geq0.\label{appborn4}
\ea

Consider now the set $\Sigma_+:=\{\lambda\,\vec{r}\,|\,\forall\,\vec{r}\in\Sigma,\lambda\geq0\}$. As $\Sigma$ is convex, $\Sigma_+$ is a convex cone. Using (\ref{appborn3}, \ref{appborn4}), one easily shows that this implies 
\ba
y(Q|\sum_i\lambda_i\,\vec{r}_i)=\sum_i\lambda_i\,y(Q|\vec{r}_i)+(1-\sum_i\lambda_i)\,y(Q|\vec{0}), \q\q\q \forall\,\vec{r}\in\Sigma_+,\lambda_i\geq0.\label{appborn5}
\ea
Next, suppose 
\ba
\vec{r}=\sum_i t_i\,\vec{r}_i, \q\q\q\text{where}, \q\q\q \vec{r}, \vec{r}_i\in\Sigma_+,\q\q\q t_i\in\mathbb{R}.
\ea
Split the indices according to $i\in A_-$ if $t_i<0$ and $i\in A_+$ if $t_i\geq0$. Then we have
\ba
\vec{r}+\sum_{i\in A_-}|t_i|\,\vec{r}_i=\sum_{i\in A_+}t_i\,\vec{r}_i\nn
\ea
and each side of the equation is in $\Sigma_+$ such that (\ref{appborn5}) holds. Upon reorganization, this yields
\ba
y(Q|\sum_i\lambda_i\,\vec{r}_i)=\sum_i\lambda_i\,y(Q|\vec{r}_i)+(1-\sum_i\lambda_i)\,y(Q|\vec{0}), \q\q\q \forall\,\lambda_i\in\mathbb{R}.\label{appborn6}
\ea
This may be linearly extended to any vector that lies in the span of $\Sigma_+$; $y(Q|\vec{r})$ on the rest of $\mathbb{R}^D$ is arbitrary and may be freely chosen to be of the affine form (\ref{appborn6}) as well.

Finally, setting $f_Q(\vec{r}):=y(Q|\vec{r})-y(Q|\vec{0})$, (\ref{appborn6}) implies that $f_Q(\vec{r})$ is {\it linear}
\ba
f_Q(\sum_it_i\,\vec{r}_i)=\sum_it_i\,f_Q(\vec{r}_i),\q\q\q t_i\in\mathbb{R}\nn,
\ea
such that
\ba
y(Q|\vec{r})=\vec{f}_Q\cdot\vec{r}+y(Q|\vec{0})\nn
\ea
for some $\vec{f}_Q\in\mathbb{R}^D$. Remembering that by definition $y(Q|\vec{r}=\vec{0})=1/2$ for all $Q\in\cq$ in the state of no information $\vec{r}=\vec{0}$, and setting $\vec{f}_Q=1/2\,\vec{q}$, we immediately have
\ba
y(Q|\vec{r})=\f{1}{2}(\vec{q}\cdot\vec{r}+1)\label{appborn7},
\ea
where $\vec{q}\in\mathbb{R}^D$ is a vector depending on $Q\in\cq$. Hence, $Y(Q|\vec{y})=1/2(\vec{q}\cdot(2\vec{y}-\vec{1})+1)$.

\section{Reconstruction of the unitary group and state spaces}\label{app_A}

In order to present a flowing text in the main part of the paper, some proofs, derivations and other statements were left out. These are collected in this appendix.

\subsection{Maximal mutually complementary triangle sets for $N=2$ qubits}\label{triangle-app}

The maximal mutually complementary pentagon sets are important for the derivation of the time evolution group since their information contents constitute conserved charges under time evolution. In addition to the maximal pentagon sets, there are also maximal mutually complementary sets which contain three questions:
\begin{gather}
\text{Tri}_1=\{Q_{xx},Q_{xy},Q_{z_2}\}, \ \text{Tri}_2=\{Q_{xx},Q_{xz},Q_{y_2}\}, \ \text{Tri}_3=\{Q_{xx},Q_{yx},Q_{z_1}\}, \nonumber\\ \text{Tri}_4=\{Q_{xx},Q_{zx},Q_{y_1}\}, \text{Tri}_5=\{Q_{xy},Q_{xz},Q_{x_2}\}, \ \text{Tri}_6=\{Q_{xy},Q_{yy},Q_{z_1}\}, \nonumber\\ \text{Tri}_7=\{Q_{xy},Q_{zy},Q_{y_1}\}, \ \text{Tri}_8=\{Q_{xz},Q_{zz},Q_{y_1}\}, \ \text{Tri}_9=\{Q_{xz},Q_{yz},Q_{z_1}\}, \nonumber\\ \text{Tri}_{10}=\{Q_{yx},Q_{yy},Q_{z_2}\}, \ \text{Tri}_{11}=\{Q_{yx},Q_{yz},Q_{y_2}\}, \ \text{Tri}_{12}=\{Q_{yx},Q_{zx},Q_{x_1}\}, \nonumber\\
\text{Tri}_{13}=\{Q_{yy},Q_{yz},Q_{x_2}\}, \ \text{Tri}_{14}=\{Q_{yy},Q_{zy},Q_{x_1}\}, \ \text{Tri}_{15}=\{Q_{yz},Q_{zz},Q_{x_1}\}, \nonumber\\ 
\text{Tri}_{16}=\{Q_{zx},Q_{zy},Q_{z_2}\}, \ \text{Tri}_{17}=\{Q_{zx},Q_{zz},Q_{y_2}\}, \ \text{Tri}_{18}=\{Q_{zy},Q_{zz},Q_{x_2}\}, \nonumber\\
\text{Tri}_{19}=\{Q_{x_1},Q_{y_1},Q_{z_1}\}, \ \text{Tri}_{20}=\{Q_{x_2},Q_{y_2},Q_{z_2}\}. \label{eq:trineq}
\end{gather}
Similarly as for the pentagon sets, they can be represented by question graphs as given in (\ref{eq:triangles}). Again, the vertices correspond to individual questions and edges connecting them represent the corresponding correlation questions. However, contrary to the maximal pentagon sets, for pure states the total information carried by each triangle set is not necessarily equal to the $1$ \texttt{bit} bound in (\ref{compstrong}) (e.g., for entangled states the information content in $\text{Tri}_{19},\text{Tri}_{20}$ is $0$ \texttt{bits}) and is furthermore not conserved under time evolution. The pentagon and triangle sets are the only maximal mutually complementary sets for $N=2$ qubits.
\begin{align}\label{eq:triangles}
&\rbx{$\text{Tri}_1=$} \q \psfrag{x1}{\small $Q_{x_1}$}\psfrag{y1}{\small $Q_{y_1}$}\psfrag{z1}{\small $Q_{z_1}$}\psfrag{x2}{\small $Q_{x_2}$}\psfrag{y2}{\small $Q_{y_2}$}\psfrag{z2}{\small $Q_{z_2}$}\psfrag{xx}{\small $Q_{xx}$}\psfrag{xy}{\small $Q_{xy}$}\psfrag{xz}{\small $Q_{xz}$}\psfrag{yx}{\small $Q_{yx}$}\psfrag{yy}{\small $Q_{yy}$}\psfrag{yz}{\small $Q_{yz}$}\psfrag{zx}{\small $Q_{zx}$}\psfrag{zy}{\small $Q_{zy}$}\psfrag{zz}{\small $Q_{zz}$} \includegraphics[scale=0.4]{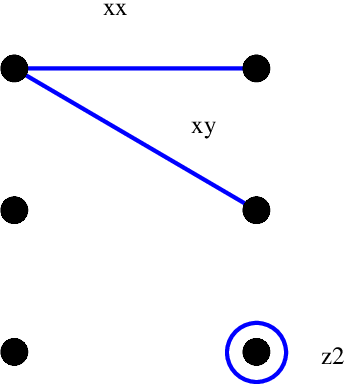} &
&\rbx{$\text{Tri}_2=$} \q \psfrag{x1}{\small $Q_{x_1}$}\psfrag{y1}{\small $Q_{y_1}$}\psfrag{z1}{\small $Q_{z_1}$}\psfrag{x2}{\small $Q_{x_2}$}\psfrag{y2}{\small $Q_{y_2}$}\psfrag{z2}{\small $Q_{z_2}$}\psfrag{xx}{\small $Q_{xx}$}\psfrag{xy}{\small $Q_{xy}$}\psfrag{xz}{\small $Q_{xz}$}\psfrag{yx}{\small $Q_{yx}$}\psfrag{yy}{\small $Q_{yy}$}\psfrag{yz}{\small $Q_{yz}$}\psfrag{zx}{\small $Q_{zx}$}\psfrag{zy}{\small $Q_{zy}$}\psfrag{zz}{\small $Q_{zz}$} \includegraphics[scale=0.4]{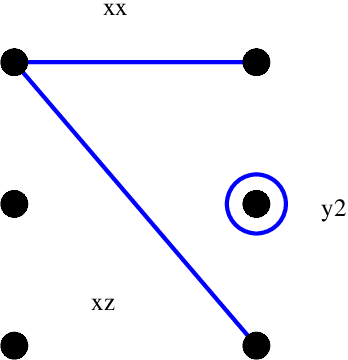} &
&\rbx{$\text{Tri}_3=$} \q \psfrag{x1}{\small $Q_{x_1}$}\psfrag{y1}{\small $Q_{y_1}$}\psfrag{z1}{\small $Q_{z_1}$}\psfrag{x2}{\small $Q_{x_2}$}\psfrag{y2}{\small $Q_{y_2}$}\psfrag{z2}{\small $Q_{z_2}$}\psfrag{xx}{\small $Q_{xx}$}\psfrag{xy}{\small $Q_{xy}$}\psfrag{xz}{\small $Q_{xz}$}\psfrag{yx}{\small $Q_{yx}$}\psfrag{yy}{\small $Q_{yy}$}\psfrag{yz}{\small $Q_{yz}$}\psfrag{zx}{\small $Q_{zx}$}\psfrag{zy}{\small $Q_{zy}$}\psfrag{zz}{\small $Q_{zz}$} \includegraphics[scale=0.4]{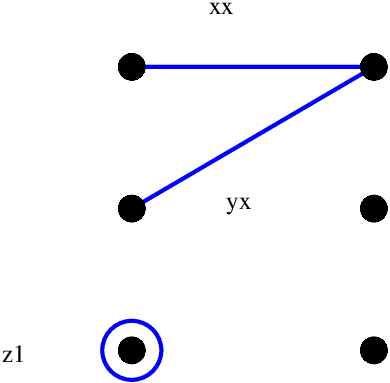} \nn\\\nn\\
&\rbx{$\text{Tri}_4=$} \q \psfrag{x1}{\small $Q_{x_1}$}\psfrag{y1}{\small $Q_{y_1}$}\psfrag{z1}{\small $Q_{z_1}$}\psfrag{x2}{\small $Q_{x_2}$}\psfrag{y2}{\small $Q_{y_2}$}\psfrag{z2}{\small $Q_{z_2}$}\psfrag{xx}{\small $Q_{xx}$}\psfrag{xy}{\small $Q_{xy}$}\psfrag{xz}{\small $Q_{xz}$}\psfrag{yx}{\small $Q_{yx}$}\psfrag{yy}{\small $Q_{yy}$}\psfrag{yz}{\small $Q_{yz}$}\psfrag{zx}{\small $Q_{zx}$}\psfrag{zy}{\small $Q_{zy}$}\psfrag{zz}{\small $Q_{zz}$} \includegraphics[scale=0.4]{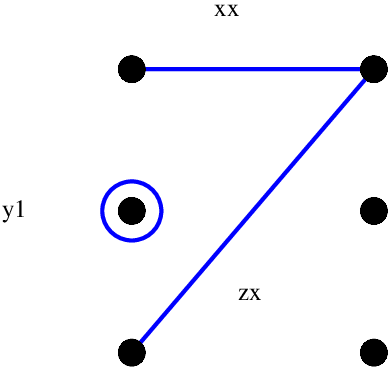} &
&\rbx{$\text{Tri}_5=$} \q \psfrag{x1}{\small $Q_{x_1}$}\psfrag{y1}{\small $Q_{y_1}$}\psfrag{z1}{\small $Q_{z_1}$}\psfrag{x2}{\small $Q_{x_2}$}\psfrag{y2}{\small $Q_{y_2}$}\psfrag{z2}{\small $Q_{z_2}$}\psfrag{xx}{\small $Q_{xx}$}\psfrag{xy}{\small $Q_{xy}$}\psfrag{xz}{\small $Q_{xz}$}\psfrag{yx}{\small $Q_{yx}$}\psfrag{yy}{\small $Q_{yy}$}\psfrag{yz}{\small $Q_{yz}$}\psfrag{zx}{\small $Q_{zx}$}\psfrag{zy}{\small $Q_{zy}$}\psfrag{zz}{\small $Q_{zz}$} \includegraphics[scale=0.4]{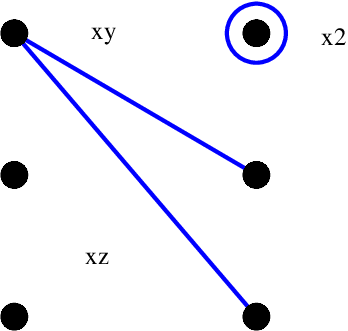} &
&\rbx{$\text{Tri}_6=$} \q \psfrag{x1}{\small $Q_{x_1}$}\psfrag{y1}{\small $Q_{y_1}$}\psfrag{z1}{\small $Q_{z_1}$}\psfrag{x2}{\small $Q_{x_2}$}\psfrag{y2}{\small $Q_{y_2}$}\psfrag{z2}{\small $Q_{z_2}$}\psfrag{xx}{\small $Q_{xx}$}\psfrag{xy}{\small $Q_{xy}$}\psfrag{xz}{\small $Q_{xz}$}\psfrag{yx}{\small $Q_{yx}$}\psfrag{yy}{\small $Q_{yy}$}\psfrag{yz}{\small $Q_{yz}$}\psfrag{zx}{\small $Q_{zx}$}\psfrag{zy}{\small $Q_{zy}$}\psfrag{zz}{\small $Q_{zz}$} \includegraphics[scale=0.4]{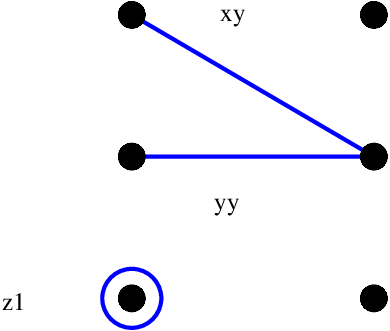} \nn\\\nn\\
&\rbx{$\text{Tri}_7=$} \q \psfrag{x1}{\small $Q_{x_1}$}\psfrag{y1}{\small $Q_{y_1}$}\psfrag{z1}{\small $Q_{z_1}$}\psfrag{x2}{\small $Q_{x_2}$}\psfrag{y2}{\small $Q_{y_2}$}\psfrag{z2}{\small $Q_{z_2}$}\psfrag{xx}{\small $Q_{xx}$}\psfrag{xy}{\small $Q_{xy}$}\psfrag{xz}{\small $Q_{xz}$}\psfrag{yx}{\small $Q_{yx}$}\psfrag{yy}{\small $Q_{yy}$}\psfrag{yz}{\small $Q_{yz}$}\psfrag{zx}{\small $Q_{zx}$}\psfrag{zy}{\small $Q_{zy}$}\psfrag{zz}{\small $Q_{zz}$} \includegraphics[scale=0.4]{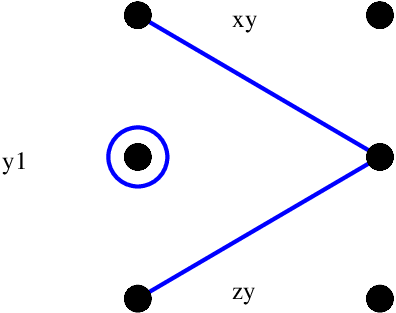} &
&\rbx{$\text{Tri}_8=$} \q \psfrag{x1}{\small $Q_{x_1}$}\psfrag{y1}{\small $Q_{y_1}$}\psfrag{z1}{\small $Q_{z_1}$}\psfrag{x2}{\small $Q_{x_2}$}\psfrag{y2}{\small $Q_{y_2}$}\psfrag{z2}{\small $Q_{z_2}$}\psfrag{xx}{\small $Q_{xx}$}\psfrag{xy}{\small $Q_{xy}$}\psfrag{xz}{\small $Q_{xz}$}\psfrag{yx}{\small $Q_{yx}$}\psfrag{yy}{\small $Q_{yy}$}\psfrag{yz}{\small $Q_{yz}$}\psfrag{zx}{\small $Q_{zx}$}\psfrag{zy}{\small $Q_{zy}$}\psfrag{zz}{\small $Q_{zz}$} \includegraphics[scale=0.4]{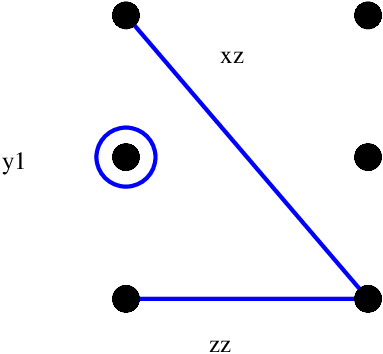} &
&\rbx{$\text{Tri}_9=$} \q \psfrag{x1}{\small $Q_{x_1}$}\psfrag{y1}{\small $Q_{y_1}$}\psfrag{z1}{\small $Q_{z_1}$}\psfrag{x2}{\small $Q_{x_2}$}\psfrag{y2}{\small $Q_{y_2}$}\psfrag{z2}{\small $Q_{z_2}$}\psfrag{xx}{\small $Q_{xx}$}\psfrag{xy}{\small $Q_{xy}$}\psfrag{xz}{\small $Q_{xz}$}\psfrag{yx}{\small $Q_{yx}$}\psfrag{yy}{\small $Q_{yy}$}\psfrag{yz}{\small $Q_{yz}$}\psfrag{zx}{\small $Q_{zx}$}\psfrag{zy}{\small $Q_{zy}$}\psfrag{zz}{\small $Q_{zz}$} \includegraphics[scale=0.4]{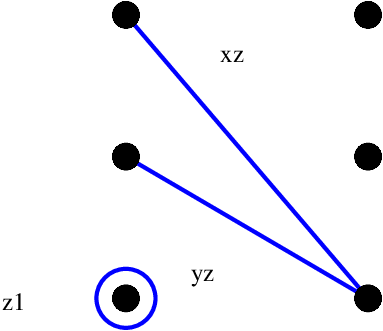} \nn\\\nn\\
&\rbx{$\text{Tri}_{10}=$} \q \psfrag{x1}{\small $Q_{x_1}$}\psfrag{y1}{\small $Q_{y_1}$}\psfrag{z1}{\small $Q_{z_1}$}\psfrag{x2}{\small $Q_{x_2}$}\psfrag{y2}{\small $Q_{y_2}$}\psfrag{z2}{\small $Q_{z_2}$}\psfrag{xx}{\small $Q_{xx}$}\psfrag{xy}{\small $Q_{xy}$}\psfrag{xz}{\small $Q_{xz}$}\psfrag{yx}{\small $Q_{yx}$}\psfrag{yy}{\small $Q_{yy}$}\psfrag{yz}{\small $Q_{yz}$}\psfrag{zx}{\small $Q_{zx}$}\psfrag{zy}{\small $Q_{zy}$}\psfrag{zz}{\small $Q_{zz}$} \includegraphics[scale=0.4]{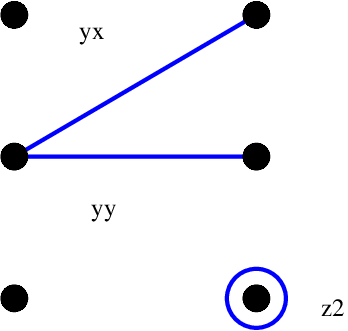} &
&\rbx{$\text{Tri}_{11}=$} \q \psfrag{x1}{\small $Q_{x_1}$}\psfrag{y1}{\small $Q_{y_1}$}\psfrag{z1}{\small $Q_{z_1}$}\psfrag{x2}{\small $Q_{x_2}$}\psfrag{y2}{\small $Q_{y_2}$}\psfrag{z2}{\small $Q_{z_2}$}\psfrag{xx}{\small $Q_{xx}$}\psfrag{xy}{\small $Q_{xy}$}\psfrag{xz}{\small $Q_{xz}$}\psfrag{yx}{\small $Q_{yx}$}\psfrag{yy}{\small $Q_{yy}$}\psfrag{yz}{\small $Q_{yz}$}\psfrag{zx}{\small $Q_{zx}$}\psfrag{zy}{\small $Q_{zy}$}\psfrag{zz}{\small $Q_{zz}$} \includegraphics[scale=0.4]{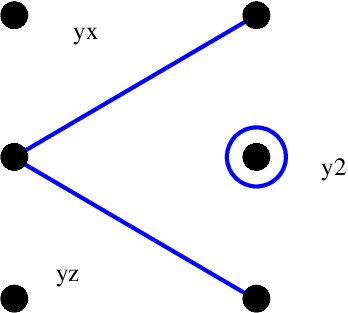} &
&\rbx{$\text{Tri}_{12}=$} \q \psfrag{x1}{\small $Q_{x_1}$}\psfrag{y1}{\small $Q_{y_1}$}\psfrag{z1}{\small $Q_{z_1}$}\psfrag{x2}{\small $Q_{x_2}$}\psfrag{y2}{\small $Q_{y_2}$}\psfrag{z2}{\small $Q_{z_2}$}\psfrag{xx}{\small $Q_{xx}$}\psfrag{xy}{\small $Q_{xy}$}\psfrag{xz}{\small $Q_{xz}$}\psfrag{yx}{\small $Q_{yx}$}\psfrag{yy}{\small $Q_{yy}$}\psfrag{yz}{\small $Q_{yz}$}\psfrag{zx}{\small $Q_{zx}$}\psfrag{zy}{\small $Q_{zy}$}\psfrag{zz}{\small $Q_{zz}$} \includegraphics[scale=0.4]{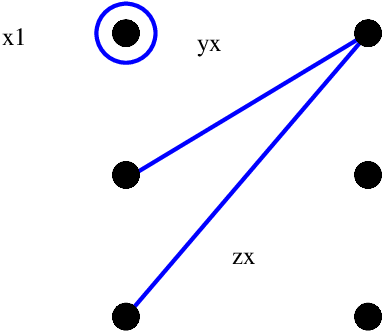} \nn\\\nn\\
&\rbx{$\text{Tri}_{13}=$} \q \psfrag{x1}{\small $Q_{x_1}$}\psfrag{y1}{\small $Q_{y_1}$}\psfrag{z1}{\small $Q_{z_1}$}\psfrag{x2}{\small $Q_{x_2}$}\psfrag{y2}{\small $Q_{y_2}$}\psfrag{z2}{\small $Q_{z_2}$}\psfrag{xx}{\small $Q_{xx}$}\psfrag{xy}{\small $Q_{xy}$}\psfrag{xz}{\small $Q_{xz}$}\psfrag{yx}{\small $Q_{yx}$}\psfrag{yy}{\small $Q_{yy}$}\psfrag{yz}{\small $Q_{yz}$}\psfrag{zx}{\small $Q_{zx}$}\psfrag{zy}{\small $Q_{zy}$}\psfrag{zz}{\small $Q_{zz}$} \includegraphics[scale=0.4]{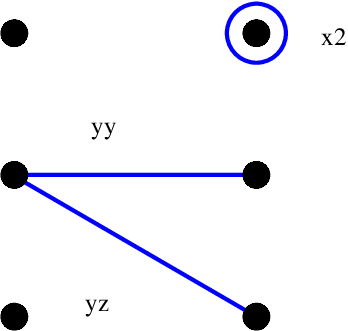} &
&\rbx{$\text{Tri}_{14}=$} \q \psfrag{x1}{\small $Q_{x_1}$}\psfrag{y1}{\small $Q_{y_1}$}\psfrag{z1}{\small $Q_{z_1}$}\psfrag{x2}{\small $Q_{x_2}$}\psfrag{y2}{\small $Q_{y_2}$}\psfrag{z2}{\small $Q_{z_2}$}\psfrag{xx}{\small $Q_{xx}$}\psfrag{xy}{\small $Q_{xy}$}\psfrag{xz}{\small $Q_{xz}$}\psfrag{yx}{\small $Q_{yx}$}\psfrag{yy}{\small $Q_{yy}$}\psfrag{yz}{\small $Q_{yz}$}\psfrag{zx}{\small $Q_{zx}$}\psfrag{zy}{\small $Q_{zy}$}\psfrag{zz}{\small $Q_{zz}$} \includegraphics[scale=0.4]{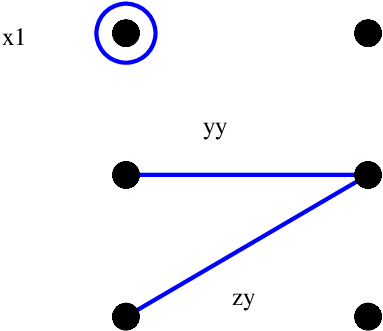} &
&\rbx{$\text{Tri}_{15}=$} \q \psfrag{x1}{\small $Q_{x_1}$}\psfrag{y1}{\small $Q_{y_1}$}\psfrag{z1}{\small $Q_{z_1}$}\psfrag{x2}{\small $Q_{x_2}$}\psfrag{y2}{\small $Q_{y_2}$}\psfrag{z2}{\small $Q_{z_2}$}\psfrag{xx}{\small $Q_{xx}$}\psfrag{xy}{\small $Q_{xy}$}\psfrag{xz}{\small $Q_{xz}$}\psfrag{yx}{\small $Q_{yx}$}\psfrag{yy}{\small $Q_{yy}$}\psfrag{yz}{\small $Q_{yz}$}\psfrag{zx}{\small $Q_{zx}$}\psfrag{zy}{\small $Q_{zy}$}\psfrag{zz}{\small $Q_{zz}$} \includegraphics[scale=0.4]{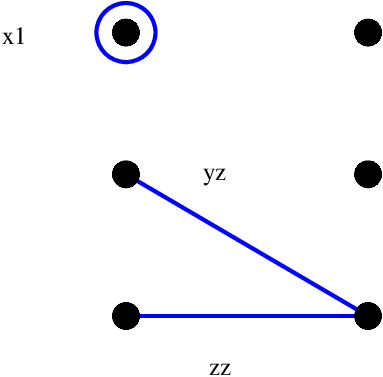} \nn\\\nn\\
&\rbx{$\text{Tri}_{16}=$} \q \psfrag{x1}{\small $Q_{x_1}$}\psfrag{y1}{\small $Q_{y_1}$}\psfrag{z1}{\small $Q_{z_1}$}\psfrag{x2}{\small $Q_{x_2}$}\psfrag{y2}{\small $Q_{y_2}$}\psfrag{z2}{\small $Q_{z_2}$}\psfrag{xx}{\small $Q_{xx}$}\psfrag{xy}{\small $Q_{xy}$}\psfrag{xz}{\small $Q_{xz}$}\psfrag{yx}{\small $Q_{yx}$}\psfrag{yy}{\small $Q_{yy}$}\psfrag{yz}{\small $Q_{yz}$}\psfrag{zx}{\small $Q_{zx}$}\psfrag{zy}{\small $Q_{zy}$}\psfrag{zz}{\small $Q_{zz}$} \includegraphics[scale=0.4]{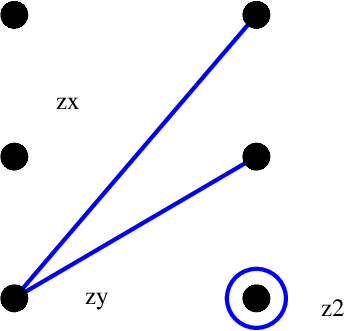} &
&\rbx{$\text{Tri}_{17}=$} \q \psfrag{x1}{\small $Q_{x_1}$}\psfrag{y1}{\small $Q_{y_1}$}\psfrag{z1}{\small $Q_{z_1}$}\psfrag{x2}{\small $Q_{x_2}$}\psfrag{y2}{\small $Q_{y_2}$}\psfrag{z2}{\small $Q_{z_2}$}\psfrag{xx}{\small $Q_{xx}$}\psfrag{xy}{\small $Q_{xy}$}\psfrag{xz}{\small $Q_{xz}$}\psfrag{yx}{\small $Q_{yx}$}\psfrag{yy}{\small $Q_{yy}$}\psfrag{yz}{\small $Q_{yz}$}\psfrag{zx}{\small $Q_{zx}$}\psfrag{zy}{\small $Q_{zy}$}\psfrag{zz}{\small $Q_{zz}$} \includegraphics[scale=0.4]{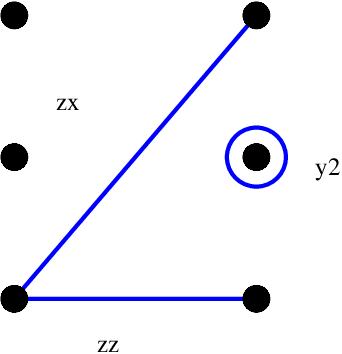} &
&\rbx{$\text{Tri}_{18}=$} \q \psfrag{x1}{\small $Q_{x_1}$}\psfrag{y1}{\small $Q_{y_1}$}\psfrag{z1}{\small $Q_{z_1}$}\psfrag{x2}{\small $Q_{x_2}$}\psfrag{y2}{\small $Q_{y_2}$}\psfrag{z2}{\small $Q_{z_2}$}\psfrag{xx}{\small $Q_{xx}$}\psfrag{xy}{\small $Q_{xy}$}\psfrag{xz}{\small $Q_{xz}$}\psfrag{yx}{\small $Q_{yx}$}\psfrag{yy}{\small $Q_{yy}$}\psfrag{yz}{\small $Q_{yz}$}\psfrag{zx}{\small $Q_{zx}$}\psfrag{zy}{\small $Q_{zy}$}\psfrag{zz}{\small $Q_{zz}$} \includegraphics[scale=0.4]{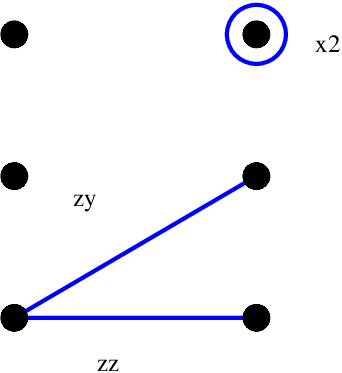} \nn\\\nn\\
&\rbx{$\text{Tri}_{19}=$} \q \psfrag{x1}{\small $Q_{x_1}$}\psfrag{y1}{\small $Q_{y_1}$}\psfrag{z1}{\small $Q_{z_1}$}\psfrag{x2}{\small $Q_{x_2}$}\psfrag{y2}{\small $Q_{y_2}$}\psfrag{z2}{\small $Q_{z_2}$}\psfrag{xx}{\small $Q_{xx}$}\psfrag{xy}{\small $Q_{xy}$}\psfrag{xz}{\small $Q_{xz}$}\psfrag{yx}{\small $Q_{yx}$}\psfrag{yy}{\small $Q_{yy}$}\psfrag{yz}{\small $Q_{yz}$}\psfrag{zx}{\small $Q_{zx}$}\psfrag{zy}{\small $Q_{zy}$}\psfrag{zz}{\small $Q_{zz}$} \includegraphics[scale=0.4]{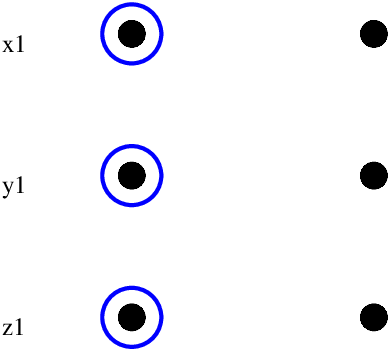} &
&\rbx{$\text{Tri}_{20}=$} \q \psfrag{x1}{\small $Q_{x_1}$}\psfrag{y1}{\small $Q_{y_1}$}\psfrag{z1}{\small $Q_{z_1}$}\psfrag{x2}{\small $Q_{x_2}$}\psfrag{y2}{\small $Q_{y_2}$}\psfrag{z2}{\small $Q_{z_2}$}\psfrag{xx}{\small $Q_{xx}$}\psfrag{xy}{\small $Q_{xy}$}\psfrag{xz}{\small $Q_{xz}$}\psfrag{yx}{\small $Q_{yx}$}\psfrag{yy}{\small $Q_{yy}$}\psfrag{yz}{\small $Q_{yz}$}\psfrag{zx}{\small $Q_{zx}$}\psfrag{zy}{\small $Q_{zy}$}\psfrag{zz}{\small $Q_{zz}$} \includegraphics[scale=0.4]{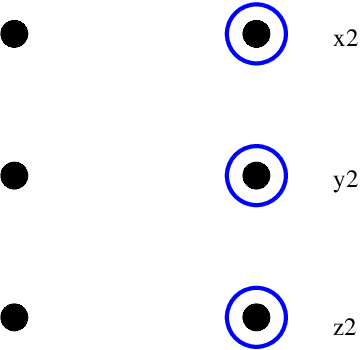} &
\end{align}
We note that Tri${}_2$, Tri${}_4$, Tri${}_{19}$ and Tri$_{20}$ are represented as green triangles in the pentagon lattice of figure \ref{fig:pentagon}.

These triangle sets define via (\ref{compstrong}) complementarity inequalities $0\leq I(\text{Tri}_i)\leq1$ \texttt{bit}, where $I(\text{Tri}_i)$ is the information contained in triangle set $i$. Together with the pentagon equalities (\ref{pentin}), these triangle complementarity inequalities define all independent complementarity inequalities which pure states have to satisfy as there are no other maximal mutually complementary sets. That is, any set of mutually complementary questions among the informationally complete set will be contained in either the pentagons or the triangles.

It is easy to show, however, that for pure states the 20 complementarity inequalities following from the triangle sets (\ref{eq:triangles}) are not all independent. In fact, the pentagon equalities (\ref{pentin}) imply that
\ba
I(\text{Tri}_1):=\alpha_{z_2}+\alpha_{xx}+\alpha_{xy}&=&\alpha_{x_1}+\alpha_{yz}+\alpha_{zz}=:I(\text{Tri}_{15}),\nn\\
I(\text{Tri}_2):=\alpha_{y_2}+\alpha_{xx}+\alpha_{xz}&=&\alpha_{x_1}+\alpha_{yy}+\alpha_{zy}=:I(\text{Tri}_{14}),\nn\\
I(\text{Tri}_3):=\alpha_{z_1}+\alpha_{xx}+\alpha_{yx}&=&\alpha_{x_2}+\alpha_{zy}+\alpha_{zz}=:I(\text{Tri}_{18}),\nn\\
I(\text{Tri}_4):=\alpha_{y_1}+\alpha_{xx}+\alpha_{zx}&=&\alpha_{x_2}+\alpha_{yy}+\alpha_{yz}=:I(\text{Tri}_{13}),\nn\\
I(\text{Tri}_5):=\alpha_{x_2}+\alpha_{xy}+\alpha_{xz}&=&\alpha_{x_1}+\alpha_{yx}+\alpha_{zx}=:I(\text{Tri}_{12}),\nn\\
I(\text{Tri}_6):=\alpha_{z_1}+\alpha_{yy}+\alpha_{xy}&=&\alpha_{y_2}+\alpha_{zx}+\alpha_{zz}=:I(\text{Tri}_{17}),\nn\\
I(\text{Tri}_7):=\alpha_{y_1}+\alpha_{zy}+\alpha_{xy}&=&\alpha_{y_2}+\alpha_{yx}+\alpha_{yz}=:I(\text{Tri}_{11}),\nn\\
I(\text{Tri}_8):=\alpha_{y_1}+\alpha_{xz}+\alpha_{zz}&=&\alpha_{z_2}+\alpha_{yx}+\alpha_{yy}=:I(\text{Tri}_{10}),\nn\\
I(\text{Tri}_9):=\alpha_{z_1}+\alpha_{yz}+\alpha_{xz}&=&\alpha_{z_2}+\alpha_{zx}+\alpha_{zy}=:I(\text{Tri}_{16}),\nn\\
I(\text{Tri}_{19}):=\alpha_{x_1}+\alpha_{y_1}+\alpha_{z_1}&=&\alpha_{x_2}+\alpha_{y_2}+\alpha_{z_2}=:I(\text{Tri}_{20}).\nn
\ea
Note the symmetry pattern of these relations in terms of the graphical representation of the triangle sets in (\ref{eq:triangles}); the encircled individual question of the triangle set on the left hand side is the vertex where the two correlation questions of the triangle set on the right hand side meet and vice versa.

\subsection{The swap generators for $N=2$ qubits}\label{app_swapmain}

In this section we discuss the swap generators defining the group $\ct_2=\rm{PSU}(4)$, their exponentiation, the pentagon preservation equations and the consistency conditions arising from the complementarity inequalities (\ref{compstrong}) and the correlation structure of figure \ref{fig_corr}.

\subsubsection{Derivation of the swap generators}\label{app_swap}

We shall present the derivation of the 15 swap generators of section \ref{sec_psu4} which are consistent with the correlation structure of figure \ref{fig_corr}. Subsequently, we shall argue that by varying the relative signs in these swap generators one accounts for all possible 60 linearly independent generators which could satisfy (\ref{eq:pentGeq}, \ref{gij}).

At the Bloch vector level, the swap transformation (\ref{eq:pentswapq}) between $\text{Pent}_1$ and $\text{Pent}_2$ is of the form:
\begin{equation}
r_{y_1}\longleftrightarrow \pm r_{zx} \ (\text{Pent}_5), \ r_{z_1}\longleftrightarrow \pm r_{yx} \ (\text{Pent}_3), \ r_{xy}\longleftrightarrow \pm r_{z_2} \ (\text{Pent}_4), \ r_{xz}\longleftrightarrow \pm r_{y_2} \ (\text{Pent}_6).\nn \label{eq:pentswapr}
\end{equation}
Writing the transformation as $\vec{r}\,'=T\,\vec{r}=\exp((\pi/2)\,G^{\text{Pent}_1,\text{Pent}_2})\,\vec{r}$, the corresponding generator is, without loss of generality, of the following form:
\begin{equation}
G_{ij}^{\text{Pent}_1,\text{Pent}_2}=\delta_{iy_1}\delta_{jzx}+s_1\, \delta_{iz_1}\delta_{jyx}+s_2\,\delta_{ixy}\delta_{jz_2}+s_3\,\delta_{ixz}\delta_{jy_2}-(i\longleftrightarrow j), \nn\label{eq:pentswapTold}
\end{equation}
where $s_1,s_2,s_3\in\{-1,+1\}$ are relative signs which must be determined. As can be easily checked, (\ref{eq:pentGeq}) is trivially satisfied for $i\in$ Pent${}_k$ with $k\neq1,2$ and $G_{ij}=G_{ij}^{\text{Pent}_1,\text{Pent}_2}$ thanks to symmetry/anti-symmetry. For both the two swapped pentagons $k=1,2$, on the other hand, the conservation equations (\ref{eq:pentGeq}) with $G^{\text{Pent}_1,\text{Pent}_2}$ are equivalent to
\ba
r_{y_1}\,r_{zx}+s_1\,r_{z_1}\,r_{yx}+s_2\,r_{xy}\,r_{z_2}+s_3\,r_{xz}\,r_{y_2}=0.\label{sign}
\ea

The sign structure of the generator $G_{ij}^{\text{Pent}_1,\text{Pent}_2}$ can be derived by considering three separate information distributions, all of which correspond to legal states:
\begin{eqnarray}
&&\text{Configuration 1:} \ \alpha_{x_2}=1\,\texttt{bit} \,\q\Rightarrow \,\q \alpha_{y_1}=\alpha_{yx}, \ \alpha_{z_1}=\alpha_{zx}, \ \alpha_{xy}=\alpha_{z_2}=\alpha_{xz}=\alpha_{y_2}=0, \nonumber\\
&&\text{Configuration 2:} \ \alpha_{x_1}=1\,\texttt{bit} \,\q\Rightarrow \,\q \alpha_{y_2}=\alpha_{xy}, \ \alpha_{z_2}=\alpha_{xz}, \ \alpha_{y_1}=\alpha_{zx}=\alpha_{z_1}=\alpha_{yx}=0, \nonumber\\
&&\text{Configuration 3:} \ \alpha_{zz}=1 \,\texttt{bit} \,\q\Rightarrow \,\q \alpha_{z_1}=\alpha_{z_2}, \ \alpha_{xy}=\alpha_{yx}, \ \alpha_{y_1}=\alpha_{zx}=\alpha_{xz}=\alpha_{y_2}=0. \nn\label{eq:pmconf}
\end{eqnarray}
On the right hand sides, we have made use of the constraints on the information distribution at the end of section \ref{sec_infodist} -- in particular, figure \ref{fig:1bitquestion} -- and complementarity (e.g., $Q_{x_2},Q_{xy}$ being complementary implies that $\alpha_{x_2}=1$ \texttt{bit} necessitates $\alpha_{xy}=0$, etc.). 

In configuration 1, (\ref{sign}) reduces to 
\ba
r_{y_1}\,r_{zx}+s_1\,r_{z_1}\,r_{yx}=0.\label{sign2}
\ea
The relevant correlation triangles are represented in figure \ref{fig:conf1}. Since both triangles represent {\it even} correlations, we have that $r_{x_2}=\pm1$ implies $r_{z_1}=\pm r_{zx}$, $r_{y_1}=\pm r_{yx}$ (in this sign order). Accordingly, (\ref{sign2}) requires $s_1=-1$ in order to be satisfied\footnote{There must be a state which has all the Bloch components $r_{z_1},r_{zx},r_{y_1}$ and $r_{yx}$ being non-zero. If this was not the case it would imply that whenever the observer $O$ knows the answer to $Q_{x_2}$ completely, $O$ would then also know the answer to either the pairs $Q_{z_1},Q_{zx}$ or $Q_{y_1},Q_{yx}$ completely as well. However this is not possible since then the question pairs $Q_{z_1},Q_{zx}$ or $Q_{y_1},Q_{yx}$ would be (partially) dependent on $Q_{x_2}$, which contradicts the fact that they are part of an informationally complete set.}.
\begin{figure}[h!]
\psfrag{+}{\small $+$}
\psfrag{A}{\small $Q_{x_2}$}
\psfrag{B}{\small $Q_{y_1}$}
\psfrag{C}{\small $Q_{yx}$}
\psfrag{D}{\small $Q_{z_1}$}
\psfrag{E}{\small $Q_{zx}$}
\psfrag{a}{\small $\alpha_{x_2}=1$ \texttt{bit}}
\psfrag{b}{\small $\alpha_{y_1}=\alpha_{yx}$}
\psfrag{c}{\small $\alpha_{z_1}=\alpha_{zx}$}
\begin{subfigure}[b]{.22\textwidth}
\centering
\includegraphics[scale=0.3]{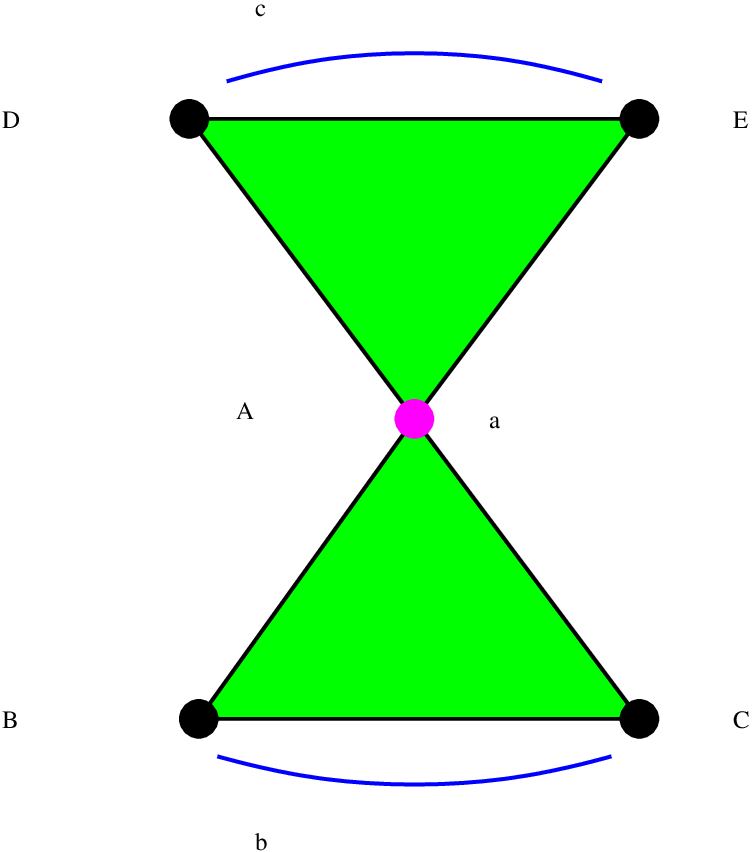}
\centering
\caption{\small }\label{fig:conf1}\end{subfigure}
\hspace*{1.8cm}
\begin{subfigure}[b]{.22\textwidth}
\psfrag{+}{\small $+$}
\psfrag{A}{\small $Q_{x_1}$}
\psfrag{B}{\small $Q_{z_2}$}
\psfrag{C}{\small $Q_{xz}$}
\psfrag{D}{\small $Q_{y_2}$}
\psfrag{E}{\small $Q_{xy}$}
\psfrag{a}{\small $\alpha_{x_1}=1$ \texttt{bit}}
\psfrag{b}{\small $\alpha_{z_2}=\alpha_{xz}$}
\psfrag{c}{\small $\alpha_{y_2}=\alpha_{xy}$}
\centering
\includegraphics[scale=0.3]{figures/conf1.eps}
\caption{\small }\label{fig:conf2}
\end{subfigure}
\hspace*{1.8cm}
\begin{subfigure}[b]{.22\textwidth}
\psfrag{+}{\small $+$}
\psfrag{A}{\small $Q_{zz}$}
\psfrag{B}{\small $Q_{xy}$}
\psfrag{C}{\small $Q_{yx}$}
\psfrag{D}{\small $Q_{z_1}$}
\psfrag{E}{\small $Q_{z_2}$}
\psfrag{a}{\small $\alpha_{zz}=1$ \texttt{bit}}
\psfrag{b}{\small $\alpha_{xy}=\alpha_{yx}$}
\psfrag{c}{\small $\alpha_{z_1}=\alpha_{z_2}$}
\centering
\includegraphics[scale=0.3]{figures/conf1.eps}
\caption{\small }\label{fig:conf3}
\end{subfigure}
\caption{\small The relevant correlation triangles of figure \ref{fig_corr} for (\ref{sign}) in configurations (a) 1; (b) 2; and (c) 3. } 
\end{figure}
Using configuration 2 and figure \ref{fig:conf2}, one shows similarly that $s_2\cdot s_3=-1$ and, finally, employing configuration 3 and figure \ref{fig:conf3}, one easily verifies that (\ref{sign}) requires $s_2=+1$ and, hence, $s_3=-1$. This yields $G^{\text{Pent}_1,\text{Pent}_2}$ in the form (\ref{eq:pentswapT}).

The generators of the eight other swaps between pentagons in figure \ref{fig:pentagon} sharing composite questions are derived similarly. As will become clear shortly, together with $G^{\text{Pent}_1,\text{Pent}_2}$ these constitute the
\newpage
\begin{gather}
\textbf{9 generators of entangling unitaries:}\nn\\
G_{ij}^{\text{Pent}_1,\text{Pent}_2}=\delta_{iy_1}\delta_{jzx}+\delta_{ixy}\delta_{jz_2}-\delta_{iz_1}\delta_{jyx}-\delta_{ixz}\delta_{jy_2}-(i\longleftrightarrow j), \nonumber\\
G_{ij}^{\text{Pent}_1,\text{Pent}_4}=\delta_{iz_1}\delta_{jyy}+\delta_{ixx}\delta_{jz_2}-\delta_{iy_1}\delta_{jzy}-\delta_{ixz}\delta_{jx_2}-(i\longleftrightarrow j), \nonumber\\
G_{ij}^{\text{Pent}_1,\text{Pent}_6}=\delta_{iy_1}\delta_{jzz}+\delta_{ixx}\delta_{jy_2}-\delta_{iz_1}\delta_{jyz}-\delta_{ixy}\delta_{jx_2}-(i\longleftrightarrow j), \nonumber\\
G_{ij}^{\text{Pent}_2,\text{Pent}_3}=\delta_{iy_2}\delta_{jyz}+\delta_{izx}\delta_{jx_1}-\delta_{iz_2}\delta_{jyy}-\delta_{ixx}\delta_{jz_1}-(i\longleftrightarrow j), \nonumber\\
G_{ij}^{\text{Pent}_2,\text{Pent}_5}=\delta_{iy_2}\delta_{jzz}+\delta_{ixx}\delta_{jy_1}-\delta_{iz_2}\delta_{jzy}-\delta_{iyx}\delta_{jx_1}-(i\longleftrightarrow j), \nonumber\\
G_{ij}^{\text{Pent}_3,\text{Pent}_4}=\delta_{iyz}\delta_{jx_2}+\delta_{iz_1}\delta_{jxy}-\delta_{iyx}\delta_{jz_2}-\delta_{ix_1}\delta_{jzy}-(i\longleftrightarrow j), \nonumber\\
G_{ij}^{\text{Pent}_3,\text{Pent}_6}=\delta_{iyy}\delta_{jx_2}+\delta_{ix_1}\delta_{jzz}-\delta_{iyx}\delta_{jy_2}-\delta_{iz_1}\delta_{jxz}-(i\longleftrightarrow j), \nonumber\\
G_{ij}^{\text{Pent}_4,\text{Pent}_5}=\delta_{ix_2}\delta_{jzz}+\delta_{iyy}\delta_{jx_1}-\delta_{iz_2}\delta_{jzx}-\delta_{ixy}\delta_{jy_1}-(i\longleftrightarrow j), \nonumber\\
G_{ij}^{\text{Pent}_5,\text{Pent}_6}=\delta_{izy}\delta_{jx_2}+\delta_{iy_1}\delta_{jxz}-\delta_{izx}\delta_{jy_2}-\delta_{ix_1}\delta_{jyz}-(i\longleftrightarrow j). \label{swapgen1}
\end{gather}
Note from the index structure that these generators always swap information between a pair of an individual and a composite question, thus transferring information from composite to individual questions and vice versa -- as appropriate for an entangling transformation.

Next, we shall briefly explain how to derive the specific form of the swap generators for pentagon pairs in figure \ref{fig:pentagon} overlapping in an individual question. For example, for the swap between Pent${}_3$ and Pent${}_5$, overlapping in $Q_{x_1}$, one arrives in analogy to above at
\ba
G^{\text{Pent}_3,\text{Pent}_5}_{ij}=\delta_{iy_1}\delta_{jz_1}+s_1'\,\delta_{iyx}\delta_{jzx}+s_2'\,\delta_{iyy}\delta_{jzy}+s_3'\,\delta_{iyz}\delta_{jzz}-(i\longleftrightarrow j)\nn
\ea
such that (\ref{eq:pentGeq}) for $k=3,5$ (again, the latter is trivially satisfied for $k\neq3,5$ and $G^{\text{Pent}_3,\text{Pent}_5}$) is equivalent to
\ba
r_{y_1}\,r_{z_1}+s_1'\,r_{yx}\,r_{zx}+s_2'\,r_{yy}\,r_{zy}+s_3'\,r_{yz}\,r_{zz}=0.\label{sign3}
\ea
The sign structure can be determined by considering the information distributions
\begin{eqnarray}
&&\text{Configuration 1':} \ \alpha_{x_2}=1\,\texttt{bit} \,\q\Rightarrow \,\q \alpha_{y_1}=\alpha_{yx}, \ \alpha_{z_1}=\alpha_{zx}, \ \alpha_{yy}=\alpha_{zy}=\alpha_{yz}=\alpha_{zz}=0, \nonumber\\
&&\text{Configuration 2':} \ \alpha_{y_2}=1\,\texttt{bit} \,\q\Rightarrow \,\q \alpha_{y_1}=\alpha_{yy}, \ \alpha_{z_1}=\alpha_{zy}, \ \alpha_{yx}=\alpha_{zx}=\alpha_{zz}=\alpha_{yz}=0, \nonumber\\
&&\text{Configuration 3':} \ \alpha_{z_2}=1 \,\texttt{bit} \,\q\Rightarrow \,\q \alpha_{z_1}=\alpha_{zz}, \ \alpha_{y_1}=\alpha_{yz}, \ \alpha_{yx}=\alpha_{zx}=\alpha_{yy}=\alpha_{zy}=0. \nn
\end{eqnarray}
The relevant correlation triangles for configurations 1'--3' are represented in figures \ref{fig:conf1p}--\ref{fig:conf3p}.
\begin{figure}[h!]
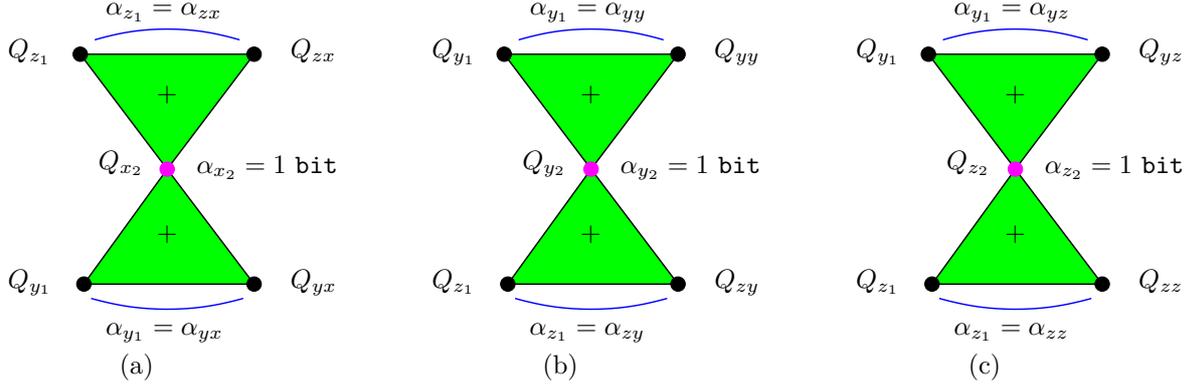

\psfrag{+}{\small $+$}
\psfrag{A}{\small $Q_{x_2}$}
\psfrag{B}{\small $Q_{y_1}$}
\psfrag{C}{\small $Q_{yx}$}
\psfrag{D}{\small $Q_{z_1}$}
\psfrag{E}{\small $Q_{zx}$}
\psfrag{a}{\small $\alpha_{x_2}=1$ \texttt{bit}}
\psfrag{b}{\small $\alpha_{y_1}=\alpha_{yx}$}
\psfrag{c}{\small $\alpha_{z_1}=\alpha_{zx}$}
\begin{subfigure}[b]{.22\textwidth}
\centering
\includegraphics[scale=0.3]{figures/conf1.eps}
\centering
\caption{\small }\label{fig:conf1p}\end{subfigure}
\hspace*{1.8cm}
\begin{subfigure}[b]{.22\textwidth}
\psfrag{+}{\small $+$}
\psfrag{A}{\small $Q_{y_2}$}
\psfrag{B}{\small $Q_{z_1}$}
\psfrag{C}{\small $Q_{zy}$}
\psfrag{D}{\small $Q_{y_1}$}
\psfrag{E}{\small $Q_{yy}$}
\psfrag{a}{\small $\alpha_{y_2}=1$ \texttt{bit}}
\psfrag{b}{\small $\alpha_{z_1}=\alpha_{zy}$}
\psfrag{c}{\small $\alpha_{y_1}=\alpha_{yy}$}
\centering
\includegraphics[scale=0.3]{figures/conf1.eps}
\caption{\small }\label{fig:conf2p}
\end{subfigure}
\hspace*{1.8cm}
\begin{subfigure}[b]{.22\textwidth}
\psfrag{+}{\small $+$}
\psfrag{A}{\small $Q_{z_2}$}
\psfrag{B}{\small $Q_{z_1}$}
\psfrag{C}{\small $Q_{zz}$}
\psfrag{D}{\small $Q_{y_1}$}
\psfrag{E}{\small $Q_{yz}$}
\psfrag{a}{\small $\alpha_{z_2}=1$ \texttt{bit}}
\psfrag{b}{\small $\alpha_{z_1}=\alpha_{zz}$}
\psfrag{c}{\small $\alpha_{y_1}=\alpha_{yz}$}
\centering
\includegraphics[scale=0.3]{figures/conf1.eps}
\caption{\small }\label{fig:conf3p}
\end{subfigure}
\caption{\small The relevant correlation triangles of figure \ref{fig_corr} for (\ref{sign3}) in configurations (a) 1'; (b) 2'; and (c) 3'. } 
\end{figure}
Now one proceeds as before, using that all relevant triangles represent {\it even} correlations, to show that $s_1'=s_2'=s_3'=-1$. The different sign structure (three, compared to the two minus signs for the entangling swaps) results from the fact that for {\it all} configurations 1'--3' the sign is determined by relating the last three terms in (\ref{sign3}) to the first (signless) term $r_{y_1}r_{z_1}$. By contrast, e.g., configuration 2 for the swap between Pent${}_1$ and Pent${}_2$ relates the last two terms with signs in (\ref{sign}) against each other.

This yields $G^{\text{Pent}_3,\text{Pent}_5}$ in the form (\ref{eq:TSO3x1}) and, in analogy, the full set of swap generators for pentagon pairs overlapping in an individual. As will be discussed below, these are the 
\begin{gather}
\textbf{6 generators of product unitaries:}\nn\\
G_{ij}^{\text{Pent}_1,\text{Pent}_3}=\delta_{ix_1}\delta_{jy_1}-\delta_{iyz}\delta_{jxz}-\delta_{iyy}\delta_{jxy}-\delta_{iyx}\delta_{jxx}-(i\longleftrightarrow j), \nonumber\\
G_{ij}^{\text{Pent}_1,\text{Pent}_5}=\delta_{iz_1}\delta_{jx_1}-\delta_{ixz}\delta_{jzz}-\delta_{ixy}\delta_{jzy}-\delta_{ixx}\delta_{jzx}-(i\longleftrightarrow j), \nonumber\\
G_{ij}^{\text{Pent}_2,\text{Pent}_4}=\delta_{iy_2}\delta_{jx_2}-\delta_{izx}\delta_{jzy}-\delta_{iyx}\delta_{jyy}-\delta_{ixx}\delta_{jxy}-(i\longleftrightarrow j), \nonumber\\
G_{ij}^{\text{Pent}_2,\text{Pent}_6}=\delta_{iz_2}\delta_{jx_2}-\delta_{izx}\delta_{jzz}-\delta_{iyx}\delta_{jyz}-\delta_{ixx}\delta_{jxz}-(i\longleftrightarrow j), \nonumber\\
G_{ij}^{\text{Pent}_3,\text{Pent}_5}=\delta_{iz_1}\delta_{jy_1}-\delta_{iyz}\delta_{jzz}-\delta_{iyy}\delta_{jzy}-\delta_{iyx}\delta_{jzx}-(i\longleftrightarrow j), \nonumber\\
G_{ij}^{\text{Pent}_4,\text{Pent}_6}=\delta_{iz_2}\delta_{jy_2}-\delta_{izy}\delta_{jzz}-\delta_{iyy}\delta_{jyz}-\delta_{ixy}\delta_{jxz}-(i\longleftrightarrow j). \label{swapgen2}
\end{gather}
It can be easily checked that the six generators in (\ref{swapgen2}) satisfy the commutator algebra of $\so(3)\oplus\so(3)$. Note from the index structure that these six generators always swap information between pairs of individual questions or pairs of composite questions -- as appropriate for the generators of the product unitaries.

With a computer algebra program one may check that, remarkably, the 15 generators (\ref{swapgen1}, \ref{swapgen2}) coincide exactly (in some cases up to an unimportant overall sign) with the adjoint representation of the 15 fundamental generators of the Lie group $\SU(4)$
\ba
(G^i)_{jk}:=f^{ijk}=\f{1}{4}\,\tr([\sigma_j,\sigma_k]\,\sigma_i), \label{eq:comalg}
\ea
where $f^{ijk}$ are the structure constants of $\SU(4)$, the indices $i,j,k$ take the 15 values $x_1,y_1,z_1,x_2,\ldots,xz,xy,\ldots,zz$ (as in our reconstruction) and $\sigma_{x_1}:=\sigma_x\otimes\mathds{1}$, ..., $\sigma_{x_2}:=\mathds{1}\otimes\sigma_x$, ..., $\sigma_{xx}:=\sigma_x\otimes\sigma_x$, ..., $\sigma_{zz}:=\sigma_z\otimes\sigma_z$ and $\sigma_x,\sigma_y,\sigma_z$ are the usual Pauli matrices. In particular, the ordering of coincidence is $G^i\equiv\pm G^{\text{Pent}_a,\text{Pent}_b}$ where $Q_i$ is the single question in $\text{Pent}_a\cap\text{Pent}_b$ which is left invariant by the swap; e.g., $G^{xx}\equiv G^{\text{Pent}_1,\text{Pent}_2}$, etc. This ultimately also clarifies that indeed (\ref{swapgen1}) constitute the generators of entangling unitaries, while (\ref{swapgen2}) are the generators of the product unitaries. Clearly, the 15 swap generators thus satisfy the commutator algebra of $\su(4)\simeq\so(6)\simeq\mathfrak{psu}(4)$
\begin{equation}
[G^i,G^j]=f^{ijk}\,G^k, \nn
\end{equation}
with $f^{ijk}$ given by (\ref{eq:comalg}).

Let us now explain how the full information swaps account for all 60 linearly independent generators which {\it could} solve (\ref{eq:pentGeq}, \ref{gij}). While deriving the 15 generators (\ref{swapgen1}, \ref{swapgen2}) we have made use of the correlation structure in figure \ref{fig_corr} in order to fix the relative signs in the generators, e.g.\ in (\ref{sign}, \ref{sign3}). It is clear, however, that by varying these relative signs, one can produce four linearly independent generators from each generator in (\ref{swapgen1}, \ref{swapgen2}) since each such generator contains four linearly independent components. By inspection, the reader may also verify that each of the 60 distinct pairs of complementary questions is encoded in precisely one of the 15 generators in terms of a non-vanishing component, corresponding to the pair of indices representing the pair of complementary questions. This immediately entails that by varying the relative signs in the 15 generators (\ref{swapgen1}, \ref{swapgen2}), one obtains precisely the maximal amount of 60 linearly independent generators which satisfy (\ref{gij}). The relative sign structure only affects the correlation structure but not the fact that each of these 60 linearly independent generators represents a full swap of information between a pair of pentagons sets. As evident from the derivation in this section, however, it is only the 15 generators in (\ref{swapgen1}, \ref{swapgen2}) which are consistent with the correlation structure in figure \ref{fig_corr} and which thus are legitimate candidates for time evolution generators in our reconstruction.

As an aside, let us briefly note that the sign structure of the 15 generators would be exactly the same, had we instead followed the alternative convention to build composite questions with the XOR connective, e.g., $\tilde{Q}_{xx}:=\neg(Q_{x_1}\leftrightarrow Q_{x_2})$, etc., rather than the XNOR as done thus far (see also \cite{Hoehn:2014uua} on this). $\tilde{Q}_{xx}$ represents an {\it anti-correlation} question ``are the answers to $Q_{x_1}$ and $Q_{x_2}$ anti-correlated?". In this case, the correlation structure for the XOR composites would coincide with the one in figure \ref{fig_corr} except that all even correlation triangles would be replaced by odd ones and vice versa. However, this would leave the relative sign structure, determined via figures \ref{fig:conf1}--\ref{fig:conf3p} invariant. This has to be expected, of course, since both conventions are physically equivalent. 

However, we note that the 15 generators for mirror quantum theory \cite{Hoehn:2014uua,Dakic:2009bh,Masanes:2011kx}, obtained by swapping the assignment `yes' $\leftrightarrow$ `no' of a single individual question and adhering to the convention of building composite questions with the XNOR as before, would be distinct. Indeed, the swap of the answer assignment for, say, $Q_{x_1}$ is equivalent to $Q_{x_1}\mapsto\neg Q_{x_1}$ (a partial transpose at the density matrix level) and $Q_{xx},Q_{xy},Q_{xz}\mapsto\neg Q_{xx},\neg Q_{xy},\neg Q_{xz}$. This produces a flip of the sign of the correlation triangles in {\it only} the upper graph in figure \ref{fig_corr} (involving {\it only} the correlation questions), while leaving the lower graph invariant (see \cite{Hoehn:2014uua} for details). This has the consequence that figures \ref{fig:conf1p}--\ref{fig:conf3p} and, more generally, the six product generators in (\ref{swapgen2}) remain invariant. However, the nine generators (\ref{swapgen1}) of the entangling transformations change their sign structure. In particular, figure \ref{fig:conf3} involves an {\it even} correlation triangle of the composites $Q_{xy},Q_{yx},Q_{zz}$, which would be replaced by an {\it odd} triangle for mirror quantum theory. This would result in $s_2=-1$ and thus $s_3=+1$ for mirror quantum theory (and analogously for the other generators). Mirror quantum theory thus has distinct entangling Hamiltonians (corresponding to the partial transpose relating it to standard quantum theory). Nevertheless, mirror quantum theory is physically perfectly equivalent to standard quantum theory and just employs a distinct convention for `yes/no' outcomes of questions \cite{Hoehn:2014uua,Masanes:2011kx}. The example of mirror quantum theory thus demonstrates that those swap generators among the 60 linearly independent ones mentioned above which differ in their relative sign structure from (\ref{swapgen1}, \ref{swapgen2}) simply correspond to distinct conventions.

\subsubsection{Exponentiation of the generators}

For the reasons mentioned in section \ref{sec_psu4}, the exponentiation of the swap generators results in the connected, simple Lie group $\ct_2'=\PSU(4)\simeq \PSO(6)$ (rather than in $\SU(4)$ or $\SO(6)$). 

The exponential of any single generator $G^a$ acts as $2\times 2$-rotation matrices on the planes spanned by each pair of swapped questions. For example, $T^{\text{Pent}_1,\text{Pent}_2}(t)=\exp(t\, G^{\text{Pent}_1,\text{Pent}_2})$ acts as rotations of angle $\pm t$ in the planes $(r_{y_1},r_{zx})$, $(r_{z_1},r_{yx})$, $(r_{xy},r_{z_2})$, $(r_{xz},r_{y_2})$, where the signs are fixed by the swap generator (\ref{eq:pentswapT}). Furthermore, as one can easily convince oneself, the six generators (\ref{swapgen2}) exponentiate to the $\SO(3)\times\SO(3)\simeq\PSU(2)\times\PSU(2)$ product unitaries. For instance, $T^{\text{Pent}_3,\text{Pent}_5}(t)=\exp(t\, G^{\text{Pent}_3,\text{Pent}_5})$, generated by the swap which leaves $\alpha_{x_1}$ invariant, describes rotations of the Bloch vector around the $r_{x_1}$ axis (leaving $r_{x_1},r_{xx},r_{xy},r_{xz}$ and $r_{x_2},r_{y_2},r_{z_2}$ invariant). Similarly, the other generators in (\ref{swapgen2}) generate rotations around the Bloch vector axis corresponding to the individual question which constitute the overlap of the respective pentagon pairs.

In general, the exponential of any single swap generator $G^{\text{Pent}_i,\text{Pent}_j}$ is of the form:
\begin{gather}
T^{\text{Pent}_a,\text{Pent}_b}(t)=\exp(t\, G^{\text{Pent}_a,\text{Pent}_b})=(\cos(t)-1)\tilde{\mathds{1}}^{\text{Pent}_a,\text{Pent}_b}+\sin(t)G^{\text{Pent}_a,\text{Pent}_b}+\mathds{1}, \label{eq:pentexpT} \\
\tilde{\mathds{1}}^{\text{Pent}_a,\text{Pent}_b}_{kl}=\sum_{\bar{k},\bar{l}\in (\text{Pent}_a\cup \text{Pent}_b)\setminus(\text{Pent}_a\cap \text{Pent}_b)}\delta_{k\bar{k}}\delta_{l\bar{l}}.\nonumber
\end{gather}
The matrix $\tilde{\mathds{1}}^{\text{Pent}_a,\text{Pent}_b}_{kl}\sim (G^{\text{Pent}_a,\text{Pent}_b})^2_{kl}$ is the diagonal matrix with ones at the positions of the eight questions which are swapped by $G^{\text{Pent}_a,\text{Pent}_b}$ and otherwise zeros.

\subsubsection{Pentagon conservation equations for $N=2$ qubits}\label{app_geneq}

Every swap generator $G=G^{\text{Pent}_a,\text{Pent}_b}$ puts constraints on the potential pure states according to Eq.\ (\ref{eq:pentGeq}). These equalities follow from the requirement that for pure states the total information in each pentagon set $\text{Pent}_c$ must be a conserved charge under the time evolution group $\ct_2$. Eq.\ (\ref{eq:pentGeq}) defines the pentagon conservation equations to linear order in $t$. However, clearly, if the group $\ct_2'=\rm{PSU}(4)$ generated by (\ref{swapgen1}, \ref{swapgen2}) did constitute the correct time evolution group $\ct_2$, then acting with an arbitrary $T(t)\in\ct'_2$ on a legal pure state state $\vec{r}$ must produce another pure state $\vec{r}\,':=T(t)\cdot \vec{r}$ which satisfies the pentagon equalities (\ref{pentin}) to all orders in $t$. In this section we shall show that the linear order conservation conditions (\ref{eq:pentGeq}) are, in fact, sufficient to guarantee preservation of the pentagon equalities to all orders in $t$ and for all $T\in\ct_2$.

To this end, we firstly consider the action of the exponential map (\ref{eq:pentexpT}) for an arbitrary of the 15 generators on some pure state $\vec{r}$. Surely, $\vec{r}\,'=T^{\text{Pent}_a,\text{Pent}_b}(t) \, \vec{r}$ must again satisfy the pentagon equalities (time evolution preserves the total information and must map states to states), i.e.\ we must have
\begin{gather}
1=\sum_{l\in\text{Pent}_c} r_l^2\overset{!}{=}\sum_{l\in\text{Pent}_c} (r'_l)^2,\q\q\q c=1,\ldots,6.\label{pentpres}
\end{gather}
We shall now show in lemma \ref{le:N2checkA} below that, if the first equation in (\ref{pentpres}) is satisfied, then
\begin{gather}
\sum_{l\in\text{Pent}_c} (r'_l)^2=\sum_{l\in\text{Pent}_c} r_l^2+2\sin(t)\cos(t)\hspace{-0.8 cm}\sum_{l\in\text{Pent}_c,1\leq m\leq 15}r_l \ G^{\text{Pent}_a,\text{Pent}_b}_{lm} \ r_m \nn
\end{gather}
such that (\ref{pentpres}) is satisfied for arbitrary $t$ iff the linear constraints (\ref{eq:pentGeq}) holds
\begin{gather}
 \sum_{l\in\text{Pent}_c,1\leq m\leq 15}r_l \ G^{\text{Pent}_a,\text{Pent}_b}_{lm} \ r_m=0. \label{pentrTreq}
\end{gather}
For this purpose we introduce the projector $P^{\text{Pent}_a}_{kl}:=\sum_{\bar{k},\bar{l}\in \text{Pent}_a}\delta_{k\bar{k}}\delta_{l\bar{l}}$ onto the Bloch vector components corresponding to $\text{Pent}_a$ and the symmetric matrix $R_{kl}:=(\vec{r}\cdot \vec{r}\,{}^t)_{kl}=r_kr_l$. In the following we choose the short-hand notation $P^a:=P^{\text{Pent}_a},G^{ab}:=G^{\text{Pent}_a,\text{Pent}_b}$ with $a<b$. The pentagon equalities and generator constraints (\ref{pentrTreq}) can now be equivalently expressed as
\begin{gather}
\text{Pentagon eq.}: \text{tr}[P^aR]=1, \text{for all} \ 1\leq a\leq 6, \label{pure state}\\
\text{Generator eq.}: \text{tr}[P^aG^{bc}R]=\frac{1}{2}\text{tr}[[P^a,G^{bc}]R]=0, \text{for all} \ 1\leq a\leq 6 \ \text{and} \ 1\leq b< c\leq 6. \nonumber
\end{gather}
Before we show the above mentioned result, we firstly require a few identities. Using the explicit expressions (\ref{swapgen1}, \ref{swapgen2}) for the 15 generators, one can check that the following statements are valid for any $1\leq a,b,c,d\leq 6$
\begin{enumerate}[(a)]
\item $[P^a,G^{bc}]=(\delta_{ab}+\delta_{ac})G^{bc}(\mathds{1}-2P^a)$, which also implies $\{P^a,G^{ab}\}=\{P^b,G^{ab}\}=G^{ab}$.
\item $[P^a,G^{ab}]=-[P^b,G^{ab}]$.
\item $[G^{ab},G^{cd}]=0$, whenever $G^{ab}$ and $G^{cd}$ swap different pentagons, i.e.\ $a,b$ are both different from $c,d$.
\end{enumerate}
Note that (a) and (b) imply that there are only 15 independent pentagon conservation equations arising from (\ref{eq:pentGeq}, \ref{pentrTreq}). These are exhibited in (\ref{15cons}). Furthermore, (c) corresponds to the vanishing of the structure constants $f^{(ab)(cd)(eg)}$ of $\PSU(4)$ whenever $G^{ab}$ and $G^{cd}$ swap different pentagons or similarly to the commutation of $\PSO(6)$ rotations in the different planes $(ab)$ and $(cd)$. Throughout the derivation we will also use the relation 
\ba
\tilde{\mathds{1}}^{ab}:=\tilde{\mathds{1}}^{\text{Pent}_a,\text{Pent}_b}=P^a+P^b-2P^aP^b.\nn
\ea
\begin{lemma}
\leavevmode
Define $\vec{r}\,'=\exp(t G^{ab})\cdot \vec{r}$ where $G^{ab}$ is any of the 15 swap generators (\ref{swapgen1}, \ref{swapgen2}). If $\text{tr}[P^cR]=1$ for all $1\leq c\leq 6$, then $\text{tr}[P^cR']=\text{tr}[P^cR]+2\sin(t)\cos(t)\text{tr}[P^cG^{ab}R]$.
\label{le:N2checkA}
\end{lemma}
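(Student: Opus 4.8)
The plan is to exploit that $T:=\exp(t\,G^{ab})$ is orthogonal, since $G^{ab}\in\so(15)$, so that $R'=\vec{r}^{\,\prime}(\vec{r}^{\,\prime})^t=T\,R\,T^t$ and hence $\text{tr}[P^cR']=\text{tr}[T^tP^cT\,R]$ by cyclicity of the trace. First I would insert the closed-form exponential (\ref{eq:pentexpT}), $T=\mathds{1}+\sin t\,G^{ab}+(\cos t-1)\,\tilde{\mathds{1}}^{ab}$, and expand $T^tP^cT$. Abbreviating $s=\sin t$, $u=\cos t-1$, $G=G^{ab}$, $J=\tilde{\mathds{1}}^{ab}$, $M=P^c$, this produces the six terms $M+s[M,G]+2u\,MJ+u^2JMJ-s^2\,GMG+su(JMG-GMJ)$, which I would then organise by their parity in $G$. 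Throughout I use the structural facts that all $P^a$ are diagonal (hence mutually commuting and commuting with $J$), that $J^2=J$, that $G$ is supported on the eight swapped coordinates so that $\tilde{\mathds{1}}^{ab}G=G=G\tilde{\mathds{1}}^{ab}$, and that $(G^{ab})^2=-\tilde{\mathds{1}}^{ab}$, all read off from the explicit generators (\ref{swapgen1}, \ref{swapgen2}).

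\textbf{The odd-in-$G$ part.} The two terms linear in $G$ are $s[M,G]$ and $su(JMG-GMJ)$. Since $M$ and $J$ are symmetric while $G$ is antisymmetric, a scalar $\vec{r}^t(\cdot)\vec{r}$ equals its own transpose, which gives $\vec{r}^t[M,G]\vec{r}=2\,\text{tr}[P^cG\,R]$ and, writing $JM=MJ=:K$, also $\vec{r}^t(JMG-GMJ)\vec{r}=\vec{r}^t[K,G]\vec{r}=2\,\text{tr}[P^c\tilde{\mathds{1}}^{ab}G\,R]$. Using $\tilde{\mathds{1}}^{ab}G=G$ the latter collapses to $2\,\text{tr}[P^cG\,R]$, so the odd part contributes $2s(1+u)\,\text{tr}[P^cG\,R]=2\sin t\cos t\,\text{tr}[P^cG^{ab}R]$, which is precisely the advertised correction term.

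\textbf{The even-in-$G$ part.} Here I must show the remaining terms reduce to $\text{tr}[P^cR]$ alone. Using $[M,J]=0$ and $J^2=J$ one has $JMJ=MJ$, and with the trigonometric identity $2u+u^2=-\sin^2t=-s^2$ the even contribution becomes $\text{tr}[P^cR]-s^2\big(\text{tr}[P^c\tilde{\mathds{1}}^{ab}R]+\vec{r}^tGMG\vec{r}\big)$. Since $G^2=-\tilde{\mathds{1}}^{ab}$, what remains is the single identity
\[
\vec{r}^t\,G\,P^c\,G\,\vec{r}=\vec{r}^t\,P^c\,G^2\,\vec{r}\,,\qquad\text{equivalently}\qquad \vec{r}^t\,[G,P^c]\,G\,\vec{r}=0.
\]
For $c\notin\{a,b\}$ this is immediate from identity (a), which gives $[P^c,G^{ab}]=0$. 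For $c=a$ (and symmetrically $c=b$), identity (a) yields $[G,P^a]=-G+2GP^a$, reducing the identity to $\text{tr}[\tilde{\mathds{1}}^{ab}R]=2\,\|P^aG\vec{r}\|^2$. This is where the pentagon equalities enter: expanding $\tilde{\mathds{1}}^{ab}=P^a+P^b-2P^aP^b$ and using that $P^aP^b$ projects onto the single overlap question $q_{ab}$ gives $\text{tr}[\tilde{\mathds{1}}^{ab}R]=\text{tr}[P^aR]+\text{tr}[P^bR]-2r_{q_{ab}}^2$, while the bijective pairing of the non-overlap questions of $\text{Pent}_a$ with those of $\text{Pent}_b$ effected by the swap gives $\|P^aG\vec{r}\|^2=\text{tr}[P^bR]-r_{q_{ab}}^2$; the two agree exactly when $\text{tr}[P^aR]=\text{tr}[P^bR]$, which holds under the hypothesis $\text{tr}[P^cR]=1$ for all $c$.

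\textbf{Main obstacle.} Every step except the crux identity $\vec{r}^t[G,P^c]G\vec{r}=0$ for $c\in\{a,b\}$ is routine bookkeeping with the (anti)symmetry of $M,J,G$ and the support relations $\tilde{\mathds{1}}^{ab}G=G$, $G^2=-\tilde{\mathds{1}}^{ab}$. I expect the cleanest way to secure this crux is to verify it directly at the level of Bloch components using the plane structure of the swap generators: $T$ rotates the eight swapped components pairwise in four disjoint planes, each plane pairing one component of $\text{Pent}_a$ with one of $\text{Pent}_b$, and each plane lying inside exactly one of the four remaining pentagons. This geometric picture makes both $\text{tr}[\tilde{\mathds{1}}^{ab}R]$ and $\|P^aG\vec{r}\|^2$ transparent, and it simultaneously shows that the conservation equations $\text{tr}[P^cG^{ab}R]=0$ are trivial for $c\notin\{a,b\}$, in agreement with there being only the $15$ independent equations (\ref{15cons}). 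The same coordinate computation can in fact replace the abstract manipulation entirely, by evaluating $\sum_{l\in\text{Pent}_c}(r'_l)^2$ plane by plane.
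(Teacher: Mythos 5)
Your proof is correct and takes essentially the same route as the paper's: expand the closed-form exponential (\ref{eq:pentexpT}), isolate the $2\sin(t)\cos(t)$ cross term, and show the $\sin^2(t)$ remainder vanishes using $[P^c,G^{ab}]=(\delta_{ca}+\delta_{cb})G^{ab}(1-2P^c)$, $(G^{ab})^2=-\tilde{\mathds{1}}^{ab}$ and $\tilde{\mathds{1}}^{ab}=P^a+P^b-2P^aP^b$, with the hypothesis entering only through $\text{tr}[(P^b-P^a)R]=0$. Your coordinate finish for $c\in\{a,b\}$, namely $\text{tr}[\tilde{\mathds{1}}^{ab}R]=2\,\|P^aG^{ab}\vec{r}\|^2$ via the bijective plane pairing of the non-overlap questions, is exactly the paper's final trace identity $(\delta_{ca}+\delta_{cb})\,\text{tr}[(1-2P^c)(P^a+(1-2P^a)P^b)R]=\text{tr}[(P^b-P^a)R]=0$ evaluated componentwise, so the two arguments coincide in substance.
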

\begin{proof}
By using the fact that the diagonal matrices $\tilde{\mathds{1}}^{ab}$ and $P^c$ commute, together with $(\tilde{\mathds{1}}^{ab})^2=\tilde{\mathds{1}}^{ab}$,$\tilde{\mathds{1}}^{ab}\cdot G^{ab}=G^{ab}$, the properties of the trace $\text{tr}[M]=\text{tr}[M^t]$,$\text{tr}[MN]=\text{tr}[NM]$ and further straightforward trigonometry we can show
\begin{gather*}
\text{tr}[P^cR']=\text{tr}[((\cos(t)-1)\tilde{\mathds{1}}^{ab}-\sin(t)G^{ab}+\mathds{1})P^cR((\cos(t)-1)\tilde{\mathds{1}}^{ab}+\sin(t)G^{ab}+\mathds{1})] \nonumber\\
=\text{tr}[P^cR]+2\sin(t)\cos(t)\text{tr}[P^cG^{ab}R]-\sin^2(t)(\text{tr}[P^c\tilde{\mathds{1}}^{ab}R]+\text{tr}[P^cG^{ab}RG^{ab}]),
\end{gather*}
where we denoted by $R'=\vec{r}\,'\cdot \vec{r}\,'{}^t=\exp(-t G^{ab})\vec{r}\cdot\vec{r}\,{}^t\exp(t G^{ab})=\exp(-t G^{ab})R\exp(t G^{ab})$. The last term $\sim\text{tr}[P^c\tilde{\mathds{1}}^{ab}R]+\text{tr}[P^cG^{ab}RG^{ab}]$ on the second line above vanishes. To see this we use (a), together with $\tilde{\mathds{1}}^{ab}=P^a+P^b-2P^aP^b$, $(G^{ab})^2=-\tilde{\mathds{1}}^{ab}$ to get
\begin{gather*}
\text{tr}[P^cG^{ab}RG^{ab}]+\text{tr}[P^c\tilde{\mathds{1}}^{ab}R]=\text{tr}[P^c(G^{ab})^2R]-\text{tr}[[P^c,G^{ab}]G^{ab}R]+\text{tr}[P^c\tilde{\mathds{1}}^{ab}R] \nonumber\\
=(\delta_{ca}+\delta_{cb})\text{tr}[(\mathds{1}-2P^c)\tilde{\mathds{1}}^{ab}R]=(\delta_{ca}+\delta_{cb})\text{tr}[(\mathds{1}-2P^c)(P^a+(\mathds{1}-2P^a)P^b)R].
\end{gather*}
If $c\neq a,b$ the above vanishes because of the $\delta$'s. Choosing $c=a$ without loss of generality, it vanishes again because of $P^c\cdot P^c=P^c$ and thus $(\mathds{1}-2P^c)^2=1$, which implies $\text{tr}[(\mathds{1}-2P^a)(P^a+(\mathds{1}-2P^a)P^b)R]=\text{tr}[((\mathds{1}-2P^a)P^a+P^b)R]=\text{tr}[(P^b-P^a)R]=\text{tr}[P^bR]-\text{tr}[P^aR]=1-1=0$.
\end{proof}

Using this result, we can now move on to show that, in fact, the 21 equations (\ref{pure state}) define an $\ct_2'$-invariant set, where $\ct_2'=\rm{PSU}(4)$ is the {\it full} group generated by exponentiating the 15 generators (\ref{swapgen1}, \ref{swapgen2}) and their linear combinations. That is, not only the pentagon, but also the pentagon conservation equations are preserved by $\ct_2'$.\footnote{We note that the pentagon equalities (\ref{pentin}) alone are generally not preserved under $\ct_2'$ without the generator conservation equations in (\ref{pure state}). For instance, the information distribution $\alpha_{xy}=\alpha_{z_2}=\alpha_{xz}=\alpha_{y_2}=\f{1}{2}$ \texttt{bit} and $\alpha_{x_1}=1$ \texttt{bit} (and all other $\alpha_i=0$) satisfies all pentagon equalities, however, violates the generator conservation equations. Under a finite evolution with $T^{\text{Pent}_4\text{Pent}_6}(t)$ in (\ref{eq:pentexpT}) (this is a rotation around the $x_2$-axis) this can be evolved to $\alpha_{xy}+\alpha_{z_2}>1$ \texttt{bit}, thereby violating the equality for $\text{Pent}_4$.}

\begin{lemma}
\leavevmode
If $\vec{r}$ satisfies (\ref{pure state}), then so does $\vec{r}\,'=T\cdot \vec{r}$ for any $T\in \mathcal{T}_2'$. 
\label{le:N2checkB}
\end{lemma}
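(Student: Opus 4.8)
The plan is to reduce the statement to invariance of the solution set $V$ of the $21$ equations (\ref{pure state}) under the $15$ individual one-parameter subgroups $\exp(t\,G^{ab})$ generated by the swap generators (\ref{swapgen1}, \ref{swapgen2}). Since $\mathcal{T}_2'=\PSU(4)$ is connected and the $G^{ab}$ span its Lie algebra $\mathfrak{psu}(4)$, the map $(t_1,\ldots,t_{15})\mapsto\prod_k\exp(t_k\,G^{a_kb_k})$ has surjective differential at the origin, so its image contains a neighbourhood of the identity; as a connected group is generated by any such neighbourhood, finite products of the $\exp(t\,G^{ab})$ exhaust $\mathcal{T}_2'$. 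Hence it suffices to show that $\vec r\in V$ implies $\exp(t\,G^{ab})\cdot\vec r\in V$ for every $t$ and every generator, the general case following by composing flows (which is why both the pentagon \emph{and} the generator equations must be shown invariant, not merely the former).

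Preservation of the six pentagon equalities is then immediate from lemma \ref{le:N2checkA}: for $\vec r\in V$ it gives $\text{tr}[P^cR']=\text{tr}[P^cR]+2\sin t\cos t\,\text{tr}[P^cG^{ab}R]$, and the correction vanishes because $\text{tr}[P^cG^{ab}R]=\tfrac12\text{tr}[[P^c,G^{ab}]R]$ is either trivially zero when $c\notin\{a,b\}$ (then $[P^c,G^{ab}]=0$ by identity (a)) or, when $c\in\{a,b\}$, equals $\pm\tfrac12\text{tr}[[P^a,G^{ab}]R]$, which is one of the generator equations collected in (\ref{pure state}) and hence vanishes. Thus $\text{tr}[P^cR']=1$ for all $c$.

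The substance of the proof is preservation of the $15$ generator equations, i.e.\ showing $\text{tr}[P^cG^{de}R']=0$ for all $c$ and all $G^{de}$, where $R'=T R T^t$ with $T=\exp(t\,G^{ab})$. First I would write $\text{tr}[P^cG^{de}R']=\text{tr}[T^tP^cG^{de}T\,R]$ and substitute the closed form $T=\mathds{1}+\sin t\,G^{ab}+(\cos t-1)\,\tilde{\mathds{1}}^{ab}$ from (\ref{eq:pentexpT}), together with $(G^{ab})^2=-\tilde{\mathds{1}}^{ab}$ and $\tilde{\mathds{1}}^{ab}=P^a+P^b-2P^aP^b$. Expanding $T^tP^cG^{de}T$ yields finitely many terms, each a product of projectors and at most two swap generators, which I would reduce to the canonical traces $\text{tr}[P^{c'}R]$ and $\text{tr}[P^{c'}G^{d'e'}R]$ by repeatedly applying the commutator identities (a) $[P^a,G^{bc}]=(\delta_{ab}+\delta_{ac})G^{bc}(1-2P^a)$, (b) $[P^a,G^{ab}]=-[P^b,G^{ab}]$, and (c) $[G^{ab},G^{cd}]=0$ for disjoint pentagon pairs. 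The cleanest organization is a case split according to how the index sets $\{a,b\}$, $\{d,e\}$ and $\{c\}$ overlap: when $G^{ab}$ and $G^{de}$ swap disjoint pairs of pentagons, identity (c) makes them commute and collapses the cross terms, so that $\text{tr}[P^cG^{de}R']$ reduces to $\text{tr}[P^cG^{de}R]=0$ up to terms that are themselves generator equations; the overlapping cases use the same reductions but require tracking more terms.

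The hard part will be the overlapping cases, where the expansion of $T^tP^cG^{de}T$ produces not only generator-type traces $\text{tr}[P^{c'}G^{d'e'}R]$ (which vanish on $V$) but also pentagon-type traces $\text{tr}[P^{c'}R]$ (which equal $1$). Preservation forces all such pentagon-type contributions to cancel among themselves, and this cancellation is the genuine content of the lemma — it is the direct analogue of the final step of lemma \ref{le:N2checkA}, where $\text{tr}[(P^b-P^a)R]=1-1=0$. Establishing it relies on the antisymmetry identity (b), which pairs up the offending $+1$ and $-1$ contributions, and on the pentagon equalities holding for \emph{every} pentagon simultaneously. Since the generators (\ref{swapgen1}, \ref{swapgen2}) and the projectors $P^c$ are completely explicit and the case analysis is finite, this cancellation can in any event be verified mechanically (e.g.\ with a computer algebra program), which is the natural fallback should the index bookkeeping become unwieldy.
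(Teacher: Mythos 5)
Your strategy is essentially the paper's own: reduce invariance under all of $\mathcal{T}_2'$ to invariance under the fifteen one-parameter flows $\exp(t\,G^{ab})$, dispose of the six pentagon equalities via lemma \ref{le:N2checkA} plus the generator equations, and then verify preservation of the fifteen generator equations by expanding $T^tP^cG^{de}T$ with the closed form (\ref{eq:pentexpT}) and a case split on how $\{a,b\}$, $\{d,e\}$, $\{c\}$ overlap. Your reduction step is in fact cleaner than the paper's: you obtain ``finite products of single-generator exponentials exhaust $\PSU(4)$'' from surjectivity of the differential at the identity plus connectedness, whereas the paper invokes a generalized-Euler-angle decomposition of $\SO(6)\simeq\PSO(6)$ and argues it transfers to the adjoint representation; both are valid, yours is the more standard Lie-theoretic route. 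Your pentagon argument is correct, and your observation that the fully disjoint case collapses outright is also correct (for $c\in\{d,e\}$, $\{a,b\}\cap\{d,e\}=\emptyset$, both $P^c$ and $G^{de}$ commute with $T$, so $\text{tr}[P^cG^{de}R']=\text{tr}[P^cG^{de}R]=0$) --- a shortcut the paper does not even take, running those cases through its general $M_1,\ldots,M_5$ machinery instead.

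Two caveats on the half you defer. First, identities (a)--(c) alone do not suffice in the overlapping cases ($\{a,b\}\cap\{d,e\}$ a single index): you additionally need that the swap generators close under commutation with the $\mathfrak{psu}(4)$ structure constants, $[G^{ab},G^{de}]=f^{(ab)(de)(cg)}G^{cg}$, which together with the generator equations yields the paper's workhorse identity $\text{tr}[G^{ab}G^{de}P^gR]=\text{tr}[G^{ab}G^{de}RP^g]$ (its ``(d)''); without it the reduction of two-generator products to canonical traces stalls. Second, your anticipated cancellation mechanism is not what actually happens there: no pentagon-type traces with values $\pm1$ pair off inside this lemma. Rather one shows $M_2=M_5=0$ and $M_1=-M_3=M_4$, and the expression vanishes because the trigonometric coefficient $(\cos t-1)^2+\sin^2 t+2(\cos t-1)\equiv 0$ multiplies the common factor $M_4$ --- the $1-1=0$ cancellation you describe is the final step of lemma \ref{le:N2checkA}, not of this one. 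Neither caveat is fatal: the matrices are explicit and your computer-algebra fallback would settle the matter, provided it is an ideal-membership check (verifying the trace polynomial vanishes modulo the 21 constraint polynomials of (\ref{pure state}), with $R=\vec r\,\vec r^{\,t}$ symbolic) rather than numerical sampling. But as written, the hard half of the lemma is a sound plan rather than a completed proof.
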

\begin{proof}
We start by showing that every time evolution $T\in\ct_2'=\PSU(4)$ can be written as a product of exponentials, i.e.\ $T=\prod_{ab}\exp(t_{ab} G^{ab})$ where always a single generator $G^{ab}$ (from a given basis) appears in every exponent. First note that {\it any} matrix $T\in GL(\mathbb{R},n^2)$ lying in $\SO(n)$ can be expressed as a product of rotation matrices $\exp(tG_F^{lm})$, each in some plane $(lm)$ \cite{Raffenetti:1969,Raffenetti:1972} 
by the use of generalized Euler angles, where $G_F^{lm}$ are the anti-symmetric generators of the {\it fundamental} representation of $\SO(n)$, i.e.\ $(G_F^{lm})_{ij}=\delta_{li}\delta_{mj}-\delta_{lj}\delta_{mi}$. This statement is true for the entire equivalence class of generators, where the equivalence relation amounts to similarity transformations of the fundamental generators. That is, all of the choices of Lie algebra bases in that equivalence class have the same structure constants. The statement that any group element can be written as products of exponentials of single generators (of a basis from this equivalence class) can also be understood abstractly at the manifold level of the Lie group and hence must be true for any representation (of the equivalence class of bases). 

The same therefore holds true for the fundamental generators of $\PSU(4)\simeq \PSO(6)$. Our $15\times 15$ swap generator matrices $G^{ab}$ are exactly in a one-to-one correspondence with the fundamental $6\times 6$ generator matrices $(G_F^{lm})_{ij}=\delta_{li}\delta_{mj}-\delta_{lj}\delta_{mi}$ of $\SO(6)$. This has also been explicitly checked for the matrices corresponding to (\ref{swapgen1}) and (\ref{swapgen2}). In other words, also in the adjoint representation all the $\PSU(4)\simeq \PSO(6)$ group elements generated by the generators (\ref{swapgen1}, \ref{swapgen2}) are expressible as products of {\it single} exponentials of our swap generators $G^{ab}$, where only {\it one} swap generator appears in each exponent as in (\ref{eq:pentexpT}). For this reason it suffices to consider $T=\exp(t G^{ab})$ in the following and then the case of a general $T=\prod_{ab}\exp(t_{ab} G^{ab})\in\mathcal{T}'_2\simeq \PSU(4)$ follows by induction.

Consider $\vec{r}\,'=\exp(t G^{ab})\cdot \vec{r}$ where $\vec{r}$ satisfies (\ref{pure state}), i.e.\ $\text{tr}[P^cR]=1$ (pentagon equalities) and $\text{tr}[P^cG^{de}R]=0$ (generator equalities) for all $1\leq c,d,e\leq 6$. From lemma \ref{le:N2checkA} it follows that $\text{tr}[P^cR']=\text{tr}[P^cR]+2\sin(t)\cos(t)\text{tr}[P^cG^{ab}R]=\text{tr}[P^cR]=1$ and thus $\vec{r}\,'$ also satisfies the pentagon equalities. It remains to show that $\vec{r}\,'$ satisfies the generator equalities $\text{tr}[P^cG^{de}R']=0$ as well. Note that if $d=a$ and $e=b$, then $\text{tr}[P^cG^{de}R']\sim\text{tr}[P^c\exp(-t'G^{ab})R'\exp(t'G^{ab})]-\text{tr}[P^cR']=\text{tr}[P^c\exp(-(t'+t)G^{ab})R\exp((t'+t)G^{ab})]-\text{tr}[P^cR]=2\sin(t'+t)\cos(t'+t)\text{tr}[P^cG^{ab}R]=0$ because of lemma \ref{le:N2checkA}. Therefore, we should only consider the case where $a\neq d,e$ and/or $b\neq d,e$. Using the explicit expression for $\exp(t G^{ab})$ in (\ref{eq:pentexpT}), one finds
\begin{gather*}
\text{tr}[P^cG^{de}R']=\text{tr}[((\text{c}(t)-1)\tilde{\mathds{1}}^{ab}-\text{s}(t)G^{ab}+\mathds{1})P^cG^{de}R((\text{c}(t)-1)\tilde{\mathds{1}}^{ab}+\text{s}(t)G^{ab}+\mathds{1})] \nonumber\\
=\frac{1}{2}((\text{c}(t)-1)^2M_1+2\text{s}(t)(\text{c}(t)-1)M_2-\text{s}^2(t)M_3+2(\text{c}(t)-1)M_4+2\text{s}(t)M_5), \nonumber\\
\text{c}(t):=\cos(t), \text{s}(t):=\sin(t), \ M_1=\text{tr}[[P^c,G^{de}](\tilde{\mathds{1}}^{ab}R\tilde{\mathds{1}}^{ab})], \ M_2=\text{tr}[[P^c,G^{de}](G^{ab}R\tilde{\mathds{1}}^{ab})], \nonumber\\
M_3=\text{tr}[[P^c,G^{de}](G^{ab}RG^{ab})], \ M_4=\text{tr}[[P^c,G^{de}](R\tilde{\mathds{1}}^{ab})], \ M_5=\text{tr}[[P^c,G^{de}](G^{ab}R)].
\end{gather*}

We will now show that $M_1=-M_3=M_4,M_2=M_5=0$, such that $\text{tr}[P^cG^{de}R']=\frac{1}{2}((\text{c}(t)-1)^2+\text{s}^2(t)+2(\text{c}(t)-1))M_4=0$. Because of (a), we will take without loss of generality $c=d$ throughout the derivation and use $\text{tr}[[G^{ab},G^{de}]P^gR]=f^{(ab)(de)(cg)}\text{tr}[G^{cg}P^gR]=\text{tr}[P^gG^{cg}R]=0$, and thus (d): $\text{tr}[G^{ab}G^{de}P^gR]=\text{tr}[G^{ab}G^{de}RP^g]$. From (d) it follows that $M_5=\text{tr}[[P^d,G^{de}](G^{ab}R)]=-\text{tr}[G^{de}[P^d,G^{ab}]R]$. Furthermore, using (b) and then (d) implies as well $M_5=-\text{tr}[[P^e,G^{de}](G^{ab}R)]=\text{tr}[G^{de}[P^e,G^{ab}]R]$. Note that $[P^d,G^{ab}]=0$ or $[P^e,G^{ab}]=0$ because of (a) and also $a\neq d,e$ and/or $b\neq d,e$ and therefore $M_5=0$. For showing the three remaining equalities $M_1=-M_3=M_4,M_2=0$ we consider two separate cases, $c_1$: $a=d$ and $b\neq d,e$, $c_2$: $a\neq d,e$ and $b\neq d,e$. The symmetric case $a=e$ and $b\neq d,e$ is also captured because of (b). 

Let us start with the case $c_1$, for which $[G^{de},P^b]=0$ because of (a). Then for $M_2$
\begin{gather*}
M_2=\text{tr}[[P^d,G^{de}](G^{db}R\tilde{\mathds{1}}^{db})]=\text{tr}[[P^d,G^{de}]P^bG^{db}R]+\text{tr}[P^d[P^d,G^{de}]((\mathds{1}-2P^b)G^{db}R)]\nonumber\\
=\frac{1}{2}(\text{tr}[[P^d,G^{de}][P^b,G^{db}]R]-\text{tr}[[G^{db},[P^d,G^{de}]]P^bR])+\text{tr}[P^d[P^d,G^{de}][G^{db},P^b]R]\nonumber\\
=\frac{1}{2}(-\text{tr}[(\mathds{1}-2P^d)[P^d,G^{de}][P^d,G^{db}]R]-\frac{1}{2}\text{tr}[[\{G^{db},P^b\},[P^d,G^{de}]]R]) \nonumber\\
=\frac{1}{2}(\text{tr}[G^{de}[P^d,G^{db}]R]-\frac{1}{2}\text{tr}[[G^{db},[P^d,G^{de}]]R])=0-0=0.
\end{gather*}
Similarly, for $M_3$ we can show
\begin{gather*}
M_3=\text{tr}[[P^d,G^{de}](G^{db}RG^{db})]=-\text{tr}[[P^e,G^{de}](G^{db}RG^{db})] \nonumber\\
=-(\text{tr}[[P^e,[G^{de},G^{db}]]RG^{db}]+\text{tr}[[P^e,G^{de}]R(G^{db})^2]) \nonumber\\
=-(f^{(de)(db)(rs)}\text{tr}[[P^e,G^{rs}]RG^{db}]-\text{tr}[[P^e,G^{de}](R\tilde{\mathds{1}}^{db})]) \nonumber\\
=\text{tr}[[P^e,G^{de}](R\tilde{\mathds{1}}^{db})]=-\text{tr}[[P^d,G^{de}](R\tilde{\mathds{1}}^{db})]=-M_4.
\end{gather*}
Finally for $M_1$ it also follows
\begin{gather*}
M_1=\text{tr}[[P^d,G^{de}](\tilde{\mathds{1}}^{db}R\tilde{\mathds{1}}^{db})]=\text{tr}[[P^d,G^{de}]P^dR\tilde{\mathds{1}}^{db}(\mathds{1}-2P^b)]+\text{tr}[[P^d,G^{de}]R\tilde{\mathds{1}}^{db}P^b] \nonumber\\
=\text{tr}[[[P^d,G^{de}],P^d]R\tilde{\mathds{1}}^{db}(\mathds{1}-2P^b)]+\text{tr}[[P^d,G^{de}]R\tilde{\mathds{1}}^{db}]=\frac{1}{2}\text{tr}[[P^d,[[P^d,G^{de}],P^d]R]+M_4 \nonumber\\
=-\frac{1}{2}\text{tr}[[P^d,G^{de}]R]+M_4=M_4.
\end{gather*}

Lastly, we consider the simplest case $c_2$, for which $[G^{de},P^a]=[G^{de},P^b]=0$ and $[G^{de},G^{ab}]=0$ because of (a) and (c). In particular, also $[G^{de},\tilde{\mathds{1}}^{ab}]=0$. Working out $M_2$ results again in
\begin{gather*}
M_2=\text{tr}[[P^d,G^{de}](G^{ab}R\tilde{\mathds{1}}^{ab})]=\text{tr}[[P^d,G^{de}]\tilde{\mathds{1}}^{ab}G^{ab}R]=\text{tr}[[P^d,G^{de}]G^{ab}R]=0.
\end{gather*}
For $M_3$ the same derivation as in the case of $c_1$ can be used to show $M_3=-M_4$. Finally, for $M_1=\text{tr}[[P^d,G^{de}](\tilde{\mathds{1}}^{ab}R\tilde{\mathds{1}}^{ab})]=\text{tr}[[P^d,G^{de}]R(\tilde{\mathds{1}}^{ab})^2]=M_4$.
\end{proof}



\subsubsection{$\rm{PSU}(2^N)$ is a maximal subgroup of $\SO(4^{N}-1)$}\label{app_max}

In the main text we argue that $\PSU(2^N)$ is a subgroup of the time evolution group $\mathcal{T}_N$, which itself is (isomorphic to) a subgroup of $\SO(4^{N}-1)$. In order to conclude that $\mathcal{T}_N\simeq \PSU(2^N)$, we prove here that $\PSU(2^N)$ is (isomorphic to) a maximal subgroup of the larger group $\SO(4^{N}-1)$:
\begin{lemma}
$\PSU(2^N)$ acts in the adjoint representation on the state space of $N$ qubits and is a maximal subgroup of $\SO(4^{N}-1)$ for all $N\geq 2$.
\label{le:PSUNmaxsub}
\end{lemma}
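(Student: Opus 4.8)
The plan is to prove maximality first at the level of Lie algebras and then lift it to the groups. Write $\mathfrak g = \su(2^N)$ and let $V = \mathfrak g \cong \mathbb{R}^{4^N-1}$ be the adjoint module. Since $\mathfrak g$ is simple, its adjoint representation is irreducible and preserves the (definite) Killing form, so $\mathrm{ad}$ embeds $\mathfrak g$ as an irreducibly acting subalgebra of $\so(4^N-1)$, which I identify with $\Lambda^2 V$ via the Killing inner product. The statement then reduces to two claims: (i) $\mathrm{ad}(\mathfrak g)$ is a maximal \emph{subalgebra} of $\so(4^N-1)$, and (ii) no disconnected enlargement of $\PSU(2^N)$ fits inside $\SO(4^N-1)$.

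For (i): any intermediate subalgebra $\mathfrak h$ with $\mathrm{ad}(\mathfrak g)\subseteq\mathfrak h\subseteq\so(4^N-1)$ is in particular an $\mathrm{ad}(\mathfrak g)$-submodule of $\Lambda^2 V$, hence a direct sum of $\mathfrak g$-irreducible components, one of which is the copy of the adjoint $V$ carried by the totally antisymmetric structure constants $f_{ijk}$. I would decompose $\Lambda^2(\text{adjoint})$ of $\su(2^N)$ into irreducibles and show that adjoining any other component and closing under the bracket of $\so(4^N-1)$ generates the whole algebra; equivalently, that $\mathrm{ad}(\mathfrak g)$ lies in no proper bracket-closed submodule strictly larger than itself. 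This is the heart of the argument and the main obstacle: controlling the $\su(2^N)$-module structure of $\Lambda^2$ of the adjoint together with the induced bracket relations. A clean shortcut is to invoke Dynkin's classification of maximal subalgebras of the orthogonal algebras, by which an irreducibly embedded simple subalgebra is maximal unless it appears in an explicit finite list of exceptional inclusions; one then checks that the $(4^N-1)$-dimensional adjoint representation of $\su(2^N)$ is not among them.

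For (ii), the lifting: let $H$ be a closed subgroup with $\PSU(2^N)\subseteq H\subseteq\SO(4^N-1)$ and let $H^0$ be its identity component, with Lie algebra $\mathfrak h$. By (i), $\mathfrak h$ equals $\mathrm{ad}(\mathfrak g)$ or $\so(4^N-1)$; in the second case $H\supseteq H^0=\SO(4^N-1)$ by connectedness and we are done, so assume $H^0=\PSU(2^N)$. Then $H$ normalizes $H^0$, the conjugation map $H\to\mathrm{Aut}(\PSU(2^N))$ has kernel equal to the centralizer, and because the action on $V$ is irreducible and $4^N-1$ is odd (so $-\mathds{1}\notin\SO$) Schur's lemma makes this centralizer trivial. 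Hence $H/\PSU(2^N)$ injects into $\mathrm{Out}(\PSU(2^N))\cong\mathbb{Z}/2$, whose nontrivial element is the diagram (complex-conjugation) automorphism $\tau\colon X\mapsto -X^{T}$ of $\su(2^N)$.

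The decisive final computation is then $\det\tau$ on $V$. In the Pauli-string basis $\tau$ is diagonal with eigenvalue $(-1)^{\#y+1}$ on $\sigma_{\mu_1}\otimes\cdots\otimes\sigma_{\mu_N}$, where $\#y$ counts the factors equal to $\sigma_y$; summing the exponent over all $4^N-1$ nontrivial strings gives $\det\tau=(-1)^{4^N-1}(-1)^{N\,4^{N-1}}$, which equals $-1$ for every $N\geq 2$ (the first factor is $-1$ because $4^N-1$ is odd, the second is $+1$ because $N\,4^{N-1}$ is even). Thus every element of the coset $\tau\,\PSU(2^N)$ has determinant $-1$ and lies in $\mathrm{O}(4^N-1)\setminus\SO(4^N-1)$, forcing $H=\PSU(2^N)$. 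Together with (i) this yields maximality of $\PSU(2^N)$ in $\SO(4^N-1)$ for all $N\geq 2$. I expect essentially all the difficulty to reside in step (i); the group-level lifting in (ii), including the orientation computation, is then routine.
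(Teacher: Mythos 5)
Your determinant computation $\det\tau=-1$ is correct, but the inference you draw from it in step (ii) fails, and in fact the conclusion of (ii) is false as stated. The kernel of the conjugation map from the normalizer of $\mathrm{Ad}(\PSU(2^N))$ in the full orthogonal group $\mathrm{O}(4^N-1)$ to $\mathrm{Aut}(\PSU(2^N))$ is $\{\pm\mathds{1}\}$, not $\{\mathds{1}\}$; triviality of the centralizer holds only after intersecting with $\SO(4^N-1)$. Consequently an element of $H$ realizing the outer class need not lie in the coset $\tau\cdot\mathrm{Ad}(\PSU(2^N))$: it can lie in $(-\tau)\cdot\mathrm{Ad}(\PSU(2^N))$, and since $\dim V=4^N-1$ is odd, $\det(-\tau)=(-1)^{4^N-1}\det\tau=+1$. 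Concretely, $s:=-\tau:X\mapsto X^{T}$ is orthogonal with $\det s=+1$, satisfies $s\,\mathrm{Ad}(u)\,s^{-1}=\mathrm{Ad}(\bar u)$ and hence normalizes $\mathrm{Ad}(\PSU(2^N))$, yet does not belong to it because $u\mapsto\bar u$ is an outer automorphism of $\SU(2^N)$ for $2^N\geq 3$. Thus $\langle\mathrm{Ad}(\PSU(2^N)),s\rangle\simeq\PSU(2^N)\rtimes\Z_2$ is a closed subgroup of $\SO(4^N-1)$ strictly between $\PSU(2^N)$ and $\SO(4^N-1)$: the group $\PSU(2^N)$ is maximal only among \emph{connected} subgroups. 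That weaker statement is exactly what your step (i) delivers — like the paper, you ultimately appeal to Dynkin, whose classification concerns maximal sub\emph{algebras}, i.e.\ maximal connected subgroups — and it is all the paper actually needs, since $\ct_N$ is generated by continuous one-parameter evolutions and is therefore connected. (A smaller point in the same step: real Schur's lemma alone gives a commutant isomorphic to $\mathbb{R}$, $\C$ or the quaternions; to reduce the centralizer to scalars you need that the adjoint representation is of real type, which holds here because its complexification is the irreducible complex adjoint.)

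Separately, you have addressed only half of the lemma. Its first clause — that $\PSU(2^N)$ acts on the $N$-qubit state space \emph{in the adjoint representation} — is where the paper's proof spends nearly all of its effort, and your proposal assumes it outright by declaring $V$ to be the adjoint module. The paper argues it as follows: the state space has odd dimension $4^N-1$, while every nontrivial irreducible representation of $\PSU(2^N)$ of dimension below the adjoint has even dimension; a reducible action would therefore contain a trivial summand, i.e.\ an invariant line, which is excluded because the local-rotation subgroup $\PSU(2)\times\cdots\times\PSU(2)$ fixes no line; hence the action is irreducible and, by the ranking of irreducible dimensions, it must be the adjoint. Your alternative plan for (i) — decomposing $\Lambda^2 V$ into $\su(2^N)$-irreducibles and checking bracket closure — is a legitimate route but, as you anticipate, harder than citing Dynkin, which both you and the paper do in the end. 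The constructive repair of your write-up is: supply the adjoint-representation identification, restate the maximality claim as maximality among connected subgroups (noting that connectedness of $\ct_N$ makes this sufficient for the reconstruction), and delete the coset-determinant argument, which the element $-\tau$ shows cannot be fixed.
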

\begin{proof}
The irreducible representations of $\PSU(N_F)$ are categorized by $N_F-1$ numbers. Each representation corresponds to a Young Tableau to which its dimension is intimately related \cite{Georgi:1982jb}. A survey of the irreducible representations of $\PSU(N_F)$ shows that whenever $N_F\geq 9$, the dimensions of the irreducible representations can be ranked from lowest to highest as: $1$, $N_F$, $\frac{1}{2}N_F(N_F-1)$, $\frac{1}{2}N_F(N_F+1)$, $\dim(\text{Ad})=N_F^2-1,\cdots$ where the dots refer to higher dimensional representations with dimensions larger than $N_F^2-1$. The $N$ qubit state space transforms in some representation $R$ of $\PSU(N_F=2^N)$. Let us first consider the case of $N>3$, for which $N_F=2^N>9$. If $R$ was reducible, it would have at least one copy of the trivial representation in its direct sum, since $\dim(R)=\dim(\text{Ad})$ is uneven and all lower dimensional representations are even except the trivial one, which would imply that $R$ leaves a 1-dimensional subspace invariant. However, this is not possible because the subgroup $\PSU(2)\times\cdots\times \PSU(2)\subseteq \PSU(2^N)$ which corresponds to the rotations of the individual qubits lies inside the subgroup $\PSU(2^N)$ and these transformations would certainly not leave any $1$-dimensional subspace invariant. Therefore the representation $R$ must be irreducible and it {\it must} be the adjoint representation since all other representations are of larger dimension. For $N\leq 3$ one observes from explicit tables of group dimensions \cite{Feger:2012aa} that the same reasoning applies and $R$ again equals the adjoint representation. The maximality of $\PSU(2^N)$ in $\SO(4^{N}-1)$ now directly follows from Dynkin's theorem \cite{Dynkin:1957aa} and the fact that $\PSU(2^N)$ is simple and its adjoint representation is faithful and irreducible (acting on the fundamental representation space of $\SO(4^{N}-1)$).
\end{proof}

\subsubsection{Evolving to product states}\label{app_product}

We shall now demonstrate the following claim of section \ref{sec_statesn2}.

\begin{lemma}\label{lem_product}
\leavevmode
 {\it Any} $\vec{r}$ satisfying (\ref{pure state}) can be brought to the configuration $\alpha_{z_1}=\alpha_{z_2}=\alpha_{zz}=1$ \texttt{bit} and all other $\alpha_i=0$ by performing successive $\ct_2'$-transformations of the form (\ref{eq:pentexpT}). 
\end{lemma}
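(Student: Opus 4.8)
The plan is to produce, for an arbitrary solution $\vec{r}$ of (\ref{pure state}), a sequence of $\ct_2'$-transformations that first makes qubit $1$ fully known, $\alpha_{z_1}=1$ \texttt{bit}, and then collapses everything onto $Q_{z_1},Q_{z_2},Q_{zz}$ by one local rotation. The disentangling step is the crux, and I would obtain it by a maximization argument rather than an explicit path. Consider $I_1(\vec{r}):=\alpha_{x_1}+\alpha_{y_1}+\alpha_{z_1}$, the individual information about qubit $1$, and maximize it over the orbit $\ct_2'\cdot\vec{r}$. This orbit is compact (as $\ct_2'\simeq\rm{PSU}(4)$ is compact and acts continuously) and, by Lemma \ref{le:N2checkB}, consists entirely of solutions of (\ref{pure state}); let $\vec{r}^*$ be a maximizer. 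Since the adjoint representation of $\rm{PSU}(4)$ on $\mathbb{R}^{15}$ is irreducible (cf.\ the proof of Lemma \ref{le:PSUNmaxsub}), the span of the orbit of any $\vec{r}\neq\vec{0}$ is all of $\mathbb{R}^{15}$, so the orbit cannot lie inside the proper subspace $\{r_{x_1}=r_{y_1}=r_{z_1}=0\}$; hence $M:=I_1(\vec{r}^*)>0$.

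Next I would exploit the local rotations $\SO(3)\times\SO(3)\subset\ct_2'$ generated by (\ref{swapgen2}): a rotation of qubit $1$ acts on $(r_{x_1},r_{y_1},r_{z_1})$ as an ordinary rotation and leaves $I_1$ invariant, so I may rotate $\vec{r}^*$ to achieve $r_{x_1}=r_{y_1}=0$ and $r_{z_1}=\sqrt{M}\neq0$ while it remains a maximizer. Because $I_1$ is smooth and $\vec{r}^*$ maximizes it on the smooth compact orbit, the derivative along every one-parameter subgroup vanishes, $\frac{d}{dt}I_1(e^{tG}\vec{r}^*)\big|_{t=0}=2\,r_{z_1}\,(G\vec{r}^*)_{z_1}=0$ for all fifteen swap generators $G$. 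As $r_{z_1}\neq0$, this gives $(G\vec{r}^*)_{z_1}=0$ for every $G$. Reading off the $z_1$-rows of (\ref{swapgen1}, \ref{swapgen2}) --- for instance $(G^{\text{Pent}_1,\text{Pent}_2}\vec{r}^*)_{z_1}=-r_{yx}$, $(G^{\text{Pent}_1,\text{Pent}_4}\vec{r}^*)_{z_1}=r_{yy}$, $(G^{\text{Pent}_1,\text{Pent}_6}\vec{r}^*)_{z_1}=-r_{yz}$, $(G^{\text{Pent}_3,\text{Pent}_4}\vec{r}^*)_{z_1}=r_{xy}$, $(G^{\text{Pent}_3,\text{Pent}_6}\vec{r}^*)_{z_1}=-r_{xz}$ and $(G^{\text{Pent}_2,\text{Pent}_3}\vec{r}^*)_{z_1}=r_{xx}$ --- forces $r_{xx}=r_{xy}=r_{xz}=r_{yx}=r_{yy}=r_{yz}=0$. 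The pentagon equality (\ref{pentin}) for $\text{Pent}_1=\{Q_{xx},Q_{xy},Q_{xz},Q_{y_1},Q_{z_1}\}$ then collapses to $\alpha_{z_1}=1$ \texttt{bit}.

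Finally, with $\alpha_{z_1}=1$ \texttt{bit} (so $r_{z_1}=\pm1$, and every question complementary to $Q_{z_1}$ carrying $0$), I would use the information-distribution relations of section \ref{sec_infodist} and figure \ref{fig:1bitquestion}: since $Q_{zj}=Q_{z_1}\leftrightarrow Q_{j_2}$ and $Q_{z_1}$ is answered with certainty, one has $r_{zj}=r_{z_1}\,r_{j_2}$ for $j=x,y,z$, i.e.\ the correlation vector $(r_{zx},r_{zy},r_{zz})$ equals $\pm(r_{x_2},r_{y_2},r_{z_2})$. A single local rotation of qubit $2$ in $\SO(3)\subset\ct_2'$ --- which leaves $r_{z_1}$ invariant and acts by the same element on $(r_{x_2},r_{y_2},r_{z_2})$ and on $(r_{zx},r_{zy},r_{zz})$ --- can therefore align both parallel vectors with the $z$-axis, yielding $r_{x_2}=r_{y_2}=r_{zx}=r_{zy}=0$. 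Only $r_{z_1},r_{z_2},r_{zz}$ survive, with $\alpha_{z_1}=1$ \texttt{bit} and $\alpha_{z_2}=\alpha_{zz}$; since the total information of any solution equals $3$ \texttt{bits}, this forces $\alpha_{z_2}=\alpha_{zz}=1$ \texttt{bit}, completing the reduction.

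The main obstacle is exactly the disentangling step: for maximally entangled seeds one has $I_1=0$ at $\vec{r}$ itself, so no local manipulation suffices and one must be sure the orbit nonetheless reaches $\alpha_{z_1}=1$ \texttt{bit}. The maximization-plus-stationarity argument settles this uniformly, with irreducibility of the adjoint representation excluding the degenerate case $I_1\equiv0$ on the orbit. What remains is the routine (but sign-sensitive) bookkeeping of the $z_1$-rows of the fifteen generators in (\ref{swapgen1}, \ref{swapgen2}), verifying that the six nontrivial conditions single out precisely $r_{xx},r_{xy},r_{xz},r_{yx},r_{yy},r_{yz}$ and no individual components, so that $\text{Pent}_1$ indeed isolates $\alpha_{z_1}$.
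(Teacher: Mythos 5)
Your proof is correct, but it reaches the key configuration $\alpha_{z_1}=1$ \texttt{bit} by a genuinely different mechanism than the paper. The paper's proof is algorithmic: it observes that a full information swap between two pentagons leaves the three non-overlap questions of any third pentagon untouched, and then concentrates the $1$ \texttt{bit} of $\text{Pent}_5$ into a single question by at most four explicit single-generator rotations of the form (\ref{eq:pentexpT}), before finishing -- as you do -- with local qubit-2 rotations. You instead maximize $I_1=\alpha_{x_1}+\alpha_{y_1}+\alpha_{z_1}$ over the compact orbit $\ct_2'\cdot\vec{r}$ (legitimate, since by Lemma \ref{le:N2checkB} the orbit consists of solutions of (\ref{pure state})), use irreducibility of the adjoint representation to guarantee $M>0$, align the individual Bloch vector of qubit 1 by a local rotation, and extract from first-order stationarity the six conditions $r_{xx}=r_{xy}=r_{xz}=r_{yx}=r_{yy}=r_{yz}=0$; your sign bookkeeping of the $z_1$-rows against (\ref{swapgen1}) checks out, and the $\text{Pent}_1$ equality then indeed forces $\alpha_{z_1}=1$ \texttt{bit}. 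The variational route is less constructive (the maximizer is reached by some unspecified $T\in\ct_2'$, though such a $T$ decomposes into single-generator exponentials by the Euler-angle argument inside the proof of Lemma \ref{le:N2checkB}, so the lemma's phrasing is still met), but it handles maximally entangled seeds uniformly and would generalize more readily; the paper's algorithm, by contrast, buys an explicit procedure with at most six elementary rotations.

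One blemish in your final step deserves repair: you justify $r_{zj}=r_{z_1}\,r_{j_2}$ by reading $Q_{zj}=Q_{z_1}\leftrightarrow Q_{j_2}$ as an XNOR, but the lemma concerns \emph{arbitrary} solutions of (\ref{pure state}), for which the correlation structure (up to the XNOR/XOR ambiguity) is only established in appendix \ref{app_corr} -- and that argument itself invokes Lemma \ref{lem_product}, so leaning on it risks circularity. The pentagon identities alone fix only the magnitudes $\alpha_{zj}=\alpha_{j_2}$, not a common relative sign, so parallelism of $(r_{zx},r_{zy},r_{zz})$ and $(r_{x_2},r_{y_2},r_{z_2})$ is not automatic. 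Two clean fixes: either derive the parallelism from the conservation equations (\ref{15cons}) -- e.g., $(P^4\cdot\vec{r})\cdot G^{46}\cdot\vec{r}=0$ reduces to $r_{z_2}r_{y_2}=r_{zy}r_{zz}$ once the components complementary to $Q_{z_1}$ vanish, which pins down the relative signs -- or sidestep parallelism entirely as the paper does: rotate qubit 2 so that $r_{x_2}=r_{y_2}=0$; since this rotation preserves both (\ref{pure state}) and $\alpha_{z_1}=1$ \texttt{bit}, the identities $\alpha_{zx}=\alpha_{x_2}$, $\alpha_{zy}=\alpha_{y_2}$, $\alpha_{zz}=\alpha_{z_2}$ still hold afterwards, killing $r_{zx},r_{zy}$ and forcing $\alpha_{z_2}=\alpha_{zz}=1$ \texttt{bit} from the $3$ \texttt{bit} total, with no appeal to any correlation convention.
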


\begin{proof}
First note that when two questions are swapped {\it within} one pentagon, the other three questions remain unchanged (cf.\ figure \ref{fig:pentagon}). This is the case because the remaining three questions do {\it not} appear in any of the two pentagons whose information contents are being swapped. For example, this can be explicitly seen in figure \ref{fig:pentagon}, where as $\alpha_{y_1}$ and $\alpha_{zx}$ are swapped within $\text{Pent}_5$ via $G^{\text{Pent}_1,\text{Pent}_2}$, the information content of $\alpha_{x_1},\alpha_{zz},\alpha_{zy}$ in $\text{Pent}_5$ are left invariant. Because of this property we can by repeating (at most 4 different) swap transformations (\ref{eq:pentexpT}) put all $1$ \texttt{bit} of information contained in $\text{Pent}_5$ in, e.g., question $Q_{y_1}$: 1) first rotate all information from $Q_{zx}$ to $Q_{y_1}$ with $T^{\text{Pent}_1,\text{Pent}_2}(t_1)$ for some $t_1$ such that $\alpha_{zx}=0$,\footnote{Recall from above that $T^{\text{Pent}_1,\text{Pent}_2}(t_1)$ acts as a rotation by $\pm t_1$ in the plane $(r_{y_1},r_{zx})$.} 2) then use $T^{\text{Pent}_3,\text{Pent}_5}(t_2)$ for some $t_2$ to rotate the information contained in $Q_{zy}$ into $Q_{y_1}$ which leaves $\alpha_{zx}=\alpha_{zy}=0$, 3) use $T^{\text{Pent}_1,\text{Pent}_6}(t_3)$ for some $t_3$ to map the information content of $Q_{zz}$ into $Q_{y_1}$ which leaves $\alpha_{zx}=\alpha_{zy}=\alpha_{zz}=0$, 4) finally use $T^{\text{Pent}_1,\text{Pent}_3}(t_4)$ for some $t_4$ to rotate the information from $Q_{x_1}$ into $Q_{y_1}$ which leaves $\alpha_{zx}=\alpha_{zy}=\alpha_{zz}=\alpha_{x_1}=0$. Since the time evolution group maps pure states to pure states and $I(\text{Pent}_5)=1$ \texttt{bit}, we conclude $\alpha_{y_1}=1$ \texttt{bit} after the four steps. The information content of questions in other pentagons is also transformed during these four successive transformations. However, since every employed transformation leaves the other three questions in $\text{Pent}_5$ invariant, this is not relevant for the argument. Nevertheless, all eight questions complementary to $Q_{y_1}$ will necessarily have $\alpha_i=0$ too, while the remaining $2$ \texttt{bits} will be distributed over the six questions compatible with $Q_{y_1}$.

The above information redistribution algorithm, by using appropriate combinations of transformations, can similarly be performed on {\it any} state satisfying (\ref{pure state}) to get $\alpha_{z_1}=1$ \texttt{bit}. In that case, the remaining $2$ \texttt{bits} will be contained in the boundary of the three compatible triangles with central vertex $\alpha_{z_1}=1$ \texttt{bit} (cf.\ (\ref{bdry}) and figure \ref{fig:1bitquestion}) and $\alpha_{x_2}=\alpha_{zx}$, $\alpha_{zy}=\alpha_{y_2}$, $\alpha_{zz}=\alpha_{z_2}$. Using the three latter equalities and the fact that the 6 boundary questions contain $2$ \texttt{bits} of information, it follows that $\alpha_{x_2}+\alpha_{y_2}+\alpha_{z_2}=1$ \texttt{bit}. We can evolve this $1$ \texttt{bit} of information into $\alpha_{z_2}=1$ \texttt{bit} by using local rotations of qubit 2: 1) first rotate around the $r_{x_2}$-axis with $T^{\text{Pent}_4,\text{Pent}_6}$ to get $\alpha_{y_2}=0$, 2) then rotate around the $r_{y_2}$-axis with $T^{\text{Pent}_2,\text{Pent}_6}$ while leaving $\alpha_{y_2}=0$ and putting $\alpha_{x_2}=0$ and thus $\alpha_{z_2}=1$ \texttt{bit}. Finally, we therefore reach the required product configuration $\alpha_{z_1}=\alpha_{z_2}=\alpha_{zz}=1$ \texttt{bit}, starting from any pure state. Note that this required at most six {\it different} successive transformations of the form (\ref{eq:pentexpT}).
\end{proof}

\subsubsection{Preservation of the complementarity inequalities}\label{app_comp}

Next, we show that $\ct'_2=\rm{PSU}(4)$ preserves all complementarity inequalities (\ref{compstrong}) -- provided (\ref{pure state}) is fulfilled. Since $\ct'_2$ preserves all pentagon equalities by construction, it suffices to check it for the triangle complementarity sets in appendix \ref{triangle-app} since all sets of mutually complementary questions are either contained in the pentagon or triangle sets on account of their maximality.

\begin{lemma}
\leavevmode
Any $\vec{r}$ solving (\ref{pure state}) also satisfies all triangle complementarity inequalities following from (\ref{compstrong}) for the triangle complementarity sets (\ref{eq:triangles}).
\end{lemma}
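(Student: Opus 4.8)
The plan is to realize every solution of (\ref{pure state}) as a genuine two-qubit density matrix and then invoke the elementary bound on expectation values of mutually anticommuting observables; the triangle inequalities then become automatic, with no need to grind through the quadratic constraints directly.

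First I would record that the set cut out by (\ref{pure state}) is a single $\ct_2'$-orbit of a manifestly physical seed. By Lemma \ref{lem_product} every $\vec{r}$ obeying (\ref{pure state}) can be mapped by some $T\in\ct_2'$ to the product configuration $\vec{r}_0=\vec{\delta}_{z_1}+\vec{\delta}_{z_2}+\vec{\delta}_{zz}$, and by Lemma \ref{le:N2checkB} this orbit stays inside the solution set. Using the identification (b) of section \ref{sec_psu4} (see appendix \ref{app_swap}), the action of $\ct_2'$ on the Bloch vector, expressed in the question basis $\{Q_{x_1},\ldots,Q_{zz}\}$, coincides with the adjoint action $\rho\mapsto U\,\rho\,U^\dagger$, $U\in\SU(4)$, on the Pauli basis $\{\sigma_{x_1},\ldots,\sigma_{zz}\}$. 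Since $\vec{r}_0$ is the Bloch representation of the pure state $\rho_0=\tfrac14(\mathds{1}+\sigma_{z_1}+\sigma_{z_2}+\sigma_{zz})=|00\rangle\langle00|$, it follows that every $\vec{r}$ solving (\ref{pure state}) is the Bloch vector of a legitimate (positive, unit-trace) density matrix $\rho=U\,\rho_0\,U^\dagger$, with $r_i=\tr(\rho\,\sigma_i)=\langle\sigma_i\rangle$ and hence $\alpha_i=r_i^2=\langle\sigma_i\rangle^2$.

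Next I would exploit the correlation structure of figure \ref{fig_corr}: the three questions in any triangle set (\ref{eq:triangles}) are mutually complementary, which under the identification $Q_{\mu_1\mu_2}\leftrightarrow\sigma_{\mu_1}\otimes\sigma_{\mu_2}$ means the associated Pauli operators $A,B,C$ pairwise anticommute and satisfy $A^2=B^2=C^2=\mathds{1}$. For such a triad set $M:=\langle A\rangle\,A+\langle B\rangle\,B+\langle C\rangle\,C$; the anticommutators $\{A,B\}=\{A,C\}=\{B,C\}=0$ annihilate all cross terms, so $M^2=(\langle A\rangle^2+\langle B\rangle^2+\langle C\rangle^2)\,\mathds{1}$ and the operator norm is $\|M\|=\sqrt{I(\text{Tri})}$, where $I(\text{Tri})=\alpha_A+\alpha_B+\alpha_C$. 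Because $\rho$ is a state, $I(\text{Tri})=\langle M\rangle\leq\|M\|=\sqrt{I(\text{Tri})}$, which forces $I(\text{Tri})\leq1\,\texttt{bit}$; the lower bound $I(\text{Tri})\geq0$ is immediate from $\alpha_i=r_i^2\geq0$. This establishes all triangle complementarity inequalities (\ref{compstrong}) for the sets (\ref{eq:triangles}).

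The main obstacle is to secure the two inputs of the first step without circularity: that the orbit identification (b) and Lemma \ref{lem_product} are indeed available independently of the present consistency check (both are established either by direct computation or by the swap-redistribution argument, not by the complementarity inequalities), and that complementarity of questions translates into anticommutation of the associated Pauli operators, which is exactly the content of figure \ref{fig_corr} once questions are matched to Paulis. A fully self-contained alternative would avoid the density-matrix language entirely, but it is harder: summing the three pentagon equalities attached to the three pentagons whose pairwise overlaps are $A,B,C$ gives only $I(\text{Tri})\leq\tfrac32$, and the remaining gap down to $1\,\texttt{bit}$ must then be closed using the quadratic generator conservation equations (\ref{15cons}); organizing those constraints into the single sharp inequality is the delicate part, and the anticommutator route is precisely what renders it transparent.
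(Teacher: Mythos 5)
Your route is genuinely different from the paper's own proof of this lemma (appendix \ref{app_comp}), where no density matrices appear at all: there, for each triangle set one uses that every pair of its questions shares a pentagon, so there is a swap generator rotating information between that pair while leaving the third member fixed; two such rotations concentrate the entire triangle information into a single question without leakage, and a violation $I(\text{Tri})>1$ \texttt{bit} would then evolve to some $\alpha_i>1$ \texttt{bit}, contradicting the pentagon equalities, which lemma \ref{le:N2checkB} guarantees are preserved under $\ct_2'$ once (\ref{pure state}) holds. Your anticommutator argument is instead essentially the method the paper itself deploys later, for the $N>2$ consistency check (lemma \ref{lem_markus} in appendix \ref{app_comppres}); applying it already at $N=2$ is sound and arguably cleaner than the ``self-contained alternative'' you call delicate at the end --- note that the paper closes the gap from your $\tfrac{3}{2}$ bound down to $1$ \texttt{bit} not by reorganizing the quadratic constraints algebraically but by the dynamical concentration trick just described. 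You are also right that there is no circularity: lemma \ref{lem_product}, lemma \ref{le:N2checkB} and the adjoint identification (b) of section \ref{sec_psu4} are all established without invoking the triangle inequalities.

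However, one step of yours fails as stated: it is not true that every $\vec{r}$ solving (\ref{pure state}) is the Bloch vector of a positive unit-trace density matrix. Lemma \ref{lem_product} only brings $\vec{r}$ to the informational configuration $\alpha_{z_1}=\alpha_{z_2}=\alpha_{zz}=1$ \texttt{bit}, which has eight Bloch solutions; these split into $\cs_{\rm XNOR}$ and $\cs_{\rm XOR}$, and the solution set of the 21 equations is the union of the two disjoint orbits $\ct_2(\cs_{\rm XNOR})$ and $\ct_2(\cs_{\rm XOR})=-\ct_2(\cs_{\rm XNOR})$, not the single orbit of your seed $\vec{r}_0$. On the XOR branch your $\rho$ is not positive: e.g.\ the seed $r_{z_1}=r_{z_2}=r_{zz}=-1$ gives $\rho=\tfrac{1}{4}(\mathds{1}-\sigma_{z_1}-\sigma_{z_2}-\sigma_{zz})$ with eigenvalues $\left(-\tfrac{1}{2},\tfrac{1}{2},\tfrac{1}{2},\tfrac{1}{2}\right)$, so the inequality $\langle M\rangle\leq\|M\|$ is unavailable there. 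The repair is one line: the measure $\alpha_i=r_i^2$ is invariant under $\vec{r}\mapsto-\vec{r}$, and by linearity of the $\ct_2'$-action $-\vec{r}$ lies on the XNOR orbit whenever $\vec{r}$ lies on the XOR one, so the bound proved on $\ct_2(\cs_{\rm XNOR})$ transfers verbatim to the other branch. With that amendment your proof is complete and correct.
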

\begin{proof}
By inspection one verifies that any of the three pairs of questions contained in every of the triangle sets $\text{Tri}_i$ (\ref{eq:triangles}) also lies in a common pentagon set (\ref{eq:pentineq}). However, clearly not all three questions in a triangle set can lie in the same pentagon, for otherwise maximality of the triangle set would be violated. This implies that for every triangle set and every pair of questions contained in it there exists an information swap generator in (\ref{swapgen1}, \ref{swapgen2}) which swaps the information between the two questions of that pair and leaves the third question in the triangle set invariant (see the arguments in the proof of lemma \ref{lem_product}). For example, for $\text{Tri}_1$, $G^{\text{Pent}_1,\text{Pent}_4}$ swaps the information between $Q_{xx}$ and $Q_{z_2}$ and leaves $Q_{xy}$ invariant. Accordingly, the exponentiation (\ref{eq:pentexpT}) of $G^{\text{Pent}_1,\text{Pent}_4}$ rotates information continuously between $Q_{xx}$ and $Q_{z_2}$ and leaves $Q_{xy}$ invariant. In particular, there will always exist values of $t$ such that {\it all} information carried by $Q_{xx}$ and $Q_{z_2}$ can be evolved into one of the two questions, e.g., $Q_{xx}$. By subsequently applying the analogous rotation generated by $G^{\text{Pent}_2,\text{Pent}_4}$ to the pair $Q_{xx},Q_{xy}$ (which leaves $Q_{z_2}$ invariant), one can always evolve the entire information $I(\text{Tri}_1)(t)=\alpha_{xx}(t)+\alpha_{xy}(t)+\alpha_{z_2}(t)$, carried by $\text{Tri}_1$, into $I(\text{Tri}_1)(t)=I(\text{Tri}_1)(t+\Delta t)=\alpha_{xx}(t+\Delta t)$ such that $\alpha_{xy}(t+\Delta t)=\alpha_{z_2}(t+\Delta t)=0$ \texttt{bits} and no information has leaked out of the triangle (where $\alpha_i(t)=r_i(t)^2$). That is, if the triangle complementarity inequality following from (\ref{compstrong}) for $\text{Tri}_1$ was ever violated, $\alpha_{xx}+\alpha_{xy}+\alpha_{z_2}>1$ \texttt{bit}, there would exist a $T\in\ct_2'$ which evolves this configuration to $\alpha_{xx}>1$ \texttt{bit}. But this would violate the pentagon equalities which, by lemma \ref{le:N2checkB}, can never happen under $\ct_2'$ if (\ref{pure state}) is fulfilled. The same argument can be repeated for all 20 triangle sets such that we conclude that (\ref{pure state}), in fact, implies that the triangle complementarity inequalities hold. 
\end{proof}

In particular, $\ct_2'$ thus preserves all complementarity inequalities (\ref{compstrong}) once (\ref{pure state}) holds.

\subsubsection{Preservation of the correlation structure}\label{app_corr}

We also have to check that $\ct_2'=\rm{PSU}(4)$ leaves the correlation structure of figure \ref{fig_corr} invariant -- provided (\ref{pure state}) is fulfilled. For this purpose we recall that the correlation structure in figure \ref{fig_corr} encodes that a question in an (anti-)correlation triangle is the (anti-)correlation of the other two questions in the triangle. The correlation structure thus means that if (a) $Q_i=Q_j\leftrightarrow Q_k$ then $y_i=1$ implies $r_j=r_k$ and $y_i=0$ implies $r_j=-r_k$,\footnote{For example, $Q_{xx}=Q_{x_1}\leftrightarrow Q_{x_2}$ is the question ``are the answers to $Q_{x_1},Q_{x_2}$ correlated?" Since $y_{xx}$ is thus also the probability that the answers to $Q_{x_1},Q_{x_2}$ are correlated, this means that whenever $y_{xx}=1$ we must have $y_{x_1}=y_{x_2}$ and whenever $y_{xx}=0$ we must have $r_{x_1}=-r_{x_2}$.} and if (b) $Q_i=\neg(Q_j\leftrightarrow Q_k)$ then $y_i=1$ implies $r_j=-r_k$ and $y_i=0$ implies $r_j=r_k$, where $i,j,k=x_1,y_1,\ldots,zz$ and $i\neq j\neq k\neq i$ are question indices compatible with a triangle in figure \ref{fig_corr}. That is, since any $Q_i$ is contained in three triangles (c) if $Q_i=Q_j\leftrightarrow Q_k=\neg(Q_l\leftrightarrow Q_m)$ then $y_i=1$ implies $r_j=r_k$ and $r_l=-r_m$ simultaneously and $y_i=0$ implies $r_j=-r_k$ and $r_l=r_m$ simultaneously. Finally, (d) if $Q_i=Q_j\leftrightarrow Q_k=Q_l\leftrightarrow Q_m$ then $y_i=1$ implies $r_j=r_k$ and $r_l=r_m$ simultaneously and $y_i=0$ implies $r_j=-r_k$ and $r_l=-r_m$ simultaneously. We thus only show the statement for states with at least one $\alpha_i=1$ \texttt{bit} for which the correlation structure has meaning.

We recall from lemma \ref{lem_product} and the arguments of section \ref{sec_statesn2} that there exist precisely two $\ct_2'$-transitive sets solving (\ref{pure state}), namely
\ba
\cs^+_{QT}&:=&\{T\cdot(\vec{\delta}_{z_1}+\vec{\delta}_{z_2}+\vec{\delta}_{z_1z_2})\,\big|\,T\in\rm\ct_2'\},\nn\\
\cs^-_{QT}&:=&-\{T\cdot(\vec{\delta}_{z_1}+\vec{\delta}_{z_2}+\vec{\delta}_{z_1z_2})\,\big|\,T\in\rm\ct_2'\}.\nn
\ea
$\cs^+_{QT}$ is the set of pure quantum states, while $\cs^-_{QT}$ constitutes an equivalent but different representation of the pure quantum state space. These two sets are {\it not} connected via $\ct_2'$.

\begin{claim}
\leavevmode
\begin{enumerate}
\item Any $\vec{r}$ which solves (\ref{pure state}) and satisfies the correlation structure of figure \ref{fig_corr} lies in $\cs^+_{QT}$. This is the set corresponding to the convention of building bipartite questions from the individuals $Q_{x_1},Q_{x_2},Q_{y_1},Q_{y_2},Q_{z_1},Q_{z_2}$ using the XNOR connective $\leftrightarrow$.
\item Any $\vec{r}$ which solves (\ref{pure state}) and satisfies the correlation structure obtained by replacing correlation triangles in figure \ref{fig_corr} by anti-correlation triangles and vice versa lies in $\cs^-_{QT}$. This is the set corresponding to the convention of building bipartite questions from the individuals using the XOR connective $\neg(\cdot\,\leftrightarrow\,\cdot)$.
\end{enumerate}
Thus, in particular, in the XNOR convention, (\ref{pure state}) implies the correlation structure of figure \ref{fig_corr} which therefore is $\ct_2'$-invariant.
\end{claim}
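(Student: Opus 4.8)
The starting point is that, by Lemma \ref{lem_product} together with the transitivity argument of section \ref{sec_statesn2}, the full solution set of the $21$ equations (\ref{pure state}) is exhausted by the two $\ct_2'$-orbits $\cs^+_{QT}$ and $\cs^-_{QT}=-\cs^+_{QT}$, which are disjoint and on each of which $\ct_2'$ acts transitively. Since the correlation structure carries content only at Bloch vectors with at least one $\alpha_i=1$ \texttt{bit}, and every element of either orbit is a $3$-\texttt{bit} pure state, the claim reduces to two assertions: (A) every $\vec{r}\in\cs^+_{QT}$ obeys the correlation structure of figure \ref{fig_corr} (the XNOR structure), and (B) this structure and its triangle-flipped (XOR) counterpart are mutually exclusive at such states and are interchanged by $\vec{r}\mapsto-\vec{r}$. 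Granting (A) and (B), a solution of (\ref{pure state}) obeying the XNOR structure cannot lie in $\cs^-_{QT}$ (which by (A)+(B) obeys the incompatible XOR structure), hence lies in $\cs^+_{QT}$, and symmetrically for XOR; this delivers both inclusions of the claim as well as the final statement that, in the XNOR convention, (\ref{pure state}) forces the correlation structure, which is therefore $\ct_2'$-invariant.

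First I would dispatch (B), which is elementary. Under $\vec{r}\mapsto-\vec{r}$ one has $y_i=\tfrac12(1+r_i)\mapsto 1-y_i$, a global interchange of the `yes' and `no' outcomes of every question. Feeding this into a defining clause of the structure, say ``$Q_i=Q_j\leftrightarrow Q_k$, so $y_i=1\Rightarrow r_j=r_k$ and $y_i=0\Rightarrow r_j=-r_k$'', turns it into ``$y_i'=1\Rightarrow r_j'=-r_k'$ and $y_i'=0\Rightarrow r_j'=r_k'$'', which is precisely the anti-correlation clause for $Q_i=\neg(Q_j\leftrightarrow Q_k)$. Thus negation carries the XNOR structure bijectively onto the XOR structure, no state with some $\alpha_i=1$ \texttt{bit} can obey both, and (A) then immediately gives that $\cs^-_{QT}$ obeys the XOR structure.

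The substance is (A). My preferred route exploits coincidence (b) of section \ref{sec_psu4}: the $15$ swap generators equal the adjoint representation of the $\su(4)$ generators in the Pauli basis, so $\cs^+_{QT}=\ct_2'(\vec{\delta}_{z_1}+\vec{\delta}_{z_2}+\vec{\delta}_{z_1z_2})$ is exactly the orbit of the pure product state $\rho=|z_+z_+\rangle\langle z_+z_+|$ under conjugation by $\SU(4)$, i.e. the set of all two-qubit pure-state Bloch vectors. It then suffices to verify the structure for a generic pure $\rho=|\psi\rangle\langle\psi|$. Each correlation triangle corresponds to a Pauli identity $\sigma_i=\pm\,\sigma_j\sigma_k$, with sign $+$ for the even (XNOR, green) triangles and $-$ for the odd (anti-correlation, red) ones, as one reads off from $\sigma_\mu\otimes\sigma_\nu$ multiplication. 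If $r_i=\langle\psi|\sigma_i|\psi\rangle=1$ then $\sigma_j\sigma_k|\psi\rangle=\pm|\psi\rangle$, whence $\sigma_k|\psi\rangle=\pm\sigma_j|\psi\rangle$ (using $\sigma_j^2=\mathds{1}$) and therefore $r_k=\pm r_j$; the case $r_i=-1$ is identical with the opposite sign. Running this over the three triangles containing each question reproduces clauses (a)--(d) verbatim, establishing (A). A fully self-contained alternative, avoiding the quantum side, is to show directly that the XNOR structure is $\ct_2'$-invariant: by the product-of-exponentials decomposition used in Lemma \ref{le:N2checkB} it is enough to check one swap $\exp(t\,G^{ab})$ at a time, where the relevant $2\times2$ rotations act only in the planes of complementary pairs and the sign data of $G^{ab}$, fixed in appendix \ref{app_swap} through configurations $1$--$3$ and $1'$--$3'$, are exactly what makes the triangle sign relations survive; since the seed satisfies $r_{zz}=r_{z_1}r_{z_2}=1$ on its unique nonzero triangle, invariance then propagates the structure to all of $\cs^+_{QT}$.

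The main obstacle is proving (A) without circularity. One must ensure coincidence (b) is available independently of the consistency checks (i)--(iv) this appendix supplies; it is, since (b) is a direct computation on the generators in section \ref{sec_psu4} and does not presuppose $\ct_2\simeq\PSU(4)$. The only genuinely delicate bookkeeping is matching the $\pm$ signs of the operator identities $\sigma_i=\pm\sigma_j\sigma_k$ to the even/odd designation of each triangle in figure \ref{fig_corr}; once that dictionary is fixed, the operator argument is uniform across all $20$ triangles, completing the claim and incidentally proving the $\ct_2'$-invariance of the correlation structure asserted in check (iv).
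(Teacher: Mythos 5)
Your proposal is correct and reaches the claim by a genuinely different route than the paper. The paper's own proof in appendix \ref{app_corr} stays entirely on the reconstruction side: fixing $\alpha_{z_1}=1$ \texttt{bit}, it combines the conservation equation $(P^4\cdot\vec{r})\cdot G^{46}\cdot\vec{r}=0$ with the pentagon equalities to lock the relative signs of the pairs $(r_{y_2},r_{zy})$ and $(r_{z_2},r_{zz})$ together, and then applies the finite rotations $T^{46},T^{25},T^{26},T^{45},T^{24},T^{56}$ to seed states consistent with the XNOR (resp.\ XOR) convention to sweep out \emph{all} sign-correlated configurations, concluding that the $3$ \texttt{bit} solutions of (\ref{pure state}) split exactly into the XNOR-consistent states in $\cs^+_{QT}$ and the XOR-consistent states in $\cs^-_{QT}$. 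You instead import the quantum representation: using coincidence (b) of section \ref{sec_psu4} -- which, as you correctly note, is an independent computation on the generators and does not presuppose checks (i)--(iv), so there is no circularity -- you identify $\cs^+_{QT}$ with the set of two-qubit pure states and verify all twenty triangle clauses uniformly from the operator dictionary $\sigma_i=\pm\,\sigma_j\sigma_k$ (sign $+$ for even, $-$ for odd triangles), via $\sigma_j\sigma_k|\psi\rangle=\pm|\psi\rangle\Rightarrow\sigma_k|\psi\rangle=\pm\sigma_j|\psi\rangle\Rightarrow r_k=\pm r_j$; your negation argument for (B) is the same observation the paper makes implicitly when relating $\cs^-_{QT}=-\cs^+_{QT}$ to the flipped triangles. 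Your route is considerably shorter; the paper's route buys an argument internal to the reconstructed structures, tracing the correlation data to the conservation equations (\ref{15cons}) themselves rather than to Pauli algebra, which is why the paper can present (iv) as evidence that (\ref{pure state}) \emph{encodes} the correlation structure.

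One step should be tightened. Your exclusivity statement (B) -- no state with some $\alpha_i=1$ \texttt{bit} obeys both structures -- is false for arbitrary vectors, since both sets of clauses hold vacuously if all six components $r_j,r_k$ adjacent to $Q_i$ vanish. For your argument it only needs to hold on solutions of (\ref{pure state}), and there it does, but this requires a line you omitted: by the bulk/boundary relation (\ref{bdry}), $\alpha_i=1$ \texttt{bit} on a $3$ \texttt{bit} solution forces the remaining $2$ \texttt{bits} into the three adjacent correlation triangles, so at least one pair $(r_j,r_k)$ is nonzero, and the opposite signs $r_j=\pm r_k$ then discriminate XNOR from XOR. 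With that addition, your chain -- solution set exhausted by the two disjoint orbits (lemmas \ref{lem_product} and \ref{le:N2checkB}), (A) on $\cs^+_{QT}$, negation plus exclusivity on $\cs^-_{QT}$ -- delivers both parts of the claim and the $\ct_2'$-invariance.
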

\begin{proof}
Suppose $\vec{r}$ solves (\ref{pure state}). This implies that whenever $\alpha_i=1$ \texttt{bit}, then $\alpha_j=\alpha_k$ if either $Q_i=Q_j\leftrightarrow Q_k$ or $Q_i=\neg(Q_j\leftrightarrow Q_k)$ (as mentioned at the end of section \ref{sec_infodist} this follows from the pentagon identities contained in (\ref{pure state})). This means that $r_i=\pm1$ and either $r_j=r_k$ or $r_j=-r_k$. We wish to show consistency with (a)--(d). We shall illustrate the argument with the example of $\alpha_{z_1}=1$ \texttt{bit}. 
While the proof is straightforward, it involves many details such that we restrict to a sketch.

We adopt the notation of appendix \ref{app_geneq} and note that the conservation equation 
\ba
(P^4\cdot\vec{r})\cdot G^{46}\cdot\vec{r}=r_{z_2}r_{y_2}-r_{zy}r_{zz}-r_{yy}r_{yz}-r_{xy}r_{xz}=0\nn
\ea
reads $r_{z_2}r_{y_2}=r_{zy}r_{zz}$ once $\alpha_{z_1}=1$ \texttt{bit} such that all questions complementary to $Q_{z_1}$ carry $0$ \texttt{bits}. Together with $r_{y_2}=\pm r_{zy}$ and $r_{zz}=\pm r_{z_2}$, which is implied by the pentagon equalities as noted above, this entails for the right and lower triangles in figure \ref{fig:1bitquestion} that
\ba
r_{y_2}=+ r_{zy}\q\Leftrightarrow\q r_{zz}=+ r_{z_2}\q\q\text{and}\q\q
r_{y_2}=- r_{zy}\q\Leftrightarrow\q r_{zz}=- r_{z_2}\nn.
\ea

We can now employ finite time evolutions $T^{46}(t)=\exp(t\,G^{46})$ as in (\ref{eq:pentexpT}) which generate rotations in the $(y_2,z_2)$ and $(zy,zz)$ planes, both by an angle $-t$. Such a time evolution leaves $r_{x_2},r_{z_1},r_{zx}$, corresponding to the upper left triangle in figure \ref{fig:1bitquestion} invariant. In particular, we can start with a Bloch vector $r_{z_1}=r_{y_2}=r_{zy}=+1$ and all other $r_i=0$. This Bloch vector solves (\ref{pure state}) and is compatible with constructing the bipartite question $Q_{zy}=Q_{z_1}\leftrightarrow Q_{y_2}$ via the XNOR connective $\leftrightarrow$. Applying $T^{46}(t)=\exp(t\,G^{46})$ for all $t\in[0,2\pi]$ to this vector generates {\it all} configurations for which $r_{z_1}=+1$ and simultaneously $r_{y_2}=+ r_{zy}$ and $r_{zz}=+ r_{z_2}$, while all other $r_i=0$ and thus preserving that all of $I(\vec{r})=|\vec{r}|^2=3$ \texttt{bits} is carried by the five questions in the upper right and the lower triangle in figure \ref{fig:1bitquestion}. Similarly, by starting with the Bloch vector $r_{z_1}=r_{y_2}=-1$, $r_{zy}=+1$ and all other $r_i=0$, which again solves (\ref{pure state}) and is compatible with $Q_{zy}=Q_{z_1}\leftrightarrow Q_{y_2}$, one can generate all configurations for which $r_{z_1}=-1$ and simultaneously $r_{y_2}=- r_{zy}$ and $r_{zz}=- r_{z_2}$, while all other $r_i=0$ and thus preserving that all of $I(\vec{r})=|\vec{r}|^2=3$ \texttt{bits} is carried by the five questions in the upper right and the lower triangle in figure \ref{fig:1bitquestion}. Note, firstly, that the two states $r_{z_1}=r_{y_2}=r_{zy}=+1$ (all other $r_i=0$) and $r_{z_1}=r_{y_2}=-1$, $r_{zy}=+1$ (all other $r_i=0$) are connected by $T(t=\pi)=\exp(\pi\,G^{12})$ such that all the states we just discussed are connected by time evolution and thus clearly satisfy (\ref{pure state}). Secondly, note that {\it all} of these Bloch vectors are consistent with building the bipartite $Q_{zz}=Q_{z_1}\leftrightarrow Q_{z_2}$ using XNOR and, accordingly, with $Q_{zy}\leftrightarrow Q_{y_2}=Q_{z_1}=Q_{zz}\leftrightarrow Q_{z_2}$. Thirdly, note that we could have arrived at the same result by using the conservation equation $(P^2\cdot\vec{r})\cdot G^{25}\cdot\vec{r}=0$ and $T^{25}(t)$ which also leaves the questions in the upper left triangle of figure \ref{fig:1bitquestion} invariant. 

One can repeat the analogous argument with $G^{26}$ or $G^{45}$, both of which leave the upper right triangle in figure \ref{fig:1bitquestion} invariant and solely rotate the information between the other two triangles (while leaving $r_{z_1}$ invariant), to show that from $r_{z_1}=r_{z_2}=r_{zz}=1$ (all other $r_i=0$) one can generate by time evolution {\it all} states with $r_{z_1}=+1$ and simultaneously $r_{z_2}=r_{zz}$ and $r_{x_2}=r_{zx}$ and {\it all} states with $r_{z_1}=-1$ and simultaneously $r_{z_2}=-r_{zz}$ and $r_{x_2}=-r_{zx}$ and all other $r_i=0$. Since $r_{z_1}=r_{z_2}=r_{zz}=1$ (all other $r_i=0$) is connected by time evolution to the states of the previous paragraph all of these states are likewise related through time evolution group elements to all states of the previous paragraph. We again note that all of these states are consistent with constructing the bipartite $Q_{zx}=Q_{z_1}\leftrightarrow Q_{x_2}$ with the XNOR from the individuals $Q_{z_1},Q_{x_2}$ and, accordingly, with $Q_{zx}\leftrightarrow Q_{x_2}=Q_{z_1}=Q_{zz}\leftrightarrow Q_{z_2}$.

Next, we repeat the analogous argument with $G^{24}$ or $G^{56}$, both of which leave the lower triangle in figure \ref{fig:1bitquestion} invariant, to show that from $r_{z_1}=r_{y_2}=r_{zy}=+1$ (all other $r_i=0$) one can produce through time evolution group elements {\it all} states with $r_{z_1}=+1$ and simultaneously $r_{y_2}=r_{zy}$ and $r_{x_2}=r_{zx}$ and {\it all} states with $r_{z_1}=-1$ and simultaneously $r_{y_2}=-r_{zy}$ and $r_{x_2}=-r_{zx}$ and all other $r_i=0$. All of these states are clearly connected via time evolution group elements to all states of the previous two paragraphs and consistent with $Q_{zx}\leftrightarrow Q_{x_2}=Q_{z_1}=Q_{zy}\leftrightarrow Q_{y_2}$.

Combining the previous arguments, it is clear that by applying all possible products of $T^{46},T^{25},T^{26},T^{45},T^{24}$, $T^{56}$ for all possible values of $t\in[0,2\pi]$ to the states of the previous three paragraphs one generates {\it all} states with $r_{z_1}=+1$ and simultaneously $r_{x_2}=r_{zx}$ and $r_{zy}=r_{y_2}$ and $r_{zz}=r_{z_2}$ and {\it all} states with $r_{z_1}=-1$ and simultaneously $r_{x_2}=-r_{zx}$ and $r_{zy}=-r_{y_2}$ and $r_{zz}=-r_{z_2}$ and all other $r_i=0$ and $I(\vec{r})=|\vec{r}|^2=3$ \texttt{bits}. It is also clear that all these states satisfy (\ref{pure state}) and that no other states can be produced by combinations of $T^{46},T^{25},T^{26},T^{45},T^{24}$, $T^{56}$. But these are precisely all the states consistent with $Q_{z_1}=Q_{zy}\leftrightarrow Q_{y_2}=Q_{zz}\leftrightarrow Q_{z_2}=Q_{zx}\leftrightarrow Q_{x_2}$ and $\alpha_{z_1}=1$ \texttt{bit} and thus all the states consistent with the correlation structure of figure \ref{fig:1bitquestion}. In conclusion, all of these states are thus implied by (\ref{pure state}), provided one follows the convention to only build up bipartite questions with the XNOR connective from individual questions.

Had we instead started the above arguments with the state $r_{z_1}=-1$, $r_{y_2}=r_{zy}=+1$ and all other $r_i=0$, corresponding to the XOR connective $Q_{zy}=\neg(Q_{z_1}\leftrightarrow Q_{y_2})$ and solving (\ref{pure state}), we would have produced through time evolution all states consistent with $Q_{z_1}=\neg(Q_{zy}\leftrightarrow Q_{y_2)}=\neg(Q_{zz}\leftrightarrow Q_{z_2})=\neg(Q_{zx}\leftrightarrow Q_{x_2})$ and $\alpha_{z_1}=1$ \texttt{bit}. These correspond to the correlation structure of figure \ref{fig:1bitquestion}, except that all correlation triangles in it are replaced by anti-correlation triangles. 

Clearly, one can repeat the same arguments for any question $Q_i$ and Bloch vectors with $\alpha_i=1$ \texttt{bit}, finding that {\it all} states compatible with $\alpha_i=1$ \texttt{bit} and building bipartite questions with the XNOR are connected via $\ct_2'$ and likewise that {\it all} states compatible with $\alpha_i=1$ \texttt{bit} and building bipartite questions with the XOR are connected via $\ct_2'$.

Together with lemma \ref{lem_product} and the arguments of section \ref{sec_statesn2} it follows that {\it all} $3$ \texttt{bit} states consistent with the correlation structure of figure \ref{fig_corr} 
lie in $\cs^+_{QT}$. Similarly, it follows that {\it all} $3$ \texttt{bit} states consistent with the correlation structure corresponding to the convention of constructing bipartite questions with the XOR from individuals lie in $\cs^-_{QT}$.
\end{proof}

\subsection{Reconstructing $\ct_N$ and $\Sigma_N$ for $N>2$}\label{app_nl2}

\subsubsection{Deriving the 'swap generators' for $N>2$}\label{app_swapnl2}

All pairwise unitaries must be contained in $\ct_N$ and therefore require a representation on $\mathbb{R}^{4^N-1}$. Consider the gbit pair $(1,2)$. It is not difficult to convince oneself that the definition (and requirement) of isolated evolution under $\ct^{(12)}_2\subset\ct_N$ from section \ref{sec_psunl2} implies that {\it every} $T^{(12)}(t)\in\ct^{(12)}_2$ must be of the block-diagonal form
\ba
T^{(12)}(t) = \left(\begin{array}{ccc}\bar{T}^{(12)}(t) & 0 & 0 \\0 & \tilde{T}^{(12)}(t) & 0 \\0 & 0 & \mathds{1}_{(4^{N-2}-1)\times(4^{N-2}-1)}\end{array}\right),\label{T12}
\ea
where $\bar{T}^{(12)}(t)$ is the corresponding $15\times15$ $\ct_2$-matrix of section \ref{sec_psu4} and $\tilde{T}^{(12)}(t)$ is a $(4^N-1-15-(4^{N-2}-1))\times(4^N-1-15-(4^{N-2}-1))$ matrix which acts on the indices $(\mu_1\mu_2 0\cdots0)$ and $(\mu_1\mu_2 \mu_3\cdots\mu_N)$, respectively, of a Bloch vector $\vec{r}\in\Sigma_N$, where $(\mu_1\mu_2)\neq(00)$ and $(\mu_3\cdots\mu_N)\neq(0\cdots0)$. Therefore, the generators of $\ct^{(12)}_2$ must be of the following block-diagonal form:
\ba
G^{(12)} = \left(\begin{array}{ccc}\bar{G}^{(12)} & 0 & 0 \\0 & g^{(12)} & 0 \\0 & 0 & 0_{(4^{N-2}-1)\times(4^{N-2}-1)}\end{array}\right),\label{gennl2}
\ea
where $\bar{G}^{(12)}$ are (linear combinations of) the two-qubit information swap generators (\ref{swapgen1}, \ref{swapgen2}) and $g^{(12)}$ are the generators of $\tilde{T}^{(12)}$. The latter clearly also have to form a representation of $\mathfrak{psu}(4)$ in order for the $G^{(12)}$ to generate a $(4^N-1)\times(4^N-1)$ matrix representation of $\mathfrak{psu}(4)$ such that $g^{(12)}$ must be anti-symmetric too. Note that this resulting $\mathfrak{psu}(4)$ representation will thus be reducible. The analogous block-diagonal form holds for the pairwise unitaries and their generators of all other gbit pairs.

We shall now prove equation (\ref{psu4adj}). We shall do this in three steps, each given by a lemma. Note that the indices of the matrix $\tilde{T}^{(12)}_{(\mu_1\cdots\mu_N)(\nu_1\cdots\nu_N)}$ are {\it always} such that $(\mu_1\mu_2)\neq(00)$ and $(\mu_3\cdots\mu_N)\neq(0\cdots0)$ (and similarly for the $\nu$ indices). However, we can trivially extend $\tilde{T}^{(12)}$ to an $(4^N-1)\times(4^N-1)$ matrix by simply setting all new components corresponding to all remaining index combinations to zero. In this case we can let the indices $\mu,\nu$ run over all possible values.

\begin{lemma}\label{lem_6}
\leavevmode
$\tilde{T}^{(12)}_{(\mu_1\mu_2\mu_3\cdots\mu_N)(\nu_1\nu_2\nu_3\cdots\nu_N)}=M_{(\mu_1\mu_2)(\nu_1\nu_2)}(\mu_3,\ldots,\mu_N)\,\delta_{\mu_3\nu_3}\cdots\delta_{\mu_N\nu_N}$. Here the factor $M_{(\mu_1\mu_2)(\nu_1\nu_2)}(\mu_3,\ldots,\mu_N)$ is a $16\times16$ matrix which might depend on the values of the indices $(\mu_3\cdots\mu_N)$, and $M_{(00)(\nu_1\nu_2)}(\mu_3,\ldots,\mu_N)=0=M_{(\mu_1\mu_2)(00)}(\mu_3,\ldots,\mu_N)$.
\end{lemma}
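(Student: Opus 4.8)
The vanishing $M_{(00)(\nu_1\nu_2)}=0=M_{(\mu_1\mu_2)(00)}$ is immediate from the extension convention: the trivially-extended $\tilde{T}^{(12)}$ has support only on indices with $(\mu_1\mu_2)\neq(00)$ and $(\nu_1\nu_2)\neq(00)$, since the components $r_{00\mu_3\cdots\mu_N}$ sit in the identity block of (\ref{T12}). So the real work is the factor $\delta_{\mu_3\nu_3}\cdots\delta_{\mu_N\nu_N}$, i.e.\ that $\tilde{T}^{(12)}$ does not mix different spectator patterns. My plan is to exploit the three pieces of data already available for $T^{(12)}$: it acts as the $15\times15$ matrix $\bar{T}^{(12)}$ on the pure pair sector, it leaves every spectator-only component $r_{00\mu_3\cdots\mu_N}$ invariant, and — being a legal time evolution mapping legal states to legal states — it respects the defining XNOR composition $Q_{\mu_1\mu_2\mu_3\cdots\mu_N}=Q_{\mu_1\mu_2 0\cdots0}\leftrightarrow Q_{00\mu_3\cdots\mu_N}$.

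For a fixed spectator pattern $P=(\mu_3,\ldots,\mu_N)\neq(0,\ldots,0)$, I would consider the legal states in which $O$ knows the single spectator question $Q_{00P}$ with certainty, $r_{00P}=1$, while holding no information about any other spectator-only question. The correlation structure then \emph{locks} the joint components to the pair components: since $Q_{\mu_1\mu_2 P}$ is the XNOR of $Q_{\mu_1\mu_2 0\cdots0}$ with the now-certain $Q_{00P}$, one has $r_{\mu_1\mu_2 P}=r_{\mu_1\mu_2 0\cdots0}$, whereas $r_{\mu_1\mu_2 P'}=0$ for every other pattern $P'\neq P$ (the corresponding spectator factor being unknown and uncorrelated). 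As the pair marginal ranges over all of $\mathbb{R}^{15}$ and $P$ over all spectator patterns, the mixed-sector projections of these states, of the form $v\otimes e_P$, span the whole mixed block $V_A\otimes V_B$ on which $\tilde{T}^{(12)}$ acts, where $V_A$ is the $15$-dimensional pair sector and $V_B$ the $(4^{N-2}-1)$-dimensional spectator sector.

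Evolving such a state by $T^{(12)}$, the spectator datum $r_{00P}=1$ is preserved, the pure pair components transform by $\bar{T}^{(12)}$, and legality of the image forces the locking to persist; hence the mixed-sector output is again supported on the single pattern $P$ with values $(\bar{T}^{(12)}v)_{\mu_1\mu_2}$. Since $T^{(12)}$ is block-diagonal, this determines $\tilde{T}^{(12)}$ on the spanning set by $\tilde{T}^{(12)}(v\otimes e_P)=(\bar{T}^{(12)}v)\otimes e_P$, and linearity yields $\tilde{T}^{(12)}=\bar{T}^{(12)}\otimes\mathds{1}_{V_B}$. This is exactly the claimed $\delta_{\mu_3\nu_3}\cdots\delta_{\mu_N\nu_N}$ structure — and in fact identifies $M$ with $\bar{T}^{(12)}$, independent of the spectator indices, a statement stronger than (and implying) the lemma.

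The main obstacle is that the locking argument presupposes that $T^{(12)}$ preserves the XNOR correlation structure relating the pair-, spectator-, and joint-questions, a fact being established in tandem with this very promotion of the generators. To prove only the weaker statement literally claimed — diagonality with a possibly spectator-dependent $M$ — I would instead argue that no joint component $r_{\mu_1\mu_2 P'}$ can be generated out of a pattern $P'$ for which the spectator question $Q_{00P'}$ remains at probability $\tfrac{1}{2}$ and uncorrelated, using only the invariance of the spectator-only components together with the independence relations; this gives the $\delta$-structure while deferring the identification $M=\bar{T}^{(12)}$ to the subsequent lemmas. A cleaner conceptual alternative, which I would present as a cross-check, is operational: isolated evolutions of the disjoint subsystems $\{1,2\}$ and $\{3,\ldots,N\}$ must commute, and since by the inductive hypothesis $\ct_{N-2}\simeq\PSU(2^{N-2})$ acts irreducibly on $V_B$ in its adjoint representation, Schur's lemma forces $\tilde{T}^{(12)}=M\otimes\mathds{1}_{V_B}$ at once.
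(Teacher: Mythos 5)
Your main-line argument is correct, but it takes a genuinely different route from the paper's. The paper proves the $\delta$-structure by a \emph{vanishing} argument: for fixed target and source indices with $\mu_3\neq\nu_3$ it constructs an auxiliary spectator question $Q^*=Q_{00\nu_3'\cdots\nu_N'}$ that is complementary to $Q_{\mu_1\cdots\mu_N}$ and compatible with $Q_{\nu_1\cdots\nu_N}$, prepares $3$ \texttt{bit} states in which $Q^*$ is known, and uses invariance of the spectator components plus the complementarity inequalities (\ref{compstrong}) and two sign configurations to force the pair of offending matrix elements of $\tilde{T}^{(12)}$ to zero; the bulk of the work is the case analysis showing $Q^*$ always exists. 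You instead run a locking-plus-spanning argument on the basis states $\vec{\delta}_{(\mu_1\mu_2)}\otimes\vec{e}_P$, concluding $\tilde{T}^{(12)}=\bar{T}^{(12)}\otimes\mathds{1}$ outright -- which is strictly stronger than lemma \ref{lem_6} and in effect subsumes lemmas \ref{lem_7} and \ref{lem_8} in one stroke. Your own worry about circularity is largely misplaced: the locking $y_{00P}=1\Rightarrow r_{\mu_1\mu_2 P}=r_{\mu_1\mu_2 0\cdots0}$ does not presuppose that $T^{(12)}$ preserves any correlation structure; it needs only the definitional XNOR identity $Q_{\mu_1\mu_2 P}=Q_{\mu_1\mu_2 0\cdots0}\leftrightarrow Q_{00P}$ among mutually compatible questions (an input from the question-set construction, and reasoning the paper itself endorses, e.g.\ in the footnote deriving $y_{zz}=y_{z_2}$ from $y_{z_1}=1$, and in the proof of lemma \ref{lem_7}) together with legality of the image state. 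What you must make explicit, however, is that ``legality of the image forces the locking to persist; hence the output is supported on the single pattern $P$'' is really two separate steps: the XNOR locking fixes only the components on pattern $P$; killing the components on patterns $P'\neq P$ requires invoking $\ct_N\subset\SO(4^N-1)$, i.e.\ Bloch-norm conservation, to show the locked components already exhaust the conserved $3$ \texttt{bits}. Without that bookkeeping step nothing in your argument excludes leakage into other spectator patterns.

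Two genuine defects in your auxiliary material. First, your proposed ``weaker'' fallback -- that no joint component $r_{\mu_1\mu_2 P'}$ can arise while $Q_{00P'}$ sits at probability $\tfrac{1}{2}$, justified ``using only\ldots the independence relations'' -- is false as a general principle: entangled states (e.g.\ a Bell pair between qubits $1$ and $3$) have $r_{\mu_1 0\mu_3 0\cdots0}=\pm1$ with the spectator marginal $r_{00\mu_30\cdots0}=0$, so independence of marginals cannot by itself forbid joint components. The correct mechanism is exactly the norm-conservation (or, as in the paper, complementarity-inequality) argument, so the fallback does not actually buy you a proof that avoids it. Second, the Schur's-lemma ``cross-check'' is circular: it presupposes that the isolated $(N-2)$-gbit evolutions act on the mixed block as $\mathds{1}_{V_A}\otimes(\text{adjoint})$, which is precisely a statement of lemma-\ref{lem_6} type for the complementary subsystem, and it additionally assumes that isolated evolutions of disjoint subsystems commute -- a physically plausible but nowhere-established requirement in this framework. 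As a heuristic it is fine; as a proof it begs the question.
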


\begin{proof}
We shall show that the matrix components $\tilde{T}^{(12)}_{(\mu_1\cdots\mu_N)(\nu_1\cdots\nu_N)}$ (for simplicity we drop here the argument $t$) 
vanish whenever $\mu_3\neq \nu_3$. By symmetry in the qubit labels, it then follows more generally that $\tilde{T}^{(12)}$ vanishes unless $\mu_3=\nu_3,\ldots, \mu_N=\nu_N$. (Clearly, the proof below can also be performed for the fourth, fifth and higher indices.) Throughout this proof we use that two questions $Q_{\mu_1\cdots\mu_N}$ and $Q_{\nu_1\cdots\nu_N}$ are complementary iff their indices differ in an odd number of non-zero indices \cite{Hoehn:2014uua}.

Consider now $\tilde{T}^{(12)}_{(\mu_1\cdots\mu_N)(\nu_1\cdots\nu_N)}$ with the indices $(\mu_1\cdots\mu_N)$ and $(\nu_1\cdots\nu_N)$ {\it fixed} and $\mu_3\neq \nu_3$. We shall henceforth also assume that $(\mu_1\mu_2)\neq(00)\neq(\nu_1\nu_2)$ and, likewise, $(\mu_3\cdots\mu_N)\neq(0,\cdots0)\neq(\nu_3\cdots\nu_N)$ for otherwise this component of $\tilde{T}^{(12)}$ is trivially zero. These two index sets will correspond to two questions $Q_{\mu_1\cdots\mu_N},Q_{\nu_1\cdots\nu_N}$. We shall now choose a further question $Q_{00\nu_3'\cdots\nu_N'}$ such that it is complementary to $Q_{\mu_1\cdots\mu_N}$ and compatible with $Q_{\nu_1\cdots\nu_N}$. At the end of the proof we shall show that this is always possible.

Since $Q_{\nu_1\cdots\nu_N},Q_{00\nu_3'\cdots\nu_N'}$ are compatible, whenever $O$ knows the answer to the two with certainty, he will also know with certainty the answer to their correlation $Q_{\nu_1\nu_2\tilde{\nu_3}\cdots\tilde{\nu_N}}=Q_{\nu_1\nu_2\nu_3\cdots\nu_N}\leftrightarrow Q_{00\nu_3'\cdots\nu_N'}$, where $(\tilde{\nu_3}\cdots\tilde{\nu_N})$ depend on $(\nu_3\cdots\nu_N)$ and $(\nu'_3\cdots\nu_N')$. Since $(\nu_3\cdots\nu_N)\neq(0\cdots0)$ it also holds that $(\tilde{\nu_3}\cdots\tilde{\nu_N})\neq(0\cdots0)$ \cite{Hoehn:2014uua}, however, the precise values of the $\tilde{\nu}_i$ will not matter. There exists a $3$ \texttt{bit} state in which only these three questions are answered with certainty, while for all other Bloch vector components $r_i=0$. Namely, after asking only $Q_{\nu_1\cdots\nu_N},Q_{00\nu_3'\cdots\nu_N'}$ to a system $S$ in the state of no information, $O$ will have certain information about these two questions and their correlation, however, will not know anything about any further question in the informationally complete set. We shall work with such $3$ \texttt{bit} states henceforth.

Thanks to the form of (\ref{T12}), the component $r_{00\nu_3'\cdots\nu_N'}=\pm1$ of the Bloch vector $\vec{r}$ corresponding to such a state is left invariant under the time evolution $\vec{r}\,':=T^{(12)}\cdot \vec{r}$, i.e.\ $r'_{00\nu_3'\cdots\nu_N'}=r_{00\nu_3'\cdots\nu_N'}=\pm1$. The complementarity inequalities (\ref{compstrong}) therefore imply that
\ba
0=r'_{\mu_1\cdots\mu_N}=\sum_{\beta_i} \tilde{T}^{(12)}_{(\mu_1\cdots\mu_N)(\beta_1\cdots\beta_N)}r_{\beta_1\cdots\beta_N}\nn
\ea
since $Q_{00\nu_3'\cdots\nu_N'}$ was chosen complementary to $Q_{\mu_1\cdots\mu_N}$. Given that $r_{00\nu_3'\cdots\nu_N'}=\pm1$ is left invariant and thus only the $r_{\nu_1\cdots\nu_N}, r_{{\nu_1}\nu_2\tilde{\nu}_3\cdots\tilde{\nu_N}}\in\{-1,+1\}$ can contribute (recall that all other $r_i=0$), the previous equation reduces to:\footnote{We note that $\tilde{T}^{(12)}_{(\mu_1\cdots\mu_N)({\nu_1}\nu_2\tilde{\nu}_3\cdots\tilde{\nu_N})}$ is not necessarily zero since $(\nu_1\nu_2)\neq(00)$ and $(\tilde{\nu}_3\cdots\tilde{\nu}_N)\neq(0\cdots0)$.}
\ba
0=r'_{\mu_1\mu_2\mu_3\cdots\mu_N}=\tilde{T}^{(12)}_{(\mu_1\cdots\mu_N)(\nu_1\cdots\nu_N)}r_{\nu_1\cdots\nu_N}+\tilde{T}^{(12)}_{(\mu_1\cdots\mu_N)({\nu_1}\nu_2\tilde{\nu}_3\cdots\tilde{\nu_N})}r_{{\nu_1}\nu_2\tilde{\nu}_3\cdots\tilde{\nu_N}} \label{eq:TN_st2a}
\ea
(no further summation over $\nu_i$ or $\tilde{\nu}_j$). Consider now the two specific configurations\footnote{Both are allowed since $Q_{\nu_1\nu_2\nu_3\cdots\nu_N},Q_{\nu_1\nu_2\tilde{\nu}_3\cdots\tilde{\nu}_N}$ are pairwise independent \cite{Hoehn:2014uua}.} (a) $r_{\nu_1\nu_2\nu_3\cdots\nu_N}=r_{{\nu_1}{\nu_2}\tilde{\nu_3}\cdots\tilde{\nu_N}}=1$ and (b) $r_{\nu_1\nu_2\nu_3\cdots\nu_N}=1, r_{{\nu_1}{\nu_2}\tilde{\nu_3}\cdots\tilde{\nu_N}}=-1$. (\ref{eq:TN_st2a}) must hold true for {\it both} (a) and (b) which is only possible if $\tilde{T}^{(12)}_{(\mu_1\mu_2\mu_3\cdots\mu_N)(\nu_1\nu_2\nu_3\cdots\nu_N)}=0=\tilde{T}^{(12)}_{(\mu_1\mu_2\mu_3\cdots\mu_N)(\tilde{\nu_1}\tilde{\nu_2}\tilde{\nu_3}\cdots\tilde{\nu_N})}$.

In this argument it was crucial that the invariant $Q_{00\nu_3'\cdots\nu_N'}$ was complementary to $Q_{\mu_1\cdots\mu_N}$ and compatible with $Q_{\nu_1\cdots\nu_N}$. Clearly, no such $Q_{00\nu_3'\cdots\nu_N'}$ with this property could exist if we had $(\mu_3\cdots\mu_N)=(\nu_3\cdots\nu_N)$. Hence, all that remains to be checked is whether we can always find a $Q_{00\nu_3'\cdots\nu_N'}$ with this property if $\mu_3\neq\nu_3$. By considering all the possible cases this can easily be shown to be true. For ease of notation, let us denote the relevant question as $Q^*:=Q_{00\nu_3'\cdots\nu_N'}$. First, for $N=3$ we must have $\mu_3,\nu_3\neq 0$ in order for $\tilde{T}^{(12)}$ not to vanish and we can choose $Q^*=Q_{00\nu_3}$. For $N>3$, we choose the question $Q^*$ according to the two cases where the indices $(\mu_4\cdots\mu_N)$ and $(\nu_4\cdots\nu_N)$ differ in either an {\it odd} or {\it even} amount of non-zero indices cases (we remind the reader that $\mu_3\neq\nu_3$).
\begin{itemize}
\item {\it Odd} number of differing non-zero indices such that $Q_{000\mu_4\cdots\mu_N}$ and $Q_{000\nu_4\cdots\nu_N}$ are complementary: take $Q^*=Q_{000\nu_4\cdots\nu_N}$.
\item {\it Even} number of differing non-zero indices such that $Q_{000\mu_4\cdots\mu_N}$ and $Q_{000\nu_4\cdots\nu_N}$ are compatible: 
\begin{itemize}
\item $\mu_3\neq0$: take $Q^*=Q_{00\nu_3\nu_4\cdots\nu_N}$ if $\nu_3\neq0$ or $Q^*=Q_{00\nu_3'\nu_4\cdots\nu_N}$, where any $\nu'_3\neq\mu_3$ suffices, if $\nu_3=0$.
\item $\mu_3=0$ (and thus $\nu_3\neq0$) and without loss of generality we assume that $\mu_4\neq0$ since there {\it must} be a non-zero index among $\mu_4,\ldots,\mu_N$: (i) if $\nu_4\neq0$, take $Q^*=Q_{00\nu_3'\nu_4'\nu_5\cdots\nu_N}$ with $\nu_4'\neq \mu_4$, and also $(\nu_3\nu_4)$ and $(\nu_3'\nu_4')$ differ in an even amount of non-zero indices\footnote{This comes down to the question if, given any two questions $Q_{0\mu_4}$ and $Q_{\nu_3\nu_4}$ where $\nu_3,\nu_4\neq0$, there is a third question which is complementary to $Q_{0\mu_4}$ {\it and} compatible with $Q_{\nu_3\nu_4}$. This is always possible \cite{Hoehn:2014uua}.}, (ii) if $\nu_4=0$ take $Q^*=Q_{000\nu_4'\nu_5\cdots\nu_N}$, where any $\nu'_4\neq\mu_4$ suffices.
\end{itemize}
\end{itemize}
We thus conclude that $\tilde{T}^{(12)}_{(\mu_1\mu_2\mu_3\cdots\mu_N)(\nu_1\nu_2\nu_3\cdots\nu_N)}$ vanishes unless $(\mu_3\cdots\mu_N)\equiv(\nu_3\cdots\nu_N)$ and thus $\tilde{T}^{(12)}_{(\mu_1\mu_2\mu_3\cdots\mu_N)(\nu_1\nu_2\nu_3\cdots\nu_N)}\sim \delta_{\mu_3\nu_3}\cdots\delta_{\mu_N\nu_N}$. The factor multiplying the delta's might depend on either the indices $\nu_3,\ldots,\nu_N$ or $\mu_3,\ldots,\mu_N$ which are fixed to be equal.
\end{proof}

It follows from lemma \ref{lem_6} that the block-matrix in the generators (\ref{gennl2}) is of the form
\ba
g^{(12)}_{(\mu_1\mu_2\mu_3\cdots\mu_N)(\nu_1\nu_2\nu_3\cdots\nu_N)}=\tilde{G}_{(\mu_1\mu_2)(\nu_1\nu_2)}(\mu_3,\ldots,\mu_N)\,\delta_{\mu_3\nu_3}\cdots\delta_{\mu_N\nu_N}\label{gennl2b}
\ea
with $\tilde{G}_{(00)(\nu_1\nu_2)}(\mu_3,\ldots,\mu_N)=0=\tilde{G}_{(\mu_1\mu_2)(00)}(\mu_3,\ldots,\mu_N)$. Note that
\ba
g^{(12)}_{(\mu_1\cdots\mu_N)(\nu'_1\cdots\nu'_N)}\,g^{(12)}_{(\nu'_1\cdots\nu'_N)(\nu_1\cdots\nu_N)}&=&\tilde{G}_{(\mu_1\mu_2)(\nu_1'\nu_2')}(\mu_3,\ldots,\mu_N)\,\delta_{\mu_3\nu_3'}\cdots\delta_{\mu_N\nu_N'} \times \nonumber\\
&&\tilde{G}_{(\nu_1'\nu_2')(\nu_1\nu_2)}(\nu_3'\cdots\nu_N')\,\delta_{\nu_3'\nu_3}\cdots\delta_{\nu_N'\nu_N}\nn\\
&=&\!\!\tilde{G}_{(\mu_1\mu_2)(\nu'_1\nu'_2)}(\mu_3,\ldots,\mu_N)\,\tilde{G}_{(\nu'_1\nu'_2)(\nu_1\nu_2)}(\mu_3,\ldots,\mu_N)\,\delta_{\mu_3\nu_3}\cdots\delta_{\mu_N\nu_N}\nn
\ea
and similarly for the higher powers of $g^{(12)}$ and therefore $M(t)=\exp(t\,\tilde{G})$ for $M$ given in lemma \ref{lem_6}. We are now interested in the representation of the pentagon swap generators corresponding to (\ref{swapgen1}, \ref{swapgen2}) on $\mathbb{R}^{4^N-1}$
\ba
G^{\text{Pent}_a^{(12)}, \text{Pent}_b^{(12)}} = \left(\begin{array}{ccc}{G}^{\text{Pent}_a, \text{Pent}_b} & 0 & 0 \\0 & g^{\text{Pent}_a^{(12)}, \text{Pent}_b^{(12)}} & 0 \\0 & 0 & 0_{(4^{N-2}-1)\times(4^{N-2}-1)}\end{array}\right),\label{gennl2d}
\ea
where ${G}^{\text{Pent}_a, \text{Pent}_b}$ is one of the 15 two-qubit swap generators in (\ref{swapgen1}, \ref{swapgen2}) and by (\ref{gennl2b})
\ba
g^{\text{Pent}_a^{(12)}, \text{Pent}_b^{(12)}}_{(\mu_1\mu_2\mu_3\cdots\mu_N)(\nu_1\nu_2\nu_3\cdots\nu_N)}=\tilde{G}^{\text{Pent}_a, \text{Pent}_b}_{(\mu_1\mu_2)(\nu_1\nu_2)}(\mu_3,\ldots,\mu_N)\,\delta_{\mu_3\nu_3}\cdots\delta_{\mu_N\nu_N}\label{gennl2c}.
\ea

\begin{lemma}\label{lem_7}
\leavevmode
$\tilde{G}^{\text{Pent}_a, \text{Pent}_b}_{(\mu_1\mu_2)(\nu_1\nu_2)}(\mu_3,\ldots,\mu_N)=0$ in (\ref{gennl2c}) if $Q_{\mu_1\mu_2}$ or $Q_{\nu_1\nu_2}$ is a question whose Bloch vector component is preserved under the two-qubit evolutions generated by the $G^{\text{Pent}_a,\text{Pent}_b}$.
\end{lemma}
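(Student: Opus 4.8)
The plan is to show that if $Q_{\mu_1\mu_2 0\cdots 0}$ (or $Q_{\nu_1\nu_2 0\cdots 0}$) is among the two-qubit questions whose Bloch component is left fixed by $G^{\text{Pent}_a,\text{Pent}_b}$, then the corresponding row of $\tilde G^{\text{Pent}_a,\text{Pent}_b}(\mu_3,\ldots,\mu_N)$ vanishes. Without loss of generality I take $(\mu_1\mu_2)\neq(00)\neq(\nu_1\nu_2)$ and a fixed tail $(\mu_3\cdots\mu_N)\neq(0\cdots0)$, since $\tilde G$ is zero otherwise by lemma \ref{lem_6}. Because $g^{(12)}$ is anti-symmetric we have $\tilde G_{(\mu_1\mu_2)(\nu_1\nu_2)}=-\tilde G_{(\nu_1\nu_2)(\mu_1\mu_2)}$, so it suffices to treat the case in which $Q_{\mu_1\mu_2 0\cdots 0}$ is preserved; the case in which $Q_{\nu_1\nu_2 0\cdots 0}$ is preserved then follows by swapping the two index pairs. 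The goal is thus to prove that the full-system component $r_{\mu_1\mu_2\mu_3\cdots\mu_N}$ is invariant under $T^{(12)}(t)=\exp(t\,G^{\text{Pent}^{(12)}_a,\text{Pent}^{(12)}_b})$, which by the block structure (\ref{gennl2d}, \ref{gennl2c}) together with lemma \ref{lem_6} is equivalent to $\tilde G_{(\mu_1\mu_2)(\nu_1\nu_2)}(\mu_3,\ldots,\mu_N)=0$ for all $(\nu_1\nu_2)$.

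The central construction exploits the factorisation $Q_{\mu_1\mu_2\mu_3\cdots\mu_N}=Q_{\mu_1\mu_2 0\cdots 0}\leftrightarrow Q_{00\mu_3\cdots\mu_N}$ of the composite question into a question on the pair $(12)$ and a question on the complementary gbits, the two being maximally compatible. First I would prepare the legal $3$ \texttt{bit} state obtained from the state of no information by asking only these two compatible questions, so that $r_{\mu_1\mu_2 0\cdots 0}=r_{00\mu_3\cdots\mu_N}=r_{\mu_1\mu_2\mu_3\cdots\mu_N}=1$ and every other component vanishes (the three questions form a correlation triangle). Under $T^{(12)}$ the first component is invariant by the standing hypothesis that $Q_{\mu_1\mu_2 0\cdots 0}$ is preserved by the two-qubit generator, while the second is invariant because, by the definition of isolated $\ct^{(12)}_2$-evolution, every component $r_{00\mu_3\cdots\mu_N}$ not involving the pair $(12)$ lies in the identity block of (\ref{T12}). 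Hence $r'_{\mu_1\mu_2 0\cdots 0}=r'_{00\mu_3\cdots\mu_N}=1$ in the evolved state $\vec r\,'=T^{(12)}(t)\cdot\vec r$.

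The decisive step combines conservation of total information with the correlation structure. Since $T^{(12)}(t)\in\ct^{(12)}_2\simeq\PSU(4)\subset\SO(4^N-1)$, principle \ref{pres} gives $|\vec r\,'|^2=|\vec r|^2=3$ \texttt{bits}. Moreover $\vec r\,'$ is again a legal state and must respect the XNOR definition of composite questions: with $Q_{\mu_1\mu_2 0\cdots 0}$ and $Q_{00\mu_3\cdots\mu_N}$ both answered with certainty, their correlation $Q_{\mu_1\mu_2\mu_3\cdots\mu_N}$ is also certain, with $r'_{\mu_1\mu_2\mu_3\cdots\mu_N}=r'_{\mu_1\mu_2 0\cdots 0}\cdot r'_{00\mu_3\cdots\mu_N}=1$. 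These three questions already account for the full $3$ \texttt{bits}, so every remaining component of $\vec r\,'$ must vanish; in particular $r'_{\nu_1\nu_2\mu_3\cdots\mu_N}=0$ for all $(\nu_1\nu_2)\neq(\mu_1\mu_2)$. By lemma \ref{lem_6} the only components that $\tilde G$ couples to a sibling $r_{\nu_1\nu_2\mu_3\cdots\mu_N}$ share the tail $(\mu_3\cdots\mu_N)$, and in the chosen state the only such initially nonzero component in the $\tilde T^{(12)}$ block is $r_{\mu_1\mu_2\mu_3\cdots\mu_N}$ itself; expanding $\vec r\,'$ to linear order in $t$ therefore yields $\tilde G_{(\nu_1\nu_2)(\mu_1\mu_2)}(\mu_3,\ldots,\mu_N)=0$ for every $(\nu_1\nu_2)$ (the values $(00)$ and $(\mu_1\mu_2)$ being trivial by lemma \ref{lem_6} and anti-symmetry). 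Anti-symmetry of $g^{(12)}$ then closes the argument for both cases of the hypothesis.

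I expect the main obstacle to be the clean justification of this decisive step without circularity: I must ensure that $\vec r\,'$ may legitimately be treated as a bona fide state of the $N$-gbit system, so that both $|\vec r\,'|^2=3$ \texttt{bits} and the XNOR correlation constraint apply, using only the \emph{requirement} that every pairwise $\ct_2$-evolution belongs to $\ct_N$ and hence maps $\Sigma_N$ into itself, rather than any not-yet-derived property of $\ct_N$. A secondary point needing care is to confirm that the correlation constraint ``both factors definite $\Rightarrow$ their correlation is definite and equals their product'' holds for the non-maximal $3$ \texttt{bit} states used here when $N>2$; this indeed follows from the general correlation structure of \cite{Hoehn:2014uua}, since it is a statement about the operational meaning of the XNOR question and is independent of the total information content of the state.
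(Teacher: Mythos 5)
Your proof is correct, and it is a leaner execution of the same underlying mechanism as the paper's own argument. Both rest on three ingredients: the block/tail structure of lemma \ref{lem_6}, the XNOR correlation structure of legal states (when $r_{\mu_1\mu_2 0\cdots0}$ and $r_{00\mu_3\cdots\mu_N}$ are both $\pm1$, the component $r_{\mu_1\mu_2\mu_3\cdots\mu_N}$ is forced to be their product), and information preservation under the isolated $\ct^{(12)}_2$-evolution. The difference is the choice of invariant state. The paper prepares, for each generator, a $7$ \texttt{bit} state in which \emph{all} simultaneously preserved two-qubit questions (e.g.\ $Q_{x_1},Q_{x_2},Q_{xx}$ for $G^{\text{Pent}_1,\text{Pent}_2}$) are answered together with $Q_{00\mu_3\cdots\mu_N}$; since several tilde-block components with the same tail are then nonzero at once, invariance only yields linear combinations of $M$-matrix entries, and the paper must additionally run through sign-flipped configurations to decouple them. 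You instead treat one preserved question at a time with the minimal $3$ \texttt{bit} state $r_{\mu_1\mu_2 0\cdots0}=r_{00\mu_3\cdots\mu_N}=r_{\mu_1\mu_2\mu_3\cdots\mu_N}=1$, so that exactly one tilde-block component is populated; the counting $|\vec{r}\,'|^2=3$ \texttt{bits} then forces $M_{(\nu_1\nu_2)(\mu_1\mu_2)}((\mu_3,\ldots,\mu_N);t)=\delta_{(\nu_1\nu_2)(\mu_1\mu_2)}$ for all $t$, differentiation at $t=0$ kills the whole column of $\tilde{G}$, and the anti-symmetry of $g^{(12)}$ (established in the paper before the lemma) supplies the row and the case where $Q_{\nu_1\nu_2}$ is the preserved question. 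This buys you two things: no sign-variation bookkeeping, and a uniform treatment of individual-type and composite-type preserved questions, which the paper handles in separate batches. Your two flagged worries are resolved exactly as the paper implicitly resolves them: $\vec{r}\,'$ is a bona fide element of $\Sigma_N$ because the isolated pairwise evolutions are \emph{required} to lie in $\ct_N$, which maps states to states, and the rule ``both factors definite $\Rightarrow$ correlation definite and equal to the product'' is part of the question/correlation structure of \cite{Hoehn:2014uua}, used by the paper itself for precisely such non-maximal $3$ \texttt{bit} states in the proofs of lemmas \ref{lem_6} and \ref{lem_8}.
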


\begin{proof}
It is instructive to consider a specific example, say, 
$G^{\text{Pent}_1,\text{Pent}_2}$ which, as seen in figure \ref{fig:pentagon}, preserves $r_{x_1},r_{x_2},r_{xx},r_{yy},r_{zz},r_{yz},r_{zy}$. 

Next, notice that $Q_{x_100\cdots},Q_{0x_20\cdots},Q_{xx0\cdots0}$, $Q_{00\mu_3\cdots\mu_N}$ for $(\mu_3\cdots\mu_N)\neq(0,\cdots0)$ are pairwise compatible since the indices of the questions disagree in none of the non-zero indices \cite{Hoehn:2014uua}. In fact, by theorem 3.1 in \cite{Hoehn:2014uua} (`Specker's principle'), they must also be mutually compatible such that there must exist a state in which the answers to all of these questions are known with certainty to $O$. For example, $r_{x_100\cdots0}=r_{0x_20\cdots0}=r_{xx0\cdots0}=r_{00\mu_3\cdots\mu_N}=+1$ and therefore, due to the XNOR properties, also $r_{x_10\mu_3\cdots\mu_N}=r_{0x_2\mu_3\cdots\mu_N}=r_{xx\mu_3\cdots\mu_N}=+1$ and all other $r_i=0$ must exist. This is a $7$ \texttt{bits} state. By construction, $T^{\text{Pent}_1^{(12)}, \text{Pent}_2^{(12)}}(t)=\exp(t\,G^{\text{Pent}_1^{(12)}, \text{Pent}_2^{(12)}})$ leaves the components $r_{x_100\cdots0}=r_{0x_20\cdots0}=r_{xx0\cdots0}=r_{00\mu_3\cdots\mu_N}=+1$ invariant. Consequently, $T^{\text{Pent}_1^{(12)}, \text{Pent}_2^{(12)}}(t)$ must also leave $r_{x_10\mu_3\cdots\mu_N}=r_{0x_2\mu_3\cdots\mu_N}=r_{xx\mu_3\cdots\mu_N}=+1$ invariant since these components are implied by $r_{x_100\cdots0}=r_{0x_20\cdots0}=r_{xx0\cdots0}=r_{00\mu_3\cdots\mu_N}=+1$. Furthermore, since time evolution cannot change the total information, also $r_i=0$ for all other components must be preserved. That is, $T^{\text{Pent}_1^{(12)}, \text{Pent}_2^{(12)}}(t)$ must leave this state invariant for all $t$. The above arguments and their conclusion are independent of the signs of the non-zero Bloch vector components. In other words, the time evolution must leave for example the following two states also invariant\footnote{As before, the XNOR properties dictate the sign of the other non-zero Bloch components as (1) $r_{x_10\mu_3\cdots\mu_N}=1,r_{0x_2\mu_3\cdots\mu_N}=r_{xx\mu_3\cdots\mu_N}=r_{xx0\cdots0}=-1$ and (2) $r_{x_10\mu_3\cdots\mu_N}=r_{xx\mu_3\cdots\mu_N}=r_{xx0\cdots0}=-1,r_{0x_2\mu_3\cdots\mu_N}=1$. The remaining components are $r_i=0$.}: (1) $r_{x_100\cdots0}=r_{00\mu_3\cdots\mu_N}=+1,r_{0x_20\cdots0}=-1$ and (2) $r_{x_100\cdots0}=-1,r_{0x_20\cdots0}=r_{00\mu_3\cdots\mu_N}=+1$. This is only possible if 
\ba
M^{\text{Pent}_1, \text{Pent}_2}_{(x_10)(x_10)}((\mu_3,\ldots,\mu_N);t)=M^{\text{Pent}_1, \text{Pent}_2}_{(0x_2)(0x_2)}((\mu_3,\ldots,\mu_N);t)= M^{\text{Pent}_1, \text{Pent}_2}_{(xx)(xx)}((\mu_3,\ldots,\mu_N);t)=1\nn
\ea
and 
\ba
M^{\text{Pent}_1, \text{Pent}_2}_{(\mu_1\mu_2)(x_10)}((\mu_3,\ldots,\mu_N);t)\equiv M^{\text{Pent}_1, \text{Pent}_2}_{(\mu_1\mu_2)(0x_2)}((\mu_3,\ldots,\mu_N);t)\equiv M^{\text{Pent}_1, \text{Pent}_2}_{(\mu_1\mu_2)(xx)}((\mu_3,\ldots,\mu_N);t)\equiv0\nn
\ea
for all $t$ and whenever $(\mu_1\mu_2)$ is neither of $(x_10),(0x_2),(xx)$ respectively. But this is only possible if $\tilde{G}^{\text{Pent}_1, \text{Pent}_2}_{(\mu_1\mu_2)(x_10)}(\mu_3,\ldots,\mu_N)\equiv \tilde{G}^{\text{Pent}_1, \text{Pent}_2}_{(\mu_1\mu_2)(0x_2)}(\mu_3,\ldots,\mu_N)\equiv \tilde{G}^{\text{Pent}_1, \text{Pent}_2}_{(\mu_1\mu_2)(xx)}(\mu_3,\ldots,\mu_N)\equiv0$ for all $\mu_1,\mu_2$.

By means of an analogous state one can show similarly that $\tilde{G}^{\text{Pent}_1, \text{Pent}_2}_{(\mu_1\mu_2)(yy)}(\mu_3,\ldots,\mu_N)\equiv \tilde{G}^{\text{Pent}_1, \text{Pent}_2}_{(\mu_1\mu_2)(zz)}(\mu_3,\ldots,\mu_N)\equiv \tilde{G}^{\text{Pent}_1, \text{Pent}_2}_{(\mu_1\mu_2)(yz)}(\mu_3,\ldots,\mu_N)\equiv \tilde{G}^{\text{Pent}_1, \text{Pent}_2}_{(\mu_1\mu_2)(zy)}(\mu_3,\ldots,\mu_N)\equiv0$ for all $\mu_1,\mu_2$.

One argues in complete analogy for all other $\tilde{G}^{\text{Pent}_a,\text{Pent}_b}$. Using the anti-symmetry of $\tilde{G}$ one finds the claimed result.
\end{proof}

We have thus shown that $\tilde{G}^{\text{Pent}_a, \text{Pent}_b}_{(\mu_1\mu_2)(\nu_1\nu_2)}(\mu_3,\ldots,\mu_N)$ could only be non-zero if both questions $Q_{\mu_1\mu_2},Q_{\nu_1\nu_2}$ are among the eight questions whose information content is swapped under the swaps corresponding to ${G}^{\text{Pent}_a, \text{Pent}_b}$. We shall now strengthen this result further.

\begin{lemma}\label{lem_8}
\leavevmode
$\tilde{G}^{\text{Pent}_a, \text{Pent}_b}_{(\mu_1\mu_2)(\nu_1\nu_2)}(\mu_3,\ldots,\mu_N)\equiv{G}^{\text{Pent}_a, \text{Pent}_b}_{(\mu_1\mu_2)(\nu_1\nu_2)}$ for all $(\mu_3\cdots\mu_N)$, where ${G}^{\text{Pent}_a, \text{Pent}_b}_{(\mu_1\mu_2)(\nu_1\nu_2)}$ is one of the 15 two-qubit swap generators (\ref{swapgen1}, \ref{swapgen2}), and we define $G^{\text{Pent}_a,\text{Pent}_b}_{(00)(\nu_1\nu_2)}:=0=:G^{\text{Pent}_a,\text{Pent}_b}_{(\mu_1\mu_2)(00)}$.
\end{lemma}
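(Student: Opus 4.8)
By Lemma \ref{lem_6} the block $g^{\text{Pent}_a^{(12)},\text{Pent}_b^{(12)}}$ is already known to be diagonal in the tail indices, so that for each fixed $\vec{\kappa}:=(\mu_3,\ldots,\mu_N)\neq(0\cdots0)$ its action is governed by a single $16\times16$ matrix $\tilde{G}^{\text{Pent}_a,\text{Pent}_b}_{(\mu_1\mu_2)(\nu_1\nu_2)}(\vec{\kappa})$, with vanishing $(00)$ row and column. By Lemma \ref{lem_7} the only entries of this matrix which can be nonzero lie among the eight questions genuinely swapped by the two-qubit generator $\bar{G}:=G^{\text{Pent}_a,\text{Pent}_b}$ of (\ref{swapgen1}, \ref{swapgen2}); hence $\tilde{G}(\vec{\kappa})$ has exactly the same four swap pairs (the same sparsity pattern) as $\bar{G}$. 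What remains is to fix the four nonzero magnitudes and signs and to show their independence of $\vec{\kappa}$. The plan is to lock $\tilde{G}(\vec{\kappa})$ to $\bar{G}$ by lifting the XNOR correlation argument used in the $N=2$ sign-fixing of appendix \ref{app_swap} to the $N$-gbit setting.

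\textbf{Seed states and their evolution.} Fix a nonzero tail $\vec{\kappa}$ and pick an arbitrary legal two-qubit state $\vec{s}\in\Sigma_2$ on gbits $(12)$. Consider the $N$-gbit state $\vec{r}$ with $r_{\mu_1\mu_2 0\cdots0}=s_{\mu_1\mu_2}$, $r_{00\vec{\kappa}}=+1$, $r_{\mu_1\mu_2\vec{\kappa}}=s_{\mu_1\mu_2}$, and all other components zero. Since $Q_{00\vec{\kappa}}$ is a question on gbits $3,\ldots,N$ and is therefore compatible with every gbit-$(12)$ question, Specker's principle (theorem 3.1 of \cite{Hoehn:2014uua}) guarantees that this product state is legal; moreover the XNOR identity $Q_{\mu_1\mu_2\vec{\kappa}}=Q_{\mu_1\mu_2 0\cdots0}\leftrightarrow Q_{00\vec{\kappa}}$ together with $r_{00\vec{\kappa}}=+1$ is precisely what forces $r_{\mu_1\mu_2\vec{\kappa}}=s_{\mu_1\mu_2}$. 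Now apply $T^{(12)}(t)=\exp(t\,G^{(12)})$. The block form (\ref{gennl2}) leaves $r'_{00\vec{\kappa}}=+1$ invariant, evolves the 15 two-qubit components by $\exp(t\,\bar{G})$, and by Lemma \ref{lem_6} evolves the tail-$\vec{\kappa}$ composites among themselves by $\exp(t\,\tilde{G}(\vec{\kappa}))$.

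\textbf{Comparison.} Because $T^{(12)}\in\ct_N$ maps legal states to legal states and $r'_{00\vec{\kappa}}=+1$ remains certain, the evolved state must again obey the correlation structure, i.e.\ $r'_{\mu_1\mu_2\vec{\kappa}}=r'_{\mu_1\mu_2 0\cdots0}$ for all $t$. Expanding to linear order in $t$ gives $\big(\tilde{G}(\vec{\kappa})\,\vec{s}\big)_{(\mu_1\mu_2)}=\big(\bar{G}\,\vec{s}\big)_{(\mu_1\mu_2)}$ for every $\vec{s}\in\Sigma_2$. Since $\Sigma_2$ has nonempty interior (it contains the state of no information $\vec{0}$ in its interior) and hence spans $\mathbb{R}^{15}$, this linear identity forces $\tilde{G}^{\text{Pent}_a,\text{Pent}_b}_{(\mu_1\mu_2)(\nu_1\nu_2)}(\vec{\kappa})=\bar{G}_{(\mu_1\mu_2)(\nu_1\nu_2)}=G^{\text{Pent}_a,\text{Pent}_b}_{(\mu_1\mu_2)(\nu_1\nu_2)}$ for every $(\mu_1\mu_2),(\nu_1\nu_2)$, which is the claim, the $\vec{\kappa}$-independence being automatic since the argument produces $\bar{G}$ for each $\vec{\kappa}$. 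Substituting this into (\ref{gennl2c}) and combining with the trivially correct all-zero-tail block $\bar{G}$ and the invariant third block then yields (\ref{psu4adj}).

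\textbf{Main obstacle.} The delicate part is the lifting of the correlation-structure argument rather than the linear algebra: one must establish that the product seed states are genuinely legal $N$-gbit states — invoking compatibility of disjoint-gbit questions and Specker's principle to exclude hidden obstructions such as monogamy of entanglement — and that legality, and hence the conditional XNOR constraint $r_{\mu_1\mu_2\vec{\kappa}}=r_{\mu_1\mu_2 0\cdots0}$ given $r_{00\vec{\kappa}}=+1$, is genuinely preserved by $T^{(12)}$. Once this is in place, the remaining ingredients (the sparsity inherited from Lemma \ref{lem_7}, the spanning of $\mathbb{R}^{15}$ by $\Sigma_2$, and the differentiation at $t=0$) are routine.
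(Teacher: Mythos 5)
Your proof is essentially correct, but it takes a genuinely different and more economical route than the paper's. The paper proceeds in two stages: step (a) eliminates the ``cross'' entries of $\tilde{G}^{\text{Pent}_a,\text{Pent}_b}(\mu_3,\ldots,\mu_N)$ (e.g.\ $\tilde{G}_{(yx)(zx)}$, which couples two of the eight swapped questions that are \emph{not} paired by $\bar{G}$) using tailored definite-answer states, the pentagon equalities and saturation of the complementarity inequalities (\ref{compstrong}); step (b) then equates the surviving entries with those of $\bar{G}$ via conservation of total information, again saturation of complementarity inequalities, and fixes the residual signs by continuity together with the XNOR constraint invoked only at points where the relevant components equal $\pm1$. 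You instead obtain the full matrix identity $\tilde{G}(\vec{\kappa})=\bar{G}$ in one stroke: embedding a two-qubit state alongside a certainly-answered $Q_{00\vec{\kappa}}$, you use the conditional reading of the correlation structure ($y_{00\vec{\kappa}}=1$ forces $r_{\mu_1\mu_2\vec{\kappa}}=r_{\mu_1\mu_2 0\cdots0}$ at the level of \emph{probabilities}, not just at certainty) to force the $\vec{\kappa}$-block to evolve identically to the two-qubit block, then differentiate at $t=0$ and span. This buys brevity and makes the $\vec{\kappa}$-independence manifest; what it costs is a heavier reliance on the probabilistic form of the XNOR constraint, which the paper does endorse (cf.\ the footnote at the end of section \ref{sec_infodist}: certainty about one question of a correlation triangle implies equality of the probabilities of the other two, by classical inference on mutually compatible questions) but deliberately uses only at $\pm1$ values in its own appendix proof.

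Two caveats. First, you over-read Lemma \ref{lem_7}: it only kills entries of $\tilde{G}(\vec{\kappa})$ involving a question \emph{preserved} by the swap; it does not give ``the same four swap pairs'' as $\bar{G}$ --- eliminating the remaining cross entries among the eight swapped questions is precisely part (a) of the paper's proof. Since your main argument never actually uses this sparsity claim, the mis-citation is harmless. Second, Specker's principle only guarantees your seed states for \emph{definite} ($0,\pm1$) two-qubit blocks $\vec{s}$, not for arbitrary $\vec{s}\in\Sigma_2$; at this stage of the reconstruction $\Sigma_N$ is not yet available, so one cannot simply posit product states, and appealing to the interior of $\Sigma_2$ does not by itself certify the $N$-gbit embedding. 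This is easily repaired: the definite seeds alone already span $\mathbb{R}^{15}$ (for each direction pair $(i_1,j_2)$ the four sign configurations generate $\vec{\delta}_{i_1},\vec{\delta}_{j_2},\vec{\delta}_{ij}$), or one can take convex combinations of them, which preserve your template and the condition $r_{00\vec{\kappa}}=+1$. With these repairs the argument goes through and reproduces the lemma, including the definition $\tilde{G}_{(00)(\nu_1\nu_2)}=0=\tilde{G}_{(\mu_1\mu_2)(00)}$.
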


\begin{proof}
For concreteness, consider, again, $\tilde{G}^{\text{Pent}_1, \text{Pent}_2}$. \\~
{\bf(a)} We firstly argue that $\tilde{G}^{\text{Pent}_1, \text{Pent}_2}_{(\mu_1\mu_2)(\nu_1\nu_2)}(\mu_3,\ldots,\mu_N)=0$ if ${G}^{\text{Pent}_1, \text{Pent}_2}_{(\mu_1\mu_2)(\nu_1\nu_2)}=0$. To this end, consider a state with $r_{xy0\cdots0}=r_{zx0\cdots0}=r_{zx\mu_3\cdots\mu_N}=+1$ for $(\mu_3\cdots\mu_N)\neq(0\cdots0)$. Such a state must exist since $Q_{xy0\cdots0},Q_{zx0\cdots0},Q_{zx\mu_3\cdots\mu_N}$ are compatible and pairwise independent. (Recall that two questions are compatible iff they disagree in an even number (including zero) of non-zero indices \cite{Hoehn:2014uua}.) By theorem 3.1 in \cite{Hoehn:2014uua} (`Specker's principle'), these are also mutually compatible such that a state must exist in which the answers to these questions are fully known to $O$. Furthermore, since by figure \ref{fig_corr} $Q_{xy}\leftrightarrow Q_{zx}=\neg Q_{yz}$ we must also have $r_{yz}=r_{yz\mu_3\cdots\mu_N}=-1$ and, similarly, $r_{00\mu_3\cdots\mu_N}=r_{xy\mu_3\cdots\mu_N}=+1$. For all other components, we may have $r_i=0$.

Consider now $T^{\text{Pent}^{(12)}_1, \text{Pent}^{(12)}_2}(t)=\exp(t\,G^{\text{Pent}^{(12)}_1, \text{Pent}^{(12)}_2})$ acting on this state. By construction, $r_{yz}=r_{yz\mu_3\cdots\mu_N}=-1$ and $r_{00\mu_3\cdots\mu_N}=+1$ are left invariant (the first two since $Q_{yz}$ is contained in neither of $\text{Pent}_1,\text{Pent}_2$ and thanks to lemma \ref{lem_7}). Furthermore, it follows from appendix \ref{app_swapmain} that $T^{\text{Pent}^{(12)}_1, \text{Pent}^{(12)}_2}(t)$ preserves the pentagon identities (\ref{pentin}) at the two-qubit level. Given that $T^{\text{Pent}^{(12)}_1, \text{Pent}^{(12)}_2}(t)$ transfers information within the pairs $Q_{xy0\cdots0},Q_{0z_20\cdots0}$ and $Q_{zx0\cdots0},Q_{y_100\cdots0}$ (see figure \ref{fig:pentagon}) and given the state above, it is clear that 
\ba
r_{0z_20\cdots0}^2(t)+r_{zx0\cdots0}^2(t)=1\label{xyz}
\ea
must thus hold for all $t\in\mathbb{R}$ under $T^{\text{Pent}^{(12)}_1, \text{Pent}^{(12)}_2}(t)$ acting on our initial state. 

Next, we note that $Q_{0z_20\cdots0},Q_{zx0\cdots0},Q_{yx\mu_3\cdots\mu_N}$ form a mutually complementary set. Hence, by (\ref{compstrong}), it must always hold $r_{0z_20\cdots0}^2(t)+r_{zx0\cdots0}^2(t)+r_{yx\mu_3\cdots\mu_N}(t)\leq 1$ and thanks to (\ref{xyz}) therefore also $r_{yx\mu_3\cdots\mu_N}(t)=0$ for all $t\in\mathbb{R}$. Given the behaviour of our state under $T^{\text{Pent}^{(12)}_1, \text{Pent}^{(12)}_2}(t)$, by lemma \ref{lem_6} we must therefore have
\ba
r_{yx\mu_3\cdots\mu_N}(t)&=&M^{\text{Pent}_1, \text{Pent}_2}_{(yx)(zx)}((\mu_3,\ldots,\mu_N);t)\,r_{zx\mu_3\cdots\mu_N}+M^{\text{Pent}_1, \text{Pent}_2}_{(yx)(xy)}((\mu_3,\ldots,\mu_N);t)\,r_{xy\mu_3\cdots\mu_N}\nn\\
&=&M^{\text{Pent}_1, \text{Pent}_2}_{(yx)(zx)}((\mu_3,\ldots,\mu_N);t)+M^{\text{Pent}_1, \text{Pent}_2}_{(yx)(xy)}((\mu_3,\ldots,\mu_N);t)\overset{!}{=}0,\q\q\forall\,t\in\mathbb{R}.\nn
\ea
Repeating the same steps with the initial state $r_{xy0\cdots0}=r_{zx\mu_3\cdots\mu_N}=r_{yz0\cdots0}=r_{yz\mu_3\cdots\mu_N}=+1$, $r_{00\mu_3\cdots\mu_N}=r_{xy\mu_3\cdots\mu_N}=-1$ (and all other $r_i=0$), one concludes that also 
\ba
M^{\text{Pent}_1, \text{Pent}_2}_{(yx)(zx)}((\mu_3,\ldots,\mu_N);t)-M^{\text{Pent}_1, \text{Pent}_2}_{(yx)(xy)}((\mu_3,\ldots,\mu_N);t)\overset{!}{=}0,\q\q\forall\,t\in\mathbb{R},\nn
\ea
such that 
\ba
M^{\text{Pent}_1, \text{Pent}_2}_{(yx)(zx)}((\mu_3,\ldots,\mu_N);t)=M^{\text{Pent}_1, \text{Pent}_2}_{(yx)(xy)}((\mu_3,\ldots,\mu_N);t)\overset{!}{=}0,\q\q\forall\,t\in\mathbb{R}.\nn
\ea
But this can only be true if also 
\ba
\tilde{G}^{\text{Pent}_1, \text{Pent}_2}_{(yx)(zx)}(\mu_3,\ldots,\mu_N)=\tilde{G}^{\text{Pent}_1, \text{Pent}_2}_{(yx)(xy)}(\mu_3,\ldots,\mu_N)=0.\nn
\ea
These components also vanish for $G^{\text{Pent}_1,\text{Pent}_2}$ at the two-qubit level (\ref{swapgen1}). By complete analogy one shows that also for all other cases $\tilde{G}^{\text{Pent}_1, \text{Pent}_2}_{(\mu_1\mu_2)(\nu_1\nu_2)}(\mu_3,\ldots,\mu_N)=0$ if ${G}^{\text{Pent}_1, \text{Pent}_2}_{(\mu_1\mu_2)(\nu_1\nu_2)}=0$. 
\\~
{\bf(b)} Secondly, we now show $\tilde{G}^{\text{Pent}_1, \text{Pent}_2}_{(\mu_1\mu_2)(\nu_1\nu_2)}(\mu_3,\ldots,\mu_N)\equiv{G}^{\text{Pent}_1, \text{Pent}_2}_{(\mu_1\mu_2)(\nu_1\nu_2)}$. For this purpose, consider again the state above. Under (a) we have just shown that $\tilde{G}^{\text{Pent}_1, \text{Pent}_2}_{(\mu_1\mu_2)(\nu_1\nu_2)}(\mu_3,\ldots,\mu_N)\neq0$ is only possible if ${G}^{\text{Pent}_1, \text{Pent}_2}_{(\mu_1\mu_2)(\nu_1\nu_2)}\neq0$. This means that $M^{\text{Pent}_1, \text{Pent}_2}((\mu_3,\ldots,\mu_N);t)=\exp(t\,\tilde{G}^{\text{Pent}_1, \text{Pent}_2})$ could at most transfer information within the pairs $(Q_{y_10\mu_3\cdots\mu_N},Q_{zx\mu_3\cdots\mu_N})$, $(Q_{xy\mu_3\cdots\mu_N},Q_{0z_2\mu_3\cdots\mu_N}),$ $(Q_{z_10\mu_3\cdots\mu_N},Q_{yx\mu_3\cdots\mu_N})$ and $(Q_{xz\mu_3\cdots\mu_N},Q_{0y_2\mu_3\cdots\mu_N})$ for $(\mu_3\cdots\mu_N)\neq(0\cdots0)$. But since the total information must be preserved this implies that
\ba
r^2_{xy\mu_3\cdots\mu_N}(t)+r^2_{0z_2\mu_3\cdots\mu_N}(t)=1,\q\q\forall\,t\in\mathbb{R},\label{xyza}
\ea
must hold for $\vec{r}(t)=T^{\text{Pent}^{(12)}_1, \text{Pent}^{(12)}_2}(t)\,\vec{r}(0)$, where $\vec{r}(0)$ is our initial state above. Similarly, from the pentagon equalities (\ref{pentin}) it follows for the time evolution of this state that also
\ba
r^2_{xy0\cdots0}(t)+r^2_{0z_20\cdots0}(t)=1,\q\q\forall\,t\in\mathbb{R}.\label{xyzb}
\ea
From the complementarity inequalities (\ref{compstrong}) it must also hold
\ba
r_{xy0\cdots0}^2(t)+r_{0z_2\mu_3\cdots\mu_N}^2(t)\leq1,\q\q\q r^2_{0z_20\cdots0}(t)+r^2_{xy\mu_3\cdots\mu_N}(t)\leq1,\q\q\forall\,t\in\mathbb{R}.\nn
\ea
From adding up (\ref{xyza}, \ref{xyzb}) it, in fact, follows, that these inequalities must be saturated:
\ba
r_{xy0\cdots0}^2(t)+r_{0z_2\mu_3\cdots\mu_N}^2(t)=1,\q\q\q r^2_{0z_20\cdots0}(t)+r^2_{xy\mu_3\cdots\mu_N}(t)=1,\q\q\forall\,t\in\mathbb{R}.\nn
\ea
This implies that for the time evolution of our initial state,
\ba
\left(\begin{array}{c}r_{0z_20\cdots0}(t) \\r_{xy0\cdots0}(t)\end{array}\right)=\left(\begin{array}{c}s_1\,r_{0z_2\mu_3\cdots\mu_N}(t) \\s_2\,r_{xy\mu_3\cdots\mu_N}(t)\end{array}\right),\q\q\forall\,t\in\mathbb{R},\nn
\ea
where $s_1,s_2$ are two signs to be determined. From the state at $t=0$, however, we know that $s_2=+1$. Furthermore, we noted above that $r_{00\mu_3\cdots\mu_N}=+1$ is invariant under $T^{\text{Pent}^{(12)}_1, \text{Pent}^{(12)}_2}(t)$. But this implies that whenever $r_{0z_20\cdots0}(t)=\pm1$, we must also have $r_{0z_2\mu_3\cdots\mu_N}=\pm1$ since $Q_{0z_2\mu_3\cdots\mu_N}=Q_{0z_20\cdots0}\leftrightarrow Q_{00\mu_3\cdots\mu_N}$. This entails also $s_1=+1$ and therefore
\ba
\left(\begin{array}{c}r_{0z_20\cdots0}(t) \\r_{xy0\cdots0}(t)\end{array}\right)=\left(\begin{array}{c}r_{0z_2\mu_3\cdots\mu_N}(t) \\r_{xy\mu_3\cdots\mu_N}(t)\end{array}\right),\q\q\forall\,t\in\mathbb{R}.\nn
\ea
This is only possible if, indeed, $\tilde{G}^{\text{Pent}_1, \text{Pent}_2}_{(xy)(0z_2)}(\mu_3,\ldots,\mu_N)\equiv{G}^{\text{Pent}_1, \text{Pent}_2}_{(xy)(0z_2)}$ for {\it all} values of the indices $\mu_3,\ldots,\mu_N$. By completely analogous reasoning, it follows for all other components that $\tilde{G}^{\text{Pent}_1, \text{Pent}_2}_{(\mu_1\mu_2)(\nu_1\nu_2)}(\mu_3,\ldots,\mu_N)\equiv{G}^{\text{Pent}_1, \text{Pent}_2}_{(\mu_1\mu_2)(\nu_1\nu_2)}$. This implies that $\tilde{G}^{\text{Pent}_1, \text{Pent}_2}_{(\mu_1\mu_2)(\nu_1\nu_2)}(\mu_3,\ldots,\mu_N)$ only depends on its indices $(\mu_1\mu_2)$ and $(\nu_1\nu_2)$ and thus it can be interpreted as a proper $16\times16$ matrix.

Finally, using similar states and arguments, one shows that, in generality the above also holds for the other pair of pentagon indices, $\tilde{G}^{\text{Pent}_a, \text{Pent}_b}_{(\mu_1\mu_2)(\nu_1\nu_2)}(\mu_3,\ldots,\mu_N)\equiv{G}^{\text{Pent}_a, \text{Pent}_b}_{(\mu_1\mu_2)(\nu_1\nu_2)}$, for all $a,b=1,\ldots,6$.
\end{proof}

Lemmas \ref{lem_6}--\ref{lem_8}, together with (\ref{gennl2d}, \ref{gennl2c}), thus indeed give the desired result (\ref{psu4adj}). It is also clear that (\ref{psu4adj}) generate a (reducible) representation of $\ct_2^{(12)}\simeq\rm{PSU}(4)$ on $\mathbb{R}^{4^N-1}$.

\subsubsection{Quantum theory generators of pairwise unitaries for $N>2$ qubits in the adjoint representation}\label{app_QTswap}

Here we shall argue that in the adjoint representation, the fundamental generators of the $\PSU(4)$ subgroup of $\PSU(2^N)$ that involves all time evolutions of the subsystem made up of qubits $1$ and $2$ are of the following form:
\begin{eqnarray}
G^{(\omega_1\omega_2 0\cdots0)}_{(\mu_1\cdots\mu_N)(\nu_1\cdots\nu_N)}:&=&f^{(\omega_1\omega_2 0\cdots0)}_{(\mu_1\cdots\mu_N)(\nu_1\cdots\nu_N)}=\f{1}{2^N}\,\text{tr}[[\sigma_{\omega_1\omega_2 0\cdots0},\sigma_{\mu_1\cdots\mu_N}]\,\sigma_{\nu_1\cdots\nu_N}] \nonumber\\
&=& f^{(\omega_1\omega_2)}_{(\mu_1\mu_2)(\nu_1\nu_2)}\delta_{\mu_3\nu_3}\cdots\delta_{\mu_N\nu_N}, \label{eq:comalgN>2} 
\end{eqnarray}
where $f^{(\omega_1\omega_2)}_{(\mu_1\mu_2)(\nu_1\nu_2)}$ are the generators of $\PSU(4)$ in the adjoint representation corresponding to the $4\times4$ Pauli operators as given in (\ref{eq:comalg}). The above generalizes trivially to the generators of the $\PSU(4)$ time evolution subgroup of the subsystem of any pair of qubits $i$ and $j$. The Pauli operators are $\sigma_{\mu_1\cdots\mu_N}=(\sigma_{\mu_1}\otimes\cdots\otimes\sigma_{\mu_N})$ and satisfy $\text{tr}[\sigma_{\mu_1\cdots\mu_N}\cdot\sigma_{\nu_1\cdots\nu_N}]=2^N\delta_{\mu_1\nu_1}\cdots\delta_{\mu_N\nu_N}$. Working out the trace in (\ref{eq:comalgN>2}) and using the tensor property $\text{tr}[A\otimes B]=\text{tr}[A]\text{tr}[B]$ results in:
\begin{gather}
\f{1}{2^N}\,\text{tr}[(\sigma_{\omega_1}\cdot\sigma_{\mu_1}\cdot\sigma_{\nu_1})\otimes(\sigma_{\omega_2}\cdot\sigma_{\mu_2}\cdot\sigma_{\nu_2})\otimes(\sigma_{\mu_3}\cdot\sigma_{\nu_3})\otimes\cdots\otimes(\sigma_{\mu_N}\cdot\sigma_{\nu_N}) \nonumber\\
-(\sigma_{\mu_1}\cdot\sigma_{\omega_1}\cdot\sigma_{\nu_1})\otimes(\sigma_{\mu_2}\cdot\sigma_{\omega_2}\cdot\sigma_{\nu_2})\otimes(\sigma_{\mu_3}\cdot\sigma_{\nu_3})\otimes\cdots\otimes(\sigma_{\mu_N}\cdot\sigma_{\nu_N})] \nonumber\\
=\f{1}{2^N}\,\text{tr}[(\sigma_{\omega_1}\cdot\sigma_{\mu_1}\cdot\sigma_{\nu_1})\otimes(\sigma_{\omega_2}\cdot\sigma_{\mu_2}\cdot\sigma_{\nu_2})-(\sigma_{\mu_1}\cdot\sigma_{\omega_1}\cdot\sigma_{\nu_1})\otimes(\sigma_{\mu_2}\cdot\sigma_{\omega_2}\cdot\sigma_{\nu_2})] \nonumber\\
\times\text{tr}[\sigma_{\mu_3}\cdot\sigma_{\nu_3}]\cdots\text{tr}[\sigma_{\mu_N}\cdot\sigma_{\nu_N}] \nonumber\\
=\f{1}{2^N}\,(4f^{(\omega_1\omega_2)}_{(\mu_1\mu_2)(\nu_1\nu_2)})(2\delta_{\mu_3\nu_3})\cdots(2\delta_{\mu_N\nu_N})=f^{(\omega_1\omega_2)}_{(\mu_1\mu_2)(\nu_1\nu_2)}\delta_{\mu_3\nu_3}\cdots\delta_{\mu_N\nu_N}.\nn
\end{gather}

We noted before in appendix \ref{app_swap} that the two-qubit adjoint generators $(G^{(\omega_1\omega_2)})_{(\mu_1\mu_2)(\nu_1\nu_2)}$ $:=f^{(\omega_1\omega_2)}_{(\mu_1\mu_2)(\nu_1\nu_2)}$ of quantum theory coincide with the swap generators (\ref{swapgen1}, \ref{swapgen2}) of the reconstruction. Using the correspondence $Q_{\mu_1\mu_2}\longleftrightarrow \sigma_{\mu_1\mu_2}:=\sigma_{\mu_1}\otimes\sigma_{\mu_2}$ with $\sigma_0=\mathds{1}$, the ordering of coincidence was such that $G^{(\omega_1\omega_2)}\equiv\pm G^{\text{Pent}_a,\text{Pent}_b}$ where $Q_{\omega_1\omega_2}$ is the unique question in $\text{Pent}_a\cap\text{Pent}_b$ left invariant by the swap.\footnote{In appendix \ref{app_swap} we still used the distinct but equivalent index notation with $i,j$ labeling the questions. However, the equivalence is immediate by identifying $i:=\omega_1\omega_2$.}

But this immediately implies that also (\ref{eq:comalgN>2}) coincide with the reconstructed $\ct_2^{(12)}=\rm{PSU}(4)$ generators (\ref{psu4adj}) (see also appendix \ref{app_swapnl2}). Namely, the ordering of coincidence is such that, firstly, $Q_{\mu_1\mu_20\cdots0}$ corresponds to $\sigma_{\mu_1\mu_20\cdots0}:=\sigma_{\mu_1}\otimes\sigma_{\mu_2}\otimes\mathds{1}\otimes\cdots\otimes\mathds{1}$ and, secondly, $G^{\text{Pent}_a^{(1,2)},\text{Pent}_b^{(1,2)}}$ coincides with the adjoint representation of $\sigma_{\mu_1\mu_20\cdots0}$ corresponding to the unique question $Q_{\mu_1\mu_20\cdots0}$ in $\text{Pent}_a^{(12)}\cap\text{Pent}_b^{(12)}$. 

\subsubsection{Evolving to product states for $N>2$ in the reconstruction}\label{app_prodnl2}

Also for $N>2$ all candidate pure states can be evolved to a product form.
\begin{lemma}
Using the time evolution group $\ct_N\simeq \PSU(2^N)$, any $N$ gbit pure state $\vec{r}$ can be transformed to a state with information distribution $\alpha_{z_1}=\cdots=\alpha_{z_1\cdots z_N}=1$ \texttt{bit} and all remaining questions in the informationally complete set $\cq_{M_N}$ carrying zero \texttt{bits}.
\end{lemma}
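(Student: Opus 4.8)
The plan is to prove the statement by induction on $N$, using the two-qubit result Lemma \ref{lem_product} as the base case and exploiting two structural facts already established: that $\ct_N\simeq\PSU(2^N)$ contains a copy of the pairwise group $\ct_2^{(ij)}\simeq\PSU(4)$ for every pair and of the local group $\SO(3)$ for every single qubit (section \ref{sec_psunl2}), and that the pairwise generators act in the tensor-product form (\ref{psu4adj}), i.e.\ as the $N=2$ swap generators on the two relevant indices and as the identity on all remaining ones. Crucially, the argument must stay constructive and must not presuppose that the candidate $\vec r$ is already a quantum state, since that is precisely what this lemma is later used to establish; so transitivity of $\PSU(2^N)$ on $\mathbb{CP}^{2^N-1}$ cannot be invoked directly.

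The inductive step splits into two parts. \emph{Part A} is to evolve $\vec r$ so that $\alpha_{z_1}=1$ \texttt{bit}. Since $\{Q_{x_1},Q_{y_1},Q_{z_1}\}$ is a mutually complementary set, the complementarity inequality (\ref{compstrong}) gives the ceiling $\alpha_{x_1}+\alpha_{y_1}+\alpha_{z_1}\le 1$ \texttt{bit} in every state; once this ceiling is saturated, a local rotation $\SO(3)^{(1)}$ rotates the three-vector $(r_{x_1},r_{y_1},r_{z_1})$ onto the $z$-axis, giving $\alpha_{z_1}=1$ \texttt{bit} and $\alpha_{x_1}=\alpha_{y_1}=0$. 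To reach the ceiling I would run a disentangling sweep: apply the pairwise evolutions $\ct_2^{(1j)}$, $j=2,\ldots,N$, each of which, by Lemma \ref{lem_product} lifted through (\ref{psu4adj}), transfers correlation information involving qubit $1$ into qubit-$1$-individual information, until the individual information of qubit $1$ attains its maximum of $1$ \texttt{bit}.

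\emph{Part B} is the clean reduction. Once $r_{z_1}=\pm1$, every question complementary to $Q_{z_1}$ — precisely those $Q_{\mu_1\mu_2\cdots\mu_N}$ with $\mu_1\in\{x,y\}$ — carries $0$ \texttt{bits}, while for each $(\mu_2\cdots\mu_N)$ the XNOR relation $Q_{z\mu_2\cdots\mu_N}=Q_{z_1}\leftrightarrow Q_{0\mu_2\cdots\mu_N}$ together with $r_{z_1}=\pm1$ forces the bulk/boundary tying $\alpha_{z\mu_2\cdots\mu_N}=\alpha_{0\mu_2\cdots\mu_N}$ (the generalization of figure \ref{fig:1bitquestion}). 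Hence the remaining $2^N-2$ \texttt{bits} are shared symmetrically by the pairs $(Q_{0\mu_2\cdots\mu_N},Q_{z\mu_2\cdots\mu_N})$, so the $2^{N-1}-1$ reduced questions $\{Q_{0\mu_2\cdots\mu_N}\}$ alone carry $2^{N-1}-1$ \texttt{bits} and thus constitute a candidate pure state of the $(N-1)$-qubit subsystem $\{2,\ldots,N\}$, inheriting its complementarity inequalities and correlation structure. The subgroup $\ct_{N-1}\simeq\PSU(2^{N-1})$ generated by the pairwise and local evolutions among qubits $2,\ldots,N$ lies in $\ct_N$, fixes $r_{z_1}$ (it acts as $\mathds{1}$ on the first tensor factor) and preserves both the correlation structure and the tying; by the inductive hypothesis it brings the reduced state to product form, placing $1$ \texttt{bit} in every $z$-type question $Q_{0z_{i_2}\cdots z_{i_m}}$ on qubits $2,\ldots,N$ and $0$ elsewhere. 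The tying then automatically places $1$ \texttt{bit} in each $Q_{z_1z_{i_2}\cdots z_{i_m}}$, and together with $\alpha_{z_1}=1$ \texttt{bit} this yields exactly the target configuration $\alpha_{z_1}=\cdots=\alpha_{z_1\cdots z_N}=1$ \texttt{bit}.

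The main obstacle is Part A: showing that the ceiling $\alpha_{x_1}+\alpha_{y_1}+\alpha_{z_1}=1$ \texttt{bit} is genuinely \emph{attainable} on the orbit, i.e.\ that qubit $1$ can always be disentangled from the rest using only the available pairwise and local evolutions, without circularly assuming that $\vec r$ is quantum. I expect this to require an explicit finite disentangling algorithm modeled on the pentagon-swap concentration of Lemma \ref{lem_product}, verifying at each step via the complementarity inequalities (\ref{compstrong}) that information funneled into the qubit-$1$ questions cannot leak back out once the third question of each relevant complementary triple is held invariant — exactly the mechanism by which the $N=2$ concentration was controlled.
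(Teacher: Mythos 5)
You concede that Part A is the weak point, but the problem is worse than a missing computation: the mechanism you plan to imitate does not exist at level $N$. The two-qubit concentration in Lemma \ref{lem_product} is driven by the pentagon equalities (\ref{pentin}) being preserved under finite swaps, and that preservation requires the \emph{full} set (\ref{pure state}), including the 15 conservation equations (\ref{15cons}) (the footnote in appendix \ref{app_geneq} exhibits a configuration satisfying all pentagon equalities but violating (\ref{15cons}), which a finite swap then evolves out of the pentagon-equality set). For $N>2$ no analogue of these conserved $1$ \texttt{bit} equalities is derived anywhere in the paper --- section \ref{sec_con} explicitly leaves their existence as an open conjecture --- so there is no level-$N$ conserved-charge structure to control your ``disentangling sweep''. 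Moreover, ``Lemma \ref{lem_product} lifted through (\ref{psu4adj})'' cannot be applied to the pair $(1j)$ of an entangled candidate: the 15 components $r_{\mu_1\mu_j0\cdots0}$ then form a sub-maximal-length block (for a GHZ-type candidate the $(1j)$ block carries only $1$ \texttt{bit}), violating the hypothesis of Lemma \ref{lem_product}, and the pairwise $\PSU(4)$ simultaneously reshuffles the components $r_{\mu_1\mu_j\mu_3\cdots\mu_N}$, so nothing guarantees information is funneled monotonically into qubit $1$'s individual questions. Part B has a parallel gap: you hand the base case a reduced two-qubit vector for which you verify only length $3$ \texttt{bits} and the complementarity inequalities; the pentagon equalities then follow by counting, but the conservation equations (\ref{15cons}), which Lemma \ref{lem_product} genuinely needs, are not established for the reduced vector.

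The premise motivating your detour is also mistaken: the paper's own proof in appendix \ref{app_prodnl2} is neither circular nor reliant on transitivity over pure quantum states. It contracts $\vec{r}$ with the Pauli basis to form the hermitian traceless matrix $\chi=\sum r_{\mu_1\cdots\mu_N}\,\sigma_{\mu_1}\otimes\cdots\otimes\sigma_{\mu_N}$, invokes the coincidence of the reconstructed $\ct_N$ action with the adjoint $\SU(2^N)$ action --- established in section \ref{sec_psunl2} and appendices \ref{app_swapnl2} and \ref{app_QTswap} \emph{before} and independently of the state-space determination --- and simply diagonalizes $\chi$ by some $U\in\SU(2^N)$. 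Since diagonal traceless hermitian matrices are spanned by the $2^N-1$ $z$-type Pauli operators, \emph{any} candidate $\vec{r}$, quantum or not, is evolved into the span of $\vec{\delta}_{z_1},\ldots,\vec{\delta}_{z_1\cdots z_N}$; the bound $\alpha_i\leq 1$ \texttt{bit} together with $|\vec{r}|^2=2^N-1$ \texttt{bits} then forces exactly the product configuration (or exposes $\vec{r}$ as illegal). This single linear-algebra step achieves globally what your induction attempts piecewise and sidesteps both of your gaps; I recommend adopting it, or, if you wish to salvage the induction, you would first have to prove the $N$-qubit analogue of the conserved informational charges, which is an open problem.
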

\begin{proof}
Consider the hermitian traceless matrix $\chi:=\sum_{\mu_i} r_{\mu_1\cdots\mu_N}\sigma_{\mu_1}\otimes\cdots\otimes\sigma_{\mu_N}$, where $r_{\mu_1\cdots\mu_N}$ are the Bloch vector components relative to our question basis. In section \ref{sec_n>2} and appendices \ref{app_swapnl2} and \ref{app_QTswap}, it was shown that the representation of $\ct_N=\rm{PSU}(2^N)$, written in the Bloch vector question basis, is exactly the adjoint representation of $\SU(2^N)$ relative to a basis of Pauli operators, which are themselves the generators of the fundamental representation of $\SU(2^N)$. The ordering of coincidence of the respective generators corresponds precisely to the pairing between $r_{\mu_1\cdots\mu_N}$ and $\sigma_{\mu_1\cdots\mu_N}:=\sigma_{\mu_1}\otimes\cdots\otimes\sigma_{\mu_N}$ in $\chi$. $\chi$ thus transforms as $\chi\rightarrow U\,\chi\, U^{\dagger}$ with $U\in \SU(2^N)$ in the fundamental representation whenever $\vec{r}\rightarrow T\cdot\vec{r}$ with $T\in\ct_N$. Since $\chi$ is hermitian it is possible to diagonalize it with some matrix $U\in\SU(2^N)$, i.e., such that $\chi'=\sum_{\mu_i}\,r_{\mu_1\cdots\mu_N}'\,\sigma_{\mu_1\cdots\mu_N}$ is diagonal and $\vec{r}\,'=T\cdot\vec{r}$ with $T\in\ct_N$. The Pauli operators $\vec{\sigma}$ form a basis of all hermitian matrices and therefore {\it only} those $r_{\mu_1\cdots\mu_N}'$ which multiply the diagonal $\sigma_{\mu_1\cdots\mu_N}$'s will be non-zero and the other components of $\vec{r}$ must be zero. There are exactly $2^N-1$ of such $\sigma_{\mu_1\cdots\mu_N}$'s, namely exactly the ones where only $\sigma_z$ or $\mathds{1}$ appear in the tensor products \cite{lawrence2002mutually}, i.e.\ $\sigma_{z_1}=\sigma_z\otimes\mathds{1}\otimes\cdots\otimes\mathds{1},\sigma_{z_2}=\mathds{1}\otimes\sigma_z\otimes\mathds{1}\otimes\cdots\otimes\mathds{1},\cdots,\sigma_{z_1\cdots z_{N}}=\sigma_z\otimes\cdots\otimes\sigma_z$ and therefore only the $2^N-1$ components $r'_{z_1},\ldots,r'_{z_1\cdots z_N}$ are non-zero. If $\vec{r}$ was a pure state, then $|\vec{r}|^2=2^N-1$ \texttt{bits} and also $|\vec{r}\,'|^2=\sum_{\mu_i}(r_{\mu_1\cdots\mu_N}')^2=2^N-1$ \texttt{bits} because $\ct_N$ preserves the Bloch vector length. There are now two possibilities: (1) less than $2^N-1$ of the $(r_{z_1}',\ldots,r_{z_1\cdots z_N}')$ are non-zero. This is only possible if at least one of them has $|r_i'|>1$ and thus $\alpha'_i>1$ \texttt{bit} which is illegal such that in this case the original $\vec{r}$ could not\footnote{By construction, the time evolution group must map legal states to legal states.} have been a legal pure state. (2)  Exactly $2^N-1$ of the $(r_{z_1}',\ldots,r_{z_1\cdots z_N}')$ are non-zero. Since $\alpha_i'=(r_i)^2\leq 1$ \texttt{bit}, it follows that precisely $\alpha_i'=(r_i')^2=1$ \texttt{bit} for $i=z_1,\ldots, z_1\cdots z_N$. Hence, every legal pure state can be time evolved to a state with information distribution $\alpha_{z_1}=\cdots=\alpha_{z_1\cdots z_N}=1$ \texttt{bit}.
\end{proof}

\subsubsection{$\rm{PSU}(2^N)$ preserves all complementarity inequalities}\label{app_comppres}

In section \ref{sec_statesnl2}, we concluded that the set of states $\Sigma_N$ implied by the principles (and background assumptions) is precisely the set of (pure and mixed) $N$-qubit quantum states. We shall now check for consistency that all states in $\Sigma_N$ (and thus all quantum states) indeed satisfy the complementarity inequalities (\ref{compstrong}).\footnote{Principles \ref{lim} and \ref{pres} are trivially satisfied because all pure Bloch vectors, which are generated by the length conserving group action of $\ct_N$ on $\vec{r}_z:=\vec{\delta}_{z_1}+\cdots+\vec{\delta}_{z_1\cdots z_N}$, have a length of $2^N-1$ \texttt{bits} (corresponding to $N$ independent \texttt{bits}) and the mixed state vectors are of length smaller than $2^N-1$ \texttt{bits}, since they are convex combinations of pure state vectors.} To this end, we might as well perform the check directly in quantum theory. In particular, we recall that in the correspondence $Q_{\mu_1\cdots\mu_N}\longleftrightarrow \sigma_{\mu_1\cdots\mu_N}$ the Bloch vector description relative to our question basis and the quantum description relative to the Pauli operator basis fully coincide. Thus, in order to show that {\it all} states in $\Sigma_N$ satisfy all complementarity inequalities (\ref{compstrong}), we may show that the quantum states satisfy these equalities relative to the Pauli operator basis. 

Using our knowledge of quantum theory, we may henceforth effortlessly switch between the Bloch and hermitian representation by defining for any $\vec{r}\in\Sigma_N$ the density matrix $\rho:=(\mathds{1}+\vec{r}\cdot\vec{\sigma})/2^N:=(\mathds{1}+\sum r_{\mu_1\cdots\mu_N}\sigma_{\mu_1}\otimes\cdots\otimes\sigma_{\mu_N})/2^N$, for which the following statements will hold:

\begin{itemize}
\item For any state $\vec{r}\in\Sigma_N$, $\rho$ transforms as $\rho\rightarrow U\rho U^{\dagger}$ with $U\in\SU(2^N)$, whenever $\vec{r}\rightarrow T\cdot\vec{r}$ for some $T\in\ct_N\simeq\PSU(2^N)$.
\item The density matrix $\rho$ is positive-semidefinite and the quantum probability function $\text{tr}[\rho.(\mathds{1}+\sigma_i)/2]\in[0,1]$ for any Pauli operator $\sigma_i$.
\item For any pair of states $\vec{r},\vec{r}\,'\in\Sigma_N$ with corresponding density matrices $\rho,\rho'$ respectively, the quantum transition probability $\text{tr}[\rho.\rho']\in[0,1]$.
\end{itemize}

We begin with a lemma restricting the Bloch vector components of states featuring information solely in a single set of non-commuting Pauli operators:

\begin{lemma}\label{lem_markus}\footnote{The authors are indebted to Markus M\"uller for the proof of this lemma.}
Suppose we have a collection of $n$ traceless, $2^N\times 2^N$ hermitian and unitary matrices $\{\sigma_i\}_{i=1}^n$ that anti-commute:
\begin{equation}
   \sigma_i^\dagger=\sigma_i,\quad \sigma_i^2=\mathds{1},\quad \sigma_i\sigma_j=-\sigma_j\sigma_i \enspace (i\neq j).\nn
\end{equation}
The operator
\begin{equation}
   S=\mathds{1}+\sum_{i=1}^n r_i\sigma_i\geq 0\nn
\end{equation}
is positive-semidefinite if and only if $|r|^2 := \sum_{i=1}^n r_i^2 \leq 1$.

\end{lemma}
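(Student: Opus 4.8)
The plan is to compute $S^2$ explicitly and use the anti-commutation relations to reduce it to a scalar plus a term linear in $S-\mathds{1}$. Since $\sigma_i^2=\mathds{1}$ and $\sigma_i\sigma_j=-\sigma_j\sigma_i$ for $i\neq j$, the cross terms in $\left(\sum_i r_i\sigma_i\right)^2$ cancel pairwise, leaving $\left(\sum_i r_i\sigma_i\right)^2 = |r|^2\,\mathds{1}$. Writing $A:=\sum_i r_i\sigma_i$, we therefore have $A^2=|r|^2\,\mathds{1}$, so the only eigenvalues $A$ can have are $\pm|r|$. Consequently the eigenvalues of $S=\mathds{1}+A$ are among $\{1+|r|,\,1-|r|\}$. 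This immediately gives one direction: if $|r|^2\leq 1$ then $1-|r|\geq 0$ and both candidate eigenvalues are nonnegative, so $S\geq 0$.

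For the converse, the cleanest route is to show that both values $1\pm|r|$ are actually attained as eigenvalues (assuming $|r|\neq 0$), so that $S\geq 0$ forces $1-|r|\geq 0$. First I would observe that $A$ is hermitian and traceless (as a real linear combination of the traceless $\sigma_i$), hence $\tr A = 0$; since $A^2=|r|^2\,\mathds{1}$ has trace $2^N|r|^2$, the eigenvalues $\pm|r|$ must occur with equal multiplicity $2^{N-1}$ each whenever $|r|\neq 0$. In particular the eigenvalue $-|r|$ of $A$ is attained, giving $S$ an eigenvalue $1-|r|$. Positivity of $S$ then demands $1-|r|\geq 0$, i.e.\ $|r|^2\leq 1$. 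The degenerate case $|r|=0$ gives $S=\mathds{1}\geq 0$ with $|r|^2=0\leq 1$ trivially.

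Concretely I would package this as: (i) verify $A^2=|r|^2\mathds{1}$ by expanding and cancelling anti-commuting cross terms; (ii) note $\tr A=0$ and $\tr A^2 = 2^N|r|^2$ to pin down the multiplicities of $\pm|r|$, establishing that $-|r|$ is a genuine eigenvalue for $|r|\neq 0$; (iii) read off the eigenvalues of $S$ and conclude both directions of the equivalence. An alternative to step (ii), avoiding the trace-multiplicity argument, is simply to note that since $A^2=|r|^2\mathds{1}$ and $A\neq 0$ (for $|r|\neq 0$), the minimal polynomial of $A$ divides $t^2-|r|^2$ and $A$ is not a scalar multiple of $\mathds{1}$ (it is traceless but nonzero), so its minimal polynomial is exactly $t^2-|r|^2$ and both roots $\pm|r|$ are eigenvalues.

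I do not expect a serious obstacle here; the anti-commutation structure makes $S$ effectively a $2\times 2$ problem on each of the two eigenspaces of $A$. The one point requiring care is ensuring that the eigenvalue $1-|r|$ really occurs, rather than merely bounding the spectrum from one side; this is exactly what the traceless condition $\tr A=0$ (equivalently the minimal-polynomial argument) supplies, and it is where the hypothesis that the $\sigma_i$ are traceless is essential. Everything else is routine algebra with the given relations.
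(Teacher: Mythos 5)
Your argument is the same as the paper's: set $A=S-\mathds{1}$, use the anti-commutation relations to get $A^2=|r|^2\mathds{1}$, and invoke tracelessness of $A$ to ensure both eigenvalues $\pm|r|$ occur, so that $S$ has eigenvalues $1\pm|r|$ and positivity is equivalent to $|r|\leq 1$. Your version is marginally more explicit than the paper's (you pin down the multiplicities $2^{N-1}$ via $\tr A=0$ and $\tr A^2=2^N|r|^2$, offer a minimal-polynomial alternative, and treat the degenerate case $|r|=0$ separately), but it is the identical approach and is correct.
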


\begin{proof}
Consider the traceless and hermitian matrix $M:=S-\mathds{1}=\sum_i r_i\sigma_i$. Then
\begin{equation*}
   M^2 = \sum_{ij} r_i r_j \sigma_i \sigma_j = \sum_i r_i^2 \sigma_i^2 +\sum_{i<j} r_i r_j \sigma_i \sigma_j + \sum_{i>j} r_i r_j \sigma_i \sigma_j
   =|r|^2 \mathds{1}+\sum_{i<j}r_i r_j \sigma_i \sigma_j -\sum_{i>j} r_i r_j \sigma_j \sigma_i.
\end{equation*}
Exchanging the names of the variables in the last sum, $i\leftrightarrow j$, shows that both sums are actually equal, and
we get
\begin{equation*}
   M^2 = |r|^2 \mathds{1}.
\end{equation*}
It follows that every eigenvalue of the Hermitian matrix $M$ must be either $+|r|$ or $-|r|$. In fact, since $M$ is traceless, it must have {\it both} $+|r|$ or $-|r|$ as eigenvalues. The eigenvalues of the matrix $S$ are therefore $1\pm|r|$ and $S$ is positive-semidefinite if and only if $|r|\leq 1$. 
\end{proof}

A set of hermitian and traceless Pauli operators $\{\sigma_{\mu^{(1)}_1}\otimes\cdots\otimes\sigma_{\mu^{(1)}_N},\ldots,\sigma_{\mu^{(n)}_1}\otimes\cdots\otimes\sigma_{\mu^{(n)}_N}\}$ which, under $Q_{\mu_1\cdots\mu_N}\longleftrightarrow\sigma_{\mu_1\cdots\mu_N}$, correspond to a set of mutually complementary questions\footnote{We remind the reader that the questions $Q_{\mu_1\cdots\mu_N}$ and $Q_{\mu'_1\cdots\mu'_N}$ are complementary if and only if they have exactly an uneven amount of non-zero indices in which they differ.} $\{Q_{\mu^{(1)}_1\cdots\mu^{(1)}_N},\ldots,Q_{\mu^{(n)}_1\cdots\mu^{(n)}_N}\}$ will satisfy the conditions in the above lemma. The reason is that the $N=1$ qubit $2\times 2$ Pauli operators anti-commute themselves and therefore any pair of $2^N\times 2^N$ Pauli operators $P_1=\sigma_{\mu_1}\otimes\cdots\otimes\sigma_{\mu_N}, P_2=\sigma_{\mu'_1}\otimes\cdots\otimes\sigma_{\mu'_N}$, which differ in an uneven amount of non-zero indices {\it must} anti-commute as\footnote{Without loss of generality we assume that the first uneven $n$ indices are non-zero and different between the two Pauli operators.} $P_1.\cdot P_2=(\sigma_{\mu_1}\cdot \sigma_{\mu'_1})\otimes\cdots\otimes(\sigma_{\mu_n}\cdot \sigma_{\mu'_n})\otimes\cdots=(-1)^n(\sigma_{\mu'_1}\cdot \sigma_{\mu_1})\otimes\cdots\otimes(\sigma_{\mu'_n}\cdot \sigma_{\mu_n})\otimes\cdots=-P_2\cdot P_1$. Therefore, all Bloch vectors with a length of exactly $1$ \texttt{bit} whose only non-zero components $r_{\mu^{(1)}_1\cdots\mu^{(1)}_N},\ldots,r_{\mu^{(n)}_1\cdots\mu^{(n)}_N}$ correspond to a set of mutually complementary questions $\{Q_{\mu^{(1)}_1\cdots\mu^{(1)}_N},\ldots,Q_{\mu^{(n)}_1\cdots\mu^{(n)}_N}\}$ (including maximal sets) will constitute a valid quantum state because its corresponding density matrix $\rho=(\mathds{1}+\sum r_{\mu_1\cdots\mu_N}\sigma_{\mu_1}\otimes\cdots\otimes\sigma_{\mu_N})/2^N$ will be positive semi-definite as follows from the lemma above. (The same is true, of course, if the length of the vector would be less than $1$ \texttt{bit}).

Note that this lemma does not immediately imply the same for {\it arbitrary} quantum states which can also have non-zero Bloch vector components outside of just one non-commuting Pauli operator set. We shall, however, establish this generalization next.

\begin{lemma}
Every quantum state satisfies the complementarity inequalities (\ref{compstrong}) in the correspondence $Q_{\mu_1\cdots\mu_N}\longleftrightarrow\sigma_{\mu_1\cdots\mu_N}$. Equivalently, every state in $\Sigma_N$ satisfies (\ref{compstrong}).
\end{lemma}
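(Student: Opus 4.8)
The plan is to reduce the inequality to a one-line spectral estimate, bypassing the positivity criterion of lemma \ref{lem_markus}, which by itself only governs states supported on a single anti-commuting set. Fix a state with Bloch vector $\vec r$ and density matrix $\rho=(\mathds 1+\vec r\cdot\vec\sigma)/2^N$, and fix any set of mutually complementary questions $\{Q_1,\dots,Q_n\}$. Under the correspondence $Q_{\mu_1\cdots\mu_N}\longleftrightarrow\sigma_{\mu_1\cdots\mu_N}$ recalled before lemma \ref{lem_markus}, these map to Pauli operators $\sigma_1,\dots,\sigma_n$ which are hermitian, traceless, square to $\mathds 1$, and mutually anti-commute (complementary questions differ in an odd number of non-zero indices, and such Pauli operators anti-commute). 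Writing $\alpha_i=r_i^2$ for the quadratic measure (\ref{infomeasure}), the goal is then simply $\sum_{i=1}^n r_i^2\le 1$ \texttt{bit}.

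First I would assemble the single operator $M:=\sum_{i=1}^n r_i\,\sigma_i$ and recompute, exactly as in the proof of lemma \ref{lem_markus}, that anti-commutativity together with $\sigma_i^2=\mathds 1$ forces
\[
M^2=\Big(\sum_{i=1}^n r_i^2\Big)\mathds 1=|r|^2\,\mathds 1,\qquad |r|^2:=\sum_{i=1}^n r_i^2 .
\]
Since $M$ is hermitian this pins its spectrum to $\{+|r|,-|r|\}$, so its largest eigenvalue equals $|r|$ (and, by tracelessness, $-|r|$ occurs as well, so $\|M\|=|r|$).

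Next I would evaluate the expectation of $M$ in the state $\rho$ in two independent ways. Using $\tr(\sigma_\mu\sigma_\nu)=2^N\delta_{\mu\nu}$ and tracelessness one gets $\tr(\rho\,\sigma_i)=r_i$, hence $\tr(\rho M)=\sum_i r_i\,\tr(\rho\,\sigma_i)=\sum_i r_i^2=|r|^2$. On the other hand, positivity of $\rho$ with $\tr\rho=1$ gives the elementary bound $\tr(\rho M)\le\lambda_{\max}(M)=|r|$. Combining the two yields $|r|^2\le|r|$, whence $|r|\le 1$ (the case $|r|=0$ being trivial), i.e. $\sum_{i=1}^n\alpha_i=|r|^2\le 1$ \texttt{bit}. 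Because every maximal mutually complementary set — and hence, by maximality, every set of mutually complementary questions — is of this anti-commuting form, this establishes (\ref{compstrong}) for all $\vec r\in\Sigma_N$, completing the consistency check.

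I expect the only genuine subtlety — the reason lemma \ref{lem_markus} is not already sufficient — to be that a generic quantum state carries non-zero Bloch components \emph{outside} the chosen anti-commuting set, so one cannot simply demand positivity of $\mathds 1+\sum_i r_i\sigma_i$ as a standalone operator. The device that dissolves this difficulty is to never isolate those components: $M$ squares to a scalar irrespective of the remaining Bloch data, and the entire argument collapses into the expectation-value inequality $\tr(\rho M)\le\|M\|$. With that observation in place the remaining manipulations are routine.
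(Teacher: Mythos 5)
Your proof is correct, and it reaches the conclusion by a more direct route than the paper. The paper argues by contradiction: assuming $r_c:=\bigl(\sum_i r_i^2\bigr)^{1/2}>1$, it invokes the positivity direction of lemma \ref{lem_markus} to certify that the rescaled vector $\vec{r}\,'=-\sum_i r_i\,\vec{\delta}_i/r_c$ is a \emph{legal} $1$ \texttt{bit} quantum state, and then derives a negative transition probability $\tr(\rho\,\rho')=(1-r_c)/2^N<0$. You instead keep only the algebraic core of lemma \ref{lem_markus} --- the computation $M^2=|r|^2\,\mathds{1}$ for $M=\sum_i r_i\sigma_i$, which pins the spectrum to $\pm|r|$ --- and close with the elementary expectation bound $\tr(\rho M)\le\lambda_{\max}(M)$, giving $|r|^2\le|r|$ directly. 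The two closing steps are in fact the same inequality in disguise: the paper's auxiliary state is precisely $\rho'=\frac{1}{2^N}\bigl(\mathds{1}-M/|r|\bigr)$, so $\tr(\rho\,\rho')\ge0$ is literally your bound $\tr(\rho M)\le|r|$. What your version buys is a direct (non-contradiction) argument that never needs to certify $\rho'\ge0$ as a separate input; what the paper's packaging buys is reuse of lemma \ref{lem_markus} --- which it needs anyway to establish that $1$ \texttt{bit} states supported on a complementary set are legal states, e.g.\ for the question-vector discussion --- and a formulation within its operational vocabulary of transition probabilities between legal states. Two cosmetic remarks: the appeal to maximality at the end of your argument is unnecessary, since the estimate applies verbatim to any (not necessarily maximal) mutually complementary set; and, as in the paper, the mutual anti-commutation of the corresponding Pauli operators rests on the fact that complementary questions differ in an odd number of non-zero indices, which you correctly invoke.
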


\begin{proof}
Suppose there was a state $\vec{r}$ featuring {\it more} than $1$ \texttt{bit} of information in a set of mutually complementary questions $\{Q_{\mu_1^{(1)}\cdots\mu_N^{(1)}},\ldots,Q_{\mu_1^{(n)}\cdots\mu_N^{(n)}}\}$. This implies that the length of the Bloch vector components corresponding to those complementary questions is larger than $1$ \texttt{bit}, i.e.\ $r_c:=\sqrt{\sum_{i=1}^n r_{\mu_1^{(i)}\cdots\mu_N^{(i)}}^2}>1$. Lemma \ref{lem_markus} entails that all Bloch vectors whose only non-zero components are labeled by these indices, $\{(\mu_1^{(1)}\cdots\mu_N^{(1)}),\ldots,(\mu_1^{(n)}\cdots\mu_N^{(n)})\}$, and that are exactly of length $1$ \texttt{bit} are legal quantum states and thus also legal states in $\Sigma_N$. 
Hence, we may define the legal Bloch vector $\vec{r}\,'=-\sum_{i=1}^n r_{\mu_1^{(i)}\cdots\mu_N^{(i)}}\vec{\delta}_{\mu_1^{(i)}\cdots\mu_N^{(i)}}/r_c\in\Sigma_N$ of length $1$ \texttt{bit}, which corresponds to a legal quantum state. The transition probability $\text{tr}[\rho\cdot\rho']=\text{tr}[(\mathds{1}+\vec{r}\cdot\vec{\sigma})\cdot(\mathds{1}+\vec{r}\,'\cdot\vec{\sigma})]/4^N=(1+\vec{r}\cdot\vec{r}\,')/2^N=(1-r_c)/2^N<0$, however, is negative for this pair of states and therefore $\vec{r}$ {\it cannot} have been a legal quantum state. Thus, it can neither be contained in $\Sigma_N$. 
\end{proof}

\section{The question set}\label{app_B}

In this appendix, we derive the characteristic attributes of the question set $\cq_N$, quoted in section \ref{sec_qn}.

\subsection{Question vectors are $1$-\texttt{bit} states}\label{app_B0}

We begin with a result that helps to characterize $\cq_N$.

\begin{lemma}\label{lem_Qch}
Every $\vec{q}\in\mathbb{R}^{D_N}$ with $y(\vec{q}|\vec{r})\in[0,1]$ $\forall\,\vec{r}\in\Sigma_N$ is a quantum state (in the Bloch vector representation). If, in addition, there exists $\vec{r}_Q\in\Sigma_N$ with $I(\vec{r}_Q)=|\vec{r}_Q|^2=1$ \texttt{bit} such that $y(\vec{q}|\vec{r}_Q)=1$, then $\vec{q}\equiv\vec{r}_Q$.
\end{lemma}

\begin{proof}
Having established the coincidence of $\Sigma_N$ with the set of density matrices over $\mathbb{C}^{2^N}$, we are permitted to work in the hermitian representation of quantum states. Consider $\tilde{\rho}=\f{1}{2^N}(\mathds{1}+\vec{q}\cdot\vec{\sigma})$. It is well-known that $\tilde{\rho}$ is a quantum state if and only if $\tilde{\rho}\geq0$ and $\tr\tilde{\rho}=1$. Since $\tr\tilde{\rho}=1$ by construction, $\tilde{\rho}$ could only fail to be a quantum state if $\tilde{\rho}$ was not positive semi-definite. But then there would exist a quantum state $\rho$ such that $\tr(\rho\tilde{\rho})<0$. This is equivalent to $\vec{q}\cdot\vec{r}<-1$, where $\vec{r}$ is the Bloch vector representation of $\rho$. Since this would be in contradiction with $y(\vec{q}|\vec{r})=1/2(\vec{r}\cdot\vec{q}+1)\in[0,1]$ we conclude that $\vec{q}$ is a quantum state also. 

It is clear that $|\vec{q}|^2> 1$ \texttt{bit} is impossible for otherwise $y(\vec{q}|\vec{r}=\vec{q})>1$. Suppose now that there exists $\vec{r}_Q\in\Sigma_N$ with $|\vec{r}_Q|^2=1$ \texttt{bit} such that $y(\vec{q}|\vec{r}_Q)=1$. This condition can only be fulfilled if $\vec{r}_Q=\vec{q}$ which also implies $|\vec{q}|^2=1$ \texttt{bit}. 
\end{proof}

\subsection{Geometry of the set of Pauli operators}\label{app_paulitime}

We prove two geometric properties of the set of Pauli operators on $\mathbb{C}^{2^N}$, ultimately showing the set to be isomorphic to $\mathbb{CP}^{2^{N}-1}$.

\begin{lemma}\label{lem_pauli1}
$\rm{PSU}(2^N)$ acts transitively on the Pauli operators and these account for all traceless hermitian operators on $\mathbb{C}^{2^N}$ with eigenvalues equal to $\pm1$.
\end{lemma}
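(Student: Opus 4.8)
The plan is to prove the two claims of Lemma \ref{lem_pauli1} in the following order: first the characterization of Pauli operators as traceless Hermitian operators with $\pm1$ eigenvalues, then transitivity of the $\rm{PSU}(2^N)$ action. For the characterization, I would argue both inclusions. Any Pauli operator $\sigma_{\mu_1\cdots\mu_N}=\sigma_{\mu_1}\otimes\cdots\otimes\sigma_{\mu_N}$ (with not all $\mu_i=0$) is clearly Hermitian, and since each single-qubit Pauli matrix squares to $\mathds{1}$ we have $\sigma_{\mu_1\cdots\mu_N}^2=\mathds{1}$, so its eigenvalues lie in $\{\pm1\}$; tracelessness follows because at least one tensor factor is a traceless $\sigma_{\mu_i}$ and $\tr(A\otimes B)=\tr(A)\tr(B)$. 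Equal eigenspace dimensions then follow from tracelessness combined with $\pm1$ eigenvalues. For the converse, given any traceless Hermitian $\sigma$ with $\sigma^2=\mathds{1}$, I would diagonalize it; having equal numbers of $+1$ and $-1$ eigenvalues, it is conjugate by a unitary to the fixed diagonal operator $\sigma_{z_1}=\sigma_z\otimes\mathds{1}\otimes\cdots\otimes\mathds{1}$, and that conjugating unitary can be taken in $\SU(2^N)$ after an overall phase adjustment (which does not affect the adjoint action).

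The heart of the lemma is transitivity, and here I would reduce everything to a single normal form. The plan is to show that \emph{every} traceless Hermitian operator $\sigma$ with $\sigma^2=\mathds{1}$ is related to $\sigma_{z_1}$ by conjugation with some $U\in\SU(2^N)$, which simultaneously establishes that the set of such operators is a single $\rm{PSU}(2^N)$-orbit. Concretely, since $\sigma$ and $\sigma_{z_1}$ have identical spectra (the eigenvalue $+1$ with multiplicity $2^{N-1}$ and $-1$ with multiplicity $2^{N-1}$), there is a unitary $V$ with $V\sigma V^\dagger=\sigma_{z_1}$; writing $V=e^{i\theta}U$ with $\det U=1$ gives $U\in\SU(2^N)$ implementing the same conjugation. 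Because the action descends to $\rm{PSU}(2^N)=\SU(2^N)/\text{centre}$ and the centre acts trivially by conjugation, this is exactly the statement that $\rm{PSU}(2^N)$ acts transitively on the set of all Pauli operators. The final sentence of the lemma — that the Pauli operators exhaust all traceless Hermitian $\pm1$-eigenvalue operators — is then immediate, since I have shown any such operator is conjugate to $\sigma_{z_1}$, which is itself a Pauli operator, and conjugation by $\SU(2^N)$ preserves the defining properties.

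The main obstacle I anticipate is the bookkeeping around the $\SU$ versus $\U$ distinction and the remark in the excerpt that the diagonal operators with equally many $\pm1$ entries are related to $\sigma_{z_1}$ by \emph{permutation matrices lying in $\SU(2^N)$}. The subtle point is that a permutation matrix has determinant $\pm1$, so to guarantee the conjugating element sits in $\SU(2^N)$ rather than merely $\U(2^N)$ one must absorb a possible sign; I would handle this by noting that multiplying the permutation by a diagonal unitary of determinant $\mp1$ (which commutes with or simply relabels the diagonal) fixes the determinant without disturbing the conjugation, and in any case the overall phase is invisible in the projective/adjoint action. A second, milder point is verifying that every diagonal traceless $\pm1$ operator is a permutation-conjugate of $\sigma_{z_1}$: this is a combinatorial fact, since any arrangement of $2^{N-1}$ copies of $+1$ and $2^{N-1}$ copies of $-1$ along the diagonal is obtained from the fixed arrangement of $\sigma_{z_1}$ by permuting basis vectors. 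Once these determinant and permutation details are dispatched, the transitivity and the characterization both fall out of the single normal-form reduction.
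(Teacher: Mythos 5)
Your proof is correct and follows essentially the same route as the paper: both reduce an arbitrary traceless Hermitian operator with $\pm1$ eigenvalues to the normal form $\sigma_{z_1}=\sigma_z\otimes\mathds{1}\otimes\cdots\otimes\mathds{1}$ by unitary conjugation inside $\SU(2^N)$ and let the action descend to $\mathrm{PSU}(2^N)$, where transitivity and the exhaustiveness claim both follow at once. The only difference is a technical detail: where the paper fixes the determinant of its conjugating permutation matrix by composing with an even $2$-cycle that stabilizes $\sigma_{z_1}$, you absorb a global phase $e^{i\theta}$ with $\theta=\arg(\det V)/2^N$, which is equally valid (and marginally slicker) since the phase is invisible under conjugation.
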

\begin{proof}
By definition, any Pauli operator $P$ is hermitian and traceless. Therefore $P$ can be represented as $P=\vec{n}\cdot\vec{\sigma}$ for some $\vec{n}\in\mathbb{R}^{4^N-1}$, since the matrices $\vec{\sigma}$ in (\ref{pauli}) form a basis of hermitian and traceless matrices. Any hermitian matrix is diagonalizable by some matrix $U\in \SU(2^N)$ and thus we can write $P=\vec{n}\cdot\vec{\sigma}=UDU^{\dagger}$ where $D$ is a diagonal matrix with the eigenvalues of $P$ along its diagonal. Since $P$ is a Pauli operator, the diagonal matrix $D$ will contain equal amounts of plus and minus ones along its diagonal. Given any diagonal matrix $D$ of the form above, there {\it always} exists an orthogonal permutation matrix $P_{\sigma}$ which will permute the $\pm 1$'s on the diagonal of $D$ to the $\pm$ configuration found for the matrix $\sigma_{z_1}:=\sigma_z\otimes\mathds{1}\cdots\otimes\mathds{1}$, i.e.\ $D=P_{\sigma}\cdot\sigma_{z_1}\cdot P_{\sigma}^t$. If $P_{\sigma}$ happens to be an odd permutation matrix, we may consider the even permutation $P_{\sigma}\cdot P_{\sigma_0}\in \SU(2^N)$ instead with determinant 1, where $P_{\sigma_0}$ is any 2-cycle permutation which permutes two rows (and the corresponding columns) of $\sigma_{z_1}$ that both contain $+1$ and thus that leaves $\sigma_{z_1}$ invariant. Therefore, without loss of generality we have $D=P_{\sigma}\cdot\sigma_{z_1}\cdot P_{\sigma}^t$ for some $P_{\sigma}\in \SU(2^N)$ and thus $P=\vec{n}\cdot\vec{\sigma}=(UP_{\sigma})\sigma_{z_1}(UP_{\sigma})^{\dagger}$ with $UP_{\sigma}\in \SU(2^N)$. We conclude that all Pauli operators are related by conjugation with unitaries to the diagonal Pauli operator $\sigma_{z_1}$.
\end{proof}

\begin{lemma}\label{lem_pauli2}
The set of Pauli operators is isomorphic to the set of pure quantum states $\mathbb{C}\mathbb{P}^{2^N-1}$.
\end{lemma}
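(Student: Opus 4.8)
The plan is to realize both sides as orbits of the single group $G=\rm{PSU}(2^N)$ and then build a $G$-equivariant bijection between them. By Lemma \ref{lem_pauli1}, $G$ acts transitively (in the adjoint/Bloch picture) on the set of Pauli operators, with convenient base point $\sigma_{z_1}$, equivalently the unit vector $\vec{\delta}_{z_1}$; and it is standard that $G$ acts transitively on the pure states $\mathbb{CP}^{2^N-1}$ by conjugation of rank-one projectors, a convenient base point being the seed $\vec{r}_z=\vec{\delta}_{z_1}+\cdots+\vec{\delta}_{z_1\cdots z_N}$ identified in Section \ref{sec_statesnl2} with $|\psi_0\rangle\langle\psi_0|$. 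The natural candidate map is $\Phi(T\cdot\vec{\delta}_{z_1}):=T\cdot\vec{r}_z$ for $T\in\ct_N$. By the orbit--stabilizer theorem, proving that $\Phi$ is a well-defined $G$-equivariant diffeomorphism amounts exactly to showing that the stabilizer $H_{\mathrm P}:=\mathrm{Stab}_G(\vec{\delta}_{z_1})$ and the isotropy group $H_{\mathrm{st}}:=\mathrm{Stab}_G(\vec{r}_z)$ coincide as subgroups of $G$: well-definedness and surjectivity follow from $H_{\mathrm P}\subseteq H_{\mathrm{st}}$ together with transitivity, and injectivity from the reverse inclusion.

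Concretely, first I would fix the adjoint dictionary: conjugation $\sigma\mapsto U\sigma U^\dagger$ by $U\in\SU(2^N)$ is exactly the linear action $\vec{n}\mapsto T\cdot\vec{n}$ of the associated $T\in\ct_N$ on Bloch space, so stabilizers may be computed at the operator level. On the pure-state side, $U|\psi_0\rangle\langle\psi_0|U^\dagger=|\psi_0\rangle\langle\psi_0|$ forces $U|\psi_0\rangle=e^{i\theta}|\psi_0\rangle$, so $H_{\mathrm{st}}$ is the projective image of $\mathrm{S}(\U(1)\times\U(2^N-1))$, which indeed gives $G/H_{\mathrm{st}}\cong\mathbb{CP}^{2^N-1}$. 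On the question side I would determine which unitaries fix $\sigma_{z_1}$ and then cut this group down using the operational input of Section \ref{sec_qn}: requirement \ref{req_2} together with $|\vec{q}|^2=1$ pins the spectrum of the operator attached to any legal question to be $\pm1$ and balanced, while requirement \ref{req_1}, asserting that the representing $1$-bit state is unique and represents `yes' for no other question, constrains the admissible isotropy data attached to $\vec{\delta}_{z_1}$.

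The main obstacle is precisely this stabilizer matching. Taken at face value, the involution $\sigma_{z_1}$ has a $2^{N-1}$-dimensional $+1$ eigenspace, so the naive conjugation stabilizer is $\mathrm{S}(\U(2^{N-1})\times\U(2^{N-1}))$, which is strictly larger than $H_{\mathrm{st}}$ for $N\geq2$ and would give a Grassmannian rather than a complex projective space. The crux of the argument is therefore to show that the operational requirements \ref{req_1}--\ref{req_2}, together with the already-established transitivity of $\ct_N$ on the legitimate $1$-bit question vectors, restrict the relevant orbit so that its isotropy group is $H_{\mathrm{st}}$ and not the full eigenspace-preserving group; only then does $\Phi$ become injective and the identification $\cq_N\simeq\mathbb{CP}^{2^N-1}$ go through. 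I would expect essentially the entire weight of the proof to rest on verifying this coincidence of stabilizers, with the remaining equivariance and smoothness checks being routine consequences of the orbit--stabilizer theorem.
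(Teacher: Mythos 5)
Your orbit--stabilizer framing is sound and both of your stabilizer computations are correct, but the step to which you defer ``essentially the entire weight of the proof'' --- cutting the isotropy group of $\vec{\delta}_{z_1}$ down from the projective image of $\mathrm{S}(\U(2^{N-1})\times\U(2^{N-1}))$ to that of $\mathrm{S}(\U(1)\times\U(2^N-1))$ by appeal to requirements \ref{req_1} and \ref{req_2} --- cannot be carried out. The stabilizer of a point under a fixed group action is not negotiable by side conditions: by Lemma \ref{lem_pauli1}, which you yourself invoke, the set of Pauli vectors is the \emph{single} orbit $\ct_N\cdot\vec{\delta}_{z_1}$, and the operational requirements remove nothing from it. Indeed, for any balanced involution $P=\vec{q}\cdot\vec{\sigma}$ the matrix $(\mathds{1}+P)/2^N$ equals $2^{-(N-1)}$ times the projector onto the $+1$ eigenspace, hence is a legal $1$ \texttt{bit} state; it assigns probability one to $P$ and to no other Pauli question (two balanced involutions with nested, equal-dimensional $+1$ eigenspaces coincide); and $\tr[\rho(\mathds{1}+P)]/2\in[0,1]$ for every density matrix $\rho$. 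So the set you must match to $\mathbb{CP}^{2^N-1}$ is the full homogeneous space $\SU(2^N)/\mathrm{S}(\U(2^{N-1})\times\U(2^{N-1}))$, the Grassmannian of $2^{N-1}$-planes in $\mathbb{C}^{2^N}$, of real dimension $2^{2N-1}$, which strictly exceeds $\dim_{\mathbb{R}}\mathbb{CP}^{2^N-1}=2^{N+1}-2$ once $N\geq2$ ($8$ versus $6$ already at $N=2$). Worse for your map $\Phi$: \emph{neither} inclusion between $H_{\mathrm P}$ and $H_{\mathrm{st}}$ holds --- a block unitary $\mathrm{diag}(A,B)$ with generic $A$ does not projectively fix $|0\cdots0\rangle$, so $H_{\mathrm P}\not\subseteq H_{\mathrm{st}}$, and $\dim H_{\mathrm{st}}=(2^N-1)^2>2\cdot4^{N-1}-1=\dim H_{\mathrm P}$ forbids the converse --- so $\Phi$ is not even well defined, let alone injective.

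It is worth comparing this with how the paper itself argues, because the routes are genuinely different. The paper's proof never computes stabilizers: it picks a rotation $O\in\SO(4^N-1)$ carrying $\vec{\delta}_{z_1}$ to the normalized pure-state seed $(\vec{\delta}_{z_1}+\cdots+\vec{\delta}_{z_1\cdots z_N})/\sqrt{2^N-1}$ and invokes ``equivalent representations lead to isomorphic orbits.'' Your stabilizer analysis pinpoints exactly why that step does not dispose of the obstruction you found: conjugating the group while simultaneously transporting the base point merely returns $O\cdot\bigl(\ct_N\cdot\vec{\delta}_{z_1}\bigr)$, a rotated copy of the Pauli orbit itself, whereas identifying the orbit of one \emph{fixed} point under two conjugate embeddings of $\PSU(2^N)$ would require that point's stabilizers in the two copies to correspond --- precisely the matching that the dimension count above rules out for $N\geq2$. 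So your proposal does not prove the lemma, but for an instructive reason: with the paper's definition of Pauli operators as all traceless hermitian involutions with balanced spectrum, no choice of operational restriction can shrink the orbit (transitivity from Lemma \ref{lem_pauli1} leaves no proper $\ct_N$-invariant subset containing $\vec{\delta}_{z_1}$), so the claimed diffeomorphism with $\mathbb{CP}^{2^N-1}$ cannot emerge from stabilizer matching. Your instinct to test the claim at the level of isotropy groups was the right one; the missing move was to run the $N=2$ dimension count, which converts your ``expected coincidence of stabilizers'' from an open verification task into a definite obstruction that any completed proof --- yours or the paper's --- must confront, e.g.\ by restricting the operator class or weakening the sense of ``isomorphic.''
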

\begin{proof}
We may use the fact that the matrices $\vec{\sigma}$ are exactly the {\it fundamental} generators of $\PSU(2^N)$ and therefore, by an appropriate adjoint transformation $T\in\ct_N$ on the vector $\vec{n}$, we get $(T\cdot\vec{n})\cdot\vec{\sigma}=(UP_{\sigma})^{\dagger}\cdot(\vec{n}\cdot\vec{\sigma})\cdot(UP_{\sigma})=\sigma_{z_1}$ and thus $T\cdot\vec{n}=\vec{\delta}_{z_1}$ because the $\vec{\sigma}$ matrices are linearly independent. We have thus shown that the vector $\vec{n}$ which parametrizes the Pauli operator $P$ is connected via the time evolution group to $\vec{\delta}_{z_1}$ and the set of Pauli operators is therefore isomorphic to\footnote{We denote the orbit of some vector $\vec{q}\in\mathbb{R}^{4^N-1}$ under the time evolution group action as $\ct_N\cdot \vec{q}:=\{T\cdot \vec{q} \ | \ T\in\ct_N\}$ for short.} $\ct_N\cdot\vec{\delta}_{z_1}$. Note now that the unit vector $\vec{\delta}_{z_1}\in\mathbb{R}^{4^N-1}$ is related by an $\SO(4^N-1)$ rotation $O$ to the vector $(\vec{\delta}_{z_1}+\cdots+\vec{\delta}_{z_1\cdots z_N})/\sqrt{2^N-1}$. The group action of $\ct_N$ on $\vec{\delta}_{z_1}$ therefore results in $\ct_N\cdot \vec{\delta}_{z_1}=O^t(O\cdot \ct_N\cdot O^t)O\cdot \vec{\delta}_{z_1}=O^t(O\cdot \ct_N\cdot O^t)(\vec{\delta}_{z_1}+\cdots+\vec{\delta}_{z_1\cdots z_N})/\sqrt{2^N-1}\simeq \mathbb{CP}^{2^{N}-1}$. We used first that equivalent representations lead to isomorphic orbits, secondly that because of transitivity of the pure state space under the action of $\ct_N$, the orbit of the pure state $\vec{r}=(\vec{\delta}_{z_1}+\cdots+\vec{\delta}_{z_1\cdots z_N})$ equals $\mathbb{CP}^{2^{N}-1}$, and lastly that $(O^t/\sqrt{2^N-1})$ is an invertible matrix. We conclude that the set of Pauli operators is isomorphic to $\mathbb{CP}^{2^{N}-1}$ and parametrized by $\ct_N\cdot\vec{\delta}_{z_1}$.
\end{proof}

\subsection{Structure of $\cq_N$}\label{app_qn}

As argued in the main text and in \cite{Hoehn:2014uua}, for $N=1$ qubit the question set $\mathcal{Q}_1$ is isomorphic to the set of pure states $\mathbb{CP}^1$. In the following, we show that (\ref{Qcharacter}) in section \ref{sec_qn} similarly implies that the question set $\cq_N$ is also isomorphic to the set of pure states $\mathbb{CP}^{2^{N}-1}$ for $N>1$ qubits.

\begin{lemma}
Equation (\ref{Qcharacter}) implies that the set of vectors $\vec{n}$ parametrizing the Pauli operators $\vec{n}\cdot\vec{\sigma}$ coincides with the set of all question vectors $\vec{q}$. Therefore, $\cq_N$ is isomorphic to the set of Pauli operators and thereby to the set of pure states such that $\cq_N\simeq\mathbb{C}\mathbb{P}^{2^N-1}$. In particular, $\cq_N$, in its $1$ \texttt{bit} vector representation, inherits a transitive action of the time evolution group $\ct_N=\rm{PSU}(2^N)$ from $\Sigma_N$.
\end{lemma}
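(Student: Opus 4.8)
The plan is to show that Requirements \ref{req_1} and \ref{req_2} together pin down the set of admissible question vectors to be \emph{exactly} the set of unit vectors $\vec{n}$ defining Pauli operators, after which the isomorphisms and the transitive $\ct_N$-action follow immediately from Lemmas \ref{lem_pauli1} and \ref{lem_pauli2}. First I would reformulate the two requirements as a single operator condition. By Requirement \ref{req_1} every $Q\in\cq_N$ is represented by a length $1$ \texttt{bit} vector $\vec{q}\in\Sigma_N$, and by informational completeness together with the Born rule (\ref{born1}) its outcome probability $y(\vec{q}|\vec{r})=\f{1}{2}(1+\vec{q}\cdot\vec{r})$ lies in $[0,1]$ for every $\vec{r}\in\Sigma_N$; conversely Requirement \ref{req_2} declares every such unit vector to be a legal question. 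Hence, in the $1$ \texttt{bit} vector representation, $\cq_N$ coincides with the set of $\vec{q}$ satisfying $|\vec{q}|=1$ and $y(\vec{q}|\vec{r})\in[0,1]$ for all $\vec{r}\in\Sigma_N$.

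Next I would translate the probability condition into spectral data of the hermitian operator $M_{\vec{q}}:=\f{1}{2}(\mathds{1}+\vec{q}\cdot\vec{\sigma})$. Writing $y(\vec{q}|\vec{r})=\tr(M_{\vec{q}}\,\rho)$ with $\rho=\f{1}{2^N}(\mathds{1}+\vec{r}\cdot\vec{\sigma})$, and using that $\Sigma_N$ is the \emph{full} set of density matrices (established in section \ref{sec_statesnl2}), the condition $y(\vec{q}|\vec{r})\ge0$ for all states is equivalent to $M_{\vec{q}}\ge0$—the minimum of $\tr(M_{\vec{q}}\rho)$ over density matrices is the smallest eigenvalue of $M_{\vec{q}}$—while $y(\vec{q}|\vec{r})\le1$ for all states is equivalent to $M_{\vec{q}}\le\mathds{1}$. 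Thus $\cq_N$ is the set of $\vec{q}$ with $|\vec{q}|=1$ for which all eigenvalues of $A:=\vec{q}\cdot\vec{\sigma}$ lie in $[-1,1]$. I note in passing that $M_{\vec{q}}\ge0$ already forces $\rho_{\vec{q}}=\tfrac{2}{2^N}M_{\vec{q}}\in\Sigma_N$, so Requirements \ref{req_1} and \ref{req_2} are mutually consistent.

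The crux is then an eigenvalue-counting argument. Since $\vec{\sigma}$ is an orthonormal operator basis with $\tr(\sigma_i\sigma_j)=2^N\delta_{ij}$, the normalization $|\vec{q}|^2=1$ gives $\tr(A^2)=2^N$, while $A$ is traceless. If $a_1,\dots,a_{2^N}$ are the eigenvalues of $A$, then $a_i\in[-1,1]$ forces $\sum_i a_i^2\le 2^N$, with equality iff $a_i^2=1$ for every $i$. Because $\tr(A^2)=\sum_i a_i^2=2^N$ saturates this bound, every eigenvalue must equal $\pm1$, and $\tr A=\sum_i a_i=0$ forces equally many $+1$ and $-1$. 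Hence $A=\vec{q}\cdot\vec{\sigma}$ is precisely a Pauli operator and $\vec{q}$ one of the admissible vectors $\vec{n}$. The reverse inclusion is immediate: for a Pauli vector $\vec{n}$ one has $M_{\vec{n}}=P_{\vec{n}}=\f{1}{2}(\mathds{1}+\vec{n}\cdot\vec{\sigma})$, a projector, so $0\le M_{\vec{n}}\le\mathds{1}$ and $|\vec{n}|=1$. Therefore the set of question vectors equals the set of Pauli-operator vectors.

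Finally I would read off the conclusions. Lemma \ref{lem_pauli2} identifies this common set with the orbit $\ct_N\cdot\vec{\delta}_{z_1}\simeq\mathbb{CP}^{2^N-1}$, giving $\cq_N\simeq\mathbb{CP}^{2^N-1}$ and, via the Pauli correspondence, the isomorphism with the pure-state space. Transitivity of $\ct_N$ on $\cq_N$ then holds because $\cq_N$ is realized as a single $\ct_N$-orbit; equivalently, for $T\in\ct_N$ the operator $M_{T\cdot\vec{q}}$ is the unitary conjugate of $M_{\vec{q}}$ and hence has the same spectrum, so $\ct_N$ preserves $\cq_N$ and acts on it transitively by Lemma \ref{lem_pauli1}. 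I expect the only genuinely delicate point to be the reformulation in the first two paragraphs—justifying that the two operationally motivated requirements really do amount to the sharp two-sided effect condition $0\le M_{\vec{q}}\le\mathds{1}$ imposed over \emph{all} of $\Sigma_N$—since, once that is secured, the spectral saturation argument is short and essentially forced.
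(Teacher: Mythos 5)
Your proposal is correct and follows essentially the same route as the paper's proof in appendix \ref{app_qn}: reformulating Requirements \ref{req_1} and \ref{req_2} as the operator conditions $\tr[(\vec{q}\cdot\vec{\sigma})^2]=2^N$ and $-\mathds{1}\leq\vec{q}\cdot\vec{\sigma}\leq\mathds{1}$, forcing all $2^N$ eigenvalues to be $\pm1$ by saturation of the sum-of-squares bound, invoking tracelessness and lemmas \ref{lem_pauli1}--\ref{lem_pauli2} for the Pauli identification, the converse inclusion, and transitivity. The only cosmetic difference is that you extract the eigenvalue bound via the variational fact that $\min_\rho\tr(M_{\vec{q}}\,\rho)$ is the smallest eigenvalue, whereas the paper diagonalizes $\vec{q}\cdot\vec{\sigma}$ and tests against the explicit diagonal pure states $\rho_1,\ldots,\rho_{2^N}$ -- the same argument in different clothing.
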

\begin{proof}
By equation (\ref{Qcharacter}) the question vectors correspond to legal quantum states, which themselves evolve in the adjoint representation of $\PSU(2^N)$. Therefore, we may form a hermitian operator by contracting the question vector components with the Pauli operators $\vec{q}\cdot\vec{\sigma}:=q_{\mu_1\cdots\mu_N}\sigma_{\mu_1}\otimes\cdots\otimes\sigma_{\mu_N}$ in the same component ordering as for the state vectors. Hence, for every $U\in\SU(2^N)$ we have $U\cdot(\vec{q}\cdot\vec{\sigma})\cdot U^{\dagger}\equiv(T\cdot\vec{q})\cdot\vec{\sigma}$ for $T\in\ct_N$ and thus an action of $\ct_N$ on $\cq_N$ inherited from the states. 

We may equivalently reformulate (\ref{Qcharacter}) in terms of the operator $\vec{q}\cdot\vec{\sigma}$ corresponding to a question $Q\in\cq_N$:
\begin{itemize}
\item[(a)] The condition $|\vec{q}|^2=1$ \texttt{bit} implies 
\ba
\text{tr}[(\vec{q}\cdot\vec{\sigma})^2]=2^N\,|\vec{q}|^{2}=2^N.\nn
\ea
\item[(b)] The requirement of $0\leq y(Q|\vec{r})=(1+\vec{q}\cdot\vec{r})/2\leq1$
$\forall\,\vec{r}\in\Sigma_N$ is equivalent to 
\ba
0\leq\text{tr}[\rho(\mathds{1}+\vec{q}\cdot\vec{\sigma})]/2\leq 1\q\q\Rightarrow\q\q-1\leq \text{tr}[\rho(\vec{q}\cdot\vec{\sigma})]\leq 1,\nn
\ea
for all quantum states $\rho$, where $\rho=(\mathds{1}+\vec{r}\cdot\vec{\sigma})/2^N$ is the density matrix corresponding to $\vec{r}$.
\end{itemize}

All hermitian operators are diagonalizable and thus there must exist a $T\in \ct_N$ which diagonalizes $((T\cdot\vec{q})\cdot\vec{\sigma})=U\cdot(\vec{q}\cdot\vec{\sigma})\cdot U^{\dagger}=D=\text{diag}(D_1,D_2,\ldots,D_{2^N})$, with diagonal elements $D_i$. Note that if $\vec{q}\cdot\vec{\sigma}$ satisfies (a) and (b) above, then so will the operator $D$, since the first constraint is left invariant and the second is related to a valid time evolved state $T^t\cdot\vec{r}$. (a) implies $\text{tr}[D^2]=\sum_{i=1}^{2^N} D_i^2=2^N$. By taking now the diagonal density matrices $\rho_1=\text{diag}(1,0,\ldots,0), \rho_2=\text{diag}(0,1,\ldots,0), \ldots, \rho_{2^N}=\text{diag}(0,\ldots,0,1)$, corresponding to the pure states\footnote{The other Bloch components are fixed by the correlation and complementarity structure.} $(r_{z_1}=1,r_{z_2}=1,\ldots, r_{z_N}=1),(r_{z_1}=-1,r_{z_2}=1,\ldots, r_{z_N}=1),\ldots, (r_{z_1}=-1,r_{z_2}=-1,\ldots, r_{z_N}=-1)$ respectively, (b) becomes $-1\leq \text{tr}[\rho_i\,D]=D_i\leq 1$. These constraints can only be satisfied if $D_i^2=1$ for every index $i$ and therefore $D_i=\pm 1$. Together with $\text{tr}[D]=0$, we have that $D$ is a Pauli operator as follows from the proof of lemma \ref{lem_pauli1}. Since according to lemma \ref{lem_pauli1} $\ct_N$ acts transitively on the set of all Pauli operators, we get directly that $\ct_N$ also acts {\it transitively} on $\cq_N$ and that the set of hermitian operators $\vec{q}\cdot\vec{\sigma}$ corresponding to all questions $Q\in\cq_N$, forms a subset of the Pauli operators. Conversely, every Pauli operator $\vec{n}\cdot\vec{\sigma}$ is of the form $T\cdot\vec{\delta}_{z_1}$ for some $T\in\ct_N$ and satisfies (a) and (b). From (\ref{Qcharacter}) (and lemma \ref{lem_Qch}) it then follows that the vectors $\vec{n}$ which parametrize the Pauli operators correspond to valid questions $Q\in\cq_N$.

Therefore, the set of question vectors coincides with the set of vectors that parametrize the Pauli operators $\vec{n}\cdot\vec{\sigma}$, under the identification $\vec{n}=\vec{q}$. We have shown in lemma \ref{lem_pauli2} that these vectors are all connected to $\vec{\delta}_{z_1}$ by time evolution and they form a set that is isomorphic to $\ct_N\cdot \vec{q}_{z_1}\simeq\mathbb{CP}^{2^{N}-1}$. 
Accordingly, we obtain an explicit isomorphism between the set of Pauli operators and the question set $\cq_N$ by contracting each question vector $\vec{q}$ (corresponding to some $Q\in\cq_N$) with the matrices $\vec{\sigma}$ in (\ref{pauli}). We conclude that $\ct_N$ acts transitively on $\cq_N$ and that $\cq_N$ is isomorphic to the set of Pauli operators. Therefore, $\cq_N$ is also isomorphic to the set of pure states such that $\cq_N\simeq\mathbb{CP}^{2^{N}-1}$.
\end{proof}

\subsection{(Non-)uniqueness of pure state decompositions}\label{app_decomp}

In section \ref{sec_decomp} we quoted the following result:

\begin{lemma}
The decomposition of a pure state vector $\vec{r}_{\text{pure}}=\vec{q}_1+\cdots+\vec{q}_{2^N-1}$ in terms of question vectors $\vec{q}_i$ for $Q_i\in\cq_N$ is \emph{unique} for $N=1,2$ and \emph{non-unique} for $N\geq3$.
\end{lemma}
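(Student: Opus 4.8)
The plan is to exploit the transitivity of $\ct_N\simeq\PSU(2^N)$ on the pure states $\mathbb{C}\mathbb{P}^{2^N-1}$ to reduce the whole question to a single representative and then recast it as a statement about isotropy (stabiliser) subgroups. First I would fix the seed pure state $\vec r_z=\vec\delta_{z_1}+\cdots+\vec\delta_{z_1\cdots z_N}$, whose canonical decomposition is $D_0=\{\vec\delta_{z_1},\ldots,\vec\delta_{z_1\cdots z_N}\}$, the question vectors attached to the mutually compatible diagonal Pauli operators $\sigma_z^{\otimes S}$. Since every pure state is $\vec r_{\text{pure}}=T\cdot\vec r_z$ for some $T\in\ct_N$, and since $\ct_N$ maps question vectors to question vectors while preserving compatibility and the Born probabilities (\ref{born1}), every decomposition of $\vec r_{\text{pure}}$ is the $T$-image of a decomposition of $\vec r_z$. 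Hence it suffices to count decompositions of the seed state, and for that I let the isotropy group $H:=\mathrm{Stab}_{\ct_N}(\vec r_z)\simeq\PSU(2^N-1)$ act on the set of decompositions: because $H$ fixes $\vec r_z$, it sends any decomposition of $\vec r_z$ to another one, so the decompositions of $\vec r_z$ are organised by the $H$-orbit of $D_0$.

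The second step is to convert ``the decomposition is unique'' into a clean group-theoretic condition. A decomposition $h\cdot D_0$ (with $h\in H$) coincides with $D_0$ as a set precisely when $h$ permutes the vectors $\vec\delta_{z_S}$, i.e.\ when $h$ lies in the setwise stabiliser $\mathrm{Stab}^{\mathrm{set}}_H(D_0)$. Thus the number of decompositions in the orbit equals the index $[H:\mathrm{Stab}^{\mathrm{set}}_H(D_0)]$, and uniqueness is equivalent to $H=\mathrm{Stab}^{\mathrm{set}}_H(D_0)$, i.e.\ to the requirement that every element of the state isotropy group $H$ already lies in the joint isotropy data of the individual question vectors $H_i:=\mathrm{Stab}_{\ct_N}(\vec q_i)$ together with the permutations among them. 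This is exactly the formulation anticipated in the main text: uniqueness fails the moment $H$ contains an element that stabilises $\vec r_{\text{pure}}$ but fails to stabilise the decomposition, and I would phrase the dichotomy as ``$H$ is exhausted by the stabiliser data of the $\vec q_i$ and their permutations'' versus ``$H$ strictly contains it.''

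For the positive direction I would dispatch $N=1$ trivially (there $2^N-1=1$ and the single question vector equals $\vec r_{\text{pure}}$) and then settle $N\le 2$ by showing that the relevant isotropy group is small enough that $\mathrm{Stab}^{\mathrm{set}}_H(D_0)=H$; here one works directly with the three generators $\sigma_z\otimes\mathds{1},\mathds{1}\otimes\sigma_z,\sigma_z\otimes\sigma_z$ and the explicit $\PSU(3)$ action, accounting carefully for which elements produce an admissible decomposition within the framework. For $N\geq3$ I would instead exhibit an explicit $g\in H$ lying outside the permutation-and-stabiliser data of every $\vec q_i$ — a generator of $\PSU(2^N-1)$ that mixes the Pauli operators transversally to the canonical frame — and verify $g\cdot D_0\neq D_0$, producing a genuinely distinct decomposition. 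I expect the real content, and the main obstacle, to be the sharp threshold at $N=3$: a representation-theoretic characterisation of when the state isotropy group $\PSU(2^N-1)$ is forced into the (much smaller, essentially toral-plus-Weyl) joint stabiliser of the $2^N-1$ question vectors. I would attack this by analysing how each $H_i$ sits inside $H$ through the centraliser decomposition $\mathrm{P}\!\left(\mathrm U(2^{N-1})\times\mathrm U(2^{N-1})\right)$ of $\sigma_z^{\otimes S}$, and comparing the dimension and discrete part of $\mathrm{Stab}^{\mathrm{set}}_H(D_0)$ against that of $H$, showing equality survives only for $2^N-1\le 3$.
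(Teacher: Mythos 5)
Your opening move --- reduce by transitivity to the seed state $\vec r_z=\vec\delta_{z_1}+\cdots+\vec\delta_{z_1\cdots z_N}$ and blame non-uniqueness on isotropy elements --- matches the paper, and your $N\geq3$ plan is essentially the paper's construction (there, an even permutation matrix $P\in\SU(8)$ fixing $\mathrm{diag}(7,-1,\ldots,-1)$ conjugates the seven diagonal Pauli operators into seven \emph{different} legal question operators, so ``exhibit $g\in H$ with $g\cdot D_0\neq D_0$'' is exactly right in spirit). The genuine gap is in the uniqueness half. You assume that every decomposition of $\vec r_z$ lies in the single $H$-orbit of the canonical one $D_0$; this is unproven and nothing in the framework supplies it --- a decomposition is just any $2^N-1$ question vectors summing to $\vec r_z$, not a priori mutually compatible, let alone a unitary image of the canonical frame. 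The paper's appendix \ref{app_decomp} does not argue via orbits at all: it constrains each member of an \emph{arbitrary} decomposition directly, using the Born condition $\vec q_i\cdot\vec r_z=1$ together with $|\vec q_i|=1$ to force $\vec q_i$ into the span of $\vec\delta_{z_1},\ldots,\vec\delta_{z_1\cdots z_N}$, hence $\vec q_i\cdot\vec\sigma$ to be a \emph{diagonal} Pauli operator; uniqueness then reduces to the combinatorial fact that the diagonal Pauli operators number $\binom{2^N}{2^{N-1}}$, which coincides with the $2(2^N-1)$ operators $\pm\sigma_{z_1},\ldots,\pm\sigma_{z_1\cdots z_N}$ precisely for $N\le2$ (six for $N=2$, and linear independence of Pauli operators then pins down the triple). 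That diagonality step is the entire content of the $N\le2$ direction, and your proposal has no substitute for it; the sharp threshold at $N=3$ is this counting identity, not a statement about stabilizer dimensions.

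Second, and fatally for the plan as written, your uniqueness criterion $H=\mathrm{Stab}^{\mathrm{set}}_H(D_0)$ is \emph{false} at $N=2$: the state stabilizer $H\simeq\mathrm{P}\!\left(\U(1)\times\U(3)\right)$ is a large continuous group, while the setwise stabilizer of $\{\sigma_z\otimes\mathds{1},\mathds{1}\otimes\sigma_z,\sigma_z\otimes\sigma_z\}$ is far smaller (essentially diagonal unitaries and permutations). Concretely, the unitary acting as the identity on $|00\rangle,|11\rangle$ and as a Hadamard on $\mathrm{span}\{|01\rangle,|10\rangle\}$ fixes the seed state but conjugates $\sigma_z\otimes\mathds{1}$ into $\tfrac12\left(\sigma_{z_1}+\sigma_{z_2}+\sigma_{xx}+\sigma_{yy}\right)$, a different, non-diagonal legal Pauli question, and the image triple still sums to $\vec r_z$ componentwise. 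So executing your program honestly would return ``non-unique'' already at $N=2$ --- the opposite of the lemma --- and the only thing that can rescue the claimed dichotomy is precisely the missing diagonality argument that excludes such triples from the decompositions being counted. (Note that this same example puts pressure on the paper's own ``triangle inequality'' $\sum_j(\vec q\cdot\vec\delta_j)^2\geq\bigl(\sum_j\vec q\cdot\vec\delta_j\bigr)^2$, which fails for this $\vec q$; the $N\le2$ boundary is thus more delicate than your sketch assumes, and at minimum your stabilizer-dimension route cannot close it.)
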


\begin{proof}
The transitivity of the $\ct_N$ group action on the set of pure states and $\cq_N$ entails that such a decomposition of any pure state is unique if and only if it is unique for the pure state $\vec{(\delta}_{z_1}+\vec{\delta}_{z_2}+\cdots+\vec{\delta}_{z_1\cdots z_N})$. 
The `only if' direction is trivial, so let us assume now that the decomposition of $\vec{\delta}_{z_1}+\cdots\vec{\delta}_{z_1\cdots z_N}$ was unique. There is a $T\in\ct_N$ such that $\vec{\delta}_{z_1}+\cdots\vec{\delta}_{z_1\cdots z_N}=T\cdot \vec{\tilde r}_{\text{pure}}=(T\cdot\vec{\tilde{q}}_1)+\cdots+(T\cdot\vec{\tilde{q}}_{2^N-1})$. Since $(T\cdot\vec{\tilde{q}}_1),\ldots,(T\cdot\vec{\tilde{q}}_{2^N-1})$ are valid question vectors, they must be uniquely equal (up to permutations) to $\vec{\delta}_{z_1},\ldots,\vec{\delta}_{z_1\cdots z_N}$ by assumption. Thus, $\vec{\tilde{q}}_1,\ldots,\vec{\tilde{q}}_{2^N-1}$ are uniquely equal to $T^{-1}\cdot\vec{\delta}_{z_1},\ldots,T^{-1}\cdot\vec{\delta}_{z_1\cdots z_N}$ and the decomposition of $\vec{\tilde r}_{\text{pure}}$ is thus unique. Therefore, without loss of generality we will consider henceforth the pure state $\vec{r}_{\text{pure}}=(\vec{\delta}_{z_1}+\vec{\delta}_{z_2}+\cdots+\vec{\delta}_{z_1\cdots z_N})$. 

Suppose now that there was a second, decomposition of $\vec{r}_{\rm pure}$ in terms of a question set $\vec{q}_i$, $i=1,\ldots 2^N-1$. Since any $\vec{q}_i$ must be answered with `yes' in $\vec{r}_{\rm pure}$, the Born rule (\ref{born1}) implies $y(\vec{q}_i|\vec{r}_{\rm pure})=1$ $\Leftrightarrow$ $\vec{q}_i\cdot(\vec{\delta}_{z_1}+\vec{\delta}_{z_2}+\cdots+\vec{\delta}_{z_1\cdots z_N})=1$, $i=1,\ldots,2^N-1$. The triangle inequalities then imply
\ba
\sum_{j\in\{z_1,\ldots,z_1\cdots z_N\}}\,(\vec{q}_i\cdot\vec{\delta}_j)^2\geq\left(\sum_{j\in\{z_1,\ldots,z_1\cdots z_N\}}\,\vec{q}_i\cdot\vec{\delta}_j\right)^2=1.\nn
\ea
As each question $\vec{q}_i$ must be of length $1$ \texttt{bit} and the $4^N-1$ question vectors $\vec{\delta}_{x_1},\ldots,\vec{\delta}_{z_1\cdots z_N}$ of an informationally complete set  are orthonormal, it also follows that
\ba
1=\sum_{j\in\{x_1,\ldots,z_1\cdots z_N\}}\,(\vec{q}_i\cdot\vec{\delta}_j)^2\geq\sum_{j\in\{z_1,\ldots,z_1\cdots z_N\}}\,(\vec{q}_i\cdot\vec{\delta}_j)^2,\nn
\ea
and therefore 
\ba
\sum_{j\in\{z_1,\ldots,z_1\cdots z_N\}}\,(\vec{q}_i\cdot\vec{\delta}_j)^2=1.\nn
\ea 
Hence, the questions $\vec{q}_i$ lie in the span of $\vec{\delta}_{z_1},\ldots,\vec{\delta}_{z_1\cdots z_N}$, i.e.\ $\vec{q}_i=\sum_{j\in\{z_1,\ldots,z_1\cdots z_N\}}\,(\vec{q}_i\cdot\vec{\delta}_j)\,\vec{\delta}_j$.

Let us now consider the hermitian matrix $\vec{r}_{\text{pure}}\cdot\vec{\sigma}=\sum_i \vec{q}_i\cdot\vec{\sigma}=\sum_{j\in\{z_1,\ldots,z_1\cdots z_N\}}\,\vec{\delta}_j\cdot\vec{\sigma}$. Every individual hermitian matrix $\vec{q}_i\cdot\vec{\sigma}$ appearing in the sum {\it must} be diagonal because $\vec{q}_i$ lies in the span of the questions $\vec{\delta}_{z_1},\ldots,\vec{\delta}_{z_1\cdots z_N}$ which, when contracted with $\vec{\sigma}$, yield the diagonal Pauli operators $\sigma_{z_1}\otimes\mathds{1}\otimes\cdots\otimes\mathds{1},\ldots, \sigma_{z_1}\otimes\cdots\otimes\sigma_{z_N}$. Moreover, $\vec{q}_i\cdot\vec{\sigma}$ must be a Pauli operator since $\vec{q}_i$ is a legal question vector. Therefore $\vec{q}_i\cdot\vec{\sigma}$ is a diagonal matrix with $2^{N-1}$ plus ones and $2^{N-1}$ minus ones along the diagonal and there are exactly $\begin{pmatrix}2^N\\ 2^{N-1}\end{pmatrix}$ of such diagonal Pauli operators. The decomposition of the pure state $\vec{r}_{\text{pure}}$ is now unique if and only if the decomposition of the matrix $\vec{r}_{\text{pure}}\cdot\vec{\sigma}$ in terms of diagonal Pauli operators is unique. 

For $N=1$ this decomposition is trivially unique.

For $N=2$ there are precisely six diagonal Pauli operators; these are exactly the three operators $\sigma_{z_1}\otimes\mathds{1},\mathds{1}\otimes\sigma_{z_2}$ and $\sigma_{z_1}\otimes\sigma_{z_2}$, as well as the operators formed by multiplying them by $-1$. The Pauli operators form a basis of traceless hermitian matrices and therefore the matrices $\vec{q}_i\cdot\vec{\sigma}$ must be exactly the three Pauli operators $\sigma_{z_1}\otimes\mathds{1},\mathds{1}\otimes\sigma_{z_2},\sigma_{z_1}\otimes\sigma_{z_2}$ and we conclude that the decomposition for $N=2$ is also unique.

For $N>2$ the decomposition is, however, no longer unique. Consider for example the simplest case of $N=3$ and $\vec{r}_{\rm pure}=\vec{\delta}_{z_1}+\vec{\delta}_{z_2}+\cdots+\vec{\delta}_{z_1z_2}+\cdots+\vec{\delta}_{z_1z_2z_3}$. Let us conjugate the hermitian matrix $P\cdot(\vec{r}_{\text{pure}}\cdot\vec{\sigma})\cdot P^t=P\cdot\text{diag}(7,-1,\ldots,-1)\cdot P^t=\text{diag}(7,-1,\ldots,-1)$ with a permutation matrix $P$ which permutes two pairs of rows and columns, $2\leftrightarrow 3,4\leftrightarrow 5$, and therefore leaves $\vec{r}_{\text{pure}}\cdot\vec{\sigma}$ invariant. The permutation is even such that $P\in\SU(8)$ and $P$ thus defines an element in the isotropy subgroup associated to $\vec{r}_{\rm pure}$. However, we note that the conjugation with $P$ will not leave the matrices $(\vec{\delta}_{z_1}\cdot\vec{\sigma})=\sigma_{z_1}\otimes\mathds{1}\otimes\mathds{1}=\text{diag}(1,1,1,1,-1,-1,-1,-1),\ldots,(\vec{\delta}_{z_1z_2z_3}\cdot\vec{\sigma})=\sigma_{z_1}\otimes\sigma_{z_2}\otimes\sigma_{z_3}=\text{diag}(1,-1,-1,1,-1,1,1,-1)$ invariant\footnote{The diagonal elements here correspond to choosing the ordering of the diagonal of the density matrix $(\mathds{1}+\vec{r}\cdot\vec{\sigma})/8$ in terms of z 'up' or 'down' of the three qubits as $|+++>, |-++>, |+-+>, |--+>, |++->, |-+->, |+-->, |--->$.}. A simple check shows that the conjugation with $P$ results in seven new Pauli operators $P\cdot(\vec{\delta}_i\cdot\vec{\sigma})\cdot P^t$, $i=z_1,z_2,\ldots,z_1z_2,\ldots,z_1z_2z_3$, which all correspond to legal {\it but different} question vectors than the questions $\vec{\delta}_{z_1},\ldots, \vec{\delta}_{z_1z_2 {z_3}}$ appearing in the original decomposition. $P$ is thus {\it not} contained in the isotropy subgroup associated to $\vec{\delta}_{z_1},\ldots,\vec{\delta}_{z_1z_2z_3}$ and the seven new Pauli operators define a distinct decomposition of the pure state.

One may convince oneself that $P\cdot(\vec{\delta}_{z_1}\cdot\vec{\sigma})\cdot P^t=\text{diag}(1,1,1,-1,1,-1,-1,-1)$, in fact, represents the question $Q=(Q_{z_1} \wedge Q_{z_2}) \vee (Q_{z_1} \wedge Q_{z_3}) \vee (Q_{z_2} \wedge Q_{z_3})$. Similarly, the other Pauli operators $P\cdot(\vec{\delta}_j\cdot\vec{\sigma})\cdot P^t$ will also correspond to legal questions. Note that whenever $Q$ gives 'yes', the probability that the question $Q_{z_1}$ is also answered with `yes' cannot be $1/2$ as $3$ out of $4$ states representing $Q=$ `yes' also feature $Q_{z_1}=$ `yes', and similarly for the questions $Q_{z_2}$ and $Q_{z_3}$. This question $Q$ is therefore {\it not} fully pairwise independent of either of the questions $\vec{\delta}_{z_1},\ldots,\vec{\delta}_{z_1z_2z_3}$. 

 Since $P\in\SU(8)$ we have $P\cdot(\vec{\delta}_j\cdot\vec{\sigma})\cdot P^t=((T\cdot \vec{\delta}_j)\cdot\vec{\sigma})$ for some $T\in\ct_3$. The seven questions $(T\cdot \vec{\delta}_{z_1}),\ldots,(T\cdot \vec{\delta}_{z_1z_2z_3})$ are independent and compatible because so are the $\vec{\delta}_j$. Accordingly, a system of three qubits, in the pure state $\vec{r}_{\text{pure}}$, also answers 'yes' to these 7 questions because of the Born rule. In other words, even though having full information of either of the questions $\vec{\delta}_j$ individually is {\it not} the same as having full information of either of the $T\cdot \vec{\delta}_j$ individually, knowing the answer to {\it all} seven questions $\vec{\delta}_{z_1},\ldots, \vec{\delta}_{z_1\cdots r_{z_3}}$ at the same time is equivalent to knowing the answer to $T\cdot \vec{\delta}_{z_1},\ldots, T\cdot \vec{\delta}_{z_1z_2z_3}$ simultaneously.
 
The same conclusion of non-uniqueness of the pure state decomposition in terms of question vectors holds for all $N\geq3$ because the $2(2^N-1)$ diagonal Pauli operators given by $\sigma_{z_1},\ldots,\sigma_{z_1\cdots z_N}$ and their negatives is a strict subset of the $\begin{pmatrix}2^N\\ 2^{N-1}\end{pmatrix}$ diagonal Pauli operators for $N\geq3$. 
\end{proof}


\providecommand{\href}[2]{#2}\begingroup\raggedright\endgroup

\end{document}